\renewcommand*{\NAT@spacechar}{~}
\newcolumntype{L}{>{\raggedright\arraybackslash}X}%
\newcolumntype{C}{>{\centering\arraybackslash}X}%
\newcolumntype{R}{>{\raggedleft\arraybackslash}X}%
\renewcommand{\cite}{\citep}
\newcommand{\Eu}{Eulerian }
\newcommand{\EG}{\Eu graph}
\newcommand{\EE}{\Eu extension}
\newcommand{\claP}{\text{P}}
\newcommand{\claNP}{\text{NP}}
\newcommand{\clacoNP}{\text{coNP}}
\newcommand{\claPH}{\text{PH}}
\newcommand{\claFPT}{\text{FPT}}
\newcommand{\claXP}{\text{XP}}
\newcommand{\claW}[1]{\text{W[\ensuremath{#1}]}}
\newcommand{\claM}[1]{\text{M[\ensuremath{#1}]}}
\newcommand{\claS}[1]{$\Sigma^p_{#1}$}
\newcommand{\NPh}{\claNP-hard}
\newcommand{\NPhs}{\claNP-hardness}
\newcommand{\NPHs}{\claNP-Hardness}
\newcommand{\NPc}{\claNP-complete}
\newcommand{\slashpoly}{\text{$/$poly}}
\renewcommand{\emptyset}{\varnothing}
\newcommand{\quer}[1]{\overline{#1}}
\newcommand{\ptSATs}{\textsc{3SAT}}
\newcommand{\pEE}{\textsc{\Eu Extension}}
\newcommand{\pEEs}{\textsc{EE}}
\newcommand{\pdDEE}{\textsc{$d$-Dimensional \pEE}}
\newcommand{\pxDEE}[1]{\textsc{$#1$-Dimensional \pEE}}
\newcommand{\pxDEEs}[1]{\textsc{$#1$D\pEEs}}
\newcommand{\pWMEE}{\textsc{Eulerian Extension}}
\newcommand{\pWMEEs}{\textsc{EE}}
\newcommand{\pWMEEA}{\textsc{\pWMEE{} with Advice}}
\newcommand{\pWMEEAs}{\textsc{\pWMEEs A}}
\newcommand{\pWMEECLA}{\textsc{\pWMEE{} with Cycle-free Advice}}
\newcommand{\pWMEECLAs}{\textsc{\pWMEEs$\emptyset$A}}
\newcommand{\pWMEECA}{\textsc{\pWMEE{} with Minimal Connecting Advice}}
\newcommand{\pWMEECAs}{\textsc{\pWMEEs CA}}
\newcommand{\pWMEECCLA}{\textsc{\pWMEE{} with Cycle-free Minimal Connecting Advice}}
\newcommand{\pWMEECCLAs}{\textsc{\pWMEEs $\emptyset$CA}}
\newcommand{\WMEE}{\pWMEE{}}
\newcommand{\WMEEs}{\pWMEEs{}}
\newcommand{\CBMs}{\textsc{CBM}}
\newcommand{\pCBM}{\textsc{Conjoining Bipartite Matching}}
\newcommand{\pCBMs}{\textsc{CBM}}
\newcommand{\pRP}{\textsc{Rural Postman}}
\newcommand{\pRPs}{\textsc{RP}}
\newcommand{\pHC}{\textsc{Hamiltonian Cycle}}
\newcommand{\pHCs}{\textsc{HC}}
\newcommand{\pKML}{\textsc{Switch Set Cover}}
\newcommand{\pKMLs}{\textsc{SSC}}
\newcommand{\kiste}{switch}
\newcommand{\kisten}{switches}
\newcommand{\liste}{position}
\newcommand{\listen}{positions}
\newcommand{\meta}[2]{\mathbb{C}_{#1}(#2)}%\langle#2\rangle}
\newcommand{\comp}[1]{\mathbb{C}_{#1}}
\newcommand{\SPP}{shortest-path preprocessing}
\newcommand{\wf}{\omega}
\DeclareMathOperator{\balance}{balance}
\DeclareMathOperator{\indeg}{indeg}
\DeclareMathOperator{\outdeg}{outdeg}
\DeclareMathOperator{\length}{length}
\DeclareMathOperator{\shortcut}{shortcut}
\DeclareMathOperator{\conn}{connect}
\DeclareMathOperator{\minpath}{minpath}
\DeclareMathOperator{\solveconnected}{solve\_connected}
\DeclareMathOperator{\bigO}{O}
\DeclareMathOperator{\binary}{binary}
\DeclareMathOperator{\true}{true}
\DeclareMathOperator{\false}{false}
\newtheorem{theorem}{Theorem}[section]
\newtheorem{lemma}{Lemma}[section]
\newtheorem{corollary}{Corollary}[section]
\newtheorem{rrule}{Reduction Rule}[section]
\newtheorem{observation}{Observation}[section]
\theoremstyle{definition}
\newtheorem{definition}{Definition}[section]
\newtheorem{construction}{Construction}[section]
\newtheorem{transformation}{Transformation}[section]
\newtheorem{example}{Example}[section]
\newlength{\probwidth}
\newcommand{\prob}[5]{%
  \begin{center}
    \begin{minipage}{\probwidth}
      #1
      \begin{compactdesc}
      \item[#2]#3
      \item[#4]#5
      \end{compactdesc}
    \end{minipage}
  \end{center}
}
\newcommand{\decprob}[3]{\prob{#1}{Input:}{#2}{Question:}{#3}}
\newenvironment{lemenum}{\begin{asparaenum}[(i)]}{\end{asparaenum}}
\newcommand{\enuref}[1]{(\ref{#1})}
\begin{document}

\selectlanguage{ngerman}
\begin{titlepage}
\centering
\noindent \rule{\textwidth}{0.5pt}

\vspace{1em}
\huge On Making Directed Graphs Eulerian

\normalsize\vspace{1.12\topsep}\large
%by

\normalsize\vspace{\topsep}\Large
\textsc{Manuel Sorge}\\
\rule{\textwidth}{0.5pt}

\vfill
\Large Diplomarbeit\\
\normalsize zur Erlangung des akademischen Grades\\ Diplom-Informatiker\\
\vspace{\topsep}
\includegraphics[width=5cm,trim=0cm 4cm 0cm 0cm,clip=true]{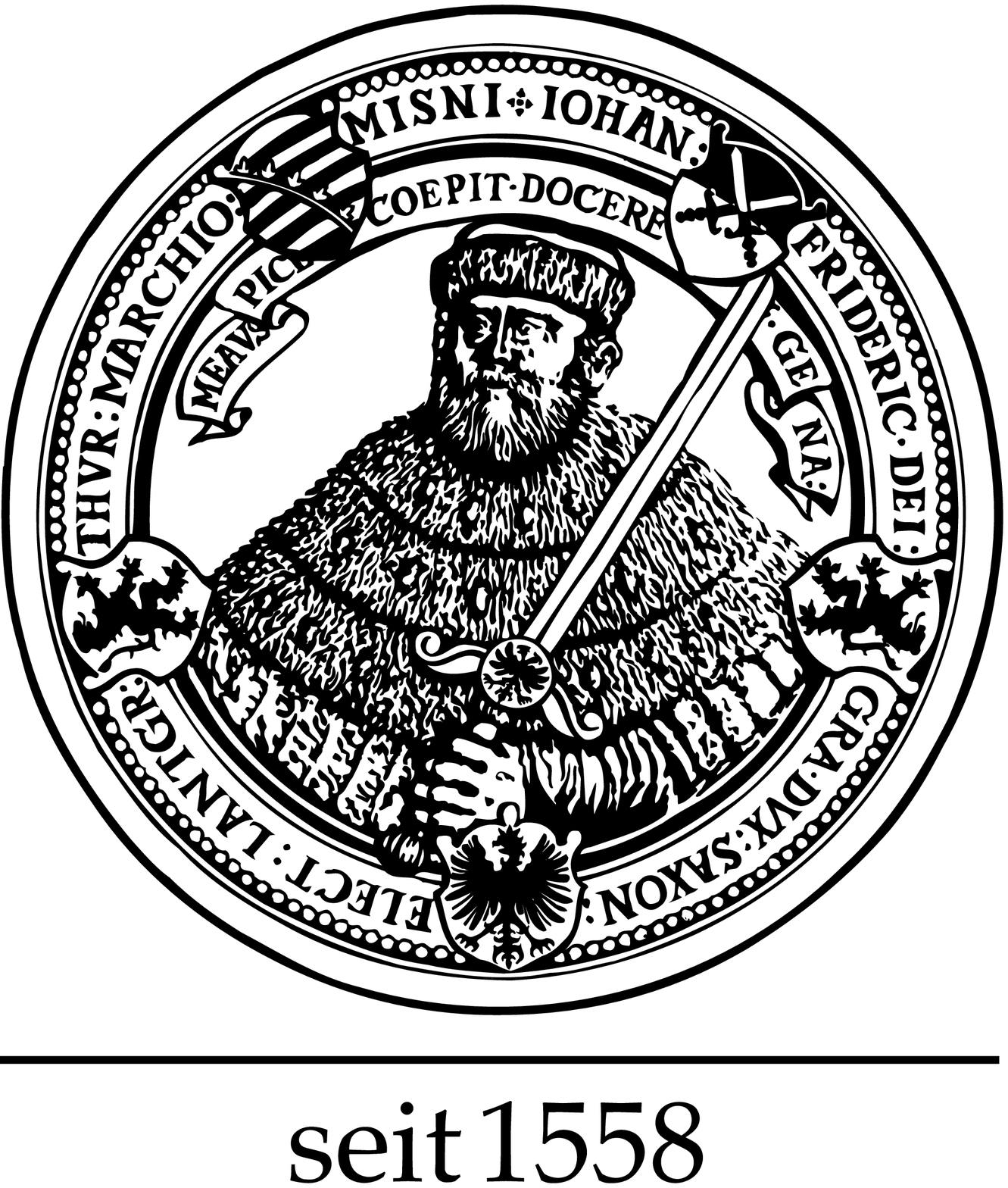}\\
\vspace{\topsep}
Friedrich-Schiller-Universität Jena\\
  Institut für Informatik\\
  Theoretische Informatik I / Komplexitätstheorie

%vspace{\topsep}
\vfill
\begin{tabularx}{\textwidth}{lXr}
  Betreuung: & & \"Uberarbeitete Version\\
  Dipl.-Inf. René van Bevern & & Eingereicht von Manuel Sorge, \\
  Prof. Dr. Rolf Niedermeier & & geb. am 28.04.1986 in Zittau. \\
  Dipl.-Inf. Mathias Weller  & & Jena, den \today
\end{tabularx}

\end{titlepage}
%\newpage
%\mbox{}
\newpage
\mbox{}
\vfill
\paragraph{Zusammenfassung.}
Einen gerichteten Graphen nennt man \emph{Eulersch}, wenn er eine Tour enthält, die jede gerichtete Kante genau einmal besucht. Wir untersuchen das Problem \pWMEE{}~(\pWMEEs{}) in dem ein gerichteter Multigraph~$G$ und eine Gewichtsfunktion gegeben ist, und gefragt wird, ob~$G$ durch Hinzufügen gerichteter Kanten, deren Gesamtgewicht einen Grenzwert nicht überschreitet, Eulersch gemacht werden kann. Dieses Problem ist motiviert durch Anwendungen im Erstellen von Fahrplänen für Fahrzeuge und Abfolgeplänen für Fließbandarbeit. Allerdings ist das Problem \pWMEEs{} NP-schwer; deshalb analysieren wir es mit Hilfe von Parametrisierter Komplexität. Die Parametrisierte Komplexität eines Problems hängt nicht nur von der Eingabelänge, sondern auch von anderen Eigenschaften der Eingabe ab. Diese Eigenschaften nennt man ``Parameter''. \citet{DMNW10} zeigten, dass \pWMEEs{} in~$\bigO(4^kn^4)$~Zeit gelöst werden kann. Hier bezeichnet~$k$ den Parameter ``Anzahl der gerichteten Kanten, die hinzugefügt werden müssen''. In dieser Arbeit analysieren wir \pWMEEs{} mit den (kleineren) Parametern ``Anzahl~$c$ der verbunden Komponenten im Eingabegraph'' und ``Summe~$b$ aller $\indeg(v) - \outdeg(v)$ über alle Knoten~$v$ für die dieser Wert positiv ist''. Wir zeigen dass es einen Lösungsalgorithmus für \pWMEEs{} gibt, dessen Laufzeit den Term~$4^{c\log(bc^2)}$ als einzigen superpolynomiellen Term beinhaltet. Um diesen Algorithmus zu erhalten, machen wir mehrere Beobachtungen über die Mengen gerichteter Kanten, die dem Eingabegraph hinzugefügt werden müssen, um ihn Eulersch zu machen. Aufbauend auf diesen Beobachtungen geben wir außerdem eine Reformulierung von \pWMEEs{} in einem Matchingkontext. Diese Matchingformulierung könnte ein bedeutendes Werkzeug sein, um zu klären, ob \pWMEEs{} in Laufzeit gelöst werden kann, deren superpolynomieller Anteil nur von~$c$ abhängt. Außerdem betrachten wir Vorverarbeitungsalgorithmen polynomieller Laufzeit für \pWMEEs{}, und zeigen, dass diese keine Instanzen erzeugen können, deren Größe polynomiell nur von einem der Parameter~$b, c, k$ abhängt, es sei denn~$\clacoNP \subseteq \claNP \slashpoly$.
\vfill
\mbox{}
%\newpage
%\mbox{}
\newpage
\mbox{}
\vfill
\selectlanguage{american}
\paragraph{Abstract.}
A directed graph is called \emph{Eulerian}, if it contains a tour that traverses every arc in the graph exactly once. We study the problem of \pWMEE{}~(\pWMEEs{}) where a directed multigraph~$G$ and a weight function is given and it is asked whether~$G$ can be made \Eu by adding arcs whose total weight does not exceed a given threshold. This problem is motivated through applications in vehicle routing and flowshop scheduling. However, \pWMEEs{} is \NPh{} and thus we use the parameterized complexity framework to analyze it. In parameterized complexity, the running time of algorithms is considered not only with respect to input length, but also with respect to other properties of the input---called ``parameters''. %A problem is called fixed-parameter tractable with respect to a specific parameter, if the problem can be solved with an algorithm whose super-polynomial running time portion depends only on the parameter. 
\citet{DMNW10} proved that \pWMEEs{} can be solved in~$\bigO(4^kn^4)$~time, where~$k$ denotes the parameter ``number of arcs that have to be added''. In this thesis, we analyze \pWMEEs{} with respect to the (smaller) parameters ``number~$c$ of connected components in the input graph'' and ``sum~$b$ over $\indeg(v) - \outdeg(v)$ for all vertices~$v$ in the input graph where this value is positive''. We prove that there is an algorithm for \pWMEEs{} whose running time is polynomial except for the term~$4^{c\log(bc^2)}$. To obtain this result, we make several observations about the sets of arcs that have to be added to the input graph in order to make it Eulerian. We build upon these observations to restate \EE{} in a matching context. This matching formulation of \pWMEEs{} might be an important tool to solve the question of whether \pWMEEs{} can be solved within running time whose superpolynomial part depends only on~$c$. We also consider polynomial time preprocessing routines for \pWMEEs{} and show that these routines cannot yield instances whose size depends polynomially only on either of the parameters~$b, c, k$ unless~$\clacoNP \subseteq \claNP \slashpoly$.
\vfill
\mbox{}
\pagebreak

\tableofcontents

\chapter{Introduction}\label{sec:intro}
The notion of \Eu graphs dates back to Leonhard Euler. In 1735, he solved the question of whether there is a tour through the city of K\"onigsberg such that every bridge is crossed once and only once~\cite{Eul36}. The term ``\Eu tour'' has been coined for such a tour. At Euler's time, the city of Königsberg had seven bridges across the river Pregel and it turned out that an \Eu tour did not exist. Later, in the nineteenth century, a railway bridge has been built, making such a tour feasible~\cite{Wil86}.

In this thesis, we study the problem where, given a city, it is asked what is a minimum-cardinality set of bridges that have to be built such that the city allows for an \Eu tour? Typically, one aims for minimizing the costs for bridge-building. Thus, we mainly focus on a weighted version of this problem. Also, instead of ordinary bridges, we consider bridges that only allow traffic in one way. We call the problem of making a city allow for an \Eu tour by building one-way bridges, weighted according to a cost function, \pWMEE{}.

The problem of \pWMEE{} is also well-motivated through wholly different approaches. Recently, it has been shown by \citet{HJM09} that some sequencing problems can be solved with \pWMEE{}. An example for such a sequencing problem is ``no-wait flowshop'', where a schedule of jobs is sought, each processing on a fixed succession of machines, such that no waiting time occurs for any job between the processing on two subsequent machines. Such problems arise, for instance, in steel production~\cite{HS96}.

Another problem that has strong ties to \pWMEE{} is \pRP{}. There, a postman's tour in a city is sought such that a given subset of streets in the city is serviced. \citet{DMNW10} have proven that \pRP{} is equivalent to \pWMEE{}. \pRP{} can be applied, for example, in routing of snow plowing vehicles~\cite{EGL95}.

Unfortunately, \pWMEE{} is \NPh , and thus it is likely not to be solvable within time polynomial in the input size. Thus, we analyze \pWMEE{} using the parameterized-complexity framework. That is, we consider running times not only depending on the input size, but also depending on other properties of the input---called parameters. A problem is called fixed-parameter tractable for a specific parameter if there is an algorithm whose exponential running-time portion depends only on this parameter. For instance, \citet{DMNW10} have proven that \pWMEE{} is fixed-parameter tractable with respect to the parameter~$k = \text{``number bridges that have to be built''}$. They have shown that \pWMEE{} can be solved with an algorithm that has running time~$\bigO(4^kn^4)$.

In this work, we consider smaller parameters for \pWMEE{}. In particular, we look at the number of ``islands'' the input city consists of and a more technical parameter which we introduce later. By islands we mean districts in the city that are completely cut off from other districts and cannot be reached via bridges. On the positive side, despite \pWMEE{} being \NPh{}, we are able to derive algorithms for this problem whose exponential running time portion depends only on these two, presumably small, parameters. On the negative side, we observe that preprocessing \pWMEE{} such that the size of the resulting instances is bounded by polynomials in these parameters likely is not possible.

Our work is organized as follows: In \autoref{sec:prelim}, we gather a common knowledge base by recapitulating basic notions of graph theory and parameterized algorithmics. \autoref{sec:probzoo} treats previous work on \pWMEE{} and the related problem of \pRP{}. There we also give an \NPhs{} proof for \pWMEE{} for illustrative purposes. In \autoref{sec:conncomp}, constituting the main part of this work, we consider structural properties of \pWMEE{} and derive an efficient algorithm. We also restate the problem in a matching context. This matching formulation might become a stepping stone to solve the question of whether \pWMEE{} is fixed-parameter tractable with respect to the parameter ``number of islands in the city'' we have sketched above. \autoref{sec:incompress} contains our considerations with regard to preprocessing routines for \pWMEE{}. There, it is shown that no polynomial-time data reduction rules exist that reduce an instance to a polynomial-size problem kernel unless~$\clacoNP \subseteq \claNP \slashpoly $. Finally in \autoref{sec:conclusion}, we give a brief summary of our results and give directions for further research.

\section{Preliminaries}
\label{sec:prelim}
We assume the reader to be familiar with the basics of set theory, logic, algorithm analysis, and computational complexity. They are covered, for example, by \citet{CLRS01} and \citet{AB09}.

%\subsection{Notational Conventions}
%We use the symbols~$\neg$,~$\wedge$ and~$\vee$ for the logical \emph{not}, \emph{and} and \emph{or}, respectively.

\subsection{Parameterized Algorithmics}
Many problems of practical importance are \NPh{}. Thus, it is widely believed that for these problems there is no algorithm whose running time is bounded by a polynomial in the input size. However, in practice sometimes the phenomenon can be observed, that instances of such problems are indeed solvable within reasonable time. The reason for this is that some algorithms can be analyzed such that their super-polynomial running time portion depends only on some property of the input instances. If such a property, called ``parameter'', is not dependent on the input size and if it is ``small'' in practical instances, then it becomes feasible to solve even big instances of \NPh{} problems. The notion of parameters also gives rise to a method measuring the effectiveness of data reduction rules: If the rules reduce an instance such that its size is bounded by polynomials depending only on the parameter, we may assume that the reduction rules perform well in practice.

The design of algorithms exploiting small parameters is treated in \citet{Nie06}. However, it is likely that not every problem admits such an algorithm for every possible parameter. In this regard, complexity-theoretic approaches are presented in \citet{DF99} and \citet{FG06}. We use the parameterized complexity framework in our analysis of \pWMEE{} and give some basic definitions of parameterized algorithms and complexity here.

\paragraph{Problems and Parameterizations.}
Let~$\Sigma$ be an alphabet. A \emph{parameterization} is a polynomial-time computable function~$\kappa : \Sigma^* \rightarrow \mathbb{N}$. A \emph{parameterized problem} over~$\Sigma$ is a tuple~$(Q, \kappa)$ where~$Q \subseteq \Sigma^*$ and~$\kappa$ is a parameterization. For an instance~$I \in \Sigma^*$ of a parameterized problem~$(Q, \kappa)$ we also call~$\kappa(I)$ the \emph{parameter}. A parameterized problem~$(Q, \kappa)$ is called \emph{fixed-parameter tractable} with respect to~$\kappa$, if there is an algorithm that, given an instance~$I \in \Sigma^*$, decides whether~$I \in Q$ in time at most~$f(\kappa(I))\cdot p(\length(I))$. Here,~$f: \mathbb{N} \rightarrow \mathbb{N}$ is a computable function and~$p$ is a polynomial.

\paragraph{Search Tree Algorithms.}
A straightforward way to prove a problem fixed-parameter tractable is to employ search tree algorithms. A \emph{search tree algorithm} recursively divides an instance of a problem into a number of new instances. It divides the instances, until they are solvable within polynomial time. If both the recursion depth of the algorithm and the number of new instances generated from each instance is bounded by the parameter, then we obtain an algorithm whose superpolynomial time-portion is bounded by some function depending only on the parameter. Thus, the algorithm is a witness to the fixed-parameter tractability of the problem. Since search tree algorithms often terminate early, are easily parallelized and often allow for many performance tweaks, they often form relevant practical algorithms for \NPh{} problems.

\paragraph{Problem Kernels.}
Problem kernels are often used to give a guarantee of efficiency for data reduction rules. Let~$(Q,\kappa)$ be a parameterized problem. A \emph{reduction rule} for~$(Q, \kappa)$ is a mapping~$r : \Sigma^* \rightarrow \Sigma^*$ such that  
\begin{asparaenum}
\item $r$ is polynomial-time computable,
\item for every~$I \in \Sigma^*$ it holds that~$I \in Q$ if and only if~$r(I) \in Q$, and \label{enu:rr11}
\item $\kappa(r(I)) \leq \kappa(I)$.
\end{asparaenum}
Statement~\enuref{enu:rr11} is called the \emph{correctness} of the rule. A reduction to a \emph{problem kernel} for~$(Q, \kappa)$ is a reduction rule~$r : \Sigma^* \rightarrow \Sigma^*$ such that for every~$I \in \Sigma^*$ it holds that~$\length(r(I)) \leq h(\kappa(I))$, where~$h$ is a computable function. If~$h$ is a polynomial, then we call~$r$ a reduction to a \emph{polynomial-size problem kernel}.

\paragraph{Parameterized Reductions.}
Let~$(Q, \kappa)$, and~$(Q', \kappa')$ be two parameterized problems. A \emph{parameterized many-one reduction} from~$(Q, \kappa)$ to~$(Q', \kappa')$ is a mapping~$r : \Sigma^* \rightarrow \Sigma^*$ such that for every~$I \in \Sigma^*$
\begin{asparaenum}
\item $r(I)$ is computable in time at most~$f(\kappa(I))\cdot p(\length(I))$, where~$f : \mathbb{N} \rightarrow \mathbb{N}$ is a computable function and~$p$ a polynomial, \label{enu:parared12}
\item $r(I) \in Q'$ if and only if~$I \in Q$, and \label{enu:parared11}
\item there is a computable function~$g : \mathbb{N} \rightarrow \mathbb{N}$ such that~$\kappa'(r(I)) \leq g(\kappa(I))$.\label{enu:parared13}
\end{asparaenum}
Statement~\enuref{enu:parared11} is called the \emph{correctness} of the reduction. If the function~$f$ is a polynomial in statement~\enuref{enu:parared12} we call~$r$ a \emph{parameterized polynomial-time many-one reduction}. If the function~$g$ in statement~\enuref{enu:parared13} is a polynomial, we call~$r$ a \emph{polynomial-parameter many-one reduction}.
A \emph{parameterized Turing reduction} from~$(Q, \kappa)$ to~$(Q', \kappa')$ is an algorithm with an oracle to~$Q'$ that, given an instance~$I \in \Sigma^*$,
\begin{asparaenum}
\item decides whether~$I \in Q$,
\item has running time at most~$f(\kappa(I))\cdot p(\length(I))$, where~$f: \mathbb{N} \rightarrow \mathbb{N}$ is a computable function and~$p$ a polynomial, and
\item there is a computable function~$g :\mathbb{N} \rightarrow \mathbb{N}$ such that for every oracle query~$I' \in \Sigma^*$ posed by the algorithm it holds that~$\kappa(I') \leq g(\kappa(I))$.
\end{asparaenum}

%\paragraph{Fixed-Parameter Intractability.}
%We do not go much into detail here. To understand our work, it suffices to know that a parameterized problem~$(Q, \kappa)$ is assumed to be fixed-parameter intractable, if it is hard for the class of problems~\claW{1}. That is, all parameterized problems in~\claW{1} are parameterized many-one reducible to~$(Q, \kappa)$. Hardness for~\claW{1} can be shown via a parameterized many-one reduction from a~\claW{1}-hard parameterized problem. Such a problem is, for instance, \textsc{Independent Set} parameterized by the size of the sought independent set~$k$:
%\decprob{\textsc{Independent Set}}{A graph $G=(V,E)$ and an integer~$k$.}{Is there a vertex subset~$S \subseteq V$ such that~$k \leq |S|$ and~$G[S]$ contains no edges?}
%In fact, there is a whole complexity hierarchy~$\claW{1} \subseteq \claW{2} \subseteq \ldots \subseteq \claW{P}$ of presumably fixed-parameter intractable problems. 

\paragraph{Parameterized Complexity.}
It is assumed that not all parameterized problems are fixed-parameter tractable. To distinguish the various degrees of (in\nobreakdash-)tractability, there is a multitude of complexity classes. We briefly introduce the classes~\[\claFPT{} \subseteq \claW{1} \subseteq \claW{2} \subseteq \ldots \subseteq \claW{P} \subseteq \claXP{}\text{.}\]
For a brief introduction to parameterized intractability and the W-hierarchy see \citet{CM08}, for comprehensive works see \citet{DF99} or \citet{FG06}.

The class~\claFPT{} contains all parameterized problems that are fixed-parameter tractable. The class~\claW{t},~$t \in \mathbb{N}$ contains all problems that are parameterized many-one reducible to the satisfiability problem for circuits with depth at most~$t$ and an AND output gate, parameterized by the weight of the sought truth assignment---that is the number of variables that are assigned~$\true$. The class~\claW{P} contains all parameterized problems~$(Q, \kappa)$ such that it can be decided whether a word~$I \in \Sigma^*$ is contained in~$Q$ in at most~$f(\kappa(I))\cdot p(\length(I))$~time, using a Turing machine that makes at most~$h(\kappa(I))\cdot \log(\length(I))$ nondeterministic steps. Here,~$f, h : \mathbb{N} \rightarrow \mathbb{N}$ are computable functions and~$p$ is a polynomial. The class~\claXP{} contains all parameterized problems~$(Q, \kappa)$ for which there is a computable function~$f$ such that it can be decided whether~$I \in \Sigma^*$ is contained in~$Q$ in time at most~$\length(I)^{f(\kappa(I))} + f(\kappa(I))$.

A parameterized problem~$(Q, \kappa)$ is assumed to be fixed-parameter intractable, if it is hard for the class of problems~\claW{1}. That is, all parameterized problems in~\claW{1} are parameterized many-one reducible to~$(Q, \kappa)$. Hardness for~\claW{1} can be shown via a parameterized many-one reduction from a~\claW{1}-hard parameterized problem. Such a problem is, for instance, \textsc{Independent Set} parameterized by the size of the sought independent set~$k$:
\decprob{\textsc{Independent Set}}{A graph $G=(V,E)$ and an integer~$k$.}{Is there a vertex subset~$S \subseteq V$ such that~$k \leq |S|$ and~$G[S]$ contains no edges?}

\subsection{Graphs}
We now recapitulate some basic notions of graph theory. We oriented our definitions towards the ones given by \citet{BG08}. Other books on graph theory include \citet{Die05} and \citet{Wes01}. We also give some non-canonical definitions for notions that we frequently use; especially in the ``Connectivity'' and ``Degree and Balance'' paragraphs below.

A \emph{directed multigraph}~$G$ is a tuple~$(V, A)$, where~$V$ is a set,~$A$ is a multiset and for every~$a \in A: a = (u, v) \in V \times V \wedge u \neq v$.\footnote{We exclude ``self-loops''~$(v,v) \in A$ here, because they are not meaningful in the context of \EE s.} We sometimes denote~$V$ by~$V(G)$ and~$A$ by~$A(G)$. Elements in~$V$ are called the \emph{vertices} of~$G$ and elements in~$A$ are called the \emph{arcs} of~$G$. We denote~$|V|$ by~$n$ and~$|A|$ by~$m$ where it is appropriate. A vertex~$v \in V$ and an arc~$(u, w) \in A$ are called \emph{incident} if~$u = v$ or~$w = v$. Two vertices~$u,v \in V$ are called \emph{adjacent} or \emph{neighbors} if there is an arc~$a \in A$ such that~$u, v$ and~$a$ are incident. Let~$B$ be an arc set. We define the directed multigraph~$G + B$ as~$(V, A \cup B)$.
% "neighbors" not used up to now

\paragraph{Subgraphs.}
Any directed multigraph~$(V', A')$ such that~$V' \subseteq V$ and~$A' \subseteq A$ is called a \emph{subgraph} of~$G$. Let~$V' \subseteq V$ be a vertex set.  The graph~$G[V'] := (V', B)$ where~$B = A \cap V' \times V'$ is called the \emph{vertex-induced subgraph} of~$G$ with respect to~$V'$. Let~$A' \subseteq A$ be an arc-set. The graph~$G\langle A' \rangle := (W, A')$ where~$W = \{v \in V: \exists a \in A': v\text{ and } a \text{ are incident}\}$ is called the \emph{arc-induced subgraph} of~$G$ with respect to~$A'$.

\paragraph{Walks, Trails, and Paths.} A \emph{walk} is an alternating sequence~\[v_1, a_1, v_2, \ldots, v_{k-1}, a_{k-1}, v_k\] of vertices~$v_i \in V$ and arcs $a_j \in A$ such that~$a_j = (v_j, v_{j+1})$ for all~$1 \leq j \leq k-1$. A \emph{subwalk} of a walk~$w$ is a consecutive subsequence of~$w$ beginning and ending with a vertex. We say that a walk \emph{traverses} a vertex~$v$ (an arc~$a$) if the vertex~$v$ (the arc~$a$) is contained in the corresponding sequence. We say that a walk~$w$ traverses a vertex set (arc set), if all vertices (arcs) in the set are traversed by~$w$. The \emph{length} of a walk is the number of arcs it traverses. The first vertex of a walk~$w$ is called the \emph{initial} vertex and the last vertex is called the \emph{terminal} vertex. The initial and terminal vertices are also called the \emph{endpoints} of~$w$. We say that a walk is \emph{closed} if its initial and terminal vertices are equal. A~\emph{trail} in the graph~$G$ is a walk that traverses every arc of~$G$ at most once. A~\emph{path} in the graph~$G$ is a trail that traverses every vertex of~$G$ at most once. A closed trail that traverses every vertex of~$G$ at most once except for its initial and terminal vertices is called a~\emph{cycle}. We sometimes abuse notation and identify walks with their corresponding arc sets or their arc-induced subgraphs.

\paragraph{Graphs and Orientations.}
A \emph{multigraph}~$G$ is a tuple~$(V, E)$, where~$V$ is a set,~$E$ is a multiset and for every~$e \in E: e \subseteq V \wedge |e| = 2$. We sometimes denote~$V$ by~$V(G)$ and~$E$ by~$E(G)$. Vertices, edges, incidence, (vertex- or edge-induced) subgraphs, walks, trails and paths are defined analogously to the definitions for multigraphs. \emph{Directed graphs} (digraphs for short) and \emph{graphs} are the special cases of multigraphs that comprise arc or edge sets instead of multisets, respectively. A (directed) graph is called \emph{complete} if it contains all possible edges (arcs). Let~$G = (V, E)$ be a (directed) graph and let~$G' = (V, E')$ be a complete (directed) graph. The \emph{complement graph} of~$G$ is the graph~$G = (V, \quer{E})$, where~$\quer{E} = E' \setminus E$.

A directed multigraph is said to be an \emph{orientation} of a multigraph~$G$ if it can be obtained from~$G$ by substituting every edge~$\{u, v\}$ by either the arc~$(u, v)$ or the arc~$(v, u)$. The \emph{underlying multigraph} of a directed multigraph~$G$ is the uniquely determined multigraph~$G'$ such that~$G$ is an orientation of~$G'$.

\paragraph{Drawing Graphs.}
We draw a directed multigraph by drawing circles for vertices, sometimes drawing their names inside the circle, and by drawing arrows with the head at the circle corresponding to the vertex~$u$ for arcs~$(v, u)$. Multigraphs are drawn by drawing circles for vertices and by drawing lines between the corresponding circles for edges.

\paragraph{Connectivity.}\label{def:meta}
%Let~$G, H$ be (directed) multigraphs such that $V(H) \subseteq V(G)$. Consider the multigraph that is obtained from~$H$ by contracting every edge~$e \in E(G)$~(arc~$a \in A(G)$)---that is, by removing the edge (arc) if it is present in~$H$ and identifying its incident vertices---and deleting any edges of the form~$\{v, v\}$~($(v, v)$) created in the process. We denote this multigraph (the underlying multigraph of this directed multigraph) by~$\meta{G}{H}$. When~$H$ is a complete (directed) graph and~$V(H) = V(G)$ we shorten~$\meta{G}{H}$ to~$\comp{G}$ and call it \emph{component graph} of~$G$. 
A (directed) multigraph~$G$ is said to be \emph{connected} if for every pair of vertices~$u,v \in V(G)$ there is a path with the endpoints~$u,v$ in (the underlying multigraph of)~$G$. A maximal vertex set~$C \subseteq V(G)$ such that~$G[C]$ is connected is called a \emph{connected component} of~$G$. We sometimes abuse notation and identify connected components~$C$ with their vertex-induced subgraphs~$G[C]$. When it is clear from the context we denote connected components simply by \emph{components}. By the \emph{component graph}~$\comp{G}$ of~$G$ we denote the complete graph that has a vertex for every connected component of~$G$. Consider a trail~$t$ that traverses only vertices of~$G$ and the trail~$s$ in~$\comp{G}$ that is obtained from~$t$ as follows: for every connected component~$C$ of~$G$, substitute every maximum length subtrail~$t'$ of~$t$ such that~$V(t') \subseteq C$ by the vertex in~$\comp{G}$ that corresponds to~$C$. %In a slight abuse of notation 
We denote the underlying trail of~$s$ by~$\meta{G}{t}$. For an example on connected components and the~$\meta{G}$ mapping, see \autoref{fig:metagraph-example}.
\begin{figure}%
  \begin{center}%
    \includegraphics{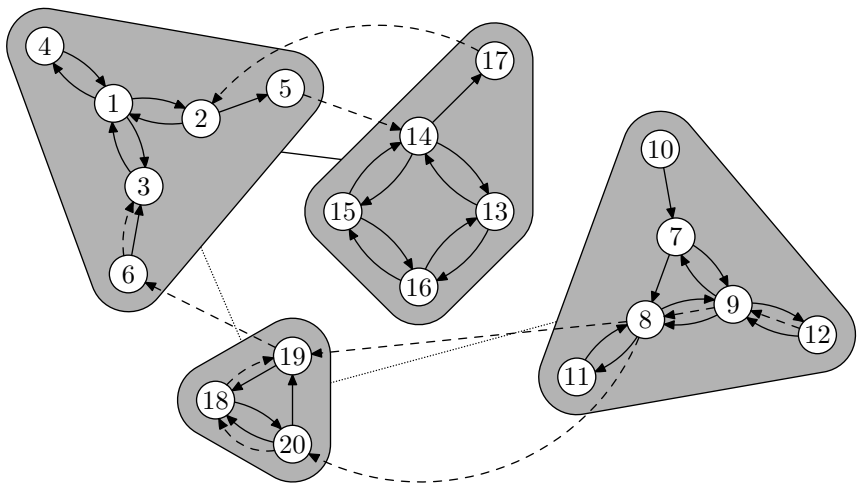}
    \caption{%
      A directed graph~$G$ (vertices~$1$ through~$20$, solid arcs), and its components~(encircled and shaded in gray). Furthermore, a number of trails is shown that traverse vertices of~$G$ (dashed arcs). The mapping~$\meta{G}$ maps both the trails traversing the vertices~$5,14$ and~$17,2$, respectively, to trail in~$\comp{G}$ that is represented by the solid line. The trails traversing the vertices~$12, 9, 8, 19, 6$ and~$8, 20, 18, 19, 6, 3$, respectively, both map to the trail represented by the dotted lines.
    }%
    \label{fig:metagraph-example}
  \end{center}
\end{figure}%
\paragraph{Degree and Balance.}
Let~$G = (V, A)$ be a directed multigraph. The \emph{indegree} (\emph{outdegree}) of a vertex~$v \in V$ denoted by~$\indeg(v)$~($\outdeg(v)$)  is~$|\{(u, v) \in A\}|$~($|\{(v, u) \in A\}|$). The \emph{balance} of~$v$, denoted by~$\balance(v)$, is~$\indeg(v) - \outdeg(v)$. In directed multigraphs, a vertex~$v$ is called \emph{balanced} if~$\balance(v) = 0$. 

Let~$G = (V, E)$ be a multigraph. The \emph{degree} of a vertex~$v \in V$ denoted by~$\deg(v)$ is~$|\{\{u, v\} \in E\}|$. In multigraphs, a vertex~$v$ is called \emph{balanced} if~$\deg(v)$ is even.

Let~$G = (V, E)$ be a (directed) multigraph. We denote the set of unbalanced vertices by~$I_G$. If~$G$ is directed, we denote~$\{v \in V: \balance(v) > 0\}$ by~$I_G^+$ and~$\{v \in V: \balance(v) < 0\}$ by~$I_G^-$.

\paragraph{Eulerian Graphs and Extensions.}  A closed trial in a (directed) multigraph~$G$ is said to be \emph{Eulerian}, if it traverses every edge in~$E(G)$ (arc in~$A(G)$) exactly once and every vertex in~$V(G)$ at least once.\footnote{Note that there seem to be two equally well-accepted definitions of \Eu{}trails: The definitions with and without the additional vertex condition. We chose the one with the vertex condition here, because it makes it easier to deal with connected components that consist only of one vertex. Algorithmically, problems according to both formulations are easily inter-transformable.} A (directed) multigraph is called Eulerian, if it contains an \Eu trail. The following theorem holds.
\begin{theorem}\label{the:eulerian}
  A (directed) multigraph is \Eu if and only if it is connected and every vertex is balanced.
\end{theorem}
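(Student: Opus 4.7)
The plan is to prove both directions of the equivalence, handling the directed and undirected cases in parallel since the arguments are completely analogous once one substitutes $\indeg(v),\outdeg(v)$ for the two ``halves'' of $\deg(v)$.

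For the forward direction, suppose $G$ has an Eulerian trail $T = v_1, a_1, v_2, \ldots, a_{k-1}, v_k$ with $v_1 = v_k$. Connectedness is immediate: by the definition of Eulerian trail every vertex appears in the sequence, and the consecutive arcs/edges give a walk (hence a path in the underlying multigraph) between any two of them. For balance, I would fix a vertex $v$ and count: each internal occurrence of $v$ in $T$ is preceded by exactly one incoming arc and followed by exactly one outgoing arc; since $v_1 = v_k$, the ``endpoint'' occurrence contributes one outgoing arc at the start and one incoming arc at the end, again matching. Because $T$ uses every arc exactly once, summing these contributions yields $\indeg(v) = \outdeg(v)$, i.e.\ $\balance(v) = 0$. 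In the undirected case the same count shows $\deg(v)$ is even.

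For the backward direction, I would use the standard constructive ``trail-splicing'' argument. Starting from any vertex $v_0$, greedily extend a trail by choosing an unused arc at the current vertex; balance at every intermediate vertex $u \neq v_0$ guarantees that whenever we have entered $u$ more often than we have left it, at least one unused outgoing arc remains, so we can only get stuck at $v_0$. This yields a closed trail $T$. If $T$ already uses every arc of $G$ and visits every vertex, we are done; otherwise, consider the arc-set $A(G) \setminus A(T)$. In the subgraph of unused arcs every vertex is still balanced (removing a closed trail subtracts equally from $\indeg$ and $\outdeg$ at each vertex, or subtracts an even number from $\deg$). Using connectedness of $G$, some vertex $w$ of $T$ is incident to an unused arc; applying the same greedy procedure starting at $w$ inside the unused-arc subgraph yields a second closed trail $T'$ through $w$. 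Splicing $T'$ into $T$ at $w$ produces a longer closed trail. Since the total number of arcs is finite, iterating this splicing terminates with a closed trail covering all arcs; connectedness finally ensures that every vertex has been visited (isolated vertices cannot exist because they would violate connectedness unless $G$ has a single vertex, which is a trivial Eulerian case handled separately).

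The main obstacle is the backward direction, specifically making the splicing step precise and verifying that the resulting sequence is again a valid (closed) trail that traverses every vertex at least once, so that it meets the footnoted definition used in this excerpt. The rest of the proof is essentially bookkeeping about degrees.
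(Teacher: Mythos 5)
Your proof is correct: the forward direction by pairing each visit of a vertex with one entering and one leaving arc, and the backward direction by the standard greedy-extension-plus-splicing (Hierholzer) argument, including the observation that connectedness supplies a vertex of the current closed trail with an unused incident arc and that the leftover subgraph stays balanced. The paper itself does not prove this theorem---it only cites \citet{BG08} for the multigraph version---and the argument given there is essentially the same classical one you reproduce, so there is nothing substantive to compare; your handling of the paper's extra ``every vertex visited at least once'' condition via the no-isolated-vertices remark is the only point specific to this formulation, and you address it adequately.
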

A version of \autoref{the:eulerian} that is restricted to graphs is due to Euler%TODO cite?
, a proof for the generalized version above can be found in~\citet{BG08}.
We call an edge multiset (arc multiset)~$E$ such that~$G + E$ is \Eu an \emph{\EE{}} for~$G$. Edges (arcs) contained in~$E$ are called \emph{extension edges}~(\emph{extension arcs}).

\paragraph{Vertex Partitions and Bipartite Graphs.}
Let~$G= (V, A)$ be a (directed) multigraph. A family of sets~$P = \{C_1, \ldots, C_k\}$ is called a \emph{vertex partition} of~$G$, if~$V = \bigcup_{i = 1}^k C_i$ and~$C_i \cap C_j = \emptyset$ for all~$1 \leq i < j \leq k$. The sets~$C_i$,~$1 \leq i \leq k$ are called \emph{cells} of the partition~$P$.

A graph~$G = (V_1 \uplus V_2, E)$ is called a \emph{bipartite graph}, if~$\{V_1, V_2\}$ is a vertex partition of~$G$ and for every~$e = \{u, v\} \in E: u \in V_1 \wedge v \in V_2$.\footnote{In this regard we use the symbol~$\uplus$ to indicate a disjoint union of the vertex sets.}

\paragraph{Matchings.}
Let~$G = (V, E)$ be a graph. A set~$M \subseteq E$ is called a \emph{matching} in~$G$ or of the vertices in~$G$, if for every~$e, f \in E: e \cap f = \emptyset$. A matching~$M$ is called \emph{perfect} if for every vertex~$v \in V$ there is an edge in~$M$ that is incident to~$v$. The following theorem holds:
\begin{theorem}[Hall's condition]\label{the:hallscondition}
  A bipartite graph~$G = (V_1 \uplus V_2, M)$ has a perfect matching, if and only if~$U \leq N(U)$ for every~$U \subseteq V_1$. Here,~$N(U)$ denotes the set of all neighbors of~$U$.
\end{theorem}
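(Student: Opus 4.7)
The plan is to prove both directions of the biconditional, reading the stated condition as $|U| \leq |N(U)|$ for every $U \subseteq V_1$. The necessity direction is immediate: given a perfect matching $M$, the map sending each vertex of $V_1$ to its matched partner is an injection into $V_2$, and its image on any $U \subseteq V_1$ lies in $N(U)$, so $|N(U)| \geq |U|$.

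For the sufficiency direction, I would proceed by induction on $|V_1|$. The base case $|V_1| = 1$ follows since Hall's condition guarantees at least one available neighbor. For the inductive step, I would distinguish two cases depending on whether the condition is ever tight on a nontrivial subset. In \emph{Case A}, if $|N(U)| > |U|$ for every nonempty proper $U \subsetneq V_1$, I pick any vertex $u \in V_1$ together with a neighbor $v \in N(\{u\})$, match them, and delete both. Deleting $v$ shrinks any neighborhood by at most one, so the strict inequality in the original graph implies $|N_{G'}(U')| \geq |U'|$ for every $U' \subseteq V_1 \setminus \{u\}$, and induction on $G'$ completes the matching. In \emph{Case B}, some nonempty proper $U^* \subsetneq V_1$ satisfies $|N(U^*)| = |U^*|$; I split $G$ into $G_1 := G[U^* \uplus N(U^*)]$ and the residual $G_2 := G[(V_1 \setminus U^*) \uplus (V_2 \setminus N(U^*))]$, apply induction to each, and take the union of the two matchings.

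The main obstacle is the bookkeeping in Case B, specifically verifying that Hall's condition transfers to $G_2$. The key observation is a ``union trick'': for any $W \subseteq V_1 \setminus U^*$, Hall's condition applied to $W \cup U^*$ in $G$ gives $|N_G(W \cup U^*)| \geq |W| + |U^*|$; since $N_G(W \cup U^*) \subseteq N(U^*) \cup N_{G_2}(W)$ with the two sets on the right disjoint and $|N(U^*)| = |U^*|$, we deduce $|N_{G_2}(W)| \geq |W|$. Hall's condition for $G_1$ follows directly from the one on $G$ restricted to subsets of $U^*$. Everything else is a routine finite case analysis, and the final matching is the disjoint union of the two matchings produced by induction.
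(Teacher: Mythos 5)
Your proof is correct, but note that the paper does not actually prove this theorem at all: it states it and defers to the cited textbook (\citet{BG08}), so there is nothing in the paper to compare your argument against step by step. What you have written is the standard Halmos--Vaughan induction: necessity via the injection defined by the matching, and sufficiency by splitting on whether some nonempty proper subset of~$V_1$ is tight, with the ``union trick'' $|N_{G_2}(W)| \geq |N_G(W \cup U^*)| - |N(U^*)| \geq |W|$ doing the real work in the tight case. All the details check out: in Case~A the deleted vertex~$v$ removes at most one neighbor from any $N_G(U')$, and the strict inequality absorbs that loss; in Case~B the two induced subgraphs are vertex-disjoint, so the union of the two matchings is again a matching. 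One small point worth flagging, which is really an imprecision in the paper's statement rather than in your proof: Hall's condition on subsets of~$V_1$ only guarantees a matching saturating~$V_1$, which is a \emph{perfect} matching in the paper's sense (covering all of $V_1 \uplus V_2$) only under the implicit assumption $|V_1| = |V_2|$; your induction proves the saturating-$V_1$ version, which is the correct general form, and the perfect-matching phrasing follows once the two sides have equal size.
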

A proof for \autoref{the:hallscondition} can be found in \citet{BG08}.

\section{Problems, Variants, Relationships}\label{sec:probzoo}
\EG s are interesting by themselves from a graph-theoretic point of view. However, they also bear intuitive and practical applications. In this section we introduce various problems regarding \EG s, their complexity if it is known, and point out relations to other problems.

As we will see later in this section, some natural problems translate into the problem of making a given graph \Eu by adding edges or arcs, respectively. In these problems it is beneficial to add as few edges as possible, or to add edges such that their total weight is as low as possible. This translates into the following problem formulation:
%\decprob{\textsc{Unweighted Eulerian Extension}}{A graph $G=(V,E)$, an integer $k$.}{Is there an \EE{} for $G$ containing at most~$k$ edges?}
\decprob{\pWMEE{} (\pWMEEs{})}{A directed multigraph~$G=(V,E)$ and a weight function~$\wf: V \times V \rightarrow [0, \wf_{max}]\cup \{\infty\} $.}{Is there an \EE{} for~$G$ of weight at most~$\wf_{max}$?}
The problem of \textsc{Unweighted Eulerian Extension} is \pWMEEs{} where every arc in~$V \times V$ has weight~$1$. Natural variants of these problems can be derived by substituting undirected multigraphs, directed graphs or graphs for multigraphs in the problem description. As we will also see, the complexity of problems regarding weighted \EE s depends heavily on the connectedness of the input graph. So, connectedness makes for another intuitive distinction in these problems.

\paragraph{Polynomial-time Solvable Variants.}

\begin{table}
  \centering
  \begin{tabularx}{\textwidth}{X XlX XcX XcX X} %XcX XcX X}
    \toprule
    & \multicolumn{9}{c}{\EE{} on unweighted graphs} & \\
%    & \multicolumn{6}{c}{Multigraphs} & \\
    & &&& \multicolumn{3}{c}{Connected}
    & \multicolumn{3}{c}{Disconnected} & \\
%    & \multicolumn{3}{c}{connected}
 %   & \multicolumn{3}{c}{disconnected} & \\
    \midrule
    && Undirected &&& $\quer{m}\sqrt{n}$ &&& $\quer{m}\sqrt{n}$ & \\%&&& $n + m$ &&& $n + m$ & \\
    && Directed &&& $n\quer{m}\log(n)$ &&& $\quer{m}\log(n)(\quer{m}+n\log(n))$ & \\ %&&& $n + m$ &&& $n + m$ & \\
    \bottomrule
  \end{tabularx}
  \caption{Complexity results regarding unweighted \EE{} problems. The number of edges in complement graphs of graphs with~$m$ edges is denoted by~$\quer{m}$. Running times in big-O notation. The result for undirected and directed graphs have been obtained by \citet{BST77} and \citet{DMNW10}, respectively.}
  \label{tab:complexityunweightedgraphs}
\end{table}

\autoref{tab:complexityunweightedgraphs} shows polynomial running time results for unweighted \EE{} problems on graphs. For unweighted multigraphs, \citet{DMNW10} obtained linear-time algorithms for both the directed and undirected case. These algorithms work regardless of whether the input multigraph is connected or disconnected. Polynomial-time solvability has also been proven for the unweighted and connected variants shown in \autoref{tab:complexityweighted}. 

\begin{table}
  \centering
  \begin{tabularx}{\textwidth}{X XlX XcX XcX X} %XcX XcX X}
    \toprule
     &\multicolumn{9}{c}{Weighted, connected \EE{}} &\\
%    & \multicolumn{6}{c}{Multigraphs} & \\
     &&&& \multicolumn{3}{c}{Graphs} &
     \multicolumn{3}{c}{Multigraphs} &\\
%    & \multicolumn{3}{c}{connected}
 %   & \multicolumn{3}{c}{disconnected} & \\
    \midrule
    && Undirected &&& $|I_G|^3\log(|I_G|)$ &&& $|I_G|^3\log(|I_G|)$ && \\%&&& $n + m$ &&& $n + m$ & \\
    && Directed &&& $\quer{m}\log(n)(\quer{m} + n \log(n))$ &&& $n^3 \log(n)$ && \\ %&&& $n + m$ &&& $n + m$ & \\
    \bottomrule
  \end{tabularx}
  \caption{Complexity results regarding weighted \EE{} problems on connected graphs. The number of edges in complement graphs of graphs with~$m$ edges is denoted by~$\quer{m}$. Recall that~$I_G$ denotes the set of not balanced vertices in the input graph. Running times in big-O notation. These results have been obtained by \citet{DMNW10}.}
  \label{tab:complexityweighted}
\end{table}

\paragraph{Fixed-Parameter Tractability.}
In general, \pWMEEs{} is \NPh{}. We recapitulate two \NPhs{} proofs in the following subsections. However, \citet{DMNW10} have proven \pWMEEs{} to be fixed-parameter tractable with respect to a slightly complicated parameterization: Let~$\mathbb{E}(G, \wf)$ be the set of all \EE s~$E$ for the directed multigraph~$G$ with weight~$\wf(E) \leq \wf_{max}$ according to the weight function~$\wf$.\label{def:paraeulerpapier}
\begin{theorem}
  \pWMEE{} parameterized by~$k = \max \{ |E| : E \in \mathbb{E}(G, \wf)\}$ is solvable in~$\bigO(4^kn^4)$~time.
\end{theorem}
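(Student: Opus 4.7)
The plan is to prove the theorem by designing a search-tree algorithm whose tree has at most~$4^k$ leaves and whose per-node bookkeeping costs~$O(n^4)$.

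\textbf{Structural decomposition of an extension.} My first step would be to show that every \EE{}~$E$ for~$G$ with~$|E|\leq k$ decomposes into arc-disjoint trails of two kinds: \pmtrail s from a vertex of~$I_G^+$ to a vertex of~$I_G^-$ (correcting balance) and \etrail s, i.e. closed trails contributing only to the connectivity condition of \autoref{the:eulerian}. Such a decomposition exists because after adding~$E$ the graph is Eulerian, so the arc set~$E$ regarded as a multigraph is a disjoint union of trails whose endpoints are exactly the vertices whose balance needs fixing. Since every constituent trail contains at least one arc of~$E$, the decomposition has at most~$k$ pieces; in particular $|I_G^+|\leq k$ and at most~$k$ connecting cycles suffice.

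\textbf{Shortest-path preprocessing.} Using a standard exchange argument, I would argue that in some optimal extension each maximal subtrail of a \pmtrail{} or \etrail{} whose internal vertices are all balanced in~$G$ equals a shortest~$\wf$-path between its endpoints. So, after computing all-pairs shortest paths in the weighted complete digraph on~$V$ (Floyd–Warshall, $O(n^3)$), we may restrict attention to extensions consisting of at most~$k$ ``shortcut'' arcs whose concatenation realizes the decomposition of the previous paragraph.

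\textbf{Branching.} The algorithm commits to the extension one shortcut arc at a time. At each recursive call it maintains a partial extension~$E'$ and it picks any vertex~$v$ still requiring attention, preferring a vertex in~$I_{G+E'}^+$ and otherwise a representative of a connected component of~$\comp{G+E'}$ not yet visited by~$E'$. It then branches on what the next committed arc leaving~$v$ is: either (i) it lands directly in a matching partner in~$I_{G+E'}^-$, closing a \pmtrail, (ii) it lands in a balanced vertex of~$G$ that becomes the next internal vertex of a pending \pmtrail, (iii) it closes an \etrail{} by reconnecting to a component already handled, or (iv) it enters a so-far untouched component and contributes to connectivity. With a careful canonical ordering the cases collapse to at most four mutually exclusive branches, each committing exactly one new arc. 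Recursion depth is thus at most~$k$, yielding a search tree of size at most~$4^k$. The per-node cost~$O(n^4)$ covers recomputation of balances, of~$\comp{G+E'}$, and of shortest-path distances that are affected by the committed arc.

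\textbf{Main obstacle.} The delicate point is pinning the branching factor at exactly~$4$. One must fix a canonical order in which pending trails and components are processed so that the four cases above are both exhaustive and non-overlapping, and so that the two aspects of \autoref{the:eulerian}—balancing and connecting—do not interact to inflate the branch set. A clean invariant of this kind, together with a proof that every valid extension is found by some path in the search tree, is the crux of the argument; everything else is routine algorithmic bookkeeping.
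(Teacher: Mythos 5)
You should first be aware that the thesis does not prove this theorem at all: it is quoted verbatim from \citet{DMNW10}, and the text around it (and the remark in \autoref{sec:conclusion} that ``their algorithm uses a dynamic programming approach'') makes clear the $\bigO(4^kn^4)$ bound comes from a dynamic program in the cited work, not from a search tree. So there is no in-paper argument to match your sketch against, and your proposal has to stand on its own. As written, it does not.

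The gap is exactly where you place it, but it is not ``routine algorithmic bookkeeping'' left for later --- it is the whole theorem. Your structural decomposition (essentially \autoref{obs:walkdichotomy}) and the shortest-path exchange argument (essentially \autoref{trans:spp}) are sound, but they only reduce the problem to choosing at most~$k$ arcs among ``interesting'' vertices. Your four cases are not four branches: case~(i) must also choose \emph{which} vertex of~$I_{G+E'}^-$ closes the path (up to~$k$ options), case~(ii) must choose \emph{which} balanced vertex is the next internal vertex (up to~$n$ options), and case~(iv) must choose \emph{which} untouched component is entered and at which of its vertices (again up to~$n$ options). Committing one concrete arc per node therefore yields branching factor~$\Theta(n)$ and a tree of size~$n^{\bigO(k)}$ --- an \claXP{} bound comparable to what \citet{Fre77} already gives, not~$\bigO(4^k n^4)$. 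No canonical ordering of pending trails and components can collapse a genuine $n$-way choice of an arc's head into a constant number of mutually exclusive branches; the $4^k$ in the cited result arises instead from a table indexed by subsets of an $\bigO(k)$-sized ground set (the unbalanced vertices and components), i.e., from $2^{\bigO(k)}$ \emph{states}, not from a depth-$k$ tree with constant fan-out. To repair the proposal you would need either such a subset DP or a concrete argument showing why, after your preprocessing, only a constant number of candidate heads per committed arc need ever be tried --- and no such argument is in sight.
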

Note that the according parameterization is likely not polynomial-time computable. This calls for the trick to encode the parameter in the corresponding language~$Q$ of the parameterized problem. The parameter then has to be checked for correctness by any algorithm that decides~$Q$.
\subsection{Relations to the Rural Postman Problem}\label{sec:relrpwmee}
In this section, we briefly review the many-one reductions from \pWMEE{}~(\pWMEEs{}) to the \pRP{} problem and back, given by \citet{DMNW10}. From these reductions we get parameterized equivalence with respect to parameters that motivate our choice of parameters for~\pWMEEs{}.
The \pRP{} problem is defined as follows.
\decprob{\pRP{} (\pRPs{})}{A directed graph~$G = (V, A)$, a set~$R \subseteq A$ of required arcs and a weight function~$\wf : A \rightarrow [0, \wf_{max}]\cup \{\infty\}$.}{Is there a walk~$W$ in~$G$ such that~$W$ traverses all arcs in~$R$ and~$\wf(W) \leq \wf_{max}$?}
\citet{DMNW10} observed that \pRPs{} parameterized by the ``number of arcs in the sought walk'' and \pWMEEs{} parameterized by ``number of arcs in the sought \EE'' are equivalent.\footnote{The actual parameters are slightly more complicated, but this intuition suffices here.}
We take a brief look at their construction here and observe a further parameterized equivalence. The main idea in both reductions is to exploit the following observation.
\begin{observation}\label{obs:walkvseuleriangraph}
  Let~$G$ be a directed graph and let~$W$ be a multiset of arcs in~$G$. There is a closed walk in~$G$ that uses exactly the arcs~$W$ if and only if the directed multigraph~$(V(G), W)$ is Eulerian.
\end{observation}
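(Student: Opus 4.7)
The plan is to reduce both directions of the equivalence to Theorem~\ref{the:eulerian}, which characterizes Eulerian multigraphs precisely as the connected and vertex-balanced ones. I would begin by unpacking the definitions: a closed walk in $G$ that uses exactly the arcs of the multiset $W$ is, as a sequence of vertices and arcs, formally identical to a closed trail in the multigraph $H := (V(G), W)$ that traverses every arc of $H$ exactly once. So the observation boils down to: $H$ admits such a closed arc-covering trail if and only if $H$ is Eulerian.

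For the ``only if'' direction, I would argue from the given closed walk $w = v_1, a_1, v_2, \ldots, a_k, v_1$. Balance at every vertex of $H$ follows by bookkeeping: each internal occurrence of a vertex in $w$ contributes one incoming and one outgoing arc, and because $w$ is closed, the same accounting applies at $v_1$; since the arcs appearing in $w$ are exactly (with multiplicity) the arcs of $W$, we get $\indeg_H(v) = \outdeg_H(v)$ for every $v$ incident to some arc of $W$, and trivially for vertices isolated in $H$. Connectedness (of the non-isolated part of $H$) is witnessed by $w$ itself, since any two vertices touched by $W$ both appear on the same closed sequence and are hence joined by a subwalk. For the ``if'' direction, Theorem~\ref{the:eulerian} directly yields an \EG{} trail in $H$, and such a trail, being closed and traversing every arc of $W$ exactly once, is precisely the desired closed walk in $G$ using exactly~$W$.

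The one subtlety I expect to address is the vertex condition built into the paper's definition of Eulerian trails: if $V(G)$ contains vertices not incident to any arc of $W$, then $H$ is disconnected and strictly speaking not Eulerian under the chosen convention. I would handle this at the outset by noting that such isolated vertices play no role for either side of the equivalence and that the natural reading is to apply the observation to $G\langle W \rangle$, or equivalently to interpret ``Eulerian'' here in the sense that every arc is traversed exactly once by a closed trail. With that convention fixed, the two arguments above combine into a short proof, and no further machinery beyond Theorem~\ref{the:eulerian} is needed.
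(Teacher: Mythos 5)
The paper states this observation without any proof---it is used as a stepping stone to justify Constructions~\ref{cons:redrptowmee} and~\ref{cons:redwmeetorp} and the authors evidently regard it as immediate---so there is no official argument to compare yours against. Your proof is correct and is surely the intended one: identifying a closed walk in~$G$ that uses exactly the multiset~$W$ with a closed arc-covering trail of~$H=(V(G),W)$, doing the in/out bookkeeping at each vertex occurrence for balance, reading connectedness off the walk itself, and invoking \autoref{the:eulerian} for the converse. Your flagged subtlety is also a genuine one: under the paper's vertex-inclusive definition of an Eulerian trail, $(V(G),W)$ is literally not Eulerian whenever some vertex of~$G$ is untouched by~$W$, so the observation as written is only true under the convention you adopt (restricting to~$G\langle W\rangle$ or dropping the vertex condition); the paper silently relies on its own footnote that the two definitions are interchangeable, and you handle this more carefully than the source does.
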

With \autoref{obs:walkvseuleriangraph} it is easy to see that the following two constructions are polynomial-time many-one reductions from \pRPs{} to \pWMEEs{} and from \pWMEEs{} to \pRPs{}, respectively.
\begin{construction}\label{cons:redrptowmee}
  Let the directed graph~$G = (V, A)$, the required arc set~$R$ and the weight function~$\wf : V \times V \rightarrow [0, \wf_{max}]\cup \{\infty\}$ constitute an instance of \pRPs{}. Construct an instance of \pWMEEs{} by defining the directed multigraph~$G' := (V, R)$ and a weight function~$\wf': V \times V \rightarrow [0, \wf_{max} - \wf(R)]\cup \{\infty\}$ by~
\[\wf' :=
\begin{cases}
  \wf(a), & a \in A \wedge \wf(a) \leq \wf_{max} - \wf(R)\text{,} \\
  \infty, & \text{otherwise.}
\end{cases}
\]
\end{construction}
\begin{construction}\label{cons:redwmeetorp}
  Let the directed multigraph~$G = (V, A)$ and the weight function~$\wf : V \times V \rightarrow [0, \wf_{max}]\cup \{\infty\}$ constitute an instance of \pWMEEs{}. Construct an instance of \pRPs{} by defining the directed graph~$G := (V, V \times V)$, the required arc set~$R := A$, the weight function~$\wf' = \wf$ and the maximum weight~$\wf'_{max} := \wf_{max} + \wf(R)$.
\end{construction}
In the search for suitable parameters for \pWMEEs{}, we observed the following. Intuitively, we expect the number of connected components in~$G\langle R \rangle$ to be small in practical instances. For instance, consider a postman's tour in a city that comprises a number of suburbs. The number of streets that have to be serviced in each of the suburbs is expected to be much higher than the streets in-between, thus forming connected components in each suburb. We also expect the sum of positive balances of all vertices in~$G\langle R \rangle$ to be small: This sum is at most proportional to the number of required arcs, and we assume this number to be small compared to~$n$ in practice. %.can be interpreted as the number of times the postman has to leave the required arcs and use streets that do not need to be serviced. However, in each of the suburbs Consider the main streets in each suburbs. They are expected to have a high 
With regard to \pWMEEs{}, the following observation is of much interest.
\begin{observation}\label{obs:rpwmeeparaequiv}
  Let~$G$ be the input digraph and~$R$ the required arcs in an instance of \pRPs{}. Let~$G'$ be the input graph in an instance of \pWMEEs{}. \autoref{cons:redrptowmee} and \autoref{cons:redwmeetorp} are polynomial-time polynomial-parameter many-one reductions with respect to the parameters
\begin{lemenum}
\item number of connected components in~$G\langle R \rangle$ and number of connected components in~$G'$, and/or
\item sum of all positive balances in~$G\langle R \rangle$ and sum of all positive balances in~$G'$.
\end{lemenum}
\end{observation}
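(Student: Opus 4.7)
The plan is to check, for each of the two constructions and each of the two parameters, the three requirements of \autoref{enu:parared12}--\autoref{enu:parared13} of a parameterized polynomial-time polynomial-parameter many-one reduction. The polynomial-time computability part is immediate: both constructions output instances of size polynomial in their input. Correctness (statement~\enuref{enu:parared11}) has already been argued by \citet{DMNW10} and follows directly from \autoref{obs:walkvseuleriangraph}: a walk in~$G$ traversing~$R$ corresponds, when the multiplicities of traversed arcs are collected, to an arc multiset~$W \supseteq R$ such that $(V,W)$ is \Eu{}; the multiset $W \setminus R$ is then precisely an \EE{} for $(V,R)$, and vice versa. So only the parameter bounds require real work.

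For \autoref{cons:redrptowmee}, the produced multigraph~$G' = (V, R)$ has the same arc multiset as the arc-induced subgraph~$G\langle R\rangle$; the two differ only in the vertices of~$V$ not incident to any arc of~$R$, which appear as isolated vertices in~$G'$. The balance of an isolated vertex is~$0$, so the sum $\sum_{v:\balance(v)>0} \balance(v)$ computed in~$G'$ equals the same sum computed in~$G\langle R\rangle$, settling parameter~(ii). For parameter~(i), such isolated vertices can be deleted from the input \pRPs{} instance in polynomial time without changing the answer (any walk required to traverse~$R$ can simply avoid vertices not incident to any arc in~$R$); after this cheap preprocessing step we may assume $V = V(G\langle R\rangle)$, so~$G'$ then has exactly the components of~$G\langle R\rangle$.

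For \autoref{cons:redwmeetorp}, which sets $R := A(G)$ in the constructed \pRPs{} instance, the subgraph~$G\langle R\rangle$ coincides with the arc-induced subgraph of the original \pWMEEs{} input~$G'$, i.e., with~$G'$ after its isolated vertices have been dropped. Balances of isolated vertices are~$0$, hence the sum of positive balances is identical on both sides, yielding~(ii). For~(i), dropping isolated vertices only decreases the number of components, so the \pRPs{} parameter is at most the \pWMEEs{} parameter, which is the inequality direction that a polynomial-parameter reduction requires.

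The only mild obstacle is the asymmetric role of isolated vertices: they inflate component counts but contribute nothing to balance sums. Once one fixes the convention that isolated vertices are either preprocessed out or excluded from the component count, the observation falls out of the definitions with no further calculation.
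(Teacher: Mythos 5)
The paper states this observation without proof (it follows the two constructions as an unargued remark), so the comparison here is really against what the constructions literally deliver. Most of your argument is sound: polynomial-time computability, deferring correctness to \autoref{obs:walkvseuleriangraph} and \citet{DMNW10}, the balance bookkeeping (isolated vertices have balance~$0$, so parameter~(ii) is preserved exactly in both directions), and the direction of the inequality for \autoref{cons:redwmeetorp} under parameter~(i) (passing from~$G'$ to~$G\langle R\rangle$ can only decrease the component count) are all correct, and you are right to isolate the treatment of isolated vertices as the only nontrivial point.

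The genuine gap is your handling of parameter~(i) for \autoref{cons:redrptowmee}. You propose deleting the vertices of~$V \setminus V(G\langle R\rangle)$ from the \pRPs{} instance, justified by the claim that any walk traversing~$R$ can simply avoid vertices not incident to any arc of~$R$. That claim is false: such a vertex can be an unavoidable (or the cheapest) intermediate stop on the subwalks that connect distinct components of~$G\langle R\rangle$. For instance, with~$R = \{(a,b),(c,d)\}$ and~$A = R \cup \{(b,x),(x,c),(d,a)\}$, every closed walk traversing~$R$ must pass through~$x \notin V(G\langle R\rangle)$, so deleting~$x$ turns a yes-instance into a no-instance. Without the deletion, however, the constructed multigraph~$(V,R)$ contains one additional connected component per such vertex, and this number is not bounded by any function of the number of components of~$G\langle R\rangle$, so the parameter bound of statement~\enuref{enu:parared13} fails. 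The repair is to precede the deletion by a shortest-path step in the spirit of \autoref{trans:spp}: first replace~$\wf(u,v)$ for~$u,v \in V(G\langle R\rangle)$ by the weight of a shortest path from~$u$ to~$v$ in~$(V,A)$; after that, vertices outside~$V(G\langle R\rangle)$ really are dispensable and may be removed, and only then do the component counts of~$G\langle R\rangle$ and of the constructed multigraph coincide. (The same isolated-vertex issue in fact also affects the plain correctness of both constructions under the paper's vertex-traversal definition of \Eu trails, a point the paper itself glosses over; your balance argument for parameter~(ii) is unaffected by all of this.)
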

This motivates the analysis of \pWMEEs{} with respect to these two parameters. In this regard, \citet{Fre77} has proven the following theorem.
\begin{theorem}
  \pRP{} can be solved in~$\bigO(n^3n^{2c - 2}/c!)$ time, where~$c$ is the number of connected components in~$G\langle R \rangle$---the graph~$G$ being the input graph and~$R$ the set of required arcs.
\end{theorem}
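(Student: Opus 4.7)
The plan is to reduce the search for a minimum-weight Rural Postman walk to a bounded enumeration over the ways in which the $c$ components of $G\langle R \rangle$ are ``threaded'' by the sought walk, followed by a polynomial-time subproblem per enumeration. I would begin with a preprocessing step that computes all-pairs shortest-path distances in $G$; using, e.g., Johnson's algorithm this costs $\bigO(n^3)$ time and lets me replace any inter-component sub-walk by a shortest path between its endpoints without loss of optimality.

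Next I would establish a structural normal form for some optimal solution. Invoking \autoref{obs:walkvseuleriangraph}, a closed walk that covers $R$ corresponds to an Eulerian multiset of arcs on $V(G)$ containing $R$, so it suffices to look for a minimum-weight multiset $W \supseteq R$ such that $(V(G), W)$ is Eulerian. A standard exchange argument, merging two visits to the same component via a shortest internal path, shows that some optimum enters and leaves each component $C_i$ exactly once. Consequently, an optimal walk is described by (i) a pair of vertices $(u_i, v_i) \in V(C_i) \times V(C_i)$ designating the entry and exit of $C_i$, (ii) a cyclic order on the components along which the walk passes, and (iii) a minimum-weight arc multiset within each $C_i$ that completes $R \cap A(C_i)$ to a $u_i$--$v_i$ trail.

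Given this normal form, the algorithm enumerates all such configurations and, for each one, computes its cost: part (ii) becomes a sum of precomputed shortest-path distances between consecutive $v_i$ and $u_{i+1}$, and part (iii) is a directed Chinese-Postman-type subproblem per component, reducible to a min-cost perfect matching on the unbalanced vertices of $(V(C_i), R \cap A(C_i))$ after orienting a virtual $v_i$--$u_i$ request and solvable in $\bigO(n^3)$ time. The number of ordered sequences of entry/exit pairs is at most $n^{2c}$; fixing one component to break the $c$-fold rotational symmetry of the cyclic order, together with the combinatorial cancellation from identifying configurations that differ only by reordering the remaining components, yields the claimed $\bigO(n^{2c-2}/c!)$ configurations. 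Multiplying by $\bigO(n^3)$ per configuration gives the total bound.

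The main obstacle, I expect, lies in the structural lemma: justifying that one may restrict attention to configurations in which each component is visited exactly once as a contiguous block, so that a single $(u_i, v_i)$-pair per component suffices. While intuitively clear via a ``merge repeated visits'' exchange, the argument must respect the directed nature of the arcs and the possibility that the internal balance of a component forces multiple passes; this might require enriching the normal form with multiplicities or handling the unbalanced-vertex bookkeeping inside the per-component matching subroutine rather than in the enumeration. Once this structural claim is settled, the counting of configurations and the per-configuration polynomial subroutine are largely routine.
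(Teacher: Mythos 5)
First, a point of reference: the paper does not prove this theorem at all --- it is quoted from \citet{Fre77} --- so there is no in-paper proof to match against. Judged on its own merits, your proposal has a genuine gap in its central structural lemma. You reduce the problem to choosing a single cyclic order on the $c$ components together with one entry/exit pair $(u_i,v_i)$ per component, i.e., you assume some optimal solution threads the components like a Hamiltonian cycle on~$\comp{G}$. This is false in general. The required arcs inside a component may leave many vertices unbalanced, and each unit of imbalance forces a separate entry or exit of the walk into that component; the ``merge repeated visits'' exchange cannot eliminate these, because shortcutting two visits into one changes the degree balance at the merged endpoints and skips the required arcs traversed in between. The paper's own structural result for the equivalent Eulerian-extension formulation, \autoref{the:eestructure}, makes the correct picture explicit: the connecting structure is a cycle-free union of up to $c(c-1)/4$ long trails plus up to $|I_G^+|$ further paths in~$\comp{G}$ --- a forest-like object, not a single cycle visiting each component once. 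Your per-component Chinese-postman subroutine inherits the same defect: the cheapest way to balance a vertex of $C_i$ may use a path through another component, so the balancing step cannot be decomposed component by component after fixing a threading. You flag this obstacle yourself at the end, but the counting of configurations ($n^{2c}$ entry/exit sequences modulo rotation and reordering) is built entirely on the failed normal form, and it does not in any case yield the exponent $2c-2$: reordering and rotation arguments change the $1/c!$ factor, not the power of $n$.

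The repair is to separate connectivity from balance. Enumerate only the \emph{connecting} structure: a minimal set of $c-1$ inter-component links (a spanning structure on the components, cf.\ \autoref{obs:hintsinadvice}), each link realized as a shortest path between an ordered pair of endpoints; this gives at most $n^{2(c-1)} = n^{2c-2}$ endpoint choices, with the $1/c!$ coming from the symmetry among components, and it is also where the exponent drops from your $2c$ to $2c-2$, since a connecting tree needs only $c-1$ links. For each candidate connecting set the residual problem is one \emph{global} minimum-cost balancing of $I_G^+$ against $I_G^-$ over shortest-path distances --- a transportation/matching problem solvable in $\bigO(n^3)$ time, analogous to the $\solveconnected$ subroutine the paper invokes --- rather than $c$ independent per-component matchings. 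Your all-pairs shortest-path preprocessing and the translation via \autoref{obs:walkvseuleriangraph} are the right first steps; it is the normal form and the object being enumerated that must change.
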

From this theorem it immediately follows that \pRPs{} parameterized by the number of components in~$G\langle R \rangle$ is in~\claXP{} and thus, by \autoref{obs:rpwmeeparaequiv}, \pWMEEs{} parameterized by the number of components in the input graph also is in~\claXP .

\subsection{Relations to the Hamiltonian Cycle Problem}\label{sec:relhcwmee}
In this section we observe that the difficulty of solving \pWMEE{}~(\pWMEEs{}) depends on the number of components in the input graph. This is done using a reduction from the \pHC{} problem. A natural question is, whether the difficulty of solving \pWMEEs{} depends \emph{only} on the number~$c$ of components, that is, whether \pWMEEs{} is fixed-parameter tractable with respect to the parameter~$c$. We attack this question in \autoref{sec:conncomp}, especially \autoref{sec:matching}.

This section also shows that the parameter ``sum of all positive balances of vertices in the input graph'' for \pWMEEs{} will likely not yield fixed-parameter tractability.

The \pHC{} problem is defined as follows.
\begin{definition}
  Let~$G$ be a directed graph. A cycle in~$G$ is called \emph{Hamiltonian} if it traverses every vertex in~$G$ exactly once.
\end{definition}
\decprob{\pHC{} (\pHCs{})}{A directed graph~$G$.}{Is there a Hamiltonian cycle in~$G$?}
\citet{Orl76} notes that the complexity of \pRPs{} seems to depend on the number of connected components in~$G\langle R \rangle$, where~$G$ is the input graph and~$R$ is the set of required arcs. In a way, \citet{LR76} proved this statement by giving a reduction from the \NPh{}~\cite{Kar72} \pHCs{} problem such that the number of components in~$G\langle R \rangle$ in the \pRPs{} instance is exactly the number of vertices in the \pHCs{} instance. In this section, we give a reduction from \pHCs{} to \pWMEEs{} illustrating that the same is true for \pWMEEs{}.

The main idea of the reduction is that any \EE{} for \pWMEEs{} has to connect all connected components in the input graph. Thus, we model every vertex by a connected component consisting of two vertices that are connected by two arcs: One arc in either direction. To model edges in the instance of \pRPs{}, we utilize the weight function and choose~$\wf_{max}$ accordingly to ensure that every feasible \EE{} is a cycle.
\begin{construction}\label{cons:redhctowmee}
  Let the directed graph~$G' = (V', A')$ constitute an instance of \pRPs{}. Construct an instance of \pWMEEs{} as follows: 

Define the directed multigraph~$G = (V, A)$ by~$V := V' \times \{0, 1\}$ and~\[A := \{((v, 1), (v, 0)), ((v, 0), (v, 1)) : v \in V'\}\text{.}\] Set the maximum weight~$\wf_{max} := |V'|$ and define the weight function~$\wf$ by~\[
\wf(a) := 
\begin{cases}
  1,& a = ((u, 0), (v, 0)) \wedge (u, v) \in A', \\
  \infty,& \text{otherwise.}
\end{cases}
\]
\end{construction}
It is easy to see that this construction is correct using the following observation:
\begin{observation}
  Any \EE{}~$E$ for~$G$ with~$\wf(E) \leq \wf_{max}$ is a cycle.
\end{observation}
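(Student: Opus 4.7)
The plan is to exploit the very rigid structure of the weight function together with the balance and connectivity requirements imposed by \autoref{the:eulerian}. First I would observe that every arc of finite weight in the construction has weight exactly~$1$ and is of the form~$((u,0),(v,0))$ with~$(u,v) \in A'$. Since~$\wf(E) \leq \wf_{max} = |V'|$, this forces~$|E| \leq |V'|$ and, crucially, $E$ touches only the ``0-vertices''~$\{(v,0) : v \in V'\}$, never any ``1-vertex''.

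Next I would count balances and degrees. The graph~$G$ itself consists of~$|V'|$ two-vertex components~$\{(v,0),(v,1)\}$, each internally balanced and connected. Because~$G+E$ must be connected (\autoref{the:eulerian}) and~$E$-arcs only touch 0-vertices, every~$(v,0)$ must be incident to at least one arc of~$E$. Because~$G+E$ must also be balanced and every vertex in~$G$ alone is already balanced, the in- and out-degrees of~$(v,0)$ contributed by~$E$ must coincide, so~$\indeg_E((v,0)) = \outdeg_E((v,0)) \geq 1$ for all~$v \in V'$. Summing gives~$|E| = \sum_{v \in V'} \indeg_E((v,0)) \geq |V'|$, which combined with~$|E| \leq |V'|$ yields equality throughout: $|E| = |V'|$ and every 0-vertex has in- and out-degree exactly~$1$ in~$E$.

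To finish, I would argue that the underlying undirected graph of~$E$ on the 0-vertices is connected: since each component of~$G$ internally connects~$(v,0)$ to~$(v,1)$, the connectivity of~$G+E$ is equivalent to the~$E$-induced undirected graph on~$\{(v,0) : v \in V'\}$ being connected. A connected directed multigraph in which every vertex has in- and out-degree exactly~$1$ is a single directed cycle (indeed by \autoref{the:eulerian} it has an \Eu trail, and such a trail can visit each vertex only once). Hence~$G\langle E \rangle$ is a cycle, which is the desired conclusion.

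The only subtle point, and hence the main thing to be careful about, is the step that converts~$|E| \leq |V'|$ and the connectivity/balance requirements into the tight lower bound~$|E| \geq |V'|$ and the per-vertex degree equality; once that is pinned down, the characterization of a connected~$1$-regular digraph as a single cycle gives the result immediately.
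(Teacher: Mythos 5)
Your proof is correct, and it takes a cleaner route than the paper's. The paper argues by contradiction: it notes $E$ needs at least $|V'|-1$ arcs for connectivity, that every maximum-length trail in $E$ must be closed since $G$ has no unbalanced vertices, and then supposes some vertex has three incident arcs in $E$, deriving (by a somewhat informal count of the arcs still needed to connect the remaining components and to rebalance that vertex) a total of $|V'|+1 > \wf_{max}$ arcs. You instead make explicit a fact the paper leaves implicit --- that the weight function confines $E$ to arcs between $0$-vertices --- and then run a global double-counting argument: connectivity forces every $(v,0)$ to be touched by $E$, balance forces $\indeg_E((v,0)) = \outdeg_E((v,0))$, summing in-degrees gives $|E| \geq |V'|$, and the weight bound gives $|E| \leq |V'|$, so $E$ is exactly $1$-in-$1$-out-regular on the $0$-vertices; a connected such digraph is a single directed cycle. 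What your approach buys is precision: it pins down $|E| = |V'|$ and per-vertex regularity exactly, and avoids the paper's slightly hand-wavy step that ``to connect the remaining connected components via a closed trail, $E$ has to contain at least $|V'|-3$ arcs.'' The one degenerate case both arguments silently skip is $|V'| = 1$, where $E = \emptyset$ is a valid \EE{} but not a cycle; this is harmless for the reduction but worth a remark.
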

\begin{proof}
  Since~$E$ has to connect~$|V'|$ connected components in~$G$, it contains at least~$|V'|-1$ arcs. The \EE{}~$E$ cannot contain a maximum-length trail that is open, since there are no unbalanced vertices in~$G$. For sake of contradiction assume that~$E$ contains three arcs that are incident to one vertex~$v$ in~$G$. Then, to connect the remaining connected components in~$G$ via a closed trail,~$E$ has to contain at least~$|V'| - 3$ arcs. However, then~$v$ is still not balanced and~$E$ has to contain at least one additional arc, totalling in~$|V'| + 1$ arcs. Thus, by contradiction, every vertex in~$G$ has at most two incident arcs in~$E$ and thus~$E$ is a cycle.
\end{proof}
Thus, \autoref{cons:redhctowmee} is correct and we have that the difficulty in solving \pWMEEs{} depends on the number of components in the input graph. But the reduction given by \autoref{cons:redhctowmee} also gives the following observation.
\begin{observation}
  \pWMEEs{} parameterized by the sum of all positive balances of vertices in the input graph is not contained in \claXP, unless~$\claP = \claNP$.
\end{observation}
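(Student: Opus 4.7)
The plan is to observe that Construction~\ref{cons:redhctowmee} already produces \pWMEEs{} instances whose parameter value is fixed (in fact zero), and then leverage the definition of \claXP{} to derive a polynomial-time algorithm for \pHCs{}.

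First I would inspect the multigraph~$G = (V, A)$ built by Construction~\ref{cons:redhctowmee}. For every~$v \in V'$, the vertex~$(v,0)$ receives exactly one incoming arc (from~$(v,1)$) and has exactly one outgoing arc (to~$(v,1)$); symmetrically for~$(v,1)$. Hence every vertex of~$G$ is balanced, so~$I_G^+ = \emptyset$ and the sum of all positive balances in~$G$ equals~$0$. In other words, the reduction from \pHCs{} to \pWMEEs{} supplied by Construction~\ref{cons:redhctowmee}, whose correctness is established in the preceding paragraph, is a polynomial-time many-one reduction that outputs instances whose parameter value is a fixed constant independent of the input.

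Next I would feed this through the definition of \claXP{}. Suppose, for the sake of contradiction, that \pWMEEs{} parameterized by the sum~$b$ of positive balances lies in~\claXP{}. Then there exists a computable function~$f$ and a decision procedure running in time~$\length(I)^{f(b(I))} + f(b(I))$ on every instance~$I$. Restricting this procedure to the image of the reduction, where~$b(I) = 0$ always holds, yields running time~$\length(I)^{f(0)} + f(0)$, which is polynomial in~$\length(I)$ since~$f(0)$ is a fixed constant.

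Composing the polynomial-time reduction with this polynomial-time algorithm would decide \pHCs{} in polynomial time. Since \pHCs{} is \NPc{}~\cite{Kar72}, this would imply~$\claP = \claNP$, contradicting the assumption. The argument is essentially a one-line consequence of the previous construction; the only thing that needs to be verified with care is that every vertex of the reduction's output is indeed balanced, so no nontrivial obstacle arises here.
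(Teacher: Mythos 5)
Your proof is correct and follows exactly the paper's argument: observe that Construction~\ref{cons:redhctowmee} outputs instances with all vertices balanced (hence parameter value~$0$), so \claXP{} membership would give a polynomial-time algorithm for all~$b=0$ instances and thus for \pHCs{}. The only difference is that you spell out the degree count and the \claXP{} running-time bound explicitly, which the paper leaves implicit.
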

\begin{proof}
  Observe that all vertices in the graph~$G$ produced by \autoref{cons:redhctowmee} are balanced. If \pWMEEs{} parameterized by the sum~$b$ of all positive balances of vertices in the input graph was in \claXP , in particular all instances with~$b=0$ were solvable within polynomial time. Thus, \pHCs{} would be solvable within polynomial time.
\end{proof}%

\subsection{Constrained Eulerian Extensions}\label{sec:constrainedEE}
A natural modification of \EE{} problems is to give constraints on the set of edges or arcs that can be added to the input graph in order to make it Eulerian. % We say that the \EE{} problem is \emph{constrained} with the corresponding properties of the arc or edge set. 
Note for example that in \pEE{} on graphs we can regard the condition that the input graph has to remain a graph with the added edges as a constraint on the allowed edges (that is multiedges are forbidden). Thusly constrained problems might also be interesting in practice. For instance, \citet{HJM09} observed that the following class of constrained \EE{} problems has applications to sequencing problems: 
\decprob{\pdDEE}{A directed graph~$G=(V,A)$, where $V \subset \mathbb{Q}^d$.}{Is there an \EE{}~$E$ for~$G$ such that for every~$(u, v) \in E$ it holds that~$u \geq v$ component-wise?}
However, \citet{HJM09} also have proven that \pdDEE{} is \NPc . We model such constraints on the extension edges in such problems as instances of \pWMEE{} by simply defining the weight function accordingly---assigning forbidden arcs or edges the weight~$\infty$, and setting the maximum weight to a large enough value.

We use \pdDEE{} as a helper problem in \autoref{sec:incompress}. In order to deal conveniently with the arc constraints, we introduce some notation at this point.
\begin{definition}
  Let~$\wf$ be a weight function assigning weights~$[0 , \wf_{max}] \cup \{\infty\}$ to arcs. An arc~$a$ is called \emph{allowed} with respect to~$\wf$ if~$\wf(a) < \infty$. If the weight function is clear from the context, then we simply say that the arc is allowed.
\end{definition}

% Other natural constraints for directed \EE{} include transitivity and anti-symmetry of the allowed arc set when interpreted as ordering relation. 

% TODO sequencing

% Observe that in this problems, the allowed arcs form a transitive and anti-symmetric relation on~$V$. When assuming that every vertex relates to itself, we get a partial ordering on~$v$. This implies that \pdDEE{} is a subclass of directed \EE{} problems constrained with reflexivity, anti-symmetry and transitivity. However, as we can easily see, it is also a proper subclass of these problems:

% TODO special case proof 

% TODO
% \begin{itemize}
% \item Introduce $d$-DEE as -> Hoehn as special case of WMEE
% \item Idea for reduction from flowshop problems, applications 1DEE, 2DEE
% \item proof for special case?
% \item Open: Problem kernel for $d$-DEE?
% \end{itemize}

%  Scratch:
% \begin{itemize}
% \item Complexity: $\bigO(\quer{m}\sqrt{n})$ \cite{BST77}
% \item Idea: Match unbalanced vertices in the complement graph.
% \item Complexity directed, connected: $\bigO(n\quer{m}\log(n))$ \cite{DMNW10}
% \item Idea: Max flow between unbalanced vertices in the complement graph.
% \item Complexity directed, disconnected: $\bigO(\quer{m}\log(n)(\quer{m} + n \log(n)))$ \cite{DMNW10}
% \item Complexity multigraphs: $\bigO(n + m)$ \cite{DMNW10}
% \item Complexity weighted: $bigO(|I_G|^3log(|I_G|)$ \cite{DMNW10}
% \item Complexity multigraphs
% \end{itemize}

\section{Our Work}
In recent research by \citet{DMNW10} the problem \pWMEE{}~(\pWMEEs{}) has been shown to be fixed-parameter tractable with respect to the parameter~$k=\text{``number of arcs in the sought \EE{}''}$.\footnote{The actual parameter is slightly more complicated---see page~\pageref{def:paraeulerpapier}---but the intuition of the number of needed extension arcs suffices here.} In this work we initiate a more fine-grained analysis of the \pWMEEs{} problem by considering parameters that are upper bounded by~$k$. In particular, we study the parameterizations ``number~$c$ of components in the input graph'' and ``sum~$b$ of all positive balances of vertices in the input graph''. Since any \EE{}~$E$ for a multigraph has to produce a connected graph, it holds that~$|E| \geq c - 1$ and thus~$k \geq c - 1$ . Also, any \EE{}~$E$ has to balance all vertices in the given multigraph, that is, for every vertex~$v$ with balance~$d > 0$, it has to contain at least~$d$ outgoing arcs. Hence it holds that~$|E| \geq b$ and thus~$k \geq b$. \autoref{tab:paracomplexitywmee} gives a compact overview over the new and known results regarding \pWMEEs{}.
\begin{table}
  \centering
  \begin{tabularx}{\textwidth}{X XlX XlX XlX X} %XcX XcX X}
    \toprule
     &\multicolumn{9}{c}{Parameterized complexity results for \pWMEE{}} &\\
%    & \multicolumn{6}{c}{Multigraphs} & \\
     & \multicolumn{3}{c}{Parameter} & 
       \multicolumn{3}{c}{Known} &
       \multicolumn{3}{c}{New} &\\
%    & \multicolumn{3}{c}{connected}
 %   & \multicolumn{3}{c}{disconnected} & \\
    \midrule
    && $k$ &&& $\in \claFPT{} : 4^k$ &&& no polykernel && \\%&&& $n + m$ &&& $n + m$ & \\
    && $c$ &&& $\in \claXP$ &&& $\in \claW{P}$, no polykernel && \\ %&&& $n + m$ &&& $n + m$ & \\
    && $b, c$ &&& --- &&& $\in \claFPT{} : 4^{c \log(bc^2)}$, no polykernel && \\ %&&& $n + m$ &&& $n + m$ & \\
    \bottomrule
  \end{tabularx}
  \caption{Overview on parameterized complexity results for \pWMEEs{} regarding various parameters. Fixed-parameter tractability results include the superpolynomial term of the corresponding algorithm. Known results: The fixed-parameter tractability result for parameter~$k$ is due \citet{DMNW10}. The \claXP-result for parameter~$c$ is due \citet{Fre77} (see \autoref{sec:relrpwmee}). New results: The fixed-parameter tractability result for the combined parameter~$b, c$ is shown in \autoref{the:kcalgwmeea} and \autoref{cor:wmeefptbc}. The \claW{P}-result for parameter~$c$ follows from \autoref{obs:cbminnpwp} and \autoref{the:cbmwmeeequiv}. The non-existence of polynomial-size problem kernels is shown in \autoref{the:nonpoly} and its corollaries.}
  \label{tab:paracomplexitywmee}
\end{table}

\pWMEEs{} parameterized only with~$b$ is already \NPh{} when~$b = 0$: Consider the reduction we give in \autoref{sec:relhcwmee} to prove \NPhs{} for \pWMEEs{}. This reduction produces instances with~$b = 0$. Also, the question whether \pWMEEs{} is fixed-parameter tractable when parameterized by~$c$ is a long-standing open question which arose implicitly in research by \citet{Fre77, Fre79}. His work implies that \pWMEEs{} is polynomial-time solvable for every constant value of~$c$ (see \autoref{sec:relrpwmee}). However, his algorithm does not imply fixed-parameter tractability and this question seems to be hard to answer. Nevertheless, in \autoref{sec:conncomp} we show that when parameterizing with both~$b$ and~$c$ \pWMEEs{} becomes fixed-parameter tractable.

Pursuing the question whether \pWMEEs{} is fixed-parameter tractable with only the parameter~$c$, we restate \pWMEEs{} in the context of matchings in \autoref{sec:conncomp} and show that the problem \pCBM{} is parameterized equivalent to \pWMEEs{}. Using the matching formulation we obtain a fixed-parameter tractability result for a restricted class of \pWMEEs{} when parameterized by~$c$.

We also consider preprocessing routines for \pWMEEs{} in \autoref{sec:incompress}. In this regard, we show that \pdDEE{} does not admit a polynomial problem kernel with respect to the parameter~$k$. The result also transfers to the parameters~$b, c$ and the more general problem \pWMEEs{}.

\chapter{Connected Components}\label{sec:conncomp}
The main results given in this chapter are an efficient algorithm for \pWMEE{}~(\pWMEEs{}) with running time in~$\bigO(4^{c\log(bc^2)}n^2(b^2 + n\log(n)) + n^2m)$ and the parameterized equivalence of \pWMEEs{} parameterized by~$c$ and \pCBM{}. Here,~$c$ is the number of components and~$b$ is the sum of all positive balances in the input graph, that is for the input graph~$G$, it is~$b = \sum_{v \in I_G^+}\balance(v)\text{.}$ The equivalence to the matching problem also yields an algorithm for a restricted form of \pWMEEs{} with~$\bigO(2^{c(c + \log(2c^4))}(n^4 + m))$~running time. The latter result represents some partial progress to answer the question of whether \pWMEEs{} is fixed-parameter tractable with respect to the parameter~$c$.

We first make some observations about \EE s in \autoref{sec:trails} which expose that every \EE{} corresponds to a specific structure that has an intimate relationship to the connected components of the input graph. This then leads to a modified problem derived from \pWMEEs{} in \autoref{sec:advice}. There we consider the problems \pWMEEA{} (\pWMEEAs{}) and \pWMEECA{} (\pWMEECAs) where the structure of the sought \EE s is made explicit in the input. These restricted problems seem to be easier to tackle and we derive an algorithm with~$\bigO(4^{c\log(b)}n^2(b^2+ n\log(n)) + n^2m)$~running time for \pWMEECAs{}. Using observations about the relationship between \pWMEEs{} and \pWMEECAs{} we derive an algorithm for \pWMEEs{} running in~$\bigO(4^{c\log(bc^2)}n^2(b^2 + n\log(n)) + n^2m)$~time.

In \autoref{sec:matching} we introduce \pCBM{}~(\pCBMs{}) and show that it is tractable on some restricted graph classes. We give parameterized reductions from \pWMEEs{} to \pCBMs{} and from \pCBMs{} to \pWMEEs{} using some intermediary problems that we introduce in \autoref{sec:advice}. This then yields the parameterized equivalence of \pCBMs{} and \pWMEEs{}. As simple corollaries, we derive fixed-parameter tractability of \pWMEEs{} with respect to parameter~$c$ on some restricted input instances. The reductions also yield some results for intermediary problems, for example a problem kernel for \pWMEECAs{} that has size polynomial in~$b$ and~$c$.

Consult \autoref{fig:reduction-schematic} and \autoref{tab:tractresults} for an overview on the reductions given in this chapter and the tractability results obtained.
\begin{figure}%b
  \begin{center}
    \includegraphics{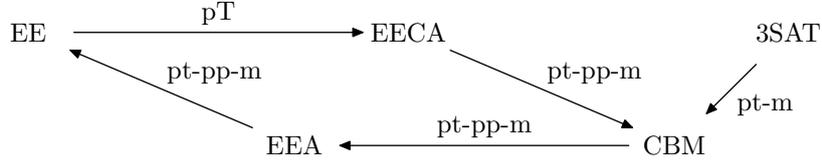}
    \caption{Schematic overview on the reductions given in this chapter. The label~``pT'' indicates a parameterized Turing reduction, the label~``pt-pp-m'' indicates a polynomial time polynomial parameter many-one reduction, and the label~``pt-m'' indicates a classical polynomial time many-one reduction. The reductions from and to \pWMEEs{} are covered in \autoref{sec:advice}. The reductions from and to \pCBMs{} are given in \autoref{sec:matching}.}
    \label{fig:reduction-schematic}
  \end{center}
\end{figure}%
\begin{table}
  \begin{minipage}{\linewidth}
  \centering
  \begin{tabularx}{\textwidth}{X l X l X l X} %XcX XcX X}
    \toprule
    & \multicolumn{5}{c}{Tractability results} & \\
    % & \multicolumn{6}{c}{Multigraphs} & \\
    & \multicolumn{1}{c}{Problem} && \multicolumn{1}{c}{Result} && \multicolumn{1}{c}{Proposition} & \\
%    & \multicolumn{3}{c}{connected}
 %   & \multicolumn{3}{c}{disconnected} & \\
    \midrule
    & \pCBMs{}\footnote{When the input graph is a forest.}  && $n+m$ && \autoref{cor:cbmlintimeonforests} \\
    & \pCBMs{}\footnote{When the bipartite input graph has maximum degree two in one of its cells.} && $2^{j(j+1)}n + n^3$ && \autoref{lem:cbmmaxdeg2tractable}  \\
    & \pWMEECAs{} && $4^{c\log(b)}n^2(b^2+ n\log(n)) + n^2m$ && \autoref{the:kcalgwmeea} \\
    & \pWMEEAs{} && $b^2c$ vertex kernel && \autoref{cor:eecaprobkernel} \\
    & \pWMEEs{}\footnote{When the allowed arcs ``resemble'' a forest.} && $16^{c\log(c)}(cn^4 + m)$ && \autoref{cor:wmeetractableforests}  \\
    & \pWMEEs{}\footnote{When the allowed arcs ``resemble'' a vertex-disjoint union of cycles.} && $2^{c(c + \log(2c^4))}(n^4 + m)$ && \autoref{cor:wmeetractablecycles}  \\
    & \pWMEEs{} && $4^{c\log(bc^2)}n^2(b^2 + n\log(n)) + n^2m$ && \autoref{cor:wmeefptbc}  \\
    \bottomrule
  \end{tabularx}
  \caption{Overview on tractability results given in this chapter. All values in big-O notation. Here,~$j$ denotes the parameter ``join set size'' in \pCBMs{} instances. This parameter corresponds to the parameter ``number of components'' in \pWMEEs{} instances in reductions we give in this chapter.}
  \label{tab:tractresults}
\end{minipage}
\end{table}

\section{Structure of Eulerian Extensions}
\label{sec:trails}

In this section, we show that we can assemble a minimum-weight \EE{} for a graph~$G$ using trails that are of restricted structure, and bound the length and number of ``long'' trails by polynomials in the number of components in~$G$. To this end, we consider trails in \EE s. 
%\begin{definition}
%  We call a maximal-length trail with arcs in~$E$ \emph{\etrail} if it is closed and \emph{\pmtrail} otherwise. (See \autoref{fig:pmetrails})
%\end{definition}
\begin{figure}
  \begin{center}
    \includegraphics{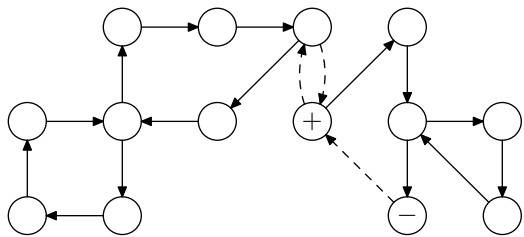}
    \caption{Examples of a closed maximum length trail (left, solid arcs) and an open maximum length trail (right, solid arcs) in an \EE{} (solid arcs). Arcs belonging to the input graph are dashed. Observe that the vertex~$+$ is the only vertex in~$I_G^+$ and the vertex~$-$ is the only vertex in~$I_G^-$.}
    \label{fig:pmetrails}
  \end{center}
\end{figure}

We investigate preprocessing routines for instances of \pWMEEs{}---namely, we split vertices (\autoref{trans:sb}) and use shortest-path preprocessing (\autoref{trans:spp})---that allow us to modify any valid \EE s such that we can make assumptions about their trails without increasing the weight of the extensions. In this section, we frequently use the component graph~$\comp{G}$ of a graph~$G$ and the mapping~$\meta{G}{t}$ of trails~$t$ in~$G$ to trails in~$\comp{G}$. These are defined on page~\pageref{def:meta} in \autoref{sec:prelim}. The main result of this section is as follows.
\newcommand{\eestructuretheorem}{%
  Let~$G$ be a directed multigraph with~$c$ connected components. Let~$G$ and the weight function~$\wf: V \times V \rightarrow [0, \wf_{max}]\cup \{\infty\}$ constitute an instance of \pWMEE{} that is preprocessed using \autoref{trans:sb} and \autoref{trans:spp}. Then, there is a set~$S := \{t_1, \ldots, t_k\}$ of pairwise edge-disjoint paths and cycles each in the graph~$(V, V \times V)$ such that
  \begin{lemenum}
  \item $\bigcup_{i=1}^kA(t_i)$ is an \EE{} of minimum weight for~$G$,\label{enu:ees5}
%  \item $t_i$ is of maximum length in~$\bigcup_{j=i}^{k} A(t_i)$, \label{enu:ees1}
  \item each~$t_i \in S$ contains at most~$c+1$ vertices,\label{enu:ees2}
  \item in~$S$ there are at most~$c(c-1)/4$ paths and cycles containing more than one arc,\label{enu:ees3}
  \item the number of paths in~$S$ is at most~$|I_G^+| = |I_G^-|$,\label{enu:ees4}
  \item for~$t_i \neq t_j \in S$ of length at least two~$\meta{G}{t_i}$, and~$\meta{G}{t_j}$ are edge-disjoint,\label{enu:ees7}
  \item the graph defined by the union of all trails~$\meta{G}{t_1}, \ldots, \meta{G}{t_n}$ without their initial vertices does not contain a cycle. \label{enu:ees6}
  \end{lemenum}
%There is a minimum-weight \EE{} for~$G$ that is the union of at most~$c(c-1)/4$ \pmpath s and \ecycle s of length~$l$,~$2 \leq l \leq c + 1$ and at most~$|I_G^+|$ \pmpath s of length~$1$.
}
\begin{theorem}\label{the:eestructure}
  \eestructuretheorem
\end{theorem}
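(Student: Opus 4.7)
My plan is to start from an arbitrary minimum-weight \EE{} $E^\ast$ of $G$ and reshape it, via weight-preserving local moves, into a family $S$ of paths and cycles in $(V, V \times V)$ satisfying enuref{enu:ees5}--enuref{enu:ees6}. First I would perform the obvious trail decomposition of $E^\ast$: because $G + E^\ast$ is Eulerian, $E^\ast$ has exactly $\balance(v)$ excess outgoing arcs at each $v \in I_G^+$ and symmetrically at $v \in I_G^-$, and is balanced elsewhere, so a standard Euler-style decomposition yields $|I_G^+| = |I_G^-|$ open trails from $I_G^+$ to $I_G^-$ together with closed trails on the remaining arcs. Whenever a trail revisits a vertex, split it there; iterating makes every trail a simple path or a cycle without changing the underlying arc multiset, so the arc union remains a min-weight \EE, giving enuref{enu:ees5} and enuref{enu:ees4}.

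For the length bound enuref{enu:ees2}, suppose some $t_i$ has at least $c+2$ vertices. By pigeonhole, two of its vertices $u$ and $v$ lie in the same component $C$ of $G$. \autoref{trans:spp} guarantees that $\wf(u,v)$ is at most the weight of the sub-trail of $t_i$ from $u$ to $v$, so I would split $t_i$ at $u$ and $v$, replace the middle piece by the single arc $(u,v)$, and retain the removed piece as an auxiliary path or cycle; by \autoref{trans:sb} all intermediate balances stay consistent, and re-applying the splitting step of the first paragraph keeps everything simple. Iterating until no trail has more than $c+1$ vertices gives enuref{enu:ees2}. For enuref{enu:ees7} and enuref{enu:ees6}, choose among all families satisfying the preceding invariants one that lexicographically minimises (number of long trails, total meta-length). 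If two long trails shared a meta-edge, i.e.\ both crossed between the same components $C,C'$, splicing them at that shared crossing yields a single closed walk that re-decomposes into strictly fewer long trails, contradicting extremality; an analogous swap eliminates any cycle of $\bigcup_l \meta{G}{t_l}$ with initial vertices removed. Finally enuref{enu:ees3} is then a counting consequence: after enuref{enu:ees2} every long trail contributes at least two distinct edges to $\comp{G}$, these edges are pairwise disjoint across long trails by enuref{enu:ees7}, and $\comp{G}$ has only $c(c-1)/2$ edges.

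The main obstacle is the interaction between the length-shortening step and the extremality argument: the SPP shortcut can destroy meta-edge-disjointness or acyclicity already achieved, so in a rigorous write-up the invariants should be enforced simultaneously by a single potential function on $S$ (e.g.\ weight, then total vertex-count, then number of long trails, then total meta-length, in lexicographic order) and each local move should be shown to strictly decrease it. Making each move explicit enough to certify that the resulting objects are genuine simple paths and cycles in $(V, V \times V)$ — and not walks that covertly revisit vertices or duplicate extension arcs — is where I expect the bulk of the case analysis to reside, and the role of \autoref{trans:sb} is precisely to keep positive- and negative-balance vertices at unit balance so that every splitting and splicing operation is unambiguous.
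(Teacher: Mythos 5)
Your overall skeleton (decompose a minimum-weight \EE{} into paths and cycles, then shorten and disentangle them using the shortest-path weights) is the same as the paper's, but two of your concrete local moves do not work. For the length bound \enuref{enu:ees2} you propose to ``replace the middle piece by the single arc $(u,v)$ \emph{and} retain the removed piece as an auxiliary path or cycle.'' Doing both adds $\wf(u,v)$ to the total weight and gives $u$ a second outgoing and $v$ a second incoming extension arc, so the arc union is neither minimum-weight nor an \EE{} any more (both $u$ and $v$ become unbalanced). The correct move is \autoref{trans:shortcut}: the arcs of the middle piece are \emph{deleted} and only the single arc $(u,v)$ is substituted; \autoref{trans:spp} makes this weight-non-increasing, and balance is preserved because every inner vertex of the removed piece loses one indegree and one outdegree --- not because of \autoref{trans:sb}, whose only role in this theorem is the bound \enuref{enu:ees4}. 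The price of deleting is that connectivity of the augmented graph must be re-established (this is \autoref{lem:shortcut}\enuref{enu:ls2} and \autoref{lem:metashortcut}), a point your write-up never addresses because keeping the middle piece hides it. Note also that shortcutting only the part strictly between $u$ and $v$ still leaves two vertices of the same component on the trail; to get down to $c+1$ vertices you must shortcut a subtrail that contains one of them as an \emph{inner} vertex, as in \autoref{obs:shorttrails}.

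Second, your argument for \enuref{enu:ees7} splices two trails ``at a shared crossing,'' but two trails whose images under $\meta{G}{\cdot}$ share an edge of $\comp{G}$ need not share any actual vertex of $G$, so the splice is undefined; and even where they do meet, concatenating and re-decomposing keeps all arcs and hence keeps the duplicated meta-edge, so it neither decreases your potential nor establishes edge-disjointness. The paper instead shortcuts one of the two offending trails across the shared stretch of $\comp{G}$ (\autoref{lem:metashortcut}, \autoref{obs:metaedgedisjoint}, \autoref{obs:metanocycle}), using the surviving trail to certify that every component visited by the deleted piece stays connected. Your instinct that the various shortcutting steps interact and should be controlled by a single decreasing potential is sound --- the paper implicitly does this by noting that each shortcut strictly decreases the number of extension arcs --- but without repairing the two moves above the induction has nothing valid to iterate.
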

In this section, let $G=(V,A)$ be a directed multigraph, let $E$ be an \EE{} for $G$---that is $G + E :=(V, A \cup E)$ is Eulerian---and let the function~$\wf: V \times V \rightarrow [0, \wf_{max}]\cup \{\infty\} $ be a weight function.
\begin{observation}
  \label{obs:walkdichotomy}
  A maximum-length trail in an \EE{} for a graph~$G$ either is closed or starts in~$I^+_G$ and ends in~$I^-_G$.
\end{observation}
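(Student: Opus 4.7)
The plan is to assume the trail is open and derive that its endpoints lie in $I_G^+$ and $I_G^-$ from a simple counting argument about degrees and the Eulerian condition on $G+E$. So let $t$ be a maximum-length trail in $E$ with initial vertex $u$ and terminal vertex $v$, and assume $u\neq v$; otherwise $t$ is closed and there is nothing to prove.

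First I would extract what maximality of $t$ forces. Since no arc of $E$ incident to $v$ going out of $v$ may be appended at the end (otherwise $t$ could be extended), $t$ must use \emph{every} outgoing arc of $v$ in $E$; symmetrically, $t$ uses every incoming arc of $u$ in $E$. Then I would count the occurrences of $v$ in $t$. Writing $p$ for the number of times $v$ appears in the vertex sequence of $t$, the fact that $v$ is the terminal vertex means that $v$ is entered exactly $p$ times and left exactly $p-1$ times, so $t$ uses $p$ incoming arcs and $p-1$ outgoing arcs of $v$ in $E$. Combined with maximality, this yields $\outdeg_E(v)=p-1$ and $\indeg_E(v)\geq p$, whence $\balance_E(v)\geq 1$. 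An entirely analogous count at $u$ (where $u$ appears $q$ times and is left first, so it is left $q$ times and entered $q-1$ times) together with maximality gives $\indeg_E(u)=q-1$ and $\outdeg_E(u)\geq q$, hence $\balance_E(u)\leq -1$.

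To finish I would invoke \autoref{the:eulerian}: since $G+E$ is Eulerian, every vertex is balanced in $G+E$, and since balances add over arc multisets on the same vertex set, $\balance_G(w)=-\balance_E(w)$ for every vertex $w$. Applied at $u$ and $v$ this gives $\balance_G(u)\geq 1$ and $\balance_G(v)\leq -1$, i.e.\ $u\in I_G^+$ and $v\in I_G^-$, which is what we needed.

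The only real subtlety, and the step I would take most care with, is the bookkeeping of incidences at $u$ and $v$ in $t$: one has to distinguish ``appearance'' of a vertex in the trail from its incoming versus outgoing arc usage, and to use that a trail traverses each arc at most once so that the counts above are tight from one side and bounded from the other exactly as needed. Everything else is an immediate application of \autoref{the:eulerian}.
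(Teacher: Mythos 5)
Your proof is correct and rests on the same mechanism as the paper's: maximality of the trail forces all outgoing (resp.\ incoming) extension arcs at the terminal (resp.\ initial) vertex to be used, and the balance condition of \autoref{the:eulerian} on $G+E$ then pins down the sign of the balance in $G$. The paper organizes this as a case distinction on whether the terminal vertex is balanced in $G$, whereas you argue directly from the assumption that the trail is open, with somewhat more explicit bookkeeping of the in/out arc counts; the substance is the same.
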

\begin{proof}
  Consider the initial vertex~$v_A$ and terminal vertex~$v_\Omega$ of a trail~$t$ in the \EE~$E$. The vertices~$v_A$ and~$v_\Omega$ are balanced in~$G + E$. 

  Assume that~$v_\Omega$ is not balanced in~$G$. Every time~$t$ traverses~$v_\Omega$, it uses one arc in~$E$ that enters~$v_\Omega$ and one that leaves it. This implies that~$v_\Omega \neq v_A$ because~$v_\Omega$ is balanced in~$G+ E$ and thus there is an odd number of arcs in~$E$ incident to~$v_\Omega$ (recall that~$t$ is of maximum length). Since~$t$ ends in~$v_\Omega$, this also implies that~$v_\Omega \in I^-_G$. Analogously we get that~$v_A \in I^+_G$.
  
  Now assume that~$v_\Omega$ is balanced in~$G$. Since~$t$ cannot be extended, it already uses every arc incident to~$v_A$ and~$v_\Omega$. However, if~$v_\Omega$ is not equal to~$v_A$, there are more arcs entering~$v_\Omega$ than leaving~$v_\Omega$ in~$E$. This means that~$v_\Omega$ is not balanced in~$G + E$ which is a contradiction.
\end{proof}
\autoref{fig:pmetrails} illustrates \autoref{obs:walkdichotomy}.

\paragraph{Preprocessing Routines.}
There is a preprocessing routine introduced by~\citet{DMNW10} that ensures that every vertex has balance between~$-1$ and~$1$. This later helps to give a bound on very short trails in \EE s.
\begin{transformation}[Splitting Vertices]
  \label{trans:sb}
  Let the graph $(G=(V,A)$, the weight function~$\wf$ and the maximum weight~$\wf_{max})$ constitute an instance of \pWMEEs{}. Compute a new instance as follows: Search for a vertex~$v$ with~$|\balance(v)| > 1$, introduce a new vertex~$u$. If~$\balance(v) > 0$, choose an arbitrary arc~$(w, v)$, delete it and add the arc~$(w, u)$. Proceed analogously, if~$\balance(v) < 0$. Add the arcs~$(u, v), (v, u)$. Finally, define a new weight function~$\wf'$ for each pair of vertices~$x, y \in V$ as follows.
\[\wf'(x, y) = 
\begin{cases}
  \infty, & x = u, y = v \vee x = v, y = u \\
  \wf(v, y), & x = u \\
  \wf(x, v), & y = u \\
  \wf(x, y), & \text{otherwise}
\end{cases}
\]
\end{transformation}
\begin{lemma}
  \label{lem:smallbalancepp}
  \autoref{trans:sb} is correct, that is, it maps yes-instances and only yes-instances to yes-instances. Also, \autoref{trans:sb} can be applied exhaustively in~$\bigO(n^2m)$~time. When applied exhaustively, the resulting instance contains only vertices~$v$ with~$|\balance(v)| \leq 1$.
\end{lemma}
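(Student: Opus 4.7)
My plan is to address the three assertions in turn: per-step correctness (both directions of the iff), the balance-reduction invariant that makes exhaustive application terminate with every vertex having $|\balance(v)| \leq 1$, and the running time. All three rest on a careful balance bookkeeping around the two vertices $u$ and $v$ involved in one application of the transformation, namely $\balance_{G'}(v) = \balance_G(v) - 1$ and $\balance_{G'}(u) = 1$ (the case $\balance(v) < -1$ being fully symmetric and handled by swapping the roles of incoming/outgoing arcs).

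For the forward direction of correctness I would take any \EE{}~$E$ for the original graph~$G$. Since $\balance_G(v) > 1$, $E$ contains strictly more outgoing than incoming arcs at~$v$, so I can pick an outgoing arc $(v, x) \in E$ and replace it by $(u, x)$ to obtain~$E'$. The weight is preserved because $\wf'(u, x) = \wf(v, x)$, and a direct computation yields $\balance_{E'}(v) = \balance_E(v) + 1$ and $\balance_{E'}(u) = -1$, so combined with the shifts on the graph side every vertex of $G' + E'$ is balanced. Connectedness is inherited from~$G + E$ together with the new arcs $(u, v), (v, u) \in A(G')$ linking~$u$ into the rest of~$G'$. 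For the reverse direction I would take an \EE{}~$E'$ of the transformed instance and contract~$u$ back into~$v$: each arc $(u, y) \in E'$ becomes $(v, y)$ and each arc $(x, u) \in E'$ becomes $(x, v)$. The forbidden weights $\wf'(u, v) = \wf'(v, u) = \infty$ guarantee that $E'$ contains no arc between~$u$ and~$v$, so the replacement is unambiguous and weight-preserving; the arc $(w, u) \in A(G')$ corresponds under this identification to the removed arc $(w, v)$, and the same balance bookkeeping shows that $G + E$ is Eulerian.

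For termination, the per-step analysis shows that each application strictly decreases $|\balance(v)|$ by one while introducing a vertex~$u$ with $|\balance(u)| = 1$ that is never targeted again, so exhaustive application must leave every vertex with $|\balance(v)| \leq 1$. For the running time, a single application locates a target vertex, picks an appropriate incident arc, extends~$\wf'$ to~$u$ (adding $\bigO(n)$ new entries), and updates balances; the invariant $\sum_x |\balance(x)| \leq 2m$ caps the number of applications by $\bigO(m)$, yielding the claimed $\bigO(n^2 m)$ bound under a coarse per-iteration accounting. The hardest step, in my view, is the reverse direction of correctness: one must verify that the contraction of~$u$ into~$v$ does not introduce spurious multiplicities of forbidden arcs nor destroy the closed Eulerian structure inherited from~$E'$, and this is exactly where the infinite-weight protection on $(u,v)$ and $(v,u)$ is essential.
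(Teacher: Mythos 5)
Your proposal is correct and follows essentially the same route as the paper's proof: per-step correctness by redirecting one extension arc from $v$ to $u$ in the forward direction and by mapping arcs incident to $u$ back onto $v$ in the reverse direction, with the $\infty$-weights on $(u,v)$ and $(v,u)$ ruling out degenerate cases, plus the same coarse $\bigO(n^2m)$ accounting. Your version is in fact somewhat more careful than the paper's on the reverse direction (contracting \emph{all} arcs incident to $u$ rather than ``modifying'' a single one) and in making the balance bookkeeping explicit, but this is a refinement of the same argument rather than a different approach.
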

\begin{proof}
  The last statement of the lemma is clear. Concerning the running time, we can iterate over every vertex~$v \in V$~($\bigO(n)$~time), check if it has high absolute balance~($\bigO(m)$~time) and, if so, perform the weight function update~($\bigO(n)$~time) and perform the local modifications~($\bigO(1)$~time) for every ``excess arc'' incident to~$v$~(there are at most~$m$ many). In total, this is~$\bigO(n^2m)$~time.

  To prove the correctness, we only have to examine one application of \autoref{trans:sb}: Let~$(G'=(V',A'),\wf', \wf_{max})$ be an instance of \pWMEEs{} where \autoref{trans:sb} has been applied once at vertex~$v$ yielding the new vertex~$u$. Given an \EE{} for the input graph~$G$, we can obtain an \EE{} for~$G'$ of the same weight by modifying an arc~$a \in E$ incident to~$v$ appropriately such that it starts or ends in~$u$. If we are given an \EE{} for~$G'$, at least one arc in it has to be incident to~$u$ and thus we can obtain an \EE{} for~$G$ by modifying it to start or end in~$v$.
\end{proof}
%TODO sum over balance in prelim, elaborate

We can apply a further preprocessing routine to make some further observations about trails in \EE s:
\begin{transformation}[Shortest-Path Preprocessing]
  \label{trans:spp}
  For an input instance of \pWMEEs{} consisting of the graph~$G = (V, A)$, the weight function~$\wf$ and the maximum weight~$\wf_{max}$, derive a new instance by computing a new weight function~$\wf'$ as follows:
  $$\wf'(u,v):=\text{weight of a shortest path from $u$ to $v$ in the graph~$(V, V \times V)$}\text{.}$$
\end{transformation}
\begin{lemma}\label{lem:spp}
  \autoref{trans:spp} is correct---that is, it maps yes-instances and only yes-instances to yes-instances---and can be applied in~$\bigO(n^3)$~time.
\end{lemma}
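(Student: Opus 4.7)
My plan is to split the claim into a routine running-time bound, an almost trivial forward direction of correctness, and a slightly more careful backward direction establishing that no-instances map to no-instances.

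For the running time, I would note that since $\wf$ takes values in $[0, \wf_{max}] \cup \{\infty\}$, all arc costs are nonnegative. Hence all-pairs shortest paths on the complete digraph $(V, V \times V)$ can be computed in $\bigO(n^3)$ time either by a single Floyd--Warshall pass or by running Dijkstra with a Fibonacci heap from each of the $n$ vertices, where each run costs $\bigO(n^2)$ on $\Theta(n^2)$ arcs.

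For the easy direction of correctness, I would observe that $\wf'(u,v) \le \wf(u,v)$ for every pair $(u,v)$, because the single arc $(u,v)$ is itself a $u$--$v$ path of weight $\wf(u,v)$ in $(V, V \times V)$. Thus every \EE{} $E$ with $\wf(E) \le \wf_{max}$ automatically satisfies $\wf'(E) \le \wf_{max}$, so yes-instances remain yes-instances after applying the transformation.

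For the converse, suppose $E'$ is an \EE{} for $G$ with $\wf'(E') \le \wf_{max}$. I would construct a new \EE{} $E$ by replacing each arc $(u,v) \in E'$ with the arc multiset of a shortest $u$--$v$ path in $(V, V \times V)$ under $\wf$ (as computed by \autoref{trans:spp}), and taking the multiset union. Then $\wf(E) = \sum_{(u,v) \in E'} \wf'(u,v) = \wf'(E') \le \wf_{max}$ by construction. The step I would check most carefully---and the only place I expect any real work---is verifying via \autoref{the:eulerian} that $G + E$ is still \Eu{}: each internal vertex of a replacement path receives exactly one extra incoming and one extra outgoing arc, so every vertex has the same balance in $G + E$ as in $G + E'$ (namely zero); and replacing the arc $(u,v)$ by a $u$--$v$ path preserves reachability between $u$ and $v$, so connectivity of $G + E$ is inherited from $G + E'$. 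Thus $G + E$ is connected and balanced, hence Eulerian, witnessing that the original instance is a yes-instance.
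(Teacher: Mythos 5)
Your proposal is correct and follows essentially the same route as the paper: the forward direction via $\wf'(u,v)\le\wf(u,v)$, the backward direction by substituting shortest paths for extension arcs, and the running time via $n$ single-source shortest-path computations. The only difference is that you spell out the balance and connectivity check that the paper leaves implicit, which is a harmless (indeed welcome) addition.
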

\begin{proof}
  It is clear that for any \EE{}~$E$ of~$G$ it holds that~$\wf'(E) \leq \wf(E)$, making any feasible \EE{} in the original instance also one for the modified instance. Now let~$E$ be an \EE{} for~$G$ with~$\wf'(E) \leq \wf_{max}$. We get an \EE{}~$E'$ for~$G$ with~$\wf(E') \leq \wf_{max}$ by exchanging every arc~$a =(u, v) \in E$ with~$\wf'(a) < \wf(a)$ by the set of arcs of a shortest path from~$u$ to~$v$ in the graph~$(V, V \times V)$ with respect to the weight function~$\wf$.

Using Dijkstra's algorithm %~\cite{BG02} 
we can compute in~$\bigO(n^2)$~time the weights of the shortest paths between one vertex~$v$ and any other in~$G$ and update the weight function accordingly. Doing this for every vertex in~$G$ takes~$\bigO(n^3)$~time.
\end{proof}
Shortest-path preprocessing and splitting vertices enables us to make a range of useful observations regarding trails in \EE s. In the following we assume any instance of \WMEE{} to be preprocessed using \autoref{trans:sb} and \autoref{trans:spp}. In the subsequent sections, we use this preprocessing in parameterized algorithms and reductions. Thus, we need to know whether it is parameter-preserving. This is the case, as the following observation shows. %TODO: proof
\begin{observation}\label{obs:sbsppinvariants}
  The number of components and the sum of all positive balances of vertices in an instance of \pWMEEs{} are invariant under \autoref{trans:sb} and \autoref{trans:spp}.
\end{observation}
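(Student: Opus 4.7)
The plan is to handle the two transformations separately and observe that \autoref{trans:spp} is entirely trivial in this regard, so the real work lies in analyzing \autoref{trans:sb}. Since \autoref{trans:spp} only recomputes the weight function~$\wf$ and does not touch the vertex or arc sets of~$G$, the underlying multigraph (and hence its connected components and all vertex balances) is literally unchanged; both invariants hold immediately.

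For \autoref{trans:sb}, it suffices to analyze one application at a vertex~$v$ with~$|\balance(v)| > 1$ yielding a new vertex~$u$, because the transformation is applied iteratively. I would treat the case~$\balance(v) > 1$; the case~$\balance(v) < -1$ is symmetric. Here an arc~$(w,v)$ is deleted, the arc~$(w,u)$ is added, and the arcs~$(u,v)$ and~$(v,u)$ are added. For the number of components, the newly introduced vertex~$u$ is joined to~$v$ by the arcs~$(u,v)$ and~$(v,u)$, so~$u$ belongs to the same connected component as~$v$; moreover, replacing~$(w,v)$ by~$(w,u)$ does not disconnect the underlying multigraph because~$w$ is still connected to~$v$ via the path~$w, u, v$. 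Hence the number of connected components is unchanged.

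For the balance sum, the counts at~$v$ change as follows: one incoming arc is removed and one incoming arc~$(u,v)$ plus one outgoing arc~$(v,u)$ are added, so~$\indeg(v)$ is unchanged while~$\outdeg(v)$ increases by one, giving~$\balance'(v) = \balance(v) - 1$. Since~$\balance(v) > 1$, we still have~$\balance'(v) \geq 1 > 0$. For the new vertex~$u$ the incident arcs are~$(w,u), (v,u)$ incoming and~$(u,v)$ outgoing, so~$\balance(u) = 1 > 0$. Therefore the contribution to~$b = \sum_{x \in I_G^+} \balance(x)$ changes by~$(-1) + 1 = 0$, and all other vertices are untouched. The negative case is handled identically by tracking an outgoing arc; the key point that there is no obstacle beyond bookkeeping is that the hypothesis~$|\balance(v)| > 1$ guarantees~$v$ remains strictly on the same side (positive or negative) of zero after the modification, so no contribution migrates between~$I_G^+$ and~$I_G^-$.
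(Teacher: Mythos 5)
Your proof is correct. The paper states this observation without any proof at all, and your argument is exactly the bookkeeping it implicitly relies on: \autoref{trans:spp} touches only the weight function, and one application of \autoref{trans:sb} leaves the component count unchanged (since $u$ is tied to $v$ by $(u,v),(v,u)$ and $w$ keeps a path to $v$ through $u$) while shifting one unit of positive balance from $v$ to the new vertex $u$ (and symmetrically in the negative case), with $w$'s balance preserved because one outgoing arc is merely replaced by another.
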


\paragraph{Shortcutting Trails in \Eu Extensions.}%
Using \autoref{trans:spp}, we can define the following transformation that operates on trails of an \EE .%as we will see also on maximum length trails of an \EE .
\begin{transformation}\label{trans:shortcut}
  Let~$E$ be an \EE{} of $G$, let~$t$ be a trail in the graph~$(V(G), E)$ and let~$s$ be a subtrail of~$t$ where~$s$ has the initial vertex~$v_A$ and the terminal vertex~$v_\Omega$. Obtain a new trail~$t'$ by substituting the edge~$(v_A, v_\Omega)$ for~$s$ in~$t$ and derive a new arc set~$E'$ by substituting~$A(t')$ for~$A(t)$ in~$E$. Define~$\shortcut(E, t, s) := (E', t')$.
\end{transformation}
\autoref{fig:shortcut} illustrates \autoref{trans:shortcut}.
\begin{figure}
  \begin{center}
    \includegraphics{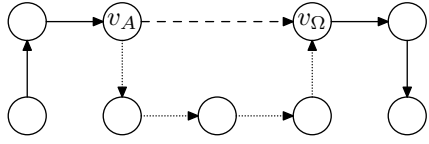}
    \caption{Example of an application of \autoref{trans:shortcut}. Solid arcs and dotted arcs belong to~$t$, dotted arcs to~$s$ and the dashed arc is substituted for the dotted arcs in~$t'$.}
    \label{fig:shortcut}
  \end{center}
\end{figure}
\begin{lemma}\label{lem:shortcut}
  Let $\shortcut(E, t, s) = (E', t')$ where the trail~$s$ has initial vertex~$v_A$ and terminal vertex~$v_\Omega$. The following statements hold:
  \begin{lemenum}
  \item $\wf(E') \leq \wf(E)$. \label{enu:ls1}
  \item Every vertex in~$V(s)$ is balanced in~$G + E'$. \label{enu:ls4}
  \item If every vertex of $s$ except $v_A$ and $v_\Omega$ is contained in a connected component of $G$ that also contains a vertex of $t'$, then the arc set $E'$ is an \EE{} for~$G$. \label{enu:ls2}
%  \item If $t$ is a maximum length (closed) trail in $(V(G), E)$, then $t'$ is a maximum length (closed) trail in $(V(G), E')$. \label{enu:ls3}
  \end{lemenum}
\end{lemma}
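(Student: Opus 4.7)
The proof of the three statements is organized around the following local bookkeeping. Write $d_s^+(v)$ and $d_s^-(v)$ for the number of arcs of $A(s)$ leaving or entering $v$, respectively. Since $s$ is a trail, $d_s^+(v) = d_s^-(v)$ for every vertex $v \in V(s) \setminus \{v_A, v_\Omega\}$; for the endpoints one has $d_s^+(v_A) - d_s^-(v_A) = +1$ and $d_s^-(v_\Omega) - d_s^+(v_\Omega) = +1$ (or both differences vanish in the closed case $v_A = v_\Omega$). All three items will follow by combining this with properties of $G + E$.

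For \enuref{enu:ls1} I would invoke the shortest-path preprocessing (\autoref{trans:spp} together with \autoref{lem:spp}). Since $s$ is itself a walk from $v_A$ to $v_\Omega$ in $(V, V \times V)$, one has $\wf(A(s)) \geq \wf(v_A, v_\Omega)$ after preprocessing. Because $E'$ arises from $E$ by deleting $A(s)$ and inserting the single arc $(v_A, v_\Omega)$, the weight cannot increase. Statement \enuref{enu:ls4} is then a degree calculation: every vertex $v \in V(s)$ is balanced in $G + E$, and replacing $A(s)$ by $(v_A, v_\Omega)$ changes $\indeg(v)$ and $\outdeg(v)$ by $-d_s^-(v)$ and $-d_s^+(v)$ plus a correction of $+1$ at $v_\Omega$ and $-1$ at $v_A$ coming from the shortcut arc. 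The equations above show that these changes cancel, so $v$ remains balanced.

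For \enuref{enu:ls2} balance of every vertex in $G + E'$ is immediate: vertices in $V(s)$ are handled by \enuref{enu:ls4}, while vertices outside $V(s)$ have no incident arcs in $A(s)$ and (if distinct from $v_A, v_\Omega$) are unaffected by adding $(v_A, v_\Omega)$, so their balance is inherited from $G + E$. The nontrivial part is to establish that $G + E'$ is connected. My plan is to fix a target vertex (say $v_A$) and show that every vertex $u$ is connected to it via an undirected path in $G + E'$. Start from a path $P$ from $u$ to $v_A$ in the connected graph $G + E$ and, if $P$ uses an arc of $A(s)$, let $x$ be the first vertex of $V(s)$ that $P$ reaches. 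Up to $x$, the path lies in $(G+E) - A(s) \subseteq G + E'$. If $x \in \{v_A, v_\Omega\}$, then $x$ lies on $t'$ and is joined to $v_A$ by the arcs of $t' \subseteq E'$; if $x$ is an interior vertex of $s$, the hypothesis produces a vertex $w \in V(t')$ in the same component of $G$ as $x$, so $x$ is linked to $w$ through $G$ alone and $w$ to $v_A$ through $t'$. In each case $u$ is joined to $v_A$ in $G + E'$, giving connectivity.

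The only delicate point is the connectivity argument in \enuref{enu:ls2}: one must be careful that \enuref{enu:ees7}-style issues do not creep in, namely that the ``shortcut'' $(v_A, v_\Omega)$ together with the preserved arcs of $t'$ really does re-route around the deleted subtrail. The hypothesis is exactly tailored to this, and the case split on where $P$ first meets $V(s)$ is what makes the argument go through cleanly.
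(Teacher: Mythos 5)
Your proof is correct and follows essentially the same route as the paper's: statement~(i) via the shortest-path preprocessing, statement~(ii) via the in-/out-degree bookkeeping on the trail~$s$, and statement~(iii) by using the component hypothesis to re-route around the deleted subtrail. You merely spell out details the paper leaves implicit (the $\pm 1$ correction at the endpoints and the explicit path-rerouting for connectivity), which is fine.
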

\begin{proof}
  Statement \enuref{enu:ls1} is trivial because of the implicitly transformed weight function (\autoref{trans:spp}).
  
  By substituting $(v_A, v_\Omega)$ for~$s$, every vertex on~$s$ except~$v_A$ and~$v_\Omega$ looses one indegree and one outdegree. Hence, augmenting~$G$ with~$E'$ results in a graph without unbalanced vertices (statement~\enuref{enu:ls4}).

  For statement~\enuref{enu:ls2} it remains to show that the graph $(V(G),A \cup E')$ is connected: If every vertex of~$s$ except~$v_A$ and~$v_\Omega$ is contained in a connected component of~$G$ that also contains another vertex of $t'$, then augmenting~$G$ with~$E'$ results in a connected graph, making $E'$ an \EE{} for~$G$ (\autoref{the:eulerian}).
%  Statement \ref{enu:ls3} is simple, because the $\shortcut$ transformation only removes arcs from~$E$ except one that is added to the trail $t'$ and thus retains length-maximality of trail $t'$.
%  Regarding statement \ref{enu:ls3}, observe that in~$E$ there are no arcs of~$E$ incident to the initial and terminal vertices of~$t$ that do not belong to~$t$. Since only arcs of~$t$ are removed from~$E$ and one arc is added that also is added to~$t'$, this also holds for~$t'$ in~$E'$, and hence~$t'$ is of maximum length. Also, since~$v_A$ and~$v_\Omega$ are disconnected and then connected again by the direct arc, all closed trails and only closed trails are mapped to closed trails.
\end{proof}
\begin{observation}
  \label{obs:simpletrails}
  For any \EE{} $E$ of~$G=(V, A)$ there is an \EE{}~$E'$ of at most the same weight such that any trail with arcs in~$E'$ visits every vertex at most once.
\end{observation}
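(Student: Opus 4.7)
The approach is iterative shortcutting using \autoref{trans:shortcut}, certified by \autoref{lem:shortcut}. Given any \EE{} $E$ for $G$, suppose there is a trail $t$ with arcs in $E$ that visits some vertex $v$ twice. I would pick an innermost such repetition so that the subtrail $s$ of $t$ from the first occurrence of $v$ to the second contains $v$ only at its endpoints; then $s$ is a closed trail at $v$.

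Applying $\shortcut(E, t, s)$ would substitute the ``arc'' $(v,v)$ for $s$, which is a self-loop that the multigraph does not admit; effectively this sets $E' := E \setminus A(s)$ and excises $s$ from $t$. By \autoref{lem:shortcut}\enuref{enu:ls1} the weight does not grow, and by \autoref{lem:shortcut}\enuref{enu:ls4} every vertex of $V(s)$ remains balanced in $G + E'$. Whenever the precondition of \autoref{lem:shortcut}\enuref{enu:ls2} holds---every interior vertex of $s$ lies in a connected component of $G$ that is also touched by a vertex of $t'$---the set $E'$ is a valid \EE{} for $G$ with strictly fewer arcs and no greater weight. Iterating this step must therefore halt after at most $|E|$ rounds, producing an \EE{} in which no trail visits any vertex twice.

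The hard part is the case when the connectivity precondition of \autoref{lem:shortcut}\enuref{enu:ls2} fails: some interior vertex $u$ of $s$ lies in a connected component $C$ of $G$ whose only link to the remainder of $G + E$ is provided by arcs of $s$ itself, so that excising $s$ wholesale would disconnect $C$. To handle this I would refine the choice of shortcutted subtrail. Tracking how $s$ enters and leaves each $G$-component that it traverses, an innermost-repetition/pigeonhole argument inside a single component yields a closed sub-subtrail $s' \subseteq s$ whose interior vertices all lie in one component $C'$ and whose endpoint vertex is still represented in $t \setminus A(s')$ or elsewhere in $E$. Shortcutting $s'$ in place of $s$ preserves every bridge that $s$ supplies between distinct components, now satisfies \autoref{lem:shortcut}\enuref{enu:ls2}, and still strictly reduces $|E|$ at no weight penalty. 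Verifying that such a safe internal closed sub-subtrail always exists whenever the original precondition fails---so that the refined shortcut is always available---is, in my view, the technical heart of the argument; after this is in hand, the iteration above terminates with the desired $E'$.
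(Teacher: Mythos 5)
There is a genuine gap, and it sits exactly where you placed it: the claim that a ``safe internal closed sub-subtrail'' always exists whenever excising the whole loop would break connectivity is false. Consider a trail~$t$ containing the closed subtrail $s = v, (v,a), a, (a,b), b, (b,v), v$, where $a$ and~$b$ lie in two distinct connected components $C_1, C_2$ of~$G$ that are touched by no other arc of~$E$ and by no other part of~$t$. Excising~$s$ isolates $C_1$ and~$C_2$; yet $s$ has no repeated interior vertex, so it admits no proper closed sub-subtrail, and your pigeonhole step has nothing to grab. The underlying problem is that deleting an entire closed loop is the wrong granularity of operation: it removes vertices from the trail wholesale, which is precisely what endangers connectivity.

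The paper sidesteps all of this with a purely local move. If $t$ visits~$v$ more than once, pick one occurrence of~$v$ together with its predecessor~$u$ and successor~$w$ on~$t$, i.e., the length-two subtrail $s = (u, (u,v), v, (v,w), w)$, and apply $\shortcut(E, t, s)$, replacing the two arcs by the single arc~$(u,w)$. The only interior vertex of~$s$ is~$v$, and since $t$ visits~$v$ at least twice, the resulting trail~$t'$ still visits~$v$; hence the connectivity precondition of \autoref{lem:shortcut}\enuref{enu:ls2} holds automatically, with no case distinction, and the weight does not increase by \autoref{lem:shortcut}\enuref{enu:ls1}. Each step strictly decreases the number of arcs, so the iteration terminates with the desired \EE{}. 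You should replace your loop-excision step by this local shortcut; as written, your argument cannot be completed.
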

\begin{proof}
  Assume that in the \EE{}~$E$ there is a trail~$t$ that visits~$v \in V$ more than once. Then there is a subtrail~\[s = (u, (u,v), v, (v, w), w)\] of~$t$ with~$u, w \in V$. Let~$(\hat{E},t') = \shortcut(E, t, s)$. By \autoref{lem:shortcut},~$\hat{E}$ is an \EE{} for~$G$ because $t'$~still visits~$v$ (one time less than~$t$). If we recursively $\shortcut$ edges in trails in~$E$ until every such trail visits any vertex at most once, we obtain an \EE{}~$E'$. By \autoref{lem:shortcut},~$\wf(E') \leq \wf(E)$.
\end{proof}
\autoref{obs:simpletrails} allows us to assume trials in \EE s to be cycles when they are closed and paths otherwise.
% \begin{observation}
%   Let $E$ be an \EE{} of $G$. There is an \EE{} $E'$ of at most the weight of~$E$ such that the following holds: 
%   \begin{lemenum}
%   \item $E'$ does not contain a maximum length cycle whose vertices are contained in a connected component of~$G$ and \label{enu:osit1}
%   \item $E'$ does not contain a maximum length path whose vertices are contained in a connected component of $G$ and whose length is greater than one.\label{enu:osit2}
%   \end{lemenum}
% \end{observation}
% \begin{proof}
%   The statement \ref{enu:osit1} is trivial since any such cycle can be removed from an \EE{} containing it, maintaining connectedness of the extended graph and balance of every vertex.

%   Assume that~$E$ contains a maximum length path~$p$ whose vertices are contained in one connected component of~$G$ and whose length is greater than one. Consider the subpath~$s$ formed by two successive arcs of~$p$ and let~$(\hat{E},t') = \shortcut(E, p, s)$. By \autoref{lem:shortcut},~$\hat{E}$ is an \EE{} for~$G$, because all vertices of~$p$ are contained in a single component of~$G$. Iterating the shortcutting of successive arcs on maximum length paths that are contained in single components of~$G$ results in an \EE{}~$E'$ that only contains such maximum length paths of length at most one. By \autoref{lem:shortcut},~$\wf(E') \leq \wf(E)$.
% \end{proof}
\begin{observation}
  \label{obs:shorttrails}
  For any \EE{} $E$ of $G$, there is an \EE~$E'$ of at most the same weight such that for any path~$p$ and any cycle~$c$ in~$E'$ such that~$p$ and~$c$ are edge-disjoint and have length at least two the following statements hold: 
  \begin{lemenum}
  \item $p$ and $c$ do not successively visit two vertices contained in exactly one connected component of~$G$. \label{enu:osht1}
  \item $p$ and $c$ do not visit one connected component of~$G$ twice except for the initial and terminal vertex. \label{enu:osht2}
  \item $p$ and $c$ have length at most the number of connected components of~$G$. \label{enu:osht3}
  \end{lemenum}
\end{observation}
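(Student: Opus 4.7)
Starting from the \EE~$E$ provided by \autoref{obs:simpletrails}---so that every trail in its natural decomposition is a path or a cycle---we iteratively either apply \autoref{trans:shortcut} to a length-two subtrail whenever a trail of length at least two violates \enuref{enu:osht1} or \enuref{enu:osht2}, or delete the two arcs of a length-two cycle whose two vertices lie in a common component of~$G$. Both operations strictly decrease the number of arcs in the extension and, by \autoref{lem:shortcut}\enuref{enu:ls1} together with the non-negativity of the weight function, do not increase the total weight; hence the procedure terminates at some \EE~$E'$ satisfying \enuref{enu:osht1} and \enuref{enu:osht2}, after which \enuref{enu:osht3} follows by counting.

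For \enuref{enu:osht1}, suppose a path or cycle~$t$ of length at least two successively visits vertices~$u$ and~$v$ lying in a common component~$C$. If $t$ is the length-two cycle~$u,v,u$, simply delete its two arcs; since $u$ and~$v$ are connected in~$G$, this preserves both connectivity of $G+E$ and the balance of every vertex. Otherwise, $t$ contains a third vertex~$w$ adjacent in~$t$ to $u$ or to $v$, and we invoke~$\shortcut(E,t,s)$ on the length-two subtrail~$s$ through $u,v,w$. The unique inner vertex of~$s$ lies in~$C$, while the endpoint of~$s$ different from~$w$ (that is, $u$ or $v$) also lies in~$C$ and remains in the modified trail~$t'$. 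The hypothesis of \autoref{lem:shortcut}\enuref{enu:ls2} is therefore met and the output is still an \EE{} for~$G$.

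For \enuref{enu:osht2}, we first enforce \enuref{enu:osht1}. Assume then that a trail~$t$ visits a component~$C$ at two positions~$v_i$ and~$v_j$ with $i<j$ and $\{v_i,v_j\}$ not equal to the pair of initial and terminal vertices of~$t$. At least one of $v_i,v_j$ is an inner position of~$t$, say~$v_i$ (the other case is symmetric; for cycles, indices are read modulo the cycle length). Apply $\shortcut(E,t,s)$ to the length-two subtrail $s = v_{i-1},v_i,v_{i+1}$. The inner vertex~$v_i$ belongs to~$C$, but the second occurrence~$v_j\in C$ is not on~$s$ and therefore remains in~$t'$, so the connectivity hypothesis of \autoref{lem:shortcut}\enuref{enu:ls2} is again satisfied. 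Such a shortcut may re-introduce a violation of \enuref{enu:osht1} (when $v_{i-1}$ and $v_{i+1}$ happen to share a component), but this is immediately repaired by one further application of the procedure for \enuref{enu:osht1}; since each shortcut or deletion strictly decreases the arc count, the combined process terminates.

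Finally, the conjunction of \enuref{enu:osht1} and \enuref{enu:osht2} implies that the positions of any path or cycle in~$E'$ occupy pairwise distinct components of~$G$, with the sole possible exception of the initial and terminal positions of a path. Consequently, a path of length~$k$ traverses at least~$k$ distinct components, and a cycle of length~$k$ traverses exactly~$k$ distinct components, giving $k\leq c$ in both cases. The main obstacle in this plan is the handling of~\enuref{enu:osht2}: the subtrail~$s$ must be kept short enough (two arcs, with a single inner vertex in~$C$) so that the second occurrence of~$C$ in~$t$ certifies the connectivity condition of \autoref{lem:shortcut}\enuref{enu:ls2}; a longer or more aggressive shortcut could cut off entire components of~$G$ that are not touched by any other trail of~$E'$.
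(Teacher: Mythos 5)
Your proof is correct and follows essentially the same route as the paper, which also establishes \enuref{enu:osht1} and \enuref{enu:osht2} by iterated applications of \autoref{trans:shortcut} to short subtrails (mirroring the proof of \autoref{obs:simpletrails}) and then derives \enuref{enu:osht3} by counting. You additionally spell out the degenerate case of a length-two cycle inside a single component, which \autoref{trans:shortcut} cannot handle (it would create a self-loop) and which the paper's terse proof glosses over; your fix by outright deletion is valid since weights are non-negative.
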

\begin{proof}
The proof for \enuref{enu:osht1} and~\enuref{enu:osht2} is similar to the proof of the observation above.  Again we can $\shortcut$ edges and obtain an \EE{} of at most the same weight. Statement~\enuref{enu:osht3} directly follows from~\enuref{enu:osht1} and~\enuref{enu:osht2}.
\end{proof}
\paragraph{Shortcutting and Component Graphs.}We can further extend our observations by looking at component graphs~$\comp{G}$ and the mapping of trails~$t$ in~$G$ to trails~$\meta{G}{t}$ in~$\comp{G}$. Recall these definitions stated on page~\pageref{def:meta} in \autoref{sec:prelim}. The following lemma is a generalization of statement~\enuref{enu:ls2} in \autoref{lem:shortcut}.
\begin{lemma}
  \label{lem:metashortcut}
  Let~$E$ be an \EE{} of~$G$, let~$t, r$ be trails in the directed multigraph~$(V(G), E)$ such that the trails~$\meta{G}{r}$ and~$\meta{G}{t}$ are not vertex-disjoint. Furthermore, let~$s$ be a subtrail of~$t$ in the directed multigraph~$(V(G), E)$ such that~$\meta{G}{s}$ is a subtrail of~$\meta{G}{r}$. Let~$s'$ be a subtrail of~$t$ such that~$s$ is a subtrail of~$s'$ and~$s$ traverses exactly one vertex less than~$s'$. Set~$(E', t') = \shortcut(E, t, s')$. Then~$E'$ is an \EE{} for~$G$.
\end{lemma}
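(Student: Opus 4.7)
The plan is to verify both conditions of \autoref{the:eulerian} for $G + E'$---every vertex is balanced and the graph is connected---which together show that $E'$ is an \EE{} for $G$. Balance is essentially free: \autoref{lem:shortcut} \enuref{enu:ls4} applied to the shortcutting of $s'$ makes every vertex in $V(s')$ balanced in $G + E'$, and every other vertex is incident to exactly the same arcs as in $G + E$, so its balance remains~$0$.

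The substantive part is connectivity, and I would analyse it in the component multigraph on the vertex set of $\comp{G}$ whose multi-edges are the inter-component arcs of $E$ (respectively $E'$) with multiplicity. Since $G + E$ is Eulerian, this multigraph is connected for $E$; passing to $E'$ only removes the meta-edges corresponding to $\meta{G}{s'}$ and possibly adds the one coming from the shortcut arc $(v_A', v_\Omega')$. The key observation is that because $\meta{G}{s}$ is a subtrail of $\meta{G}{r}$, every meta-edge of $\meta{G}{s}$ also occurs in $\meta{G}{r}$ and is therefore realised by some arc of $r$; since $r$ is arc-disjoint from $t$ (and hence from $s'$), these arcs lie in $E \setminus A(s') \subseteq E'$, so each meta-edge of $\meta{G}{s}$ is still present in the component multigraph for $E'$---informally, $r$ covers everything that $s$ covered.

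What remains is the one meta-edge that $\meta{G}{s'}$ may carry beyond $\meta{G}{s}$, corresponding to the extra arc of $s'$ incident to the additional vertex in $V(s') \setminus V(s)$. I would case-split on whether this vertex is added at the start or the end of $s$ and on whether the extra arc and the shortcut arc are intra- or inter-component. If the extra arc is intra-component, then $\meta{G}{s'} = \meta{G}{s}$ and the $r$-replication already settles connectivity. Otherwise the extra arc is inter-component, introducing one new meta-edge beyond $\meta{G}{s}$; if the shortcut arc is also inter-component, it directly supplies a new meta-edge between the additional vertex's component and the opposite endpoint of $s'$, patching that component into the part already covered by $r$, while if the shortcut arc is intra-component, both endpoints of $s'$ share a component, which forces the additional vertex's component to coincide with an endpoint component of $\meta{G}{s}$ and hence to be covered by $r$. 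The main obstacle I anticipate is precisely this bookkeeping at the ends of $s'$; once all sub-cases are settled the component multigraph for $E'$ is connected, so $G + E'$ is connected and balanced, and \autoref{the:eulerian} yields that $E'$ is an \EE{} for $G$.
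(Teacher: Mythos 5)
Your proposal is correct and follows essentially the same route as the paper: balance via \autoref{lem:shortcut}\enuref{enu:ls4}, and connectivity from the fact that every component traversed by~$s$ is also traversed by~$r$ (whose arcs survive the shortcut) while the single extra vertex of~$s'$ is an endpoint of~$s'$ and so stays attached to~$t'$ through the shortcut arc. The paper compresses your end-of-trail case analysis into the single remark that $\meta{G}{r}$ and $\meta{G}{t'}$ still share a vertex, but the underlying argument is the same.
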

\begin{proof}
  \autoref{lem:shortcut} shows that the vertices in~$G + E'$ are balanced. It remains to show that the resulting graph is connected: Any connected component that is traversed by~$s$ is also traversed by~$u$. The trails~$\meta{G}{u}$ and $\meta{G}{t'}$ still share a vertex and thus~$G + E'$ is connected.
\end{proof}
\autoref{lem:metashortcut} leads to the following \autoref{obs:metaedgedisjoint}, which is illustrated in \autoref{fig:shortcutcomponents}.%
\begin{figure}
  \begin{center}
    \includegraphics{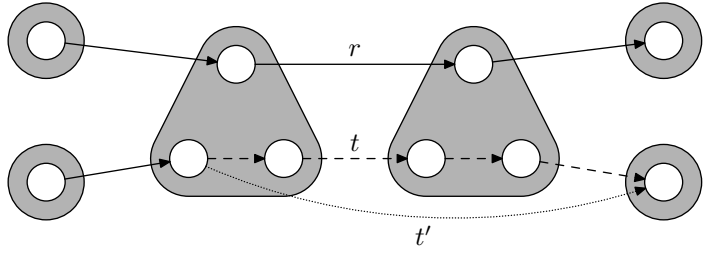}
    \caption{Gray objects represent components of~$G$. Shown are two trails~$r$ (top) and~$t$ (bottom, solid and dashed arcs) in an \EE . The trails~$\meta{G}{r}, \meta{G}{t}$ share two vertices. The dashed arcs represent a subtrail~$s'$ of~$t$ as in \autoref{lem:metashortcut} and thus we can obtain a path~$t'$ (bottom, solid and dotted arcs) replacing~$t$, while maintaining connectedness and balance of all vertices.}
    \label{fig:shortcutcomponents}
  \end{center}
\end{figure}%
\begin{observation}
  \label{obs:metaedgedisjoint}
  For any \EE{}~$E$ of~$G$ there is an \EE{}~$E'$ of at most the same weight such that for any two edge-disjoint trails~$t_1, t_2$ in~$E'$ it holds that~$\meta{G}{t_1}, \meta{G}{t_2}$ either are vertex-disjoint, share at most one vertex, or share only their initial and terminal vertices.
\end{observation}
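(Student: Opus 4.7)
My plan is to prove the observation by iteratively cleaning up the given \EE{} $E$ via \autoref{lem:metashortcut}, each application of which preserves the \EE{} property and does not increase weight.

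I would start by applying \autoref{obs:simpletrails} so every trail in $E$ is simple. Then, while there remain two edge-disjoint trails $t_1, t_2$ in $E$ whose meta-trails share at least two vertices and at least one shared meta-vertex is internal to (without loss of generality) $\meta{G}{t_1}$, I locate such a shared internal meta-vertex $C$. Since $C$ is internal to $\meta{G}{t_1}$, the trail $t_1$ has both an arc entering $C$ and an arc leaving $C$. I choose a subtrail $s$ of $t_1$ whose image $\meta{G}{s}$ is a subtrail of $\meta{G}{t_2}$: when the visit of $t_1$ to $C$ spans multiple vertices, $s$ is that in-$C$ subtrail (so $\meta{G}{s} = (C)$); when the next meta-vertex $C'$ visited by $t_1$ after $C$ is also shared with $\meta{G}{t_2}$ and adjacent to $C$ in $\meta{G}{t_2}$, $s$ is the single arc of $t_1$ joining $C$ to $C'$ (so $\meta{G}{s} = (C, C')$). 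Extending $s$ by one more arc of $t_1$ gives $s'$, and \autoref{lem:metashortcut} (applied with $t := t_1$, $r := t_2$) guarantees that $E^* := \shortcut(E, t_1, s')$ is still an \EE{} for $G$ with $\wf(E^*) \leq \wf(E)$ by \autoref{lem:shortcut}\enuref{enu:ls1}.

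Each shortcut with $|s'| \geq 3$ strictly reduces the total number of arcs in $E$, which is a non-negative integer progress measure, so the process terminates. Re-applying \autoref{obs:simpletrails} between iterations keeps all trails simple. The resulting \EE{} $E'$ satisfies the conclusion, since every pair of edge-disjoint trails now either shares at most one meta-vertex or shares only meta-vertices incident to an endpoint of at least one of them.

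The main obstacle will be the case analysis ensuring that in every configuration of shared meta-vertices between $t_1$ and $t_2$ at least one choice of $(s, s')$ yields a non-trivial shortcut, i.e., $|s'| \geq 3$. The delicate case is when $t_1$'s visit to $C$ consists of a single vertex \emph{and} no adjacent pair of shared meta-vertices can be found in matching order in $\meta{G}{t_1}$ and $\meta{G}{t_2}$; in that case the single-arc extension produces $|s'| = 2$ and the shortcut is a no-op. To make progress I would argue by an auxiliary swap: re-route $t_1$ and $t_2$ at a common vertex of $C$ so that $C$ becomes an endpoint of one of the two resulting trails, which strictly decreases the count of offending shared internal meta-vertices while preserving both balance and connectivity.
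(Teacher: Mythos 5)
Your main loop---repeatedly applying \autoref{lem:metashortcut} to a subtrail $s$ of $t_1$ whose image $\meta{G}{s}$ is a common meta-subtrail of $\meta{G}{t_1}$ and $\meta{G}{t_2}$, extended by one vertex to $s'$, with termination because every non-trivial shortcut strictly decreases $|E|$---is exactly the paper's proof, which consists of precisely that one sentence. So the bulk of your proposal matches the paper, and your observation that the argument degenerates when the shared meta-vertices include no pair that is consecutive in \emph{both} meta-trails is a genuine subtlety: in that case the only common meta-subtrails are single vertices, $s$ is a single vertex of $G$, $s'$ is a single arc, and $\shortcut(E,t_1,s')$ is the identity. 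The paper's proof silently ignores this configuration, even though it is exactly the one that separates the stated conclusion about shared \emph{vertices} from mere edge-disjointness of the meta-trails (which is all the observation is ever used for downstream, in \autoref{the:eestructure}\enuref{enu:ees7} and \enuref{enu:ees3}).

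The gap is your proposed repair. The ``auxiliary swap'' does not work as described. First, $t_1$ and $t_2$ are only edge-disjoint, so they may visit the offending component $C$ at distinct vertices $u\neq w$; rerouting then forces you to replace, say, the arcs $(u,y)$ and $(w,q)$ by $(u,q)$ and $(w,y)$, and neither \autoref{trans:spp} nor anything else bounds $\wf(u,q)+\wf(w,y)$ by $\wf(u,y)+\wf(w,q)$, so the weight claim fails. Second, in the case where the two trails do pass through a common vertex $v\in C$, exchanging their continuations at $v$ leaves the arc multiset $E$ (hence weight, balance and connectivity) untouched, but $C$ remains an \emph{internal} meta-vertex of both resulting trails---it does not become an endpoint as you claim---and you give no argument that your potential (the number of offending shared internal meta-vertices) decreases; the swap merely re-partitions the same arcs into trails and can reproduce an equivalent offending pair. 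To actually dispose of the degenerate configuration one has to shortcut \emph{across} the meta-cycle that the two trails jointly create (as is done in the proof of the subsequent \autoref{obs:metanocycle}, where the subtrail being shortcut covers a shared vertex together with an edge not on the cycle and connectivity is argued via the surviving trail), or else settle for the edge-disjointness conclusion that your main loop already delivers and that suffices for every later application.
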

\begin{proof}
This follows directly from \autoref{lem:metashortcut} by shortcutting subtrails that are shared by two such trails in~$\comp{G}$.
\end{proof}
We can improve this even to the following.
\begin{observation}\label{obs:metanocycle}
  For any \EE{}~$E$ of~$G$ there is an \EE{}~$E'$ of at most the same weight such that for any set of edge-disjoint trails~$\{t_1, \ldots, t_n\}$ in~$E'$ it holds that the edge-induced graph~$\comp{G}\langle\bigcup_{i=1}^n\meta{G}{t_i}'\rangle$ does not contain a cycle as subgraph, where~$\meta{G}{t_i}'$ is~$\meta{G}{t_i}$ without the initial vertex.
\end{observation}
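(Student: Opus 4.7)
The plan is to apply \autoref{lem:metashortcut} iteratively, starting from an \EE{} that has already been normalized by the earlier observations in this section. I choose $E'$ to be an \EE{} of $G$ of minimum weight subject to minimizing, as a secondary objective, the total number of arcs. Such an $E'$ automatically satisfies the conclusions of \autoref{obs:simpletrails}, \autoref{obs:shorttrails}, and \autoref{obs:metaedgedisjoint}, since each of these is obtained via applications of $\shortcut$, which by \autoref{lem:shortcut} never increase weight and strictly decrease arc count on non-trivial subtrails.

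Next, I fix an arbitrary decomposition of $E'$ into pairwise edge-disjoint trails $\{t_1, \ldots, t_n\}$, which by \autoref{obs:simpletrails} may be assumed to be paths or cycles, and suppose for contradiction that the edge-induced graph $H := \comp{G}\langle\bigcup_{i=1}^n \meta{G}{t_i}'\rangle$ contains a cycle $D = C_1 C_2 \cdots C_k C_1$. I then trace $D$ and, for each edge of $D$, record which meta-image $\meta{G}{t_i}'$ contributes it. Whenever two consecutive edges of $D$ are contributed by distinct meta-images $\meta{G}{t_i}'$ and $\meta{G}{t_j}'$, the shared endpoint is a component $C_\ell$ belonging to both. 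By \autoref{obs:metaedgedisjoint}, two distinct meta-images share at most a single vertex in the interior of their images; combining this constraint with the cyclic structure of $D$, I will exhibit a subtrail $s$ of some $t_i$ whose meta-image $\meta{G}{s}$ is a non-trivial subtrail of $\meta{G}{r}$ for another trail $r$, verifying the hypotheses of \autoref{lem:metashortcut}.

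Applying \autoref{lem:metashortcut} to $(E', t_i, s')$, where $s'$ extends $s$ by exactly one additional vertex, yields a new \EE{} $E''$ with $\wf(E'') \leq \wf(E')$ and strictly fewer arcs than $E'$, contradicting the minimality of $E'$ under the chosen lexicographic order. This contradiction shows that $H$ is acyclic for every decomposition, completing the proof.

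The main obstacle is carrying out the combinatorial analysis of the cycle $D$: a cycle in $H$ may be assembled from edges contributed by many distinct meta-images, while \autoref{obs:metaedgedisjoint} only bounds the pairwise overlap between any two. One must distinguish the case where $D$ is traced entirely within a single $\meta{G}{t_i}'$ (which requires $t_i$ to revisit a component, already tightly constrained by \autoref{obs:shorttrails}) from the case where $D$ switches between multiple meta-images. In the latter case, the fact that a cycle must close up forces either a repeated switching component, contradicting \autoref{obs:metaedgedisjoint}, or exposes a shared non-trivial arc in $\comp{G}$ that furnishes the subtrail $s$ needed to invoke \autoref{lem:metashortcut}.
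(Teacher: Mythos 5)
There is a genuine gap in the final, decisive step. Your overall framing (take $E'$ minimum-weight and, secondarily, arc-minimal, so that the conclusions of \autoref{obs:simpletrails}, \autoref{obs:shorttrails} and \autoref{obs:metaedgedisjoint} hold automatically) is sound and essentially equivalent to the paper's iterate-until-stable argument. The problem is your plan to destroy the cycle~$D$ by exhibiting a subtrail~$s$ of some~$t_i$ whose image $\meta{G}{s}$ is a \emph{non-trivial} subtrail of $\meta{G}{r}$ for a second trail~$r$, so that \autoref{lem:metashortcut} applies. A non-trivial shared subtrail means $\meta{G}{t_i}$ and $\meta{G}{r}$ share an edge of $\comp{G}$ --- but \autoref{obs:metaedgedisjoint}, which you have already imposed on~$E'$, guarantees that two such meta-images share at most a single vertex (or only their endpoints), never an edge. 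So the configuration you need can never occur in your normalized~$E'$. Meanwhile the generic cycle is exactly the one your dichotomy misses: take three paths with $\meta{G}{t_1}'=C_1C_2$, $\meta{G}{t_2}'=C_2C_3$, $\meta{G}{t_3}'=C_3C_1$. This triangle is fully consistent with \autoref{obs:shorttrails} and \autoref{obs:metaedgedisjoint} (consecutive meta-images meet in exactly one component, no component is repeated along~$D$, no edge of $\comp{G}$ is shared), yet neither horn of your case analysis --- ``repeated switching component'' or ``shared non-trivial arc'' --- fires, and \autoref{lem:metashortcut} gives you nothing.

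The paper's proof does not look for an overlap between two trails at all; it shortcuts \emph{within a single trail}. Pick an edge $e$ of the cycle contributed by $t_i$. Because $e$ lies in $\meta{G}{t_i}'$, it is not the first edge of $\meta{G}{t_i}$, so $t_i$ contains a subtrail $s$ whose image traverses $e$ together with an adjacent edge that does not belong to the cycle. Applying $\shortcut$ to $s$ preserves balance by \autoref{lem:shortcut}\enuref{enu:ls4}, and connectivity is argued directly: the component that $s$ skips over lies on the cycle and remains reachable through $c - e$, while the new $\meta{G}{t_i}$ stays attached to that remnant via the off-cycle edge. This removes $e$, turns the cycle into a path, and strictly decreases the arc count, which is the contradiction (or the termination measure) you were after. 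In short, the missing idea is to exploit the excluded initial vertex of each $\meta{G}{t_i}$ as an ``anchor'' off the cycle inside the \emph{same} trail, rather than trying to find two trails that overlap in $\comp{G}$ --- the latter is precluded by your own normalization.
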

\begin{proof}
  By \autoref{obs:simpletrails} we may assume that~$S := \{t_1, \ldots, t_n\}$ are paths or cycles. Assume that~$\meta{G}{\bigcup_{i=1}^n\meta{G}{t_i}'}$ contains a cycle~$c$ and that~$S$ is minimal with respect to this property. Let~$e \in t_i$ be an arbitrary edge on~$c$. There is a subtrail~$s$ of~$t_i$ such that~$\meta{G}{s}$ traverses~$e$ and at least one edge not belonging to~$c$---recall that~$\meta{G}{t_i}'$ is~$\meta{G}{t_i}$ without the initial vertex. Shortcutting~$s$ maintains balance of every vertex (statement~\enuref{enu:ls4}, \autoref{lem:shortcut}) and connectedness, because afterwards~$\meta{G}{t_i}$ is not vertex-disjoint from~$c$. Since an edge is removed from~$c$, it is a path after shortcutting~$s$.

  Iterating the shortcutting for every cycle in the graph~$\meta{G}{\bigcup_{i=1}^n\meta{G}{t_i}'}$ eventually removes every cycle after a finite amount of steps, because obviously the statement of the lemma holds true, if~$t_1, \ldots, t_n$ have length one, and because in every step the number of arcs in~$E$ decreases by at least one.
\end{proof}
We use \autoref{obs:metanocycle} in forthcoming \autoref{sec:wmeeslasha} to efficiently derive the structure of a suitable \EE{} for a given graph. We are now ready to prove \autoref{the:eestructure}.
\newtheorem*{rep}{\autoref{the:eestructure}}
\begin{rep}
  \eestructuretheorem
\end{rep}
\begin{proof}%[Proof of \autoref{the:eestructure}]
  We simply take an \EE~$E$ of minimum weight for the directed multigraph~$G$ and successively remove maximum-length paths from~$E$ to obtain a set of trails~$S = \{t_1, \ldots, t_k\}$. The sought properties of the trails follow from the observations we made in this section: Statement~\enuref{enu:ees5} is trivial. From \autoref{obs:simpletrails} we can assume that each~$t_i$ either is a path or a cycle. The maximum-length~$c + 1$ of maximum-length cycles and paths~(statement~\enuref{enu:ees2}) can be assumed because, by \autoref{obs:shorttrails}, we can assume that each trail traverses at most one vertex in each component except the terminal vertex. Statement~\enuref{enu:ees7} follows directly from \autoref{obs:metaedgedisjoint}. The maximum number of maximum-length paths~$p$ and cycles~$d$ of length at least two~(statement~\enuref{enu:ees3}) can be assumed because we can assume that~$\meta{G}{p}, \meta{G}{d}$ use two edges~(\autoref{obs:shorttrails}), they are edge-disjoint~(\autoref{obs:metaedgedisjoint}) and there are at most~$c(c-1)/2$ edges in~$\comp{G}$. The upper bound~$|I^+_G|$ on the number of maximum-length paths~(statement~\enuref{enu:ees4}) can be assumed because every vertex~$v$ has~$|\balance(v)| \leq 1$ (\autoref{lem:smallbalancepp}) and each such path starts and ends in an unbalanced vertex (\autoref{obs:walkdichotomy}). Finally, statement~\enuref{enu:ees6} follows directly from \autoref{obs:metanocycle}.
\end{proof}

%TODO
%\begin{itemize}
%\item redefine advice to be hint-minimal with respect to connectedness -> at most c hints, at most O(c) edges? -> running time improvement of WMEE-algorithm
%\end{itemize}

\section{Simplification through Advice}\label{sec:advice}

In \autoref{sec:trails} we observed that any \EE{} can be modified to conform to a restricted structure with respect to the connected components in the input graph. We will observe in \autoref{sec:incompress}, that this structure cannot be determined within polynomial time---unless~$\clacoNP \subseteq \claNP \slashpoly$, which seems unlikely. There, we implicitly use that fact, that it is not clear how components are connected through an \EE{} in order to obtain lower bounds for problem kernels. An obvious question is, whether the structure of an \EE{} can be determined using fixed-parameter algorithms whose super-polynomial-time portion depends only on the connected components of the input graph. This question is considered in the following sections.

We consider the general problem \pWMEE{} (\pWMEEs{}), and investigate its connection to the problem \pWMEEA{} (\pWMEEAs{}) in which the structure of allowed \EE s may be given by the input. In order to get a grasp at the structure of \EE s, we introduce the notion of hints and advice:
\begin{definition}
  Let~$G=(V, A)$ be a directed multigraph. A \emph{hint} for~$G$ is an undirected path or cycle~$t$ of length at least one in the component graph~$\comp{G}$ together with the information that~$t$ shall form a cycle of a path in an \EE{} of~$G$.\footnote{The extra information is necessary because a hint to a path may be a cycle in~$\comp{G}$.} We call the corresponding hints \emph{cycle hints} and \emph{path hints}, respectively. We say a set of hints~$P$ is an \emph{advice} to the graph~$G$ if the hints are edge-disjoint.%, the hints connect every vertex in~$\comp{G}$, and~$P$ is minimal with respect to this property---that is, no hint can be removed from~$P$ while maintaining that the hints connect every vertex in~$\comp{G}$.
\footnote{Note that there is a difference between advice in our sense and the notion of advice in computational complexity theory. There a piece of advice applies to every instance of a specific length.} We say that a path~$p$ in the graph~$(V, V \times V)$ \emph{realizes} a path hint~$h$ if~$\meta{G}{p} = h$ and the initial vertex of~$p$ has positive balance and the terminal vertex has negative balance in~$G$. We say that a cycle~$c$ in the graph~$(V, V \times V)$ realizes a cycle hint~$h$ if~$\meta{G}{c} = h$. We say that an \EE~$E$ \emph{heeds the advice~$P$} if it contains paths and cycles that realize all hints in~$P$.
\end{definition}
Now consider the following restricted version of \pWMEEs{}:
\decprob{\pWMEEA~(\pWMEEAs{})}{A directed multigraph~$G = (V, A)$ with a weight function~$\wf: V \times V \rightarrow [0, \wf_{max}]\cup \{\infty\}$ and advice~$P$.}{Is there an \EE{}~$E$ of~$G$ that is of weight at most~$\wf_{max}$ and heeds the advice~$P$?}
\begin{figure}
  \begin{center}
    \includegraphics{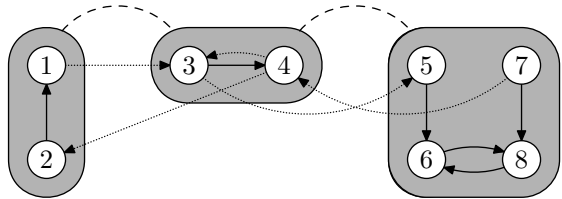}
    \caption{An instance of \pWMEEAs{} comprising the vertices~$1$ through~$8$ and the solid arcs. Gray objects represent components of the input graph~$G$ and the the dashed lines are a hint~$h$ that forms a piece of advice~$P = \{h\}$ for~$G$. The dotted arcs form an \EE{}~$E$ of~$G$. Both the paths traversing the vertices~$1, 3, 5$ and~$7, 4, 2$ realize~$h$. Thus,~$E$ heeds~$P$.}
    \label{fig:wmeeaexample}
  \end{center}
\end{figure}%
For an example of an instance of \pWMEEAs{}, see \autoref{fig:wmeeaexample}. The \pWMEEAs{} problem may be interesting in practical applications where the structure of a sought \EE{} is partly known. However, our intent is to use this problem to make the complete structure of the \EE{} explicit. We derive efficient algorithms that guess the structure as advice and then realize each hint.

In \autoref{sec:remcycle}, we simplify \pWMEEAs{} and gather a useful tool for its analysis. Then, in \autoref{sec:wmeeslasha}, we look at the relationship of \pWMEEs{} and \pWMEEAs{}. We introduce a variant of \pWMEEAs{} that seems to be easier to tackle than \pWMEEs{}. In \autoref{sec:multivariatealg}, we give an efficient algorithm for this variant that also transfers over to \pWMEEs{}. %Later, in \autoref{sec:releecbm}, we also observe that there is a polynomial-size problem kernel for \pWMEEAs{} whereas we show in \autoref{sec:incompress} that none exists for \pWMEEs{} unless~\claPH~=~\claS{3}. However, \pWMEEAs{} remains \NPh , as we will observe in \autoref{sec:releecbm}. TODO -> wmeeslasha?

In the following sections, we assume all instances of \pWMEEs{} and \pWMEEAs{} to be preprocessed using \autoref{trans:sb} (``splitting vertices'') and \autoref{trans:spp} (``\SPP{}'') as introduced in \autoref{sec:trails}. We give parameterized reductions that use the parameters number of components and sum of all positive balances of vertices in the input graph. For these one can assume without loss of generality that the instances are preprocessed using the two transformations, because of \autoref{obs:sbsppinvariants}.

\subsection{Computing Realizations of Hints}\label{sec:remcycle}
In this subsection, we introduce the $\minpath$~function, which calculates minimum-weight paths that consist of allowed arcs and traverse connected components in a specific order. Using this function, we show that \pWMEEAs{} and the problem \pWMEECLA{} (\pWMEECLAs{}) are equivalent under polynomial-time many-one reductions. That is, a minimum-weight realization for any hint to a cycle can be found in polynomial time. We use this equivalence in the forthcoming sections to derive algorithms more conveniently, and to simplify reductions from and to \pWMEEAs{}.

\subsubsection{The $\minpath$ Function}
On many occasions we need to find a minimum-weight realization of a path-hint in an advice that starts and terminates in some specified vertices. Hence we need to compute a minimum-weight path that traverses vertices of components in the order given by the hint. The~$\minpath$ function defined below finds such paths.
\begin{definition}\label{def:minpath}
  Let the directed multigraph~$G=(V,A)$ and the weight function~$\wf: V \times V \rightarrow [0, \wf_{max}]\cup \{\infty\}$ constitute an instance of \pWMEEs{}. Let~$p$ be a path in~$\comp{G}$ and let~$u$ be a vertex in the component of~$G$ that corresponds to the initial vertex of~$p$ and~$v$ a vertex in the component that corresponds to the terminal vertex of~$p$. Define~$\minpath(G,\wf, p, u, v)$ as the shortest path~$s$ from~$u$ to~$v$ in the complete graph~$(V, V \times V)$ such that~$\meta{G}{s} = p$.
\end{definition} %TODO bsp for minpath
Recall that we have made \SPP{} (\autoref{trans:spp}) implicit at the start of this section. Thus, by \autoref{obs:shorttrails}, we may assume that any shortest path in~$(V, V \times V)$ with respect to the weight function~$\wf$ does not successively visit two vertices of one connected component of~$G$. This gives the following strategy to compute $\minpath(G, \wf, p, u, v)$:

Orient the path~$p$ to obtain a directed path~$p'$. % that starts in the vertex corresponding to the component of~$u$ and terminates in the corresponding component-vertex of~$v$. 
Initialize a new weight function~$\wf'$ that assigns every arc in~$V \times V$ the weight~$\infty$. Iterate over the arcs of $p'$. For any such arc~$(c_1, c_2)$ let~$C_1, C_2$ be the corresponding components. For every arc~$(w, x)\in C_1 \times C_2$ set~$\wf'(w,x) := \wf(w,x)$. Now, using the weight function $\wf'$, compute a shortest path~$s$ from~$u$ to~$v$ in the graph~$(V, V \times V)$. Return~$s$. See also the pseudocode in \autoref{alg:minpath}.
\begin{algorithm}
  \LinesNumbered
  
  \KwIn{A directed multigraph~$G=(V, A)$, a weight function~$\wf: V \times V \rightarrow [0, \wf_{max}]\cup \{\infty\}$, a path~$p$ in~$\comp{G}$, and vertices~$u, v$ in the components~$C^{u}, C^{v}$ corresponding to the initial and terminal vertices of~$p$, respectively.}
  \KwOut{A minimum-weight path~$s$ from~$u$ to~$v$ in~$(V, V \times V)$ such that~$\meta{G}{s} = p$.}

  \SetKwData{npath}{Path}

  \BlankLine
  \tcc{Orient the path $p$.}
  $p' \leftarrow$ a path that is an orientation of~$p$ and starts in the vertex corresponding to~$C^{u}$ and terminates in the vertex corresponding to~$C^{v}$\;

  \tcc{Initialize a modified weight function $\wf'$.}
  \lFor{$w, x \in V$}{$\wf'(w,x) \leftarrow \infty$\;}
  \For{$(c_1,c_2) \in p'$}{
    $C_1 \leftarrow$ connected component of $G$ corresponding to $c_1$\;
    $C_2 \leftarrow$ connected component of $G$ corresponding to $c_2$\;
    \lFor{$w \in C_1, x \in C_2$}{$\wf'(w, x) \leftarrow \wf(w, x)$\;}
  }
  $\npath \leftarrow $ a shortest path from $u$ to $v$ in the complete directed graph with the vertices of $G$ and with weight function $\wf'$\;
  
  \Return{\npath\;}
    
  \SetAlgoRefName{Min\-Path}
  \caption{Finding minimum-weight paths that traverse components in a specified order.}
  \label{alg:minpath}
\end{algorithm}
\begin{lemma}
  \label{lem:minpath}
  \autoref{alg:minpath} computes the function~$\minpath(G,\wf,p,u,v)$ in~$\bigO(n^2)$~time.
\end{lemma}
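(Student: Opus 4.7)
The plan is to verify correctness and the $\bigO(n^2)$ running time bound separately. For correctness I would establish two matching directions: (a) every $u$-to-$v$ path $s$ in the complete graph $(V, V\times V)$ with $\wf'(s) < \infty$ satisfies $\meta{G}{s} = p$, and (b) there is a minimum-weight realization $s^\ast$ of $p$ with $\wf'(s^\ast) = \wf(s^\ast)$. Together with the optimality of Dijkstra's output under $\wf'$, these imply that the algorithm returns a minimum-weight path from $u$ to $v$ with meta-image $p$.

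For (a) I would walk along $s$ from $u$: because $\wf'(w,x)$ is finite only when $(w,x) \in C_i \times C_{i+1}$ for some consecutive pair $(c_i, c_{i+1})$ of $p'$, and $u$ lies in the initial component of $p'$, the first arc of $s$ must enter the next component of $p'$; iterating, $s$ visits the components of $p'$ in order and has length exactly $|A(p')|$. Simplicity of $p$ in $\comp{G}$ is crucial here, since it prevents a component from appearing twice as a source and thereby removes any ambiguity about which successor component is entered. The \SPP{} preprocessing together with \autoref{obs:shorttrails} allows us to restrict attention to paths $s$ that do not revisit a single component in two consecutive steps, which yields $\meta{G}{s} = p$. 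For (b) I would invoke \autoref{obs:shorttrails} once more to pick a minimum-weight realization $s^\ast$ of $p$ all of whose arcs cross component boundaries; each such arc then retains its $\wf$-weight under $\wf'$, so Dijkstra finds a $u$-to-$v$ path in $(V, V \times V)$ of $\wf'$-weight at most $\wf(s^\ast)$ which, by (a), also realizes $p$ and therefore has $\wf$-weight equal to its $\wf'$-weight.

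For the running time, orienting $p$ takes time linear in its length, initializing $\wf'$ costs $\bigO(n^2)$, and the inner double loop costs $\sum_{(c_1, c_2) \in A(p')} |C_1|\cdot|C_2|$. Since $p$ is simple in $\comp{G}$, each component appears at most once as a source and at most once as a target among the arcs of $p'$, so this sum is bounded by $\bigl(\sum_i |C_i|\bigr)^2 = n^2$. A standard adjacency-matrix implementation of Dijkstra on the complete graph with $n$ vertices runs in $\bigO(n^2)$, giving $\bigO(n^2)$ in total.

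The main obstacle will be the meta-image argument in (a): one must justify that any finite-$\wf'$-weight path from $u$ to $v$ is \emph{forced} to traverse the components of $p'$ in the prescribed order and yield $\meta{G}{s} = p$, rather than looping within some component or picking up an arc leading to an out-of-order component. This is where simplicity of $p$ in $\comp{G}$, combined with the \SPP{} preprocessing, does the real work.
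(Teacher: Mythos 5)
Your proposal is correct and follows essentially the same route as the paper's proof: both directions of the weight comparison (a minimum-weight realization retains its weight under~$\wf'$ thanks to \SPP{} and \autoref{obs:shorttrails}, and any finite-$\wf'$-weight $u$--$v$ path is forced to realize~$p$ by the structure of~$\wf'$), plus Dijkstra in~$\bigO(n^2)$. You merely spell out the meta-image argument and the cost of the initialization loop in more detail than the paper does.
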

\begin{proof}
  Consider $p_{min} = \minpath(G,\wf,p,u,v)$. This path retains its weight under the weight function~$\wf'$. It follows that the output~$s$ of \autoref{alg:minpath} has at most the weight of~$p_{min}$. However, since in any vertex of a component of~$G$ only arcs that lead to the next component according to~$p'$ may have weight~$\leq \infty$, we may assume that~$\meta{G}{s} = p$ and thus~$\wf(s) \geq \wf(p_{min})$.

  The dominating running time portion is in the computation of a shortest path in line 7, which is possible in~$\bigO(n^2)$~time using Dijkstra's algorithm (there are no negative weights in~$\wf'$).
\end{proof}
Using the~$\minpath$ function, we can formulate a fact about \EE s that we use in reductions involving \pWMEEAs{}.
\begin{observation}\label{obs:minpathinee}
  Let $E$ be an \EE{} for the multigraph~$G$ that heeds the advice~$P$, let~$P$ contain a path-hint~$h$ and let~$\wf$ be a weight function~$V \times V \rightarrow [0, \wf_{max}]\cup \{\infty\}$. There is an \EE{}~$E'$ such that the following statements hold:
  \begin{lemenum}
  \item $E'$~heeds the advice~$P$,
  \item $\wf(E') \leq \wf(E)$, and 
  \item $A(\minpath(G, \wf, h, u, v)) \subseteq E'$.
  \end{lemenum}
  Here,~$u, v$ are vertices contained in the connected components of~$G$ that correspond to the initial and terminal vertices of~$h$, respectively.
\end{observation}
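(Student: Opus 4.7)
Since $E$ heeds the advice $P$, by definition there is a path $p$ in the directed multigraph $(V,E)$ that realizes the path-hint $h$. In particular, $\meta{G}{p}=h$, the initial vertex $u$ of $p$ lies in the component of $G$ corresponding to the initial vertex of $h$ and has positive balance in $G$, and the terminal vertex $v$ of $p$ lies in the component corresponding to the terminal vertex of $h$ and has negative balance in $G$. These $u,v$ are the vertices referred to in the statement. Set $q := \minpath(G,\wf,h,u,v)$ and define the new extension by replacing the arcs of $p$ by the arcs of $q$:
\[
E' := \bigl(E \setminus A(p)\bigr) \cup A(q).
\]

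The inclusion $A(q)\subseteq E'$ is immediate. For the weight bound, observe that $p$ itself is a path from $u$ to $v$ in $(V,V\times V)$ with $\meta{G}{p}=h$, so $p$ is one of the candidates over which $\minpath$ minimises; hence $\wf(q)\le \wf(p)$ and therefore $\wf(E')\le \wf(E)$. To see that $E'$ is still an \EE{} for $G$, I will verify balance and connectedness separately. For balance, note that both $p$ and $q$ are paths from $u$ to $v$; as an arc set, a path from $u$ to $v$ contributes $+1$ to the outdegree of $u$, $+1$ to the indegree of $v$, and exactly one incoming plus one outgoing arc at every internal vertex it visits. Consequently, replacing $p$ by $q$ changes no vertex balance in $G+E$, so every vertex remains balanced in $G+E'$. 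For connectedness, because $\meta{G}{q}=h=\meta{G}{p}$, the set of components of $G$ that are linked by $q$ is exactly the set linked by $p$; thus whatever component-connectivity in $G+E$ was provided by the arcs of $p$ is still provided by the arcs of $q$, while the remaining arcs of $E\setminus A(p)$ are untouched. Applying \autoref{the:eulerian} then shows that $G+E'$ is \Eu.

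It remains to check that $E'$ heeds the advice $P$. The path $q$ realises $h$: it satisfies $\meta{G}{q}=h$ by construction of $\minpath$, and its endpoints $u,v$ have positive, respectively negative, balance in $G$ because they are the endpoints of $p$. Every other hint $h'\in P\setminus\{h\}$ is still realised by the same path or cycle in $E$ as before, provided that path or cycle is disjoint from $A(p)$; this can be assumed by first invoking \autoref{obs:metaedgedisjoint} (and \autoref{obs:simpletrails}) to arrange the realisations of hints in $E$ to be pairwise edge-disjoint, which only decreases the weight.

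The only subtle point — and the one I would be most careful about in the write-up — is the balance argument, since replacing $p$ by $q$ may change the set of intermediate vertices visited. The observation that a path's contribution to the balance of every vertex other than its endpoints is zero (one in, one out, or not visited at all) is what makes the substitution safe; once this is in hand, the rest of the proof is routine bookkeeping.
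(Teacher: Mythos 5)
Your proposal is correct and follows exactly the route the paper takes: the paper's proof is the one-line instruction to remove the realization $p$ of $h$ from $E$ and add the arcs of $\minpath(G,\wf,h,u,v)$. You merely spell out the routine verifications (weight via minimality of $\minpath$, balance via the equal endpoint contribution of the two paths, connectedness via $\meta{G}{q}=\meta{G}{p}=h$) that the paper leaves implicit.
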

\begin{proof}
  \autoref{obs:minpathinee} is easy to prove: Simply remove the realization~$p$ of~$h$ from~$E$ and add the edges of~$\minpath(G, \wf, h, u, v)$ where~$u, v$ are the initial and terminal vertices of~$p$, respectively.
\end{proof}

\subsubsection{Removing Cycles from an Advice}
Now regarding hints to cycles, we may proceed as in \autoref{alg:findcycle2} (see page~\pageref{alg:findcycle2}): First we introduce a new component~$K'$ that is a copy of an arbitrary component~$K$ visited by the given cycle hint~$c$ (lines~1 and~2). Then we extend the weight-function~$\wf$ such that any arc in~$V \times V$ that contains a vertex~$v$ of~$K'$ is assigned the same weight as the arc that contains the original vertex in~$K$ (lines~3 to~5). We then split the cycle~$c$ to a path~$p$ that goes from~$K$ to~$K'$ (lines~6 to~9). Then for every vertex~$v \in K$ we compute~$\minpath(G,\wf,p, v, v')$ and $\minpath(G,\wf,p, v', v)$ where~$v'$ is the copy of~$v$ in~$K'$. This is done in lines~11 to~18. The shortest path found in this procedure is modified such that the vertex it contains in~$K'$ is replaced by its original in~$K$. This modified path is returned.
\begin{algorithm}
  \LinesNumbered
  
  \KwIn{A directed multigraph~$G=(V, A)$, a weight function~$\wf: V \times V \rightarrow [0, \wf_{max}]\cup \{\infty\}$ and a cycle~$c$ in~$\comp{G}$.}
  \KwOut{A minimum-weight cycle in $G$ that occurs in an \EE{} of $G$ that heeds an advice containing~$c$.}

  \SetKwData{spath}{CurrentShortestPath}
  \SetKwData{npath}{Path}
  \SetKwData{npath'}{Path'}

  \BlankLine
  \tcc{Introduce a new component to split the cycle.}
  $K \leftarrow$ an arbitrary component of $G$ that is visited by $c$\;
  $G \leftarrow$ $G$ with an additional copy $K'$ of $K$\;
  \For{$(v, w) \in K \times V$}{
    $v' \leftarrow$ the copy of $v$ in $K'$\;
    $\wf(v', w) \leftarrow \wf(v, w)$\;
  }
  $k \leftarrow$ the vertex in~$\comp{G}$ that corresponds to $K$\;
  $k' \leftarrow$ the vertex in~$\comp{G}$ that corresponds to $K'$\;
  $\{k, v\} \leftarrow$ an edge in $c$ that is incident to $k$\;
  $p \leftarrow$ $c \setminus (\{\{k, v\}\} \cup \{\{k', v\}\})$\;
  \tcc{Probe vertices for shortest cycles.}
  \spath $\leftarrow$ empty list\;
  \For{$v \in K$}{
    $v' \leftarrow$ the copy of $v$ in $K'$\;
    $\npath \leftarrow \minpath(G,\wf,p, v, v')$\;
    $\npath' \leftarrow \minpath(G,\wf,p, v', v)$\;
    \If{$\wf'(\npath) < \wf'(\spath)$}{$\spath \leftarrow \npath$\;}
    \If{$\wf'(\npath') < \wf'(\spath)$}{$\spath \leftarrow \npath'$\;}
  }
  
  \Return{\spath with every vertex in $\spath \cap K'$ replaced by its original in~$K$\;}
  
  \SetAlgoRefName{Determine\-Cycle}
  \caption{Finding minimum-weight cycles with advice.}
  \label{alg:findcycle2}
\end{algorithm}

\begin{lemma}\label{lem:findcycle}
  The output returned by \autoref{alg:findcycle2} is a cycle that is contained in a minimum-weight \EE{}~$E$ for~$G$ that heeds an advice~$P$ such that $P$ contains the input cycle~$c$. The algorithm runs in~$\bigO(n^3)$~time.
\end{lemma}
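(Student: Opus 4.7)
The plan is to argue correctness by exhibiting a weight-preserving bijection between realizations of the input cycle hint~$c$ in the original instance and realizations of the derived path~$p$ in the modified instance obtained by duplicating the distinguished component~$K$ to~$K'$.

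First I would note that by \autoref{obs:simpletrails} we may restrict attention to realizations of~$c$ that traverse each vertex at most once. Any such realization is a cycle visiting~$K$; after picking any vertex~$v \in K$ on the cycle and an orientation, ``cutting'' the cycle immediately after the visit at~$v$ yields a trail~$s$ that starts at~$v$, ends at some~$w \in K$, and satisfies~$\meta{G}{s} = p$ (up to the chosen orientation). Under the weight-function extension---where every arc incident to a copy vertex~$x' \in K'$ has the same weight as the corresponding arc at~$x \in K$---this trail~$s$ corresponds, by relabelling its terminal vertex~$w$ as its copy~$w'$, to a path~$s'$ from~$v$ to~$w'$ of equal weight with~$\meta{G}{s'} = p$ in the modified graph. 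Conversely, a minimum-weight path~$s'$ from~$v \in K$ to~$w' \in K'$ with~$\meta{G}{s'} = p$ yields, by identifying~$w'$ with its original~$w$, a cycle realizing~$c$ of the same weight; this is precisely the final step of the algorithm. Iterating this correspondence over both orientations---hence the two~$\minpath$ calls per vertex---exhausts all realizations of~$c$.

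For minimality, \autoref{lem:minpath} guarantees that each call~$\minpath(G, \wf, p, v, v')$ returns a minimum-weight path~$s'$ in~$(V, V \times V)$ from~$v$ to~$v'$ with~$\meta{G}{s'} = p$. Since the algorithm ranges over every~$v \in K$ and over both orientations, the best path found is of minimum weight over all realizations of~$c$. An application of \autoref{obs:minpathinee} then places this minimum-weight realization into some minimum-weight \EE{} that heeds an advice containing~$c$.

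For the running time, the setup (duplicating~$K$ and extending~$\wf$) costs~$\bigO(n^2)$. The main loop runs at most~$|K| \leq n$ times, each iteration performing two~$\minpath$ calls of~$\bigO(n^2)$ cost by \autoref{lem:minpath}. Thus the total running time is~$\bigO(n^3)$. The main obstacle will be ensuring that the correspondence is tight: one must verify that no minimum-weight path returned by~$\minpath$ spuriously visits~$K'$ in its interior, which follows because the oriented version of~$p$ contains the vertex corresponding to~$K'$ only as its terminal and because \SPP{} (\autoref{trans:spp}) ensures that consecutive vertices of a shortest path lie in distinct components of~$G$, so that the final relabelling step produces a genuine simple cycle.
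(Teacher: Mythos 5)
Your proposal is correct and follows essentially the same route as the paper's proof: split the cycle hint into a path hint by duplicating~$K$, use the optimality of~$\minpath$ over all start vertices and both orientations, and swap the resulting cycle into a minimum-weight \EE{}. Two cosmetic points: to justify that the cut trail returns to the \emph{same} vertex~$v \in K$ (so that the $v$-to-$v'$ probes of the algorithm suffice and no $v$-to-$w'$ pairs with~$w \neq v$ are needed) you should invoke \autoref{obs:shorttrails} (each component visited at most once) rather than only \autoref{obs:simpletrails}, and the final swap step concerns a cycle hint, so it needs the straightforward cycle analogue of \autoref{obs:minpathinee} rather than that observation itself.
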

\begin{proof}
  It is easy to see that the output is a cycle: The algorithm computes a path from~$v \in K$ to its copy $v' \in K'$. However, $v'$~is replaced by~$v$ in the final step in line~19.

  Since the \EE{}~$E$ heeds some advice that contains the cycle-hint~$c$, it contains a number of closed trails that all visit the components whose corresponding vertices in~$\comp{G}$ are contained in~$c$. Let~$c^G_{min}$ be a trail that is of minimum-weight among those trails. Because of \SPP{} and \autoref{obs:shorttrails} we may assume that~$c^G_{min}$ is a cycle that contains exactly one vertex of every component it visits. By copying an arbitrary component~$K$ this cycle visits and modifying the cycle so that it starts in one vertex~$v$ of~$K$ and ends in the copy of~$v$, we obtain a path of the same weight. That is, the path found by \autoref{alg:findcycle2} has at most the weight of~$c^G_{min}$. However, it may not find a cycle that is of lower weight than~$c^G_{min}$, otherwise~$E$ is not of lowest weight.
  
  Regarding the running time, lines~1 and~2 can be carried out in~$\bigO(n + m)$~time. Extending the weight function in lines~3 to~5 is possible in~$\bigO(n^2)$~time. Lines~6 to~9 take time at most~$\bigO(n)$ using list-implementations of paths. The loop in line~11 is executed at most $n$~times and every iteration takes~$\bigO(n^2)$~time using \autoref{alg:minpath}. Summing up, we get a bound of~$\bigO(n^3)$~time.
\end{proof}
\autoref{lem:findcycle} yields the following theorem:
\begin{theorem} 
  \label{obs:redwmeeatocla}
  \pWMEEA{} and \pWMEECLA{} are equivalent under polynomial-parameter polynomial-time many-one reductions when parameterized by the number of connected components and/or the sum of positive balances of all vertices.
\end{theorem}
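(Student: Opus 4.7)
The plan is to establish both directions separately, with the bulk of the work going into eliminating cycle hints. The direction from \pWMEECLA{} to \pWMEEA{} is trivial: any instance of \pWMEECLA{} is already an instance of \pWMEEA{} (cycle-free advice being a special case of advice), so the identity map is a polynomial-time polynomial-parameter many-one reduction, and both parameters are preserved exactly.

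The interesting direction is \pWMEEA{} $\to$ \pWMEECLA{}. Given an instance $(G, \wf, \wf_{max}, P)$ with cycle hints $c_1, \ldots, c_r \in P$, my approach is to eliminate these hints by ``realizing'' them up-front using \autoref{alg:findcycle2}. Concretely, I would: (1)~for each cycle hint $c_i$, invoke \autoref{alg:findcycle2} to obtain a cycle $R_{c_i}$, which by \autoref{lem:findcycle} is contained in some minimum-weight \EE{} for~$G$ heeding an advice containing $c_i$; (2)~form the modified graph $G' := G + \bigcup_{i=1}^r A(R_{c_i})$ with budget $\wf'_{max} := \wf_{max} - \sum_{i=1}^r \wf(R_{c_i})$; (3)~remove all cycle hints from $P$ and project the remaining path hints onto the component graph $\comp{G'}$ to obtain an advice $P'$ with only path hints. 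Correctness in the forward direction follows from \autoref{lem:findcycle}: a minimum-weight \EE{}~$E$ for the original instance may be assumed to contain all $R_{c_i}$, so $E \setminus \bigcup_i A(R_{c_i})$ witnesses the reduced instance. For the reverse direction, any \EE{} for $G'$ of weight at most $\wf'_{max}$ heeding~$P'$ can be re-augmented by the $R_{c_i}$ to yield an \EE{} for~$G$ heeding~$P$. Parameter preservation is straightforward: adding arcs can only merge connected components (so $c$ does not grow), and cycle realizations add one incoming and one outgoing arc per visited vertex, so the sum~$b$ of positive balances is invariant. The running time is polynomial since \autoref{alg:findcycle2} runs in $\bigO(n^3)$ per hint, and because hints in $P$ are edge-disjoint in~$\comp{G}$ there are at most $\bigO(c^2)$ cycle hints to process.

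The main obstacle I anticipate is step~(3), the projection of path hints onto $\comp{G'}$. Although path hints and cycle hints are \emph{edge}-disjoint in $\comp{G}$, they may share \emph{vertices}, so a cycle realization can merge components that appear non-consecutively in a path hint, collapsing it into a walk in $\comp{G'}$ that visits some super-component more than once (in particular producing apparent self-loops at the merged component). I would handle this either by shortcutting the projected hint using the same $\shortcut$ argument as in \autoref{obs:simpletrails} and \autoref{obs:shorttrails}—so that each projected hint becomes a genuine simple path in $\comp{G'}$ of no greater realization weight—or, if that fails to yield a valid \pWMEECLA{} instance for some pathological case, by refining step~(2) so that cycle realizations are constructed to avoid merging components used as internal vertices of path hints (the availability of shortest-path preprocessing and \autoref{obs:shorttrails} should leave enough flexibility for this). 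Verifying that this projection is both well-defined and parameter-preserving is the most delicate technical part of the argument.
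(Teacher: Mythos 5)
Your proposal follows essentially the same route as the paper: the \pWMEECLAs{}-to-\pWMEEAs{} direction is the trivial inclusion, and the other direction realizes each cycle hint via \autoref{alg:findcycle2}, adds the resulting cycle to the input graph, and appeals to \autoref{lem:findcycle} for correctness together with the $\bigO(|P|n^3)$ running time and the invariance of the number of components and of the positive balances. The projection issue you flag in step~(3), and the explicit budget adjustment, are details the paper's terser proof does not spell out, so your extra care there only strengthens the same argument.
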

\begin{proof}
  Since \pWMEECLAs{} is a subset of \pWMEEAs{} this direction is trivial. To reduce \pWMEEAs{} to \pWMEECLAs{} simply use \autoref{alg:findcycle2} for every cycle-hint in the advice and add the corresponding cycle to the input graph. This is a polynomial-time many-one reduction, because it can be carried out in~$\bigO(|P|n^3)$~time and it is correct because of \autoref{lem:findcycle}. Also, by carrying out the reduction the number of components does not increase and the balance of all vertices stays the same. As a consequence, this is a polynomial-parameter polynomial-time reduction for these parameters.
\end{proof}
\autoref{obs:redwmeeatocla} enables us to simplify reductions and algorithms for \pWMEEAs{} by using the equivalence of \pWMEEAs{} and \pWMEECLAs{} and by considering the simpler problem of \pWMEECLAs{} instead.

\subsection{The Impact of Advice}\label{sec:wmeeslasha}

In this section we investigate the relationship of \pWMEEs{} and \pWMEEAs{}. %In a way, we check how  an additional advice helps to find \EE s whereas in \pWMEEs{} \EE s have to be found ``independently''. 
For this, we consider the following restricted form of advice and corresponding problem \pWMEECA{} (\pWMEECAs{}).
\begin{definition}
  Let~$G$ be a directed multigraph and let~$P$ be an advice for~$G$. We call the advice~$P$ \emph{connecting}, if the hints in~$P$ connect every vertex in~$\comp{G}$.
\end{definition}
\decprob{\pWMEECA}{A directed multigraph~$G = (V, A)$ with a weight function~$\wf: V \times V \rightarrow [0, \wf_{max}]\cup \{\infty\}$ and minimal connecting advice~$P$.}{Is there an \EE{}~$E$ of~$G$ that is of weight at most~$\wf_{max}$ and heeds the advice~$P$?} \label{def:pWMEECA}

We show that \pWMEEs{} is parameterized Turing-reducible to \pWMEECAs{} when parameterized by the number~$c$ of components in the input graph or the combined parameter of~$c$ and the sum~$b$ of all positive balances of vertices in the input graph. And we also give a polynomial-time polynomial-parameter many-one reduction from \pWMEEAs{} to \pWMEEs{} with respect to the parameter number of connected components in this section. 

Since in \autoref{sec:incompress} we will show that a polynomial-size problem kernel for \pWMEEs{} would imply~$\clacoNP \subseteq \claNP \slashpoly$ and since in \autoref{sec:releecbm} we will give a polynomial-size problem kernel for \pWMEECAs{}, we cannot hope to replace the Turing reduction with a polynomial-time polynomial-parameter many-one reduction. Otherwise we could derive a polynomial-size problem kernel for \pWMEEs{} using this reduction. 

In terms of classical complexity theory, the parameterized Turing reduction is a very powerful tool, and thus, one could hope for \pWMEECAs{} being polynomial-time solvable. This, however, is unlikely. Although the reductions given in this section do not imply a hardness result for \pWMEECAs{}, we gather \NPhs{} as a simple corollary (\autoref{cor:eecanph}) in \autoref{sec:releecbm}. Nevertheless, the reductions given in this section are of high value to us, because we can use the Turing reduction to derive an efficient algorithm for \pWMEEs{} in \autoref{sec:multivariatealg} and together with the second reduction, we can restate \pWMEEs{} as a matching problem in \autoref{sec:matching}.

\paragraph{Simple Observations Regarding \pWMEECAs{}.}
For running time analysis, we sometimes need to know the maximum number of hints in an advice in \pWMEECAs{}. Here, the following observation is helpful.
\begin{observation} \label{obs:hintsinadvice}
  Let~$G$ be a directed multigraph with~$c$ connected components and let~$P$ be a minimal connecting advice for~$G$. The advice~$P$ contains at most~$c$ hints.
\end{observation}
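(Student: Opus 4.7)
The plan is to give a short spanning-tree argument in the component graph $\comp{G}$. Let $H$ be the subgraph of $\comp{G}$ whose edge set is $\bigcup_{h \in P} E(h)$. Because $P$ is a \emph{connecting} advice, the vertices of $\comp{G}$ are all joined together by edges of the hints, so $H$ is a connected spanning subgraph of $\comp{G}$. In particular, $H$ admits a spanning tree $T$ with exactly $c - 1$ edges.

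Next I would prove the key claim: every hint $h \in P$ contains at least one edge of $T$. The argument is by contradiction using the minimality of $P$. Suppose some $h^\star \in P$ satisfies $E(h^\star) \cap E(T) = \emptyset$. Then $E(T) \subseteq \bigcup_{h \in P \setminus \{h^\star\}} E(h)$, so the spanning subgraph of $\comp{G}$ defined by the hints of $P \setminus \{h^\star\}$ already contains the spanning tree $T$ and is therefore connecting. This contradicts the minimality of $P$, which says that no proper subset of $P$ is connecting.

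To finish, recall that the hints in $P$ are pairwise edge-disjoint by the very definition of advice. Consequently, if I assign to each $h \in P$ one of its edges in $T$ (which exists by the claim above), this assignment is injective. Therefore $|P| \le |E(T)| = c - 1 \le c$, which is the desired bound.

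The argument has essentially no obstacle; the only subtle point is distinguishing carefully between ``$P$ is minimal'' in the sense that no proper subset is connecting and the weaker property that no single hint is redundant as a whole. Fortunately, the contradiction above uses exactly the latter, weaker form, so the proof goes through without requiring anything further about path- versus cycle-hints, or about the length of individual hints.
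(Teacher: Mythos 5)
Your proof is correct, and it even yields the slightly sharper bound $|P| \le c-1$. It is, however, organized differently from the paper's argument. The paper orders the hints $h_1,\ldots,h_k$ arbitrarily and argues incrementally: each $h_i$ must merge at least two connected components of the partial union $(V(\comp{G}), \bigcup_{j<i} E(h_j))$, since otherwise $h_i$ could be dropped without disconnecting anything; as the component count starts at $c$ and drops by at least one per hint, there can be at most $c-1$ of them. You instead fix a spanning tree $T$ of the union of all hint edges, show via minimality that every hint must contain a $T$-edge, and then use the pairwise edge-disjointness of hints to make the hint-to-tree-edge assignment injective. Both arguments rest on the same underlying fact (minimality forces each hint to be connectivity-essential), but they trade off slightly different hypotheses: the paper's incremental count never needs edge-disjointness of the hints, whereas your injection step does; on the other hand, your version isolates the use of minimality into a single clean claim and, as you correctly observe, only needs the weak form ``no single hint is removable,'' which is all the paper's definition of minimal connecting advice guarantees as well. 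Either route is a complete proof.
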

\begin{proof}
  Since a hint is a path or cycle of length at least one, it connects at least two vertices in~$\comp{G}$. We consider the graph~$(V(\comp{G}), \emptyset)$ and the procedure of successively adding hints~$h_1, \ldots, h_k$ that form a minimal connecting advice. It is clear that every hint~$h_i, 1 \leq i \leq k$, must connect two connected components of the graph~$(V(\comp{G}), \bigcup_{j = 1}^{i - 1}E(h_j))$. Otherwise we could remove~$h_i$ and still connect every vertex in~$\comp{G}$ using the remaining hints. Thus, adding~$c$ hints connects every vertex in~$\comp{G}$ and there are at most~$c$ hints in~$P$.
\end{proof}
It is also easy to see, that we can realize every cycle hint in a minimal connecting advice to obtain a cycle-free minimal connecting advice.
\begin{observation}\label{obs:redwmeecatoccla}
  \pWMEECA{} is equivalent to \pWMEECCLA{} (\pWMEECCLAs{}) under polynomial-parameter polynomial-time many-one reductions with respect to the parameters number of connected components and sum of all positive balances of vertices.
\end{observation}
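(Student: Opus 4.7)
One direction is trivial: any cycle-free minimal connecting advice is already a minimal connecting advice, so the identity reduces \pWMEECCLAs{} to \pWMEECAs{} with both parameters unchanged.

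For the reverse direction, I would eliminate the cycle hints of the advice~$P$ iteratively, adapting the proof of \autoref{obs:redwmeeatocla}. For each cycle hint~$h \in P$, invoke \autoref{alg:findcycle2} to obtain a minimum-weight realizing cycle~$c_h$, append its arcs to the input multigraph, decrease~$\wf_{max}$ by~$\wf(c_h)$, and delete~$h$ from~$P$. The correctness of each realization step is guaranteed by \autoref{lem:findcycle}. The crucial invariant to verify is that the updated advice remains minimal connecting in the new component graph~$\comp{G'}$, in which the vertex set~$V$ of~$h$ has contracted to a single vertex. Connectivity is straightforward: every edge of~$h$ lies within~$V$, so such edges collapse to self-loops in~$\comp{G'}$ and can be skipped in any path that previously used them. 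Minimality follows by contrapositive: if some remaining hint~$h_i$ were redundant in~$\comp{G'}$, then every connected component of the advice graph of~$P \setminus \{h, h_i\}$ in~$\comp{G}$ would contain a vertex of~$V$; re-adding~$h$, which spans all of~$V$, would then merge these into a single component, contradicting the minimality of~$P$ in~$\comp{G}$.

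The main obstacle I expect is a subtle degenerate case: a path hint whose endpoints both lie in~$V$ becomes a closed walk in~$\comp{G'}$ and is no longer literally a simple path. I plan to resolve this by re-interpreting any such hint as a cycle hint and folding it into the same iteration; the resulting modified advice is still handled by \autoref{alg:findcycle2}. Since every minimal connecting advice contains at most~$c$ hints by \autoref{obs:hintsinadvice} and each realization strictly reduces the component count, the procedure terminates after~$\bigO(c)$ rounds of~$\bigO(n^3)$ time each. Parameter preservation is immediate: adding a closed cycle leaves all vertex balances unchanged, and the number of components can only decrease. The overall construction is therefore a polynomial-time polynomial-parameter many-one reduction for both parameterizations, establishing the claimed equivalence.
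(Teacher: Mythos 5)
Your proposal follows the same route as the paper: the paper's own proof of this observation just points to \autoref{obs:redwmeeatocla}, whose argument is exactly yours---one direction is the trivial inclusion, and for the other one realizes every cycle hint via \autoref{alg:findcycle2}, adds the resulting arcs to the input graph, and removes the hint, with correctness from \autoref{lem:findcycle} and parameter preservation because components can only merge and all balances are unchanged. You are in fact more diligent than the paper, which never discusses what happens to the \emph{remaining} hints once the components traversed by a realized cycle hint collapse into a single vertex of the component graph; your argument that the residual advice stays minimal connecting is sound. One caveat: your degenerate-case fix covers only a path hint whose two endpoints land in the contracted vertex. A remaining hint can also pass through two of the merged components \emph{internally} (for instance, a cycle hint on components $1,2,3$ together with an edge-disjoint path hint $4$--$1$--$5$--$2$--$6$ is a minimal connecting advice), and after contraction such a hint is a walk with a repeated inner vertex, hence neither a path nor a cycle; moreover, reinterpreting a path hint as a cycle hint changes which extensions realize it, since path hints must be realized by paths from $I_G^+$ to $I_G^-$ while cycle hints are realized by closed trails. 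These cases would need an explicit shortcutting or re-splitting step---but the paper's one-line proof does not address them either, so your write-up is, if anything, closer to complete than the original.
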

\begin{proof}
  See \autoref{obs:redwmeeatocla}.
\end{proof}

%We observe that \pWMEEAs{} seems to be a simpler problem, in that we could only reduce \pWMEEs{} to \pWMEEAs{} when using a powerful notion of reduction. Nevertheless, we prove that \pWMEEs{} and \pWMEEAs{} are equivalent under parameterized Turing reductions when parameterized by the number of components in the input graph. More specifically, we give a  and a parameterized Turing reduction from \pWMEEs{} to \pWMEEAs{}---which also is a parameterized reduction when both problems are parameterized by the sum of all positive balances of vertices.

\subsubsection{Reducing \pWMEEs{} to \pWMEECAs{}} To reduce \pWMEEs{} to \pWMEECAs{} the obvious idea of trying pieces of advice yields a Turing reduction. We make use of the observations in \autoref{sec:trails} to assume certain restrictions on the pieces of advice we have to guess.
\begin{lemma}\label{lem:adstructure}
Let~$G$ be a directed multigraph and let~$E$ be a minimum-weight \EE{} with respect to a weight function~$\wf: V \times V \rightarrow [0, \wf_{max}]\cup \{\infty\}$ for~$G$. There is a minimal connecting advice~$P = \{h_1, \ldots, h_i\}$ such that
\begin{lemenum}
\item $E$ heeds~$P$, and
\item the graph defined by the union of all trails~$h_1, \ldots, h_i$ without their initial vertices does not contain a cycle.
\end{lemenum}
\end{lemma}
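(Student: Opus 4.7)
The plan is to invoke \autoref{the:eestructure} directly, project the constructed trail decomposition onto the component graph~$\comp{G}$, and then prune the resulting collection of hints until it is minimal connecting. Concretely, I would first apply \autoref{the:eestructure} to the minimum-weight \EE~$E$ to obtain a set~$S = \{t_1, \ldots, t_k\}$ of edge-disjoint paths and cycles in~$(V, V\times V)$ whose arcs form~$E$ and whose images~$\meta{G}{t_i}$ enjoy the edge-disjointness property~\enuref{enu:ees7} and the cycle-freeness property~\enuref{enu:ees6}. I would then let~$T := \{i : \length(\meta{G}{t_i}) \geq 1\}$ and form the preliminary candidate~$P_0 := \{\meta{G}{t_i} : i \in T\}$, marking~$\meta{G}{t_i}$ as a path hint when~$t_i$ is a path and as a cycle hint when~$t_i$ is a cycle.

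Next I would verify that~$P_0$ is in fact an advice that~$E$ heeds. Edge-disjointness of the hints in~$P_0$ is exactly statement~\enuref{enu:ees7} of \autoref{the:eestructure}. That~$E$ heeds~$P_0$ follows because, for every~$i \in T$, the trail~$t_i$ itself is an arc-induced subgraph of~$E$ with~$\meta{G}{t_i}$ equal to the associated hint; by \autoref{obs:walkdichotomy} any such~$t_i$ that is a path runs from a vertex in~$I_G^+$ to one in~$I_G^-$, so it realizes its path hint in the sense of the definition, and cycle trails realize cycle hints trivially.

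The next step is to argue that~$P_0$ is connecting. Since~$G+E$ is Eulerian, it is connected, so for any two components~$C, C'$ of~$G$ there is a walk in~$G+E$ from~$C$ to~$C'$. Every arc of this walk that leaves a component of~$G$ belongs to~$E$ and therefore to some trail~$t_i$; trails with~$i \notin T$ lie entirely inside a single component of~$G$ and contribute no such crossing arcs. Hence the vertices of~$\comp{G}$ corresponding to~$C$ and~$C'$ are connected in the underlying graph of~$P_0$, so~$P_0$ is a connecting advice. I would then obtain a minimal connecting advice~$P \subseteq P_0$ by greedily deleting hints while the remainder stays connecting; since~$E$ heeds~$P_0$ it also heeds every subset, in particular~$P$.

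Finally, property~\enuref{enu:ees6} of \autoref{the:eestructure} says that the graph obtained by removing the initial vertex from each~$\meta{G}{t_i}$ and taking the union contains no cycle. The graph associated with~$P$ in the same way is a subgraph of this acyclic graph, and so is itself acyclic, establishing the second clause of the lemma. The only nontrivial point in the whole argument is the connectivity claim for~$P_0$, which is where \EE s crucially enter; the rest is a direct bookkeeping translation from \autoref{the:eestructure} plus monotone pruning. No new combinatorial work beyond \autoref{the:eestructure} appears to be needed.
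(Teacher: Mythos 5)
Your proposal is correct and follows essentially the same route as the paper: both apply \autoref{the:eestructure} to decompose~$E$ into paths and cycles, project them via~$\meta{G}{\cdot}$ onto~$\comp{G}$, and greedily extract a minimal connecting advice, with acyclicity inherited from statement~\enuref{enu:ees6}. The paper's own proof is only three sentences long; you have merely filled in the details it leaves implicit (edge-disjointness, that~$E$ heeds the hints, and the connectivity of the projected trail set).
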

\begin{proof}
  This is mainly based on \autoref{the:eestructure}. By the theorem, there is a decomposition of~$E$ into paths and cycles~$t_1, \ldots, t_k$ such that the graph defined by the union of all trails~$\meta{G}{t_1}, \ldots, \meta{G}{t_k}$ without their initial vertices does not contain a cycle. We greedily take paths~$\meta{G}{t_j}$ of length at least one into~$P$ that connect new vertices in~$\comp{G}$.
\end{proof}
Using this restriction, we can guess all forests of~$\comp{G}$ and try all possibilities to extend them to an advice:
\begin{lemma}\label{lem:redwmeetowmeea}
  \pWMEE{} is parameterized Turing-reducible to \pWMEECA{} when parameterized by the number~$c$ of components in the input graph or the combined parameter of~$c$ and the sum of all positive balances of vertices in the input graph. The reduction can be carried out in~$\bigO(16^{c\log(c)}(c + n + m))$~time.
\end{lemma}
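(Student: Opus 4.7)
The plan is to perform a Turing reduction by looping over candidate advices while keeping the instance $(G,\wf)$ fixed. I enumerate every minimal connecting advice $P$ for $G$ whose hints satisfy the acyclicity condition of \autoref{lem:adstructure}, invoke the \pWMEECAs{} oracle on $(G,\wf,P)$ for each, and output \textbf{yes} iff some oracle call answers \textbf{yes}.

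Correctness proceeds in two directions. If $(G,\wf)$ is a yes-instance of \pWMEEs{}, pick a minimum-weight \EE{} $E$; by \autoref{lem:adstructure}, there is a minimal connecting advice $P$ of the required shape that $E$ heeds, and this $P$ is among those I enumerate, so the corresponding oracle call returns \textbf{yes}. Conversely, any yes answer on $(G,\wf,P)$ yields an \EE{} of weight at most $\wf_{max}$ for $G$, certifying the \pWMEEs{} instance. Since $(G,\wf)$ is unchanged across queries, the values of $c$ and $b$ are preserved, meeting the parameterization condition of a parameterized Turing reduction.

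The technical heart is bounding the number of enumerated advices by $16^{c\log c}=c^{4c}$. By \autoref{obs:hintsinadvice}, $|P|\le c$, and every hint is a path or cycle in the component graph $\comp{G}$, which has $c$ vertices. The acyclicity condition means that if we delete the initial vertex of each hint, the union of the truncated hints is a forest $F$ on $V(\comp{G})$. I enumerate in three stages: first the labeled forest $F$, for which there are at most $c^{c-1}$ choices by a Cayley-style bound; second, a partition of the at most $c-1$ edges of $F$ among at most $c$ truncated hints together with a linear ordering within each hint, for a total of $c^{\bigO(c)}$ options; third, for each truncated hint a choice of initial vertex in $V(\comp{G})$, a flag declaring the hint a path or a cycle, and in the cycle case the second vertex adjacent to the initial vertex, costing at most $(2c^2)^c$. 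Multiplying the three contributions gives $c^{\bigO(c)}\le 16^{c\log c}$. Each candidate advice can be assembled and checked for minimal connectedness in $\bigO(c+n+m)$ time using standard traversal procedures, which yields the stated total running time.

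The main obstacle will be the combinatorial counting: a naive enumeration of sequences of up to $c$ paths or cycles in $\comp{G}$ already produces $c^{\Theta(c^2)}$ candidates, so it is essential to exploit \autoref{lem:adstructure} and reduce the search space via the underlying forest structure. A secondary subtlety is ensuring each enumerated structure is genuinely a \emph{minimal} connecting advice; this is handled in the $\bigO(c+n+m)$ check after construction by verifying that removing any hint disconnects $\comp{G}$.
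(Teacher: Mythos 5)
Your proposal is correct and follows essentially the same route as the paper's proof: enumerate all candidate advices whose truncated hints form a forest in~$\comp{G}$ (justified by \autoref{lem:adstructure}), bound their number by~$c^{\bigO(c)}$ via a Cayley-type count, query the oracle on each, and note that~$c$ and~$b$ are unchanged since the graph is not modified. The only difference is bookkeeping in the enumeration (you count labeled forests directly, while the paper partitions the vertices into cells and enumerates spanning trees per cell), which does not change the argument.
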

\begin{proof}
  Let the directed multigraph~$G=(V,A)$ and the weight function~$\wf: V \times V \rightarrow [0, \wf_{max}]\cup \{\infty\}$ constitute an instance of \pWMEEs{} and let~$c$ be the number of connected components in~$G$. We give an algorithm that decides \pWMEEs{} using an oracle for \pWMEECAs{} in time~$\bigO(2^{c^2\log(c)}(c^3 + n + m))$.

We simply generate all possible pieces of advice and apply the oracle to the resulting instances. If one of the oracle calls accepts the advice-instance, then, clearly, the original instance is a yes-instance. Also, for every yes-instance of \pWMEEs{}, there is an advice derivable from a solution to the instance because of \autoref{lem:adstructure}. Clearly, the number of components and the sum of all positive balances remain the same in the instances passed to the oracle.

Concerning the generation of the pieces of advice, by \autoref{lem:adstructure} we may assume that the hints without their initial vertices form a forest in~$\comp{G}$. Thus, we may simply enumerate all forests contained in~$\comp{G}$, partition their edges into at most~$c$ hints and try all possibilities to add the initial vertex back onto the hints.

To enumerate all forests, we first partition the vertices into at most~$c$ cells (there are~$c^c$ many such partitions), then enumerate all spanning trees in each cell (in each cell there are~$c^{c-2}$ spanning trees~\cite{Cay89}). This is possible in~$\bigO(c^c(c^{c-2} + c^2)) = \bigO(c^{2c-2})$~time~\cite{KR95}. 

We then partition the edges into at most~$c$ hints (there are~$c^c$ partitions), extend every hint by adding an initial vertex (in total, there are~$c^c$ possibilities) and check if this yields a valid advice---that is, whether the hints are paths or cycles and whether the advice is connecting. This procedure can be carried out in~$\bigO(c^{2c}c^3)$~time allowing~$\bigO(c^3)$ for the validity check.

For every guessed advice, we have to pass the instance to the oracle in linear time and, since~$n^n=2^{n\log(n)}$, we can derive the running time bound of~$\bigO(16^{c\log(c)}(c + n + m))$.
\end{proof}
\subsubsection{Reducing \pWMEEAs{} to \pWMEEs{}} Here, we will see that there is only a polynomial number of optimal ways to realize a hint in an advice. Each of these realizations will be modeled by a pair of imbalanced vertices. These pairs will reside in a new component and this component then can only be connected to the rest of the graph by taking arcs into an \EE{} that also connect each component corresponding to inner vertices of the hint. 
%Using a custom-tailored weight-function, we will ensure that there are only two possibilities for each new pair of imbalanced vertices: Being matched with vertices outside of their component or being matched with one another, corresponding to whether this realization is used or not. 

For convenience, we give a reduction from \pWMEECLAs{} (see \autoref{sec:remcycle}) instead of \pWMEEAs{}. This is without loss of generality because of \autoref{obs:redwmeeatocla}. We first give an intuitive description, followed by detailed construction and then a correctness proof. The construction uses the $\minpath$ function introduced on page~\pageref{def:minpath} in \autoref{sec:remcycle}.

\paragraph{Intuitive Description.} We look at the hints present in an \pWMEECLAs{} instance and eliminate them one at a time: For every hint~$p_i, 1 \leq i \leq d$, first, a connected component is introduced (vertex set~$W^i_1$, arc sets~$B_1^{i,\pm}, B_1^{i,=}$ in the construction below) and copied for every inner vertex of the hint (vertex sets~$W^i_l$, arc sets~$B_l^{i,\pm}, B_l^{i,=}$ for~$2\leq l \leq k-1$). Each copy is connected to the component corresponding to its vertex in the hint (by the arc-set~$B_l^{i, \gamma}$). The new component and its copies consist of interconnected imbalanced pairs of vertices. In the construction below, these are the vertices~$s^{i,\pm}_{l, u,v},t^{i,\pm}_{l, u,v}$ contained in the~$i$-th component. Each pair corresponds to a pair of vertices~$u,v$ forming the endpoints of a path that realizes the currently considered hint~$p_i$.

 A new weight function gives meaning to the construction and ensures that adding an arc~$(u, t^{i,+}_{1, u, v})$ or an arc~$(s^{i,-}_{1, u, v}, v)$ to an \EE{} has the same weight as a minimum-weight realization of the hint that goes from~$u$ to~$v$ or from~$v$ to~$u$, respectively. Notice that the superscript~``$+$''corresponds to paths in one direction and the superscript~``$-$'' to paths in the opposite direction. The weight function also ensures that if such an arc is present in an \EE{}, then the connected components traversed by the hint are connected to each other.

\begin{construction}\label{cons:redwmeeatowmee}
  Let the directed multigraph~$G_0=(V_0,A_0)$, the weight-function~$\wf_0: V_0 \times V_0 \rightarrow [0, \wf_{max}]\cup \{\infty\}$, and the advice~$P$ constitute an instance~$I_\pWMEECLAs{}$ of \pWMEECLAs. Let~$p_1, \ldots, p_d$ be the elements of~$P$ and let~$C_1, \ldots, C_c$ be the connected components of~$G$.

%Furthermore, for $1\leq i \leq d$ let~$\conn(p_i) := \{j, k\}$ where~$V_j, V_k$ are the components corresponding to the endpoints of the path~$p_i$ and let $\inn(p_i) := \{j_1, \ldots, j_e\}$ where~$V_{j_1}, \ldots, V_{j_e}$ are the components corresponding to the inner vertices of the path~$p_i$. 

  For every~$p_i, 1\leq i \leq d$, inductively define~$G_i$ and~$\wf_i$ as follows: Let~$C_{j_1}, \ldots, C_{j_{k}}$ be the components of~$G$ that correspond to the vertices traversed by~$p_i$, ordered according to an arbitrary path orientation of~$p_i$. For every~$1 \leq l \leq k - 1$ introduce the vertex set
\begin{align*}
  W_l^{i,+} & :=\{t^{i,+}_{l,u,v}, s^{i,+}_{l,u,v}:u \in C_{j_1} \cap I^+_G \wedge v \in C_{j_k} \cap I^-_G\} \text{, and}\\
  W_l^{i,-} & := \{s^{i,-}_{l,u,v}, t^{i,-}_{l,u,v}:u \in C_{j_1} \cap I^-_G \wedge v \in C_{j_k} \cap I^+_G\} \text{.}
\end{align*}
Set~$W^i_l := W_l^{i,+} \cup W_l^{i,-}$. Make all these vertices imbalanced via the arc set
\[B_l^{i,\pm} := \{(t^{i,+}_{l, u, v}, s^{i,+}_{l,u,v}), (t^{i,-}_{l,u,v}, s^{i,-}_{l, u, v})\} \text{.}\]
Let~$w^1_l, \ldots, w^h_l$ be the vertices in~$W^i_l$. For each~$1 \leq l \leq k-1$, interconnect these vertices via a cycle, using the following arc set
\[B_l^{i,=} := \{(w^g_{l}, w^{g+1}_{l}) : 1 \leq g < h\} \cup \{(w^h_l, w^1_l)\} \text{.}\]
Furthermore, for each~$2 \leq l \leq k -1$, choose~$c_{j_l} \in C_{j_l}$ and~$w_l \in W^i_l$ arbitrarily and add the following arc set connecting~$W_l^i$ to~$C_{j_l}$:
\[B_l^{i,\gamma} := \{(w_l, c_{j_l}), (c_{j_l}, w_l)\} \text{.}\]
Now set~$G_i = (V_i, A_i) := (V_{i - 1} \cup \bigcup_{l = 1}^{k-1} W^i_l, A_{i -1} \cup \bigcup_{l = 1}^{k-1} (B_l^{i,\pm} \cup B^{i,=}_l)\cup \bigcup_{l = 2}^{k-1} B_l^{i,\gamma}) )$ %, set~$w^i_{max} := w^{i - 1}_{max} + (k-1)|W_1|$ 
and create a new weight function as follows:
\[\wf_i(u, v) := 
\begin{cases}
  \wf_{i - 1}(u, v), & u, v \in V_{i - 1} \\
  \wf_0(\minpath(G_0, \wf_0, p_i, u, x)), & u \in C_{j_1} \cap I^+_G, v = t^{i,+}_{1, u, x} \\
  \wf_0(\minpath(G_0, \wf_0, p_i, x, v)), & u = s^{i,-}_{1, x, v} , v \in C_{j_1} \cap I^-_G \\
  0, & u = s^{i,+}_{k-1, x, v}, v \in C_{j_k} \cap I^-_G \\
  0, & u \in C_{j_k} \cap I^+_G , v = t^{i,-}_{k-1, u, x}  \\
  0, & u = s^{i,\pm}_{l, x, y}, v = t^{i,\pm}_{l, x, y} \\
%  0, & u = s^{i,-}_{l, x, y}, v = t^{i,-}_{l, x, y} \\
  0, & u = s^{i,\pm}_{l, x, y}, v = t^{i,\pm}_{l + 1, x, y} \\
%  0, & u = s^{i,-}_{l, x, y}, v = t^{i,-}_{l + 1, x, y} \\
  \infty, & \text{otherwise}\\
\end{cases}
\]

The graph~$G_d$, the weight function~$\wf_d$ and the number~$\wf_{max}$ constitute an instance~$I_\pWMEEs{}$ of \pWMEEs{}.
\end{construction}
\begin{figure}
  \begin{center}
    \subfloat[\pWMEECLAs{} instance]{
      \includegraphics{wmeeslasha.1}
      \label{fig:redwmeeatowmeea}
    } \\
    \subfloat[\pWMEEs{} instance]{
      \includegraphics{wmeeslasha.2}
      \label{fig:redwmeeatowmeeb}}
    \caption{Example for \autoref{cons:redwmeeatowmee} explained in \autoref{ex:redwmeeatowmee}.}
    \label{fig:redwmeeatowmee}
  \end{center}
\end{figure}
\begin{example}\label{ex:redwmeeatowmee}
  Have a look at \autoref{fig:redwmeeatowmee}. At the top, an instance~$I_\pWMEECLAs{}$ of \pWMEECLAs{} is shown. It comprises three connected components and an advice consisting of a single hint~$p_1$ represented by the dashed edges. Below, there is an instance~$I_\pWMEEs{}$ of \pWMEEs{} produced by \autoref{cons:redwmeeatowmee}. The dotted arcs represent the only arcs incident to the new vertices with weight potentially lower than~$\infty$. 

In the new instance the hint~$p_1$ is removed and a new component~$W^1_1$ is introduced. A copy~$W^1_2$ of the vertex set~$W^1_1$ is introduced and connected to the component that corresponds to the inner vertex of~$p_1$. The induced subgraphs of~$W^1_1, W^1_2$ consist of pairs~$t^{i,+}_{l,u,v}, s^{i,+}_{l,u,v}$ of vertices that are made imbalanced via a direct arc and that are connected via a directed cycle. Each of the vertices~$s^{i,+}_{l,u,v}$---the ``sources''---has balance~$1$ and can either be connected to a vertex~$t^{i,+}_{l,u,v}$---the ``targets''---inside the same component or to another component. Analogously, targets can only accept at most one arc from either inside the same component or from outside. 

Consider a solution~$E$ to~$I_\pWMEECLAs{}$ that also contains the arcs~$(1, 3), (3,5)$ as realization of~$p_1$. We may remove these arcs and add the arcs~\[(1, t^{1, +}_{1,1,5}), (s^{1,+}_{1, 1, 5}, t^{1, +}_{2, 1, 5}), (s^{1, +}_{2, 1, 5}, 5)\] to~$E$, and add arcs from all remaining sources to their corresponding targets that reside in the same component to obtain a solution to~$I_\pWMEEs{}$. Also, every solution to~$I_\pWMEEs{}$ has to connect the connected component~$W^1_1$ to the rest of the graph. This is only possible by adding an arc from a source to outside its component, for example at~$s^{1,-}_{1, 6, 2}$. Then the vertex~$t^{1, -}_{1, 6, 2}$ has to fetch an arc from~$s^{1, -}_{2, 6, 2}$ in the \EE{} in order to become balanced. This means that then also the arc~$(6,t^{1, -}_{2, 6, 2})$ has to be included in an \EE{} for~$I_\pWMEECLAs{}$ and thus we can include the path from vertex~$6$ to vertex~$2$ that realizes~$p_1$ computed by the~$\minpath$ function.
\end{example}
\paragraph{Correctness.} We first prove that \autoref{cons:redwmeeatowmee} is polynomial-time computable and that the parameter in the reduced instance is polynomial in the original parameter. We then proceed to show the soundness of the construction.
\begin{observation}\label{obs:redwmeeatowmeepoly}
  \autoref{cons:redwmeeatowmee} is polynomial-time computable. There are~$\bigO(c^2)$ components in~$G_d$.
\end{observation}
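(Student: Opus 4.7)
The plan is to prove the two claims separately, relying mainly on a bound on the number~$d = |P|$ of hints.

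First I would bound~$d$. Since the hints in~$P$ are pairwise edge-disjoint subgraphs of~$\comp{G}$ (which has at most~$\binom{c}{2}$ edges) and each hint has length at least one, we obtain~$d = O(c^2)$ immediately. This makes the outer loop of the construction short.

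For polynomial-time computability, I would analyze the work of processing a single hint~$p_i$. Introducing the vertex sets~$W_l^i$ and the arc sets~$B_l^{i,\pm}$,~$B_l^{i,=}$,~$B_l^{i,\gamma}$ is straightforward and costs~$O(kn^2)$ where~$k \leq c$ is the hint length. The nontrivial step is building the new weight function: only~$O(n^2)$ of its entries receive a value different from~$\infty$ or~$0$ (one for each endpoint-pair indexing a vertex in~$W_1^i$ or~$W_{k-1}^i$), and each requires a single call to~$\minpath$, which by \autoref{lem:minpath} runs in~$O(n^2)$ time. Multiplying by~$O(c^2)$ hints yields a polynomial bound on the total running time.

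For the component count, I would argue that each hint~$p_i$ contributes exactly one new connected component. Inspecting the arc sets defined in \autoref{cons:redwmeeatowmee}, the only arcs of~$A_d$ that leave~$W_l^i$ are those in~$B_l^{i,\gamma}$; these are defined only for~$2 \leq l \leq k - 1$, and they merge~$W_l^i$ into the existing component containing~$C_{j_l}$. Hence the only new component created by processing~$p_i$ comes from~$W_1^i$, which is internally connected via~$B_1^{i,=}$ but has no arc of~$A_d$ leading outside. Since the~$W_l^i$ across different hints are pairwise vertex-disjoint and no arc of~$A_d$ interconnects them, each hint contributes exactly one new component, yielding~$c + d = O(c^2)$ components in~$G_d$.

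The main subtlety I anticipate is distinguishing the graph-theoretic arcs in~$A_d$ from the merely \emph{allowed} (finitely weighted) arcs prescribed by~$\wf_d$: even though~$\wf_d$ assigns a finite weight to arcs such as~$(u, t^{i,+}_{1,u,x})$ with~$u \in C_{j_1}$, such an arc is not placed in~$A_d$ and therefore does not connect~$W_1^i$ to~$C_{j_1}$ in~$G_d$ itself. Making this distinction precise is the only delicate point; everything else is routine bookkeeping.
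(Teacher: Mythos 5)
Your proof is correct and follows essentially the same route as the paper: bound the number of hints by $\bigO(c^2)$ via edge-disjointness in~$\comp{G}$, observe that each hint adds polynomially many vertices, arcs, and weight-function entries (computable via polynomially many $\minpath$ calls), and note that only~$W^i_1$ forms a new component since the sets~$W^i_l$ for~$l \geq 2$ are attached to existing components by~$B_l^{i,\gamma}$. Your explicit remark distinguishing arcs of~$A_d$ from merely allowed arcs under~$\wf_d$ is a welcome clarification of a point the paper leaves implicit.
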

\begin{proof}
  We first look at the running time of the construction: The size of~$W^i_l$ and the arc sets~$B^{i,\pm}_l, B^{i,=}_l, B^{i,\gamma}_l$ is at most~$\bigO(n^2)$. It holds that~$l \leq c$ and there are at most~$\bigO(c^2)$ hints in an advice (recall that hints in an advice are edge-disjoint). Hence, at most~$\bigO(c^3n^2)$ vertices and edges are added. This can be done in time linear in the number of  added vertices and edges. Thus, the new weight-function can be computed in~$\bigO(c^6n^4)$~time and this yields a polynomial-time algorithm for \autoref{cons:redwmeeatowmee}.

  Since there are at most~$\bigO(c^2)$ hints in an advice and for every hint, there is exactly one new component (the component with vertex-set~$W^i_1$) in the reduced instance, the new parameter is in~$\bigO(c^2)$.
\end{proof}
\begin{lemma}
  \autoref{cons:redwmeeatowmee} is a polynomial-parameter polynomial-time reduction.
\end{lemma}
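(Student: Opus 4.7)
The statement has three ingredients: polynomial-time computability of the construction, a polynomial bound on the new parameter, and correctness. The first two are already established by \autoref{obs:redwmeeatowmeepoly}, so the only remaining task is to show that~$I_{\pWMEECLAs{}}$ is a yes-instance if and only if~$I_{\pWMEEs{}}$ is a yes-instance.

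For the forward direction, I would start with a minimum-weight \EE{}~$E$ for~$G_0$ that heeds~$P$. By \autoref{obs:minpathinee} we may assume that for each hint~$p_i \in P$ the realization in~$E$ is~$\minpath(G_0, \wf_0, p_i, u_i, v_i)$ for appropriate endpoints~$u_i, v_i$. Build~$E'$ for~$G_d$ by removing the minpath arcs from~$E$ and, for every hint~$p_i$ realized forward from~$u_i$ to~$v_i$, adding the ``chain''
\[\{(u_i, t^{i,+}_{1, u_i, v_i})\} \cup \{(s^{i,+}_{l, u_i, v_i}, t^{i,+}_{l+1, u_i, v_i}) : 1 \leq l \leq k-2\} \cup \{(s^{i,+}_{k-1, u_i, v_i}, v_i)\}\text{,}\]
together with the within-copy arc~$(s^{i,+}_{l, x, y}, t^{i,+}_{l, x, y})$ for every other pair~$(x, y)$ and every~$l$, and the analogous within-copy arcs for every~$s^{i,-}_{l, x, y}$--$t^{i,-}_{l, x, y}$ pair; reverse realizations are handled symmetrically. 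The definition of~$\wf_d$ makes the weight of the chain equal to~$\wf_0(\minpath(G_0, \wf_0, p_i, u_i, v_i))$, so~$\wf_d(E') = \wf_0(E) \leq \wf_{max}$. All new sources and targets become balanced, original vertices retain their balance because the chain contributes exactly one outgoing arc at~$u_i$ and one incoming arc at~$v_i$, and connectivity holds because each chain links the copies~$W^i_1, \ldots, W^i_{k-1}$ to~$C_{j_1}, \ldots, C_{j_k}$ exactly as~$p_i$ did in the original instance.

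For the backward direction, I would argue that any \EE{}~$E'$ for~$G_d$ of weight at most~$\wf_{max}$ is forced, for each~$i$, into a chain structure on the auxiliary vertices~$\bigcup_l W^i_l$. The key observation is that every~$s^{i,\pm}_{l, x, y}$ has balance~$+1$ in~$G_d$ and its only finite-weight outgoing arcs are the within-copy arc to~$t^{i,\pm}_{l, x, y}$, the advance arc to~$t^{i,\pm}_{l+1, x, y}$, or, if~$l = k-1$, the exit arc to the appropriate end component; symmetric restrictions hold for the~$t^{i,\pm}_{l, x, y}$. A simple level-by-level balance argument then shows that once any advance or external arc is used for a triple~$(x, y, \pm)$, the pair at every subsequent level~$l$ must advance to the next level, producing a complete chain from a vertex of~$C_{j_1}$ (or~$C_{j_k}$) to a vertex of~$C_{j_k}$ (or~$C_{j_1}$). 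Since~$W^i_1$ is otherwise isolated in~$G_d$ (as~$B^{i,\gamma}_l$ is only added for~$l \geq 2$), connectivity of~$G_d + E'$ forces at least one such chain per~$i$. Extracting the endpoints of these chains and substituting each by the corresponding~$\minpath$ arcs in~$G_0$ yields an \EE{}~$E$ for~$G_0$ that heeds~$P$; weights match because each chain costs exactly~$\wf_0(\minpath(G_0, \wf_0, p_i, u_i, v_i))$, and the balance of original vertices is preserved because each chain has precisely one endpoint in~$C_{j_1}$ and one in~$C_{j_k}$, mirroring the endpoint contributions of the substituted minpath.

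The main obstacle I expect is making the backward direction rigorous: one has to rule out configurations in which multiple chains for the same hint overlap, chains ``skip'' a level, or sources and targets from different chains become entangled. This will be handled by the fact that each~$s^{i,\pm}_{l, x, y}$ and each~$t^{i,\pm}_{l, x, y}$ needs exactly one extra arc in the \EE{} and that its finite-weight choices are extremely restricted, so balance can be propagated level by level along a chain, and the chain structure is uniquely determined by its entry point.
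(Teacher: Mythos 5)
Your proposal is correct and follows essentially the same route as the paper: both directions rely on \autoref{obs:minpathinee} to normalize hint realizations to $\minpath$ arcs, on the explicit chain of arcs through the copies~$W^i_1, \ldots, W^i_{k-1}$ in the forward direction, and on the observation that every~$s^{i,\pm}_{l,x,y}$ and~$t^{i,\pm}_{l,x,y}$ has exactly one incident arc in any \EE{} (forcing the chain by induction from the arc that connects~$W^i_1$) in the backward direction. The additional care you announce for ruling out overlapping or level-skipping chains is exactly the degree-one/balance argument the paper uses, just stated more explicitly.
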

\begin{proof}
  By \autoref{obs:redwmeeatowmeepoly} it only remains to show that \autoref{cons:redwmeeatowmee} is correct. For this, first consider an \EE{}~$E$ that is a solution to~$I_\pWMEECLAs{}$. For every hint~$p_i$ the set~$E$ contains a set of paths that realize that hint. Without loss of generality we may assume that among those paths is~$s = \minpath(G_0, \wf_0, p_i, u, v)$ for suitable vertices~$u, v$ in the components that~$p_i$ starts and ends, respectively (see \autoref{obs:minpathinee}). Thus, in order to connect the component~$W^i_l$ to the rest of the graph, we may remove~$s$ from~$E$ and add the arcs
\[(u, t^{i,+}_{1,u,v}), (s^{i,+}_{1,u,v}, t^{i,+}_{2,u,v}), \ldots, (s^{i,+}_{k-2,u,v}, t^{i,+}_{k-1,u,v}),(s^{i,+}_{k-1,u,v}, v) \text{.}\]
This does not increase the weight of~$E$. To balance all vertices~$t^{i,+}_{l, u', v'}, s^{i,+}_{l, u', v'}$ with~$1 \leq l \leq k - 1, u' \neq u, v' \neq v$, we may add the corresponding arcs $(s^{i,+}_{l,u', v'}, t^{i,-}_{l, u', v'})$ and analogously for vertices in~$W^{i,-}_l$, again without increasing the weight. Thus, doing this for every hint yields an \EE{} for~$I_\pWMEEs{}$ of the same weight.

Now consider an \EE{}~$E$ that is a solution to~$I_\pWMEEs{}$. The set~$E$ has to connect the component~$W^i_1$ to the rest of the graph for every hint~$p_i$. Thus, without limitation of generality, there is an arc~$(u, t^{i,+}_{1, u, v})$ for some vertices~$u, v$ in the components that correspond to the endpoints of~$p_i$. For every vertex~$t^{j,\pm}_{l, x, y}$ all arcs with weight lower than~$\infty$ end in it, and since it has balance~$-1$, there is exactly one arc incident to it in~$E$. The same is true for vertices~$s^{j,\pm}_{l, x, y}$ since all arcs with weight lower than~$\infty$ start at them and they have balance~$1$. Hence the arc~$(s^{i, +}_{1, u, v}, t^{i, +}_{2, u, v})$ is present in~$E$, by induction also~$(s^{i, +}_{l, u, v}, t^{i, +}_{l+1, u, v}) \in E, 1 \leq l \leq k-2$, and finally also~$(s^{i, +}_{k-1,u,v}, v) \in E$. Thus we can remove these arcs from~$E$, add~$\minpath(G_0, \wf_0, p_i, u, v)$, and repeat this for all hints to obtain an \EE{} for~$G_0$ that heeds the advice~$P$ and has weight at most~$w_{max}$.
%TODO: produce maximality of the inserted minpath by shortcutting other paths
\end{proof}
\begin{theorem}\label{the:redwmeeatowmee}
  \pWMEEA{} is polynomial-time polynomial-parameter many-one reducible to \pWMEE{} when parameterized by the number of components in the input graph.
\end{theorem}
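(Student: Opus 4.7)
The plan is to chain two reductions that have already been established in the excerpt. First, I would invoke \autoref{obs:redwmeeatocla} to reduce any \pWMEEA{} instance $I$ to an equivalent \pWMEECLA{} instance $I'$ in polynomial time, with the number of connected components unchanged. This step disposes of all cycle-hints in the advice by applying \autoref{alg:findcycle2} to each one and inserting the resulting realization into the input graph; correctness was established in \autoref{lem:findcycle}, and \autoref{obs:redwmeeatocla} states that the parameter ``number of connected components'' is preserved (the new components introduced by the algorithm are absorbed into existing ones via the cycles added to the graph).

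Next, I would apply \autoref{cons:redwmeeatowmee} to $I'$, producing a \pWMEE{} instance $I''$. The preceding lemma together with \autoref{obs:redwmeeatowmeepoly} already do the real work: the construction is polynomial-time computable, it is correct in both directions, and the resulting number of components is in $\bigO(c^2)$, where $c$ is the number of components of the \pWMEECLA{} instance. Since step one preserved $c$, the number of components in $I''$ is polynomial in the number of components in the original \pWMEEA{} instance $I$.

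To verify that the overall map satisfies the three conditions of a polynomial-parameter polynomial-time many-one reduction, I would put the pieces together explicitly: computability in polynomial time follows from polynomial-time computability of both steps, the biconditional $I \in \pWMEEAs{} \Leftrightarrow I'' \in \pWMEEs{}$ follows from the correctness of the two intermediate reductions, and the parameter bound follows from composing the two polynomial parameter bounds ($c \mapsto c \mapsto \bigO(c^2)$).

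I do not anticipate a genuine obstacle here; the theorem is essentially a corollary that packages \autoref{obs:redwmeeatocla} and \autoref{cons:redwmeeatowmee} (plus its correctness lemma) into a single statement. The only thing worth being careful about is to confirm that \autoref{obs:redwmeeatocla}'s reduction truly does not inflate the component count (it replaces cycle-hints by realizing cycles added to $G$, and these cycles touch only components already connected by the hint, so no new components are introduced), so the composition really is polynomial-parameter with respect to $c$ alone.
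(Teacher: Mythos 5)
Your proposal is correct and follows exactly the paper's route: the paper explicitly reduces from \pWMEECLAs{} instead of \pWMEEAs{} ``without loss of generality because of \autoref{obs:redwmeeatocla}'' and then lets \autoref{cons:redwmeeatowmee} together with \autoref{obs:redwmeeatowmeepoly} and its correctness lemma deliver the theorem, just as you describe. Your additional check that realizing cycle hints only merges (never creates) components is the right point to verify and matches the paper's remark that the component count does not increase under that step.
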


\subsection{An Efficient Multivariate Algorithm for \pWMEECAs{}}\label{sec:multivariatealg}
%TODO for WMEEA? acronyim im titel?
In this section we consider~\pWMEECAs{} parameterized by both the number of components~$c$ in the input graph and the sum~$b$ of all positive balances of vertices in the input graph. A simple idea is used to obtain an efficient algorithm that solves \pWMEECAs{}. We prove the following theorem:

\begin{theorem}\label{the:kcalgwmeea}
  \pWMEECA{} is solvable in~$\bigO(4^{c\log(b)}n^2(b^2+ n\log(n)) + n^2m)$~time, where~$c$ is the number of components in the input graph and where~$b$ is the sum of all positive balances of vertices in the input graph.
\end{theorem}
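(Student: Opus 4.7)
My plan is to combine the structure theorem for optimal extensions (\autoref{the:eestructure}) with the cycle-removal equivalence (\autoref{obs:redwmeecatoccla}) to reduce the problem to a bounded enumeration followed by shortest-path and bipartite matching computations. First I apply \autoref{trans:sb} and \autoref{trans:spp}---costing $\bigO(n^2 m)$ in total---so that every vertex has balance in $\{-1,0,1\}$ (hence $|I_G^+| = |I_G^-| = b$) and the weight function records shortest-path costs. By \autoref{obs:redwmeecatoccla} I may further assume that $P$ contains only path hints, and \autoref{obs:hintsinadvice} gives $|P| \leq c$.

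For the main step I enumerate endpoint choices. For each path hint $p_i \in P$, I guess a pair of vertices $(u_i,v_i)$, where $u_i \in I_G^+$ lies in the component corresponding to the initial vertex of $p_i$ and $v_i \in I_G^-$ lies in the component corresponding to the terminal vertex of $p_i$; these will serve as the initial and terminal vertices of the path realizing $p_i$. Each hint offers at most $b^2$ choices, and there are at most $c$ hints, giving $b^{2c}=4^{c\log b}$ enumerations. I discard any enumeration in which two hints demand the same starting or terminating vertex, since each $I_G^+$-vertex has balance exactly $+1$ and can thus be the start of at most one path in any path/cycle decomposition of an \EE{} (and analogously for $I_G^-$). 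For each surviving enumeration I call \autoref{alg:minpath} to obtain a minimum-weight realization of every hint in $\bigO(n^2)$ time; \autoref{obs:minpathinee} guarantees that restricting to $\minpath$-realizations loses no optimality.

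After committing to these realizations, the augmented graph is connected because the advice is connecting, and the only remaining imbalanced vertices form the two sets $I_G^+ \setminus \{u_i\}_i$ and $I_G^- \setminus \{v_i\}_i$ (intermediate vertices of a realization keep their balance). By \autoref{the:eestructure} and \autoref{obs:simpletrails}, combined with the fact that after \autoref{trans:spp} every arc carries its shortest-path weight (so $\wf$ satisfies the triangle inequality), an optimal completion adds exactly one direct arc from each remaining $I_G^+$-vertex to some remaining $I_G^-$-vertex: extra cycles add weight without restoring balance, and no multi-arc path can beat its shortcut arc. The best completion is therefore a minimum-weight perfect bipartite matching with weights $\wf(\cdot,\cdot)$, which I solve in $\bigO(b^3)$ time via the Hungarian algorithm. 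Returning the minimum total weight over all enumerations and comparing to $\wf_{max}$ settles the instance.

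For the running-time bound, using $c \leq n$ and $b \leq n$ the main loop costs $\bigO(b^{2c}(cn^2 + b^3)) = \bigO(4^{c\log b} n^2(b^2 + n\log n))$, and the preprocessing contributes $\bigO(n^2 m)$, matching the claim. The main obstacle I anticipate is the correctness claim that no extra trail beyond the $\minpath$-realizations is needed: one must combine SPP (which lets us shortcut any indirect $u \rightsquigarrow v$ subpath to the arc $(u,v)$ of no larger weight) with the connecting-advice hypothesis (so no balancing cycle is needed for connectivity) to reduce the completion to a pure bipartite matching. A secondary technicality is bookkeeping the consistency of endpoint choices across hints, but this fits comfortably inside the claimed time budget.
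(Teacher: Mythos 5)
Your proof follows essentially the same route as the paper's: reduce to cycle-free connecting advice via \autoref{obs:redwmeecatoccla}, enumerate the at most $b^2$ endpoint pairs for each of the at most $c$ hints, realize each hint with $\minpath$, and solve the resulting connected instance; the only difference is that the paper invokes the $\bigO(n^3\log(n))$ connected-case algorithm of \citet{DMNW10} as a black box where you re-derive it as a min-cost perfect bipartite matching on the remaining imbalanced vertices, which is indeed valid after Transformations~\ref{trans:sb} and~\ref{trans:spp}. One small omission: since a hint is an undirected path in $\comp{G}$, its realization may run in either direction, so your endpoint enumeration must also include pairs with $u_i\in I_G^+$ in the \emph{terminal} component and $v_i\in I_G^-$ in the initial one (the paper's \autoref{alg:kcalgwmeecla} writes ``or vice versa''); this still gives at most $b^2$ choices per hint and does not affect the claimed bound.
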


In a simple corollary, we also obtain an efficient algorithm for \pWMEEs{}, proving that this problem is fixed-parameter tractable with respect to the combined parameter~$(b, c)$. We deem parameterizing with both~$b$ and~$c$ to be a good choice: %TODO klein -> in verbindung mit einleitung
The reduction we use to show \NPhs{} for \pWMEEs{} in \autoref{sec:relhcwmee} creates instances where~$b = 0$ implying that parameterizing only with~$b$ does not suffice to obtain efficient algorithms. Also, the question whether \pWMEEs{} is fixed-parameter tractable with respect to parameter~$c$ is a long-standing open question dating back to \citet{Fre77}. We reflect on the parameter~$c$ in \autoref{sec:matching} and it seems hard to answer this question. 
 
%\subsubsection{An Efficient Multivariate Algorithm}

To obtain an algorithm for \pWMEECAs{}, we use the fact that minimum-weight \EE s for connected multigraphs can be found in~$\bigO(n^3\log(n))$~time~\cite{DMNW10}. To derive a connected instance of \pWMEEs{} from an instance of \pWMEECAs{}, we realize all hints in the given minimal connecting advice. The parameter~$b$ helps to bound the number of possible ways we have to try to realize each hint. An algorithm that achieves the running time given in \autoref{the:kcalgwmeea} can simply try each combination of optimal realizations of each hint in the given advice and then solve the resulting instance comprising a connected multigraph via the polynomial-time algorithm given by \citet{DMNW10}. We denote a call to this algorithm by~$\solveconnected(G,\wf)$, where~$G$ is a connected multigraph and~$\wf: V \times V \rightarrow  [0, \wf_{max}]\cup \{\infty\}$ is a weight function. 

\paragraph{Solution Algorithm.} For convenience, we give an algorithm that solves \pWMEECCLAs{} which we then generalize to an algorithm for \pWMEECAs{} using \autoref{obs:redwmeecatoccla}. %Let the directed multigraph~$G$, the weight function~$\wf: V \times V \rightarrow [0, \wf_{max}]\cup \{\infty\}$ and the advice~$P$ constitute an instance of \pWMEECCLAs{}. 
In \autoref{alg:kcalgwmeecla} a description of the solution algorithm is shown in pseudo code. It is invoked with an instance of \pWMEECCLAs{} and an empty set~$E$. The set~$E$ is then successively extended to a minimum-weight \EE . This is done by iterating over every local-optimal realization of each hint in lines~9 and~10 and recursing for every of them. When each hint is realized, that is~$P = \emptyset$ in line~1, the resulting instance is solved in polynomial time.

% Start with an empty arc-set~$E$ and invoke a search tree algorithm by branching on the hints in the advice. For each hint~$h$, branch into adding~$p:=\minpath(G, \wf, h, u, v)$ to~$E$ for any combination of vertices~$u, v$ contained in the component corresponding to the initial and terminal vertex of~$h$, respectively, where~$\balance(u) > 0$ and~$\balance(v) < 0$ or vice versa. Update the graph~$G$ by adding the arcs of~$p$, update~$P$ by removing the hint~$h$ and recurse. After branching on all hints, invoke~$\solveconnected(G, \wf)$ and return the union of its return value and~$E$. See also \autoref{alg:kcalgwmeecla}.
\begin{algorithm}
  \LinesNumbered
  
  \KwIn{A directed multigraph~$G=(V, A)$, a weight function~$\wf: V \times V \rightarrow [0, \wf_{max}]\cup \{\infty\}$, a cycle-less advice~$P$ and an arc-set~$E$.}
  \KwOut{A minimum-weight \EE{} for~$G$ that heeds the advice~$P$.}

  \SetKwFunction{solve}{SolveEE$\emptyset$CA}
  \SetKwFunction{detcycle}{DetermineCycle}
  \SetKwData{minee}{MinEE}
  \SetKwData{found}{found\_solution}
  \SetKwData{act}{ActEE}

  \BlankLine
  \eIf{$P = \emptyset$}{\Return{$E \cup \solveconnected(G, \wf)$}\;}{
    $h \leftarrow$ a hint in~$P$\;
    $v_A \leftarrow$ initial vertex of~$h$\;
    $C_A \leftarrow$ connected component of~$G$ corresponding to~$v_A$\;
    $v_\Omega \leftarrow$ terminal vertex of~$h$\;
    $C_\Omega \leftarrow$ connected component of~$G$ corresponding to~$v_\Omega$\;
    \minee $\leftarrow \emptyset$\;
    \found $\leftarrow \false$\;
    \For{$(u, v) \in I^+_G \times I^-_G$ such that $u \in C_A \wedge v \in C_\Omega$ or vice versa}{
      $p \leftarrow \minpath(G, \wf, h, u, v)$\;
      \act $\leftarrow$ \solve{$G + p, \wf, P \setminus \{h\}, E \cup p$}\;
      \If{$\wf(\text{\minee}) > \wf(\text{\act}) \vee \text{\found} = \false$}{
        \found $\leftarrow \true$\;
        \minee $\leftarrow$ \act\;
      }
    }
    \Return{\minee\;}
  }
  
  \SetAlgoRefName{Solve\-EE$\emptyset$CA}%Solve\-WMEECLA}
  \caption{Solving \pWMEECCLAs{}.}
  \label{alg:kcalgwmeecla}
\end{algorithm}
\begin{proof}[Proof of \autoref{the:kcalgwmeea}]
  The theorem is mainly based on \autoref{alg:kcalgwmeecla}: Given an instance of \pWMEECAs{} we compute an equivalent instance of \pWMEECCLAs{} using the reduction in \autoref{obs:redwmeeatocla} that uses \autoref{alg:findcycle2}. Then, we apply \autoref{alg:kcalgwmeecla} solving the instance of \pWMEECLAs{}. We first look at the correctness of \autoref{alg:kcalgwmeecla} and then analyze the overall running time.

  Consider the return value~$E'$ of \autoref{alg:kcalgwmeecla} when called with an initially empty arc set~$E$ and an instance of \pWMEECLAs{} consisting of the multigraph~$G$, the weight function~$\wf$, and minimal connecting advice~$P$. For every hint in~$P$ there is realization in~$E'$, that is,~$E'$ connects all connected components of~$G$. Because of the call to~$\solveconnected$ the set~$E'$ also makes every vertex in~$G$ balanced. Hence~$E'$ is an \EE{} for~$G$ that heeds~$P$. Also,~$E'$ is of minimum weight among all \EE s for~$G$ that heed the advice~$P$, because of the weight-minimality of~$\solveconnected$ and because, by \autoref{obs:minpathinee}, we may assume that in a minimum-weight \EE{} all path hints~$h$ are realized by~$\minpath(G, \wf, h, u, v)$ for appropriate vertices~$u, v$ in the components of~$G$ corresponding to the initial and terminal vertices of~$h$.

Concerning the running time of the overall procedure, we have to preprocess the input instance using \autoref{trans:sb} and \autoref{trans:spp} (we have made this preprocessing implicit at the start of the section). By Lemmas~\ref{lem:smallbalancepp} and~\ref{lem:spp} this takes~$\bigO(n^3 + n^2m)$~time. Next, the given instance of \pWMEECAs{} has to be converted to an instance of \pWMEECCLAs{}. By \autoref{lem:findcycle} this is possible in~$\bigO(|P|n^3)$~time. Finally, we apply \autoref{alg:kcalgwmeecla}: Obviously its recursion depth is at most~$|P|$. Because of~$b \geq |I^+_G| = |I^-_G|$, every call of \autoref{alg:kcalgwmeecla} yields at most~$b^2$ recursive calls. This means the sum of all calls is~$b^{2|P|}$. The running-time of one call is dominated by either the computation of~$b^2$ $\minpath$-instances which takes~$\bigO(b^2n^2)$~time (\autoref{lem:minpath}) or the computation of $\solveconnected$ which takes~$\bigO(n^3\log(n))$~time~\cite{DMNW10}. Thus, \autoref{alg:kcalgwmeecla} can be computed in \[\bigO(b^{2|P|}(b^2n^2 + n^3\log(n))) = \bigO(2^{2|P|\log(b)}n^2(b^2 + n\log(n))) \text{ time.}\]
Because of~\autoref{obs:hintsinadvice},~$|P| \leq c$ and thus we can derive that the running-time bound of the overall procedure is in~
\begin{align*}
          & \bigO(2^{2c\log(b)}n^2(b^2+ n\log(n)) + cn^3 + n^2m) \\
\subseteq {} & \bigO(4^{c\log(b)}n^2(b^2+ n\log(n)) + n^2m)\text{.} \tag*{\qedhere}
\end{align*}
\end{proof}

\begin{corollary}\label{cor:wmeefptbc}
  \pWMEE{} is solvable in~\[\bigO(4^{c\log(bc^2)}n^2(b^2 + n\log(n)) + n^2m) \text{ time.}\]
\end{corollary}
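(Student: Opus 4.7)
The plan is to obtain this corollary by composing the Turing reduction of \autoref{lem:redwmeetowmeea} (from \pWMEE{} to \pWMEECA{}) with the algorithm of \autoref{the:kcalgwmeea} (solving \pWMEECA{}), while being careful to amortize the shared preprocessing cost.

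First I would apply the two preprocessing transformations (\autoref{trans:sb} and \autoref{trans:spp}) \emph{once} to the input instance in $\bigO(n^3 + n^2m)$ time. By \autoref{obs:sbsppinvariants} this preserves both $c$ and $b$, and by \autoref{the:eestructure} a minimum-weight \EE{} of the preprocessed instance yields one of the original instance.

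Next I would enumerate pieces of minimal connecting advice exactly as in the proof of \autoref{lem:redwmeetowmeea}: partition the $c$ vertices of $\comp{G}$ into cells, pick a spanning tree in each cell, partition the edges of the resulting forest into hints, and attach an initial vertex to each hint. By \autoref{lem:adstructure} some such advice~$P^\star$ is heeded by an optimal \EE{}, so it suffices to try all candidates. The enumeration produces $\bigO(16^{c\log c})$ candidate advices in total. For each candidate $P$, I would invoke the inner machinery of \autoref{the:kcalgwmeea}---namely, first remove cycle hints via \autoref{alg:findcycle2} (taking $\bigO(cn^3)$ time per call since $|P|\le c$ by \autoref{obs:hintsinadvice}) and then run \autoref{alg:kcalgwmeecla}, which costs $\bigO(4^{c\log b}n^2(b^2 + n\log n))$ per call once preprocessing is factored out. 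Finally I would return the minimum-weight extension obtained over all candidates.

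The running time analysis is then a straightforward multiplication: the $\bigO(16^{c\log c})$ candidates times the $\bigO(4^{c\log b} n^2(b^2 + n\log n))$ cost per solve gives
\[
\bigO\!\bigl(4^{c(2\log c + \log b)} n^2(b^2 + n\log n)\bigr) \;=\; \bigO\!\bigl(4^{c\log(bc^2)} n^2(b^2 + n\log n)\bigr),
\]
and the one-shot preprocessing contributes the additive $n^2m$ term, yielding the stated bound. The only mild subtlety I anticipate is making sure the $n^2m$ term is paid only once rather than per candidate advice---this is handled by hoisting \autoref{trans:sb} and \autoref{trans:spp} out of the per-call loop, which is legitimate because these transformations depend only on $G$ and $\wf$, not on~$P$. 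Correctness of the overall procedure follows from \autoref{lem:adstructure} (some enumerated advice is heeded by an optimum) and the optimality guarantee of \autoref{alg:kcalgwmeecla} for each fixed advice.
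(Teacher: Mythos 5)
Your proposal is correct and follows essentially the same route as the paper: compose the Turing reduction of \autoref{lem:redwmeetowmeea} with the \pWMEECAs{} algorithm of \autoref{the:kcalgwmeea}, hoisting \autoref{trans:sb} and \autoref{trans:spp} out of the enumeration loop so the $n^2m$ term is paid once, and multiplying $16^{c\log c}$ candidates by the per-advice cost to get $4^{c\log(bc^2)}$. The amortization subtlety you flag is exactly how the paper's proof handles it.
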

\begin{proof}
  By \autoref{lem:redwmeetowmeea} there is a Turing reduction from \pWMEEs{} to \pWMEECAs{} with running time of~$\bigO(16^{c\log(c)}(c + n + m))$ and at most~$16^{c\log(c)}$ oracle calls. Replacing the oracle with the algorithm for \pWMEECAs{} given in \autoref{the:kcalgwmeea} we obtain an algorithm for \pWMEEs{} with~$\bigO(4^{c\log(bc^2)}n^2(b^2 + n\log(n)) + n^2m)$~running time: The algorithm first preprocesses the input using \autoref{trans:sb} and \autoref{trans:spp}, guesses the advice and then, instead of invoking the oracle, reduces the resulting instance of \pWMEECAs{} to an instance of \pWMEECCLAs{}. This instance is then solved using \autoref{alg:kcalgwmeecla}.
% By wolfram alpha:
% 2^{c^2\log(c)}(c^3 + n + m + 2^{c^2\log(b)}n^2(b^2+ n\log(n)))
% \leq b^2 n^2 2^(c^2 log(b)+c^2 log(c))+n^3 log(n) 2^(c^2 log(b)+c^2 log(c))+m 2^(c^2 log(c))+n 2^(c^2 log(c))+c^3 2^(c^2 log(c))
% \leq bigO b^2 n^2 2^(c^2 log(b)+c^2 log(c))+n^3 log(n) 2^(c^2 log(b)+c^2 log(c))+m 2^(c^2 log(c))+c^3 2^(c^2 log(c))
% \leq bigO 2^(c^2 log(b)+c^2 log(c))(b^2 n^2 +n^3 log(n) + m + c^3)
\end{proof}

%\subsubsection{A Polynomial Problem Kernel}\label{sec:multivariatekernel}
%TODO maybe... better kernel due reduction to matching and back.
%\begin{theorem}
%  \pWMEEA{} has a problem kernel with~$\bigO(c^3k^2)$ vertices.
%\end{theorem}

\section{From Eulerian Extension to Matching and back}\label{sec:matching}

The observations in \autoref{sec:trails} suggest the following intuition for making multigraphs Eulerian: To balance every vertex in the given multigraph, we have to add paths from vertices with lower outdegree to vertices with lower indegree. This implies that we have to match these vertices such that adding paths between them leads to a minimum-size \EE. In this section we prove that this intuition is correct and restate \pWMEEs{} as the newly introduced \pCBM{}~(\pCBMs{}).

In previous work by \citet{DMNW10} a similar approach that involves matchings yields polynomial-time algorithms for some restricted \EE{} problems. Of course polynomial-time solvability would be very surprising for \pWMEEs{} because this problem is \NPh ; and we will see that the corresponding matching problem~\pCBMs{} indeed is also \NPh . However, we deem the matching representation to be more accessible in terms of fixed-parameter complexity. In this regard, we show that \pCBMs{} is fixed-parameter tractable on restricted input graphs for a parameter that translates over to the number of components in \pWMEEs{}. Using this we make partial progress to answering the question whether \pWMEEs{} is fixed-parameter tractable with the parameter number of connected components by showing that it indeed is fixed-parameter tractable in a restricted form. %TODO?
We also gather a polynomial-size problem kernel for \pWMEECAs{} as a simple corollary using the matching formulation.

In \autoref{sec:cbm} we introduce \pCBMs{}, show that it is \NPh{}, and derive that it is fixed-parameter tractable on special input graphs. In \autoref{sec:releecbm} we investigate the relationship between \pWMEEs{} and \pCBMs{}, and show that they are parameterized equivalent. Using this equivalence, we derive fixed-parameter tractability results for \pWMEEs{} as simple corollaries.

\subsection{\pCBM{}}\label{sec:cbm}

In this section we introduce \pCBM{} (\pCBMs{})---a variant of minimum-weight perfect bipartite matching. We show that this problem is \NPh{} and fixed-parameter tractable on a restricted graph class.

%We use this problem later to solve \WMEEs.
\begin{definition}
  Let $G$ be a bipartite graph,\footnote{Note that~$G$ is undirected.} let~$M$ be a matching of the vertices in~$G$ and let~$P$ be a vertex-partition with the cells~$C_1, \ldots, C_c$. We call an unordered pair~$\{i,j\}$ of integers~$1 \leq i < j \leq c$ a \emph{join} and a set~$J$ a \emph{join set} with respect to~$G$ and~$P$ if~$J \subseteq \{\{i, j\} : 1 \leq i < j \leq c\}$.  
%We say that a matching $M$ of $G$ is \emph{$d$-defective} if there are exactly $d$ pairs $\{i, j\} \in J$ such that there is no edge $e \in M$ with $e \cap C_i \neq \emptyset$ and $e \cap C_j \neq \emptyset$.
We say that a join~$\{i, j\} \in J$ is \emph{satisfied} by the matching~$M$ of~$G$ if there is at least one edge~$e \in M$ with~$e \cap C_i \neq \emptyset$ and~$e \cap C_j \neq \emptyset$. We say that a matching~$M$ of~$G$ is \emph{$J$-conjoining} with respect to a join set~$J$ if all joins in~$J$ are satisfied by~$M$. If the join set is clear from the context, we simply say that~$M$ is conjoining.
\end{definition}
\decprob{\pCBM{} (\pCBMs{})}{A bipartite graph $G=(V_1 \uplus V_2, E)$ with a weight function~$\wf:E \rightarrow [0, \wf_{max}]\cup \{\infty\}$, a partition~$P = \{C_1, \ldots, C_k\}$ of the vertices in~$G$ and a join set~$J$.}{Is there a matching~$M$ of the vertices of~$G$ such that~$M$ is perfect,~$M$ is conjoining and~$M$ has weight at most~$\wf_{max}$?}
\begin{figure}
  \begin{center}
    \includegraphics{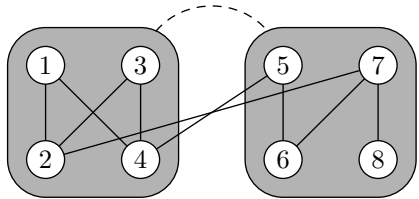}
    \caption{An instance of \pCBMs{} comprising a bipartite graph with the vertices~$1$ through~$8$ and the solid edges, a vertex partition represented by the gray objects, and a join set consisting of a single join that is represented by the dashed line. The weight function is ignored here. The shown instance has a perfect matching, for example~$\{\{1, 2\}, \{3, 4\}, \{5, 6\}, \{7, 8\}\}$. However, it does not have a perfect and conjoining matching: The vertex~$8$ has to be matched to~$7$ in any perfect matching. Thus, the vertices~$2$ and~$7$ cannot be matched. Since~$7$ is already matched, the vertex~$6$ has to be matched to~$5$. This means that the vertices~$4$ and~$5$ cannot be matched. Thus, no edge that satisfies the single join present can be contained in a perfect matching.}
    \label{fig:cbmexample}
  \end{center}
\end{figure}%
For an example of an instance of \pCBMs{}, see \autoref{fig:cbmexample}.
\begin{example}
  \pCBMs{} models a variant of the assignment problem with additional constraints. In this variant, an assignment of workers to tasks is sought such that each worker is busy and each task is being processed. Furthermore, every worker must be qualified for its assigned task. Both the workers and the tasks are grouped and the additional constraints are of the form ``At least one worker from group A must be assigned a task in group B''. An assignment that satisfies such additional constraints may be favorable in the following scenario. 

A company wants to create working groups, each working on a distinct project consisting of multiple tasks. However, every working group shall have a very creative member, a very social and a very methodical member. Here, we assume that extreme creativity, sociality and methodicality are mutually exclusive. 

This scenario can be modeled in \pCBMs{} by defining a bipartite graph that has a vertex for every worker and task, and that has an edge between a worker and a task, if the worker is qualified for the task. The additional constraints can be modeled by first partitioning the tasks into the projects~$C_1, \ldots, C_i$ and partitioning the workers into the creative ones~$C_{i + 1}$, the social ones~$C_{i + 2}$ and the methodical ones~$C_{i+3}$. Then, creating a join set~$\{\{j, i+ 1\}, \{j, i + 2\}, \{j, i + 3\}: 1 \leq j \leq i\}$ ensures that every working group is assigned at least one creative, social, and methodical member. 

The edge weights can be ignored in our scenario. However, as we will see in the forthcoming section, the problem of \pCBMs{} is \NPh{} even in the unweighted case.

%Note that in the model for the scenario above, leadership qualities, creativity and methodicality are mutually exclusive. This may not always be the case. And thus there is the need for overlapping vertex sets that represent groups. This can also be modeled by \pCBMs{}. Assume that there is a worker that is very creative, methodical and has leadership qualities. 
\end{example}

\subsubsection{\NPHs}

We reduce from the well-known \ptSATs{} problem~\cite{Kar72}. For this, we briefly recapitulate some related definitions.
\begin{definition}
Consider the boolean variables~$X = \{x_1, \ldots, x_n\}$. \emph{Positive literals} over~$X$ are~$x_i$ and \emph{negative literals} are~$\neg x_i$ with~$x_i \in X$. A \emph{boolean formula~$\phi$ in conjunctive normal form} over the variables~$X$ is of the form~$\bigwedge_{i = 1}^kc_i$, where~$c_i = l_{i_1} \vee \ldots \vee l_{i_{j_i}}$. Here~$l_i$,~$1 \leq i \leq 2n$ are literals over~$X$. The subformulas~$c_i$,~$1\leq i \leq k$, are called \emph{clauses}. If it holds that~$j_1 = \ldots = j_k = d$, then we say that~$\phi$ is in \emph{$d$-conjunctive normal form}. A \emph{truth assignment}~$\nu$ for the variables~$X$ is a function~$\nu: X \rightarrow \{\true, \false\}$. A truth assignment is said to be \emph{satisfying} for a boolean formula~$\phi$ if~$\phi$ evaluates to $\true$ when substituting~$\nu(x_i)$ for every variable~$x_i$ occuring in~$\phi$.
\end{definition}

 In \ptSATs{}, a boolean formula~$\phi$ in 3-conjunctive normal form is given and it is asked whether there is a truth-assignment of the variables in~$\phi$ that satisfies~$\phi$. We use the fact that, in \pCBMs{}, connected components that form cycles have exactly two perfect matchings because every cycle in a bipartite graph has even length. Thus, we model variables as cyclic connected components and the two possible matchings will correspond to the two possible truth values for the variables. Clauses will be modeled by cells in the input partition and a join that forces one of the corresponding variable-cycles into one of the two possible matchings in order to satisfy the clause. 

In the following, we regard clauses of boolean formulas in 3-conjunctive normal form over the variables~$X$ as subsets of~$X \times \{+, -\}$ where~$(x_i, +)$~($(x_i, -)$) in the clause~$c_j$ implies that~$x_i$ is in the clause~$c_j$ as a positive (negative) literal.

First, we give an intuitive description of the reduction, we then go into the details. After that, we give an example and prove the correctness of the reduction.

\paragraph{Intuitive Description.} Let~$\phi$ be a boolean formula in 3-conjunctive normal form with~$n$ variables and~$m$ clauses. For every variable~$x_i$ we introduce a cycle consisting of~$4m$~vertices (vertex set~$V_i$ and edge set~$E_i$ in the below construction). For every such cycle, we fix an ordering of the edges~$e_i^1, \ldots, e_i^{4m}$ according to the order in which they are traversed by the cycle. In a perfect matching of the cycle either all edges~$e_i^k$ with odd~$k$ are matching edges or all edges~$e_i^k$ with even~$k$ are matching edges. These two matchings will correspond to assigning~$x_i$ the value $\false$ or $\true$, respectively.

 Next, for every clause~$c_j$ we define a cell~$C_j$ in order to derive a partition of the vertices in the cycles. For every positive literal~$(x_i,+)$ contained in~$c_j$, we choose an edge~$e^k_i$ such that~$k$ is even and such that its vertices have not been assigned to a cell yet, and put both endpoints of~$e^k_i$ into~$C_j$. Analogously, for every negative literal~$(x_i, -) \in c_i$ we choose an edge~$e^k_i$ such that~$k$ is odd and such that its endpoints have not been assigned yet, and put them into~$C_j$. Finally, all vertices that have not been assigned to a cell yet, are added to the cell~$C_0$ and for every cell~$C_i, i \geq 1$ we add the join~$\{0, i\}$ to the designated join set.
\begin{construction}\label{cons:red3sattocbm}
  Let~$\phi$ be a boolean formula in 3-conjunctive normal form with the variables~$X := \{x_1, \ldots, x_n\}$ and the clauses~$c_1, \ldots, c_m \subseteq X \times \{+, -\}$.

  For every variable~$x_i$, introduce a cycle with~$4m$ edges consisting of the vertex set~$V_i := \{v^j_i: 1 \leq j \leq 4m\}$ and the edge set~\[E_i := \{e^k_i := \{v^k_i, v^{k + 1}_i\} \subseteq V_i\} \cup \{e^{4m}_i := \{v^1_i, v^{4m}_i\}\}\text{.}\]
Define the graph~$G := (\bigcup_{i = 1}^n V_i, \bigcup_{i = 1}^n E_i)$, define the weight function~$\wf$ by~$\wf(e) := 0, e \in E_i$ for any~$1 \leq i \leq n$, and define~$w_{max} := 1$.

Inductively define the vertex partition~$P_m$ of~$V(G)$ and the join set~$J_m$ as follows: Let~$J_0 = \emptyset$ and let~$P_0 := \emptyset$. For every clause~$c_j$ introduce the cell
\begin{align*}
  C_j &:= \{v^{4j-1}_i : (x_i, +) \in c_j \vee (x_i, -) \in c_j\} \cup {} \\
        &\phantom{{}:={}} \{v^{4j-2}_i : (x_i, +) \in c_j\} \cup {} \\
        &\phantom{{}:={}} \{v^{4j}_i : (x_i, -) \in c_j \}\text{.}
\end{align*}
Define~$P_i := P_{i-1} \cup \{C_j\}$ and~$J_i := J_{i- 1} \cup \{\{0, j\}\}$.

Finally, define~$C_0 := V(G) \setminus ( \bigcup_{j=1}^m C_j)$. The graph~$G$, the weight function~$\wf$, the vertex partition~$P_m \cup \{C_0\}$ and the join set~$J_m$ constitute an instance of \pCBMs{}.
  
\end{construction}
\begin{figure}
  \begin{center}
    \includegraphics{cbmexample.1}
    \caption{Example of \autoref{cons:red3sattocbm} explained in \autoref{ex:red3sattocbm}.}
    \label{fig:red3sattocbm}
  \end{center}
\end{figure}
\begin{example}\label{ex:red3sattocbm}
  \autoref{fig:red3sattocbm} shows an instance of \pCBMs{} produced from the formula~$\phi := (\neg x_1 \vee x_2) \wedge (\neg x_1 \vee \neg x_2)$ by \autoref{cons:red3sattocbm}. For simplicity, we chose a formula in 2-conjunctive normal form. The instance comprises the graph~$G$ that consists of two directed cycles (solid edges and dotted edges, respectively), three cells~$C_0, C_1, C_2$ forming a partition of~$V(G)$ (shaded in gray), and a join set with two joins represented by the dashed lines.

  \autoref{cons:red3sattocbm} introduces the solid-edge cycle for variable~$x_1$ and the dotted-edge cycle for variable~$x_2$. The cycle corresponding to~$x_i$ has exactly the two perfect matchings
\begin{align*}
  M^{\true}_i & := \{\{v^k_i, v^{k + 1}_i\} : k \text{ odd}\} \text{ and}\\
  M^{\false}_i & := \{\{v^k_i, v^{k + 1}_i\} : k \text{ even}\} \cup \{\{v^1_i, v^{8}_i\}\}\text{.}
\end{align*}
The cell~$C_1$ models the clause~$\neg x_1 \vee x_2$ and the vertices are chosen such that only edges of~$M^{\false}_1$ and edges of~$M^{\true}_2$ connect the cells~$C_0$ and~$C_1$. Analogously, only edges of~$M^{\false}_1$ and edges of~$M^{\false}_2$ connect the cells~$C_0$ and~$C_2$.

  There is a correspondence between the clauses a variable~$x_i$ satisfies using a particular truth assignment and the joins that are satisfied by matching the cycle that corresponds to~$x_i$ using one of the two available matchings. For example, the variable~$x_1$ satisfies both clauses in~$\phi$ when assigned $\false$ and no clause when assigned $\true$. Accordingly, the matching~$M^{\false}_1$ satisfies both the joins~$\{0, 1\}$, and~$\{0, 2\}$ and the matching~$M^{\true}_1$ satisfies no join. This holds true analogously for~$x_2$ and thus finding a perfect conjoining matching in~$G$ is equivalent to satisfying~$\phi$.
\end{example}
\begin{lemma}
  \pCBMs{} is \NPh{}, even in the unweighted case, even when for every cell~$C_i$ in the given vertex-partition of the input graph~$G=(V \uplus W, E)$ it holds that~$|C_i \cap V| = |C_i \cap W|$ and even when~$G$ has maximum degree two.
\end{lemma}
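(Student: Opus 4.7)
The plan is to show that Construction~\ref{cons:red3sattocbm} is a polynomial-time many-one reduction from \ptSATs{} and to verify that it simultaneously witnesses all three claimed restrictions. Given a formula~$\phi$ with~$n$ variables and~$m$ clauses, the construction produces a graph with~$n$ vertex-disjoint even cycles; this is trivially polynomial-time computable, and since the graph is a disjoint union of cycles it has maximum degree two. Bipartiteness and the partition~$V \uplus W$ can be obtained by placing~$v_i^j$ with~$j$ odd into~$V$ and with~$j$ even into~$W$.

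For the cell balance property, observe that for every clause~$c_j$ and every literal~$(x_i, \pm) \in c_j$, exactly two vertices of the cycle for~$x_i$ are placed into~$C_j$: one with an odd index (namely~$v_i^{4j-1}$, hence in~$V$) and one with an even index (either~$v_i^{4j-2}$ or~$v_i^{4j}$, hence in~$W$). Thus~$|C_j \cap V| = |C_j \cap W|$ for every~$j \geq 1$. Since each variable cycle contributes~$2m$ vertices to~$V$ and~$2m$ to~$W$ in total, the leftover cell~$C_0$ also satisfies the balance condition. The weight function is identically zero with~$\wf_{max} = 1$, so the constructed instance is effectively unweighted.

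The key step is correctness. Each variable cycle has length~$4m$, and since even cycles in bipartite graphs admit exactly two perfect matchings, the cycle for~$x_i$ has only the matchings~$M_i^{\true} = \{e_i^k : k \text{ odd}\}$ and~$M_i^{\false} = \{e_i^k : k \text{ even}\} \cup \{e_i^{4m}\}$. I would then verify the following local correspondence: if~$(x_i, +) \in c_j$, then the edge~$e_i^{4j-2}$ lies entirely inside~$C_j$, so under~$M_i^{\false}$ the two vertices of~$C_j$ contributed by~$x_i$ are matched within~$C_j$ and contribute nothing to the join~$\{0, j\}$; whereas under~$M_i^{\true}$ they are matched via~$e_i^{4j-3}$ and~$e_i^{4j-1}$ to vertices that (by inspection of the indices used by the other cells) lie in~$C_0$, thereby satisfying the join~$\{0, j\}$. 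Symmetrically, for~$(x_i, -) \in c_j$, the edge~$e_i^{4j-1}$ is the intra-cell edge, so the join is satisfied precisely when~$M_i^{\false}$ is selected.

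Combining these observations, identifying~$M_i^{\true}$ with~$\nu(x_i) = \true$ and~$M_i^{\false}$ with~$\nu(x_i) = \false$ yields a bijection between truth assignments of~$\phi$ and perfect matchings of~$G$, under which the set of satisfied clauses corresponds to the set of satisfied joins. Hence~$\phi$ is satisfiable if and only if the constructed instance admits a perfect conjoining matching. The main (though not deep) obstacle is a careful book-keeping of indices to confirm that, whenever a literal~$(x_i, \pm)$ does \emph{not} lie in~$c_j$, the outer endpoints~$v_i^{4j-3}$ and~$v_i^{4j}$ actually end up in~$C_0$ and not accidentally in some~$C_{j'}$; this follows from the index arithmetic~$4j - 3 \equiv 1 \pmod{4}$ together with the standard assumption that no clause contains both~$x_i$ and~$\neg x_i$.
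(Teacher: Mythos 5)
Your proposal is correct and follows essentially the same route as the paper: it uses \autoref{cons:red3sattocbm} as a reduction from \ptSATs{}, observes that each even variable-cycle has exactly two perfect matchings so that perfect matchings of the disjoint union correspond bijectively to truth assignments, and checks the same local index arithmetic to match satisfied clauses with satisfied joins (the paper additionally spells out the two subcases for $(x_i,+)\notin c_j$, which your ``book-keeping'' remark covers). The observations on maximum degree two, cell balance via cells being disjoint unions of edges, and the vacuous weight bound also coincide with the paper's.
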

\begin{proof}
  We prove that \autoref{cons:red3sattocbm} is a polynomial-time many-one reduction from \ptSATs{} to \pCBMs{}. Notice that in instances created by \autoref{cons:red3sattocbm} any matching has weight lower than~$\wf_{max}$ and, thus, the soundness of the reduction implies that \pCBMs{} is hard even without the additional weight constraint. Also, since the cells in the instances of \pCBMs{} are disjoint unions of edges, every cell in the partition~$P_m$ contains the same number of vertices from each cell of the graph bipartition. %TODO? bessere formulierung cell vs cell

  Concerning \autoref{cons:red3sattocbm}, it is easy to check that it is polynomial-time computable. For the correctness we first need the following definition: For every variable~$x_i \in X$ let
\begin{align*}
  M^{\true}_i & := \{e^k_i  \in E_i: k \text{ odd}\} \text{ and} \\
  M^{\false}_i & := E_i \setminus M^{\true}_i = \{e^k_i \in E_i : k \text{ even}\}\text{.}
\end{align*}
Observe that all perfect matchings in~$G$ are of the form~$\bigcup_{i=1}^nM_i^{\nu(i)}$, where~$\nu$ is some function~$\{1, \ldots, n\} \rightarrow \{\true, \false\}$. We show that the matching~$\bigcup_{i=1}^nM_i^{\nu(i)}$ is a conjoining matching for~$G$ with respect to the join set~$J_m$ if and only if the truth assignment that assigns each~$x_i \in X$ the value~$\nu(i)$ is a satisfying truth assignment for~$\phi$. For this, it suffices to show that for every variable~$x_i \in X$ it holds that
\begin{align}
  \{j : (x_i, +) \in c_j\} &= \{j : M^{\true}_i \text{ satisfies the join } \{0, j\}\}\text{, and} \label{eq:matchvalequiv1}\\
  \{j : (x_i, -) \in c_j\} &= \{j : M^{\false}_i \text{ satisfies the join } \{0, j\}\}\text{.} \label{eq:matchvalequiv2}
\end{align}
We only show that \autoref{eq:matchvalequiv1} holds; \autoref{eq:matchvalequiv2} can be proven analogously. Assume that~$(x_i, +) \in c_j$. By \autoref{cons:red3sattocbm}~$v^{4j-2}_i \in C_j, v^{4j-3}_i \in C_0$ and thus, since \[\{v^{4j-2}_i, v^{4j-3}_i \} = e^{4j-3} \in M^{\true}_i\text{,}\] the matching~$M^{\true}_i$ satisfies the join~$\{0, j\}$. Now assume that~$(x_i, +) \notin c_j$, that is, either 
\begin{inparaenum}
\item both~$(x_i, \pm) \notin c_j$ \label{enu:mnph1} or
\item $(x_i, -) \in c_j$. \label{enu:mnph2}
\end{inparaenum}
In case~\enuref{enu:mnph1} we have that~$V_i$ and~$C_j$ are disjoint and, thus, no matching in~$G[V_i]$ can satisfy the join~$\{0, j\}$. In case~\enuref{enu:mnph2} the only edges in~$E_i$ that can satisfy the join~$\{0, j\}$ are~$e^{4j - 2}_i$ and~$e^{4j}_i$. However, both these edges are not in~$M^{\true}_i$ and, thus, this matching cannot satisfy the join~$\{0, j\}$.
\end{proof}
\begin{observation}\label{obs:cbminnpwp}
  \pCBMs{} is contained in \claNP{} and in \claW{P} when parameterized by the size of the join set.
\end{observation}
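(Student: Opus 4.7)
The plan is to handle the two claims separately: \claNP-membership is essentially routine, while \claW{P}-membership will be established by a nondeterministic algorithm that uses only $\bigO(|J|\log|V|)$ nondeterministic steps and then solves the residual problem in deterministic polynomial time.

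For \claNP-membership, a matching~$M \subseteq E$ itself serves as a polynomially-sized certificate. The verifier checks that (i) every vertex of~$G$ is incident to exactly one edge of~$M$, (ii) $\sum_{e \in M}\wf(e) \leq \wf_{max}$, and (iii) for every join~$\{i,j\} \in J$ some edge of~$M$ has one endpoint in~$C_i$ and the other in~$C_j$. Each of these tests runs in time polynomial in~$|V| + |E| + |J|$, so $\pCBMs \in \claNP$.

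For \claW{P}-membership with parameter~$|J|$, I would design a nondeterministic Turing machine that first nondeterministically guesses, for each join~$\{i,j\} \in J$, one edge of the sought matching by choosing its two endpoints from~$V$. Since the number of candidate edges is bounded by~$|V|^2$, each guess needs $\bigO(\log|V|)$ nondeterministic steps, giving $\bigO(|J|\log|V|)$ in total---exactly matching the $h(\kappa(I))\cdot\log(\length(I))$ bound in the paper's definition of \claW{P}. In the deterministic phase, the machine verifies that the guessed edges lie in~$E$, are pairwise vertex-disjoint, and respect their assigned joins (endpoints in the correct cells); otherwise the branch rejects. Afterwards, it removes the endpoints of the guessed edges and asks for a minimum-weight perfect matching~$M'$ in the resulting bipartite subgraph such that the combined weight does not exceed~$\wf_{max}$. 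This is a classical task, solvable in polynomial time by the Hungarian algorithm. Correctness is immediate in both directions: any perfect conjoining matching of weight at most~$\wf_{max}$ contains, for each join, an edge that the guessing phase can produce, while every accepting branch yields such a matching directly. The only mild subtlety is guaranteeing that the guessed edges do not overlap, which is handled by a polynomial-time consistency check; the completion step is then standard, so the overall machine falls within the resource bounds of \claW{P}.
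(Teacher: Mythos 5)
Your proof is correct and follows essentially the same route as the paper: guess one join-satisfying edge per join within the $\bigO(|J|\log(\length(I)))$ nondeterministic budget and complete deterministically with a minimum-weight perfect matching, which gives both the \claNP{} and \claW{P} memberships. The only (immaterial) difference is that the paper uses the minimal join-satisfying matching as the \claNP{} certificate and completes it deterministically, whereas you certify with the full perfect matching directly.
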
% TODO W[t] ?
\begin{proof}
  Observe that a minimal matching~$M$ that satisfies all joins is a certificate for a yes-instance. Note that~$M$ not necessarily has to be perfect. A minimum-weight perfect conjoining matching~$M' \supseteq M$, if it exists, can then be found in polynomial time by removing the incident vertices of edges in~$M$ from the graph and computing a minimum-weight perfect matching of the remaining vertices. Finding this matching is possible in~$\bigO(mn^2)$~time~\cite{EK72} and it follows that \pCBMs{} is in~\claNP. Also, generating all minimal matchings that satisfy all joins can be done using a polynomial-time Turing machine using at most~$\bigO(c\log(m))$~nondeterministic steps, where~$c$ is the size of the join set: For every join, simply guess an edge that satisfies it. Hence, \pCBMs{} is in~\claW{P}.
\end{proof}
Now we can deduce the following theorem:
\begin{theorem}\label{the:cbmnph}
  \pCBM{} is \NPc .
\end{theorem}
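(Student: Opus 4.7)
The plan is short because the work has effectively already been done in the two statements immediately preceding the theorem. Containment in $\claNP$ is part of \autoref{obs:cbminnpwp}: a minimal matching $M$ satisfying all joins serves as a polynomial-size certificate, and it can be verified (and optionally completed to a perfect matching by computing a minimum-weight perfect matching on the remaining vertices in polynomial time) efficiently. So the only remaining obligation is \NPhs{}.

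For \NPhs{}, I would simply cite the preceding lemma, which establishes that \pCBMs{} is \NPh{} via the polynomial-time many-one reduction from \ptSATs{} given in \autoref{cons:red3sattocbm}. Since \ptSATs{} is classically \NPc{}~\cite{Kar72}, and \autoref{cons:red3sattocbm} is a polynomial-time many-one reduction, \pCBMs{} inherits \NPhs{}.

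Combining the two, \pCBMs{} is both in \claNP{} and \NPh{}, hence \NPc{}. There is no substantive obstacle; the only thing to be careful about is making sure the version of \pCBMs{} for which we claim membership in \claNP{} is the same as the one for which we have shown hardness (both are the general, weighted version as stated in the problem definition, and the hardness result holds even in the unweighted case, so both assertions apply to the general problem).
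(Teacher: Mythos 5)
Your proposal is correct and matches the paper exactly: the theorem is stated immediately after the \NPhs{} lemma (via \autoref{cons:red3sattocbm}) and \autoref{obs:cbminnpwp}, and the paper derives it as a direct consequence of those two results, just as you do. No gaps.
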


\subsubsection{Tractability on Restricted Graphs}

In this section we use data reduction rules to show that \pCBMs{} is fixed-parameter tractable on some restricted classes of input graphs. In particular, we prove that \pCBMs{} is linear-time decidable on forests (\autoref{cor:cbmlintimeonforests}) and the following theorem:
\begin{theorem}\label{the:cbmmaxdeg2tractable}
  \pCBM{} is solvable in~$\bigO(2^{c(c+1)}n + n^3)$~time, where~$c$ is the size of the join set and when in the bipartite input graph~$G = (V_1 \uplus V_2, E)$ each vertex in~$V_1$ has maximum degree two.
\end{theorem}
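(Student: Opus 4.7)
The plan is to exploit the rigidity that the maximum-degree-two constraint on~$V_1$ imposes on perfect matchings. I would start with a polynomial-time preprocessing that exhaustively removes forced and infeasible structure. A vertex of~$V_1$ of degree~$0$ certifies infeasibility; a vertex~$v \in V_1$ of degree~$1$ forces its unique incident edge into every perfect matching, so I would add that edge to the output, delete~$v$ together with its partner, and update the partition cells and the join set accordingly. Iterating this rule takes~$\bigO(n+m)$ time and leaves an instance in which every~$V_1$-vertex has degree exactly~$2$.

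Next, I would pass to the natural auxiliary multigraph~$H$ on~$V_2$ obtained by contracting each~$V_1$-vertex into an edge between its two~$V_2$-neighbors. A perfect matching of the residual bipartite graph then corresponds to an assignment that picks, for every edge of~$H$, one of its two endpoints in such a way that every vertex of~$H$ is picked exactly once. A counting argument shows that this is possible if and only if every connected component of~$H$ is unicyclic (same number of edges as vertices), and in that case each component admits exactly two such assignments: the unique cycle is matched in one of its two alternating ways, and every tree hanging off the cycle has its orientation forced. Without join constraints we would therefore just pick the cheaper of the two matchings in each component; the~$\bigO(n^3)$ term in the claimed running time will accommodate this weight comparison together with any remaining polynomial-time subroutines such as a classical bipartite-matching cleanup.

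To cope with the join set~$J$ of size~$c$, I would prove a kernelization lemma that bounds the number of \emph{active} components---those whose two candidate matchings differ on at least one edge that could satisfy some join---by~$\bigO(c(c+1))$. The intuition is that each join involves at most two cells and is therefore supported by candidate edges in only a bounded number of components per ``join profile''; an exchange argument then shows that any optimal solution can be transformed to use only~$\bigO(c(c+1))$ active components, while every other component is safely fixed to its cheaper matching. On the kernel the search space thus consists of at most~$\bigO(c(c+1))$ independent binary matching decisions, so brute-force enumeration of the~$2^{\bigO(c(c+1))}$ configurations, combined with an~$\bigO(n)$-time feasibility and weight check per configuration, yields the overall bound~$\bigO(2^{c(c+1)}n+n^3)$.

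The main obstacle will be the kernelization step: carrying out the exchange argument that caps the number of active components at~$\bigO(c(c+1))$ without losing any optimum requires a careful case analysis on how the two matching variants of each component interact with the (at most~$c+1$) cells that participate in joins, and it has to yield a polynomial-time computable reduction rule. Once this combinatorial core is in place, the preprocessing, the structural reduction to unicyclic components, and the final enumeration over the kernel are all routine.
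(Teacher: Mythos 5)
Your preprocessing and structural analysis coincide with the paper's: the degree-one rule is exactly \autoref{rr:deg1}, and your contraction to the auxiliary multigraph~$H$ with the unicyclic-component criterion is an equivalent packaging of \autoref{lem:perfmatchatmostonecycle} (a Hall-condition argument showing that a component containing two cycles has no perfect matching, so after the degree-one rule only vertex-disjoint even cycles survive, each admitting exactly two perfect matchings). Up to that point the argument is sound.

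The gap is your kernelization lemma. You assert that an exchange argument caps the number of \emph{active} components at~$\bigO(c(c+1))$, but the sketched justification---that each join is ``supported by candidate edges in only a bounded number of components per join profile''---does not hold. The datum that governs a component's role is the pair of join subsets satisfied by its two matchings together with the cost of switching between them, and components realizing essentially arbitrary such pairs can coexist (the paper's own \ptSATs{} reduction builds exactly such cycles). Deciding which components to flip away from their cheaper matching is then a weighted set-cover-like problem over the~$c$ joins in which any of up to~$2^{\Theta(c)}$ pairwise distinct ``signatures'' may be the unique cheapest way to cover some join, so no exchange argument can discard all but polynomially many components; an optimum may hinge on any one of them depending on the weights. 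Accordingly, the paper's \autoref{rr:signature} only reduces to at most one component \emph{per signature class}---still exponentially many in~$c$---and obtains the~$2^{c(c+1)}$ factor not by enumerating subsets of a polynomial kernel but by a depth-$c$ search tree that, for each join, branches over the at most~$2^{c+1}$ surviving components that could satisfy it. To repair your proof you would either have to actually establish the polynomial bound on active components (which I do not believe is available) or replace your final enumeration by such a bounded-depth branching over signature classes.
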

Using this theorem and a reduction from \pWMEE{} to \pCBMs{}, we show that \pWMEE{} is tractable on some restricted instances in \autoref{sec:tractislands}. The tractable instances are the preimages of the degree-restricted instances of \pCBMs{} defined in \autoref{the:cbmmaxdeg2tractable}.

To prove the \autoref{the:cbmmaxdeg2tractable}, we use data reduction rules and an observation about matchings in such bipartite graphs as in \autoref{the:cbmmaxdeg2tractable}. We first give some simple reduction rules and then turn our attention to bipartite graphs with maximum degree two. For these graphs we give a slightly more intricate reduction rule restricting the number of cycles they comprise by some function depending only on the join set size~$c$. These reduced instances are then solved via a search-tree procedure which yields fixed-parameter tractability for \pCBMs{} on graphs with maximum degree two. A further observation about matchings in bipartite graphs where each vertex in one cell of the bipartition has maximum degree two is then used to generalize the tractability result to \autoref{the:cbmmaxdeg2tractable}.

In the following, let~$G = (V_1 \uplus V_2, E)$ be a bipartite graph, let~$\wf:E \rightarrow [0, \wf_{max}]\cup \{\infty\}$ be a weight function, let~$P = \{C_1, \ldots, C_d\}$ be a vertex partition of~$G$ and let~$J$ be a join set with respect to~$G$ and~$P$.

\paragraph{Simple Data Reduction Rules.}

\begin{rrule}\label{rr:deg1}
  If there is an edge~$\{v, w\} \in E$ such that~$\deg(v)=1$, then remove both~$v$ and~$w$ from~$G$, and remove any join~$\{i, j\}$ from~$J$, where~$v\in C_i, w\in C_j$. Decrease~$\wf_{max}$ by~$\wf(\{v, w\})$.
\end{rrule}
\begin{observation}\label{obs:deg1runtime}
  \autoref{rr:deg1} is correct and can be applied exhaustively in~$\bigO(n + m)$~time.
\end{observation}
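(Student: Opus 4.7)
The plan is to establish the two assertions of \autoref{obs:deg1runtime} separately: correctness (yes-instance preservation in both directions) and the linear running time. The core insight driving correctness is that an edge $\{v,w\}$ incident to a degree-one vertex $v$ is \emph{forced}: every perfect matching of $G$ must contain $\{v,w\}$, since $v$ has no other option to be matched. This forcing observation is what makes the local surgery safe.

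For the forward direction, I would take a perfect conjoining matching $M$ of $G$ with $\wf(M) \leq \wf_{max}$, note that $\{v,w\} \in M$ by the forcing observation, and argue that $M' := M \setminus \{\{v,w\}\}$ is a perfect matching of $G - \{v, w\}$ of weight at most $\wf_{max} - \wf(\{v,w\})$. It then remains to verify that $M'$ satisfies every join in the reduced join set $J \setminus \{\{i, j\}\}$: if some join $\{i', j'\} \neq \{i, j\}$ were satisfied in $M$ only by the edge $\{v, w\}$, then $\{i', j'\}$ would coincide with $\{i, j\}$ since $v \in C_i, w \in C_j$, a contradiction; hence the satisfying edge lies in $M'$. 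For the backward direction I would take a perfect conjoining matching $M'$ of the reduced instance and set $M := M' \cup \{\{v, w\}\}$, observing that joins in $J \setminus \{\{i, j\}\}$ are satisfied via $M'$ and the join $\{i, j\}$, if present in $J$, is satisfied by $\{v, w\}$ itself; the weight bound is immediate from the decrement of $\wf_{max}$.

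For the linear exhaustive running-time bound, I would use a standard queue-based peeling scheme: compute all vertex degrees in $\bigO(n + m)$ time, initialize a queue with every degree-one vertex, and repeatedly pop a vertex $v$, verify it still has degree one (otherwise skip it), locate its unique remaining neighbor $w$, remove both vertices together with their incident edges, decrement the degrees of $w$'s former neighbors, and push any neighbor that newly attains degree one. The removal of the associated join $\{i, j\}$ and the update of $\wf_{max}$ take constant time per deletion if joins are stored in a hash table keyed on unordered pairs and each vertex carries a pointer to its containing cell. Since each edge is touched only a constant number of times across the entire peeling and each vertex enters the queue at most once, the total work is $\bigO(n + m)$.

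The main obstacle is not any single step but the bookkeeping required to guarantee that the join-set update truly runs in amortized constant time per application; in particular one must ensure that hash-table lookups for $\{i, j\}$ are genuinely $\bigO(1)$ and that no join is inspected twice. I expect this to be routine, so the proof should be compact.
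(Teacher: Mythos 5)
Your proposal is correct and follows essentially the same route as the paper: correctness via the observation that a perfect matching is forced to contain the edge at a degree-one vertex, and linear time via peeling degree-one vertices after an initial degree computation. The paper's own proof is just a terser version of the same argument, so your additional detail on join bookkeeping and the queue-based implementation is fine but not a departure.
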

\begin{proof}
  It is clear that \autoref{rr:deg1} is correct because the sought matching is perfect and thus has to match~$v$ with~$w$. It can be applied in linear time by first listing all vertices with degree one in linear time and then applying the rule in a depth-first manner outgoing from the degree-one vertices.
\end{proof}
\begin{corollary}\label{cor:cbmlintimeonforests}
  \pCBMs{} is linear-time solvable on forests.
\end{corollary}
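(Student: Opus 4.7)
The plan is to decide \pCBMs{} on forests by applying \autoref{rr:deg1} exhaustively and then reading off the answer from the trivial instance that remains. Since every non-empty tree with at least two vertices contains a leaf, the precondition of \autoref{rr:deg1} (the existence of a degree-one vertex) is satisfied in every connected component that is not a singleton. Hence, exhaustively applying the rule on a forest terminates only when the remaining graph consists solely of isolated vertices (possibly none).

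The algorithm is then: exhaustively apply \autoref{rr:deg1}; if any isolated vertex remains in the reduced graph, output \emph{no}; otherwise (the reduced graph is empty), output \emph{yes} precisely when the remaining join set is empty and the remaining~$\wf_{max}$ is nonnegative. Correctness follows from two observations. First, by \autoref{obs:deg1runtime}, \autoref{rr:deg1} is correct, so the reduced instance is equivalent to the original. Second, the reduced instance is trivially decidable: an isolated vertex cannot be matched, so it forces a \emph{no}; if the reduced graph has no vertices, the empty matching is the only candidate and it is perfect, conjoining iff no joins remain, and of admissible weight iff the residual~$\wf_{max}$ is nonnegative.

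For the running time, \autoref{obs:deg1runtime} gives a bound of~$\bigO(n+m)$ for exhaustive application. The final check---testing whether the residual graph contains a vertex, whether the residual~$J$ is empty, and whether the residual~$\wf_{max}$ is nonnegative---is possible in~$\bigO(n+|J|+1)$~time, if we maintain these quantities during reduction. Since a forest satisfies~$m \leq n-1$, the total running time is linear in the input size.

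There is no serious obstacle here; the only thing to be a bit careful about is the handling of vertices that become isolated as a side effect of removing an edge~$\{v,w\}$: such vertices were neighbors of~$w$ in the original forest, are now stranded, and correctly witness the non-existence of a perfect matching in the residual instance, which by correctness of the reduction rule is equivalent to the non-existence of a perfect conjoining matching in the original.
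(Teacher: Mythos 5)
Your proposal is correct and matches the paper's intended argument: the corollary is stated without explicit proof as a direct consequence of \autoref{rr:deg1} and \autoref{obs:deg1runtime}, relying precisely on the fact that exhaustive application of the degree-one rule on a forest leaves only isolated vertices, from which the answer is read off trivially. Your writeup just makes explicit the details (isolated vertices force a no; otherwise check the residual join set and~$\wf_{max}$) that the paper leaves implicit.
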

\begin{rrule}\label{rr:compincell}
  If there is a connected component~$C$ of~$G$ such that~$C \subseteq C_j$ for some~$1 \leq j \leq c$, then compute a minimum-weight perfect matching~$M$ in~$G[C]$, remove~$C$ from~$G$ and decrease~$\wf_{max}$ by~$\wf(M)$.
\end{rrule}
\begin{observation}\label{obs:compincellruntime}
  \autoref{rr:compincell} is correct and can be applied exhaustively in~$\bigO(mn^2)$~time.
\end{observation}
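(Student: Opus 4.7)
My plan has two parts, corresponding to the two claims.

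For correctness, the key observation is that because $C \subseteq C_j$, every edge of $G[C]$ has both endpoints in the same cell $C_j$, and therefore cannot satisfy any join in $J$. I would then argue by a standard exchange. Given any perfect conjoining matching $M$ of $G$ with $\wf(M) \leq \wf_{max}$, the edges of $M$ incident to vertices of $C$ form a perfect matching $M_C$ of $G[C]$; replacing $M_C$ by a minimum-weight perfect matching $M^{*}$ of $G[C]$ yields another perfect conjoining matching of $G$ of weight at most $\wf(M)$, since conjoiningness depends only on edges crossing between distinct cells and is unaffected by changes inside $C_j$. Conversely, any perfect conjoining matching of the reduced graph of weight at most $\wf_{max} - \wf(M^{*})$ extends by $M^{*}$ to a perfect conjoining matching of $G$ of weight at most $\wf_{max}$. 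One boundary case deserves a remark: if $G[C]$ admits no perfect matching, then $G$ admits none either and the rule correctly produces an infeasible reduced instance.

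For the running time, I would first compute the connected components of $G$ in $\bigO(n + m)$ time and, using the partition labels stored at each vertex, identify in $\bigO(n)$ additional time those components whose vertex set is contained in a single cell. For each such component $C_i$ with $n_i$ vertices and $m_i$ edges, a minimum-weight perfect matching of $G[C_i]$ is computable in $\bigO(m_i n_i^{2})$ time by the algorithm of \citet{EK72} already cited in the proof of \autoref{obs:cbminnpwp}. Since the qualifying components are pairwise vertex-disjoint, $\sum_i m_i n_i^{2} \leq m \cdot n^{2}$, giving the claimed overall bound.

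Exhaustiveness needs only a single pass: removing a component from $G$ leaves every other component and its cell membership unchanged, so every application of the rule is already visible in the initial component decomposition and can be carried out sequentially without recomputation. The one step requiring mild care is the exchange argument for correctness; the remainder consists of standard component-finding and bipartite-matching computations, so I expect no substantive obstacle.
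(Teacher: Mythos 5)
Your proposal is correct and follows essentially the same route as the paper: an exchange argument for correctness (swap the matching restricted to the component, which is forced to lie inside the component and cannot touch any join, for a minimum-weight one) and a single pass that finds all qualifying components and runs the $\bigO(mn^2)$ bipartite matching algorithm on them. Your write-up is somewhat more explicit about why conjoiningness is unaffected and about the infeasible boundary case, but there is no substantive difference in approach.
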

\begin{proof}
  The correctness of \autoref{rr:compincell} is easy to prove, since for any perfect conjoining matching~$M'$ for~$G$ we can derive a matching of at most the weight~$\wf(M')$ by matching the vertices in~$G[C]$ according to~$M$. Hence we can derive a matching of weight at most~$ \wf(M') - \wf(M)$ in the graph with~$C$ removed. Obtaining a matching for~$G$ from a matching in the graph~$G$ with~$C$ removed is trivial.

  Applying \autoref{rr:compincell} exhaustively can be done by first finding all connected components~$D_1, \ldots, D_k$ that are contained in one cell in linear time and then computing a minimum-weight perfect matching in the graph~$G[\bigcup_{i = 1}^kD_i]$ in~$\bigO(mn^2)$~time~\cite{EK72}. Then, deleting the affected vertices is possible in linear time.
\end{proof}

\paragraph{Reduction Rule for Maximum Degree Two.}Now let~$G = (V_1 \uplus V_2, E)$ be a bipartite graph with maximum degree two of an instance of \pCBMs{} that is preprocessed with \autoref{rr:deg1} and \autoref{rr:compincell}. In this graph, any degree-one vertices have been deleted and thus each vertex has degree two. It follows that~$G$ consists of connected components each of which is a cycle of even length---because~$G$ is bipartite. Thus every connected component has exactly two perfect matchings. To describe a third reduction rule, we need the following definitions:
\begin{definition}
  For every connected component, that is, every cycle~$D$ contained in~$G$, denote by~$M_1(D)$ a minimum-weight perfect matching of~$D$ with respect to~$\wf$ and denote by~$M_2(D) := E(D) \setminus M_1(D)$, that is, the other perfect matching of~$D$. Furthermore, define
\begin{align*}
  \sigma_1(D) &:= \{j \in J : \exists e \in M_1(D): e \text{ satisfies } j\} \text{,} \\
  \sigma_2(D) &:= \{j \in J : \exists e \in M_2(D): e \text{ satisfies } j\}
\end{align*}
and the \emph{signature}~$\sigma(D)$ of~$D$ as~$(\sigma_1(D), \sigma_2(D))$. We say that two signatures~$\sigma(A), \sigma(B)$ are \emph{equal} and we write~$\sigma(A) \equiv \sigma(B)$, if
\begin{align*}
  &(\sigma_1(A) = \sigma_1(B) \wedge \sigma_2(A) = \sigma_2(B)) \vee {}\\
  &(\sigma_1(A) = \sigma_2(B) \wedge \sigma_2(A) = \sigma_1(B)) \text{.}
\end{align*}
\end{definition}
\begin{rrule}\label{rr:signature}
  Let~$S = \{D_{1}, \ldots, D_{j}\}$ be a maximal set of connected components of~$G$ such that~$\sigma(D_{1}) \equiv \ldots \equiv \sigma(D_{j})$ and~$j \geq 2$. Let~$M^*_1 =\bigcup_{k = 1}^jM_1(D_k)$, let~$D_{l} \in S$ such that~$\wf(M_2(D_{l}))-\wf(M_1(D_{l}))$ is minimum and let~$M^\sim_1 = M^*_1 \setminus M_1(D_l)$.%\bigcup_{1 \leq k \leq j, k \neq l}M_1(D_{k})$.
  \begin{lemenum}
  \item If the matching~$M_1^*$ is conjoining for the join set~$\sigma_1(D_{1}) \cup \sigma_2(D_{1})$, then remove each component in~$S$ from~$G$, remove each join in~$\sigma_1(D_{1}) \cup \sigma_2(D_{1})$ from the join set~$J$, and reduce~$\wf_{max}$ by~$\wf(M_1^*)$.
  \item If the matching~$M_1^*$ is not conjoining for the join set~$\sigma_1(D_{1}) \cup \sigma_2(D_{1})$ remove each component in~$S \setminus \{D_{l}\}$ from~$G$, remove any join in~$\sigma_1(D_{1})$ from the join set~$J$, and reduce~$\wf_{max}$ by~$\wf(M_1^\sim)$. 
  \end{lemenum}

  In either case, update the partition~$P$ accordingly.
\end{rrule}%TODO example
\begin{lemma}
  \autoref{rr:signature} is correct.
\end{lemma}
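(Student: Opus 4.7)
The plan is to analyze the two cases of the rule separately, after first normalizing the signatures of the cycles in $S$. Write $A:=\sigma_1(D_1)$ and $B:=\sigma_2(D_1)$, so that signature equivalence gives $\sigma_1(D_k)\cup\sigma_2(D_k)=A\cup B$ for every $k$. I would first observe that if some $D_k\in S$ has $(\sigma_1(D_k),\sigma_2(D_k))=(A,B)$ while another $D_{k'}\in S$ has the reversed pair $(B,A)$, then $M^*_1$ already contains edges covering every join in $A\cup B$, placing us in Case~1. Hence in Case~2 I may assume without loss of generality that $\sigma_1(D_k)=A$ and $\sigma_2(D_k)=B$ hold for every $k$. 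I also rely on the simple fact that each edge of $D_k$ belongs to exactly one of $M_1(D_k)$, $M_2(D_k)$, so any join in $J$ satisfied by an edge of some $D_k\in S$ must lie in $A\cup B$.

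For Case~1, the forward direction is a direct check: given a solution $M'$ of the reduced instance, the matching $M'\cup M^*_1$ is perfect on the original graph, has weight $\wf(M')+\wf(M^*_1)\le \wf_{max}$, and is conjoining because $M^*_1$ meets every join in $A\cup B$ by the case hypothesis while $M'$ meets every remaining join. For the converse I would take a perfect conjoining $M$ for the original instance and set $M':=M\setminus\bigcup_{k\in S}E(D_k)$. This $M'$ is perfect on the reduced graph, weight-minimality of each $M_1(D_k)$ yields $\wf(M')\le \wf(M)-\wf(M^*_1)$, and $M'$ remains conjoining since every join in $J\setminus(A\cup B)$ satisfied by $M$ must be witnessed by an edge outside $\bigcup_{k\in S}E(D_k)$ and therefore survives in $M'$.

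Case~2 begins analogously. Its forward direction extends any solution $M''$ of the reduced instance to $M:=M''\cup M^\sim_1$; joins in $A$ are covered by $M_1(D_k)$ for any $k\in S\setminus\{l\}$ (such a $k$ exists because $j=|S|\ge 2$), and the remaining joins by $M''$. The backward direction is the main technical obstacle. Given a perfect conjoining $M$ of weight at most $\wf_{max}$, I would again set $M':=M\setminus\bigcup_{k\in S\setminus\{l\}}E(D_k)$. If $M'$ already conjoins $J\setminus A$, then the weight bound $\wf(M')\le \wf(M)-\wf(M^\sim_1)$ follows exactly as in Case~1 and we are done. Otherwise, some $j^*\in J\setminus A$ is satisfied by $M$ but not by $M'$; the witness edge then lies in some $D_{k_0}$ with $k_0\in S\setminus\{l\}$, and since $\sigma_1(D_{k_0})=A$ but $j^*\notin A$, we must have $M\cap E(D_{k_0})=M_2(D_{k_0})$ and $j^*\in B$. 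I would define $M''$ by replacing $M'\cap E(D_l)$ with $M_2(D_l)$: this covers $j^*$ because $B=\sigma_2(D_l)\ni j^*$, and it covers the remaining joins in $J\setminus A$ by the same case distinction (edges outside $\bigcup_{k\in S}E(D_k)$ survive; edges inside $D_k$ for $k\in S\setminus\{l\}$ that witness a join in $J\setminus A\subseteq B$ are handled by $M_2(D_l)$; edges in $D_l$ witnessing joins outside $A$ can only come from $M_2(D_l)$, which is contained in $M''$).

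The hard part is the weight bookkeeping for this modified $M''$. Switching $D_l$ from $M_1(D_l)$ to $M_2(D_l)$ adds at most $\wf(M_2(D_l))-\wf(M_1(D_l))$ to $\wf(M')$ (and nothing at all if $M\cap E(D_l)=M_2(D_l)$ already). On the other hand, removing the cycles $D_k$ for $k\in S\setminus\{l\}$ from $M$ yields a saving of at least $\wf(M^\sim_1)+\bigl(\wf(M_2(D_{k_0}))-\wf(M_1(D_{k_0}))\bigr)$ relative to $\wf(M)$, because $M\cap E(D_{k_0})=M_2(D_{k_0})$ rather than $M_1(D_{k_0})$. The crucial step is that $D_l$ was chosen among the cycles in $S$ to minimize $\wf(M_2(D_k))-\wf(M_1(D_k))$, giving $\wf(M_2(D_l))-\wf(M_1(D_l))\le \wf(M_2(D_{k_0}))-\wf(M_1(D_{k_0}))$; combining these inequalities yields $\wf(M'')\le \wf(M)-\wf(M^\sim_1)\le \wf_{max}-\wf(M^\sim_1)$, as required.
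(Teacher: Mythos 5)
Your proof is correct and follows essentially the same exchange argument as the paper: the two key observations---that in the non-conjoining case all signatures in~$S$ are aligned, i.e.~$\sigma_1(D_k)=A$ and $\sigma_2(D_k)=B$ for every~$k$, and that the minimality of~$\wf(M_2(D_l))-\wf(M_1(D_l))$ lets you trade the ``expensive'' cycle~$D_{k_0}$ for~$D_l$ without increasing the total weight---are exactly the ones the paper's proof relies on. The only difference is organizational: the paper first normalizes the given matching~$M$ (so that at most one component of~$S$ uses its~$M_2$-matching and that component is~$D_l$) and then subtracts~$M^\sim_1$, whereas you subtract first and repair afterwards; the bookkeeping is equivalent.
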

\begin{proof}
  Let~$G = (V_1 \uplus V_2, E)$ be a graph with maximum degree two, let ~$\wf:E \rightarrow [0, \wf_{max}]\cup \{\infty\}$ be a weight function, let~$P = \{C_1, \ldots, C_c\}$ be a vertex partition of~$G$ and let~$J$ be a join set with respect to~$G$ and~$P$. The objects~$G$,~$\wf$,~$\wf_{max}$,~$P$, and~$J$ constitute an instance~$I$ of \pCBMs{}. Furthermore, let the graph~$G'$, the weight function~$\wf$, the maximum weight~$\wf_{max}'$, the vertex partition~$P'$, and the join set~$J'$ with respect to~$G'$ and~$P'$ constitute the instance~$I'$ that is obtained from~$I$ by applying \autoref{rr:signature} with the set~$S = \{D_{1}, \ldots, D_{j}\}$ as defined there.

  Let~$M$ be a perfect~$J$-conjoining matching for~$G$ with~$\wf(M) \leq \wf_{max}$ and assume that the matching~$M_1^* = \bigcup_{k = 1}^jM_1(D_{k})$ is conjoining for the join set~$\sigma_1(D_{1}) \cup \sigma_2(D_{1})$. Then either~$M_1^* \subseteq M$, or we can obtain another perfect~$J$-conjoining matching with weight at most~$\wf(M)$ that satisfies this property. Without loss of generality assume that~$M_1^* \subseteq M$. Then~$M \setminus M_1^*$ is a perfect~$J'$-conjoining matching for~$G'$ with weight~$\wf(M) - \wf(M_1^*) \leq \wf_{max}'$. 

Now assume that~$M_1^*$ is not conjoining for the join set~$\sigma_1(D_{1}) \cup \sigma_2(D_{1})$. Then either
\begin{asparaenum}
\item $M_1^* \subseteq M$ or \label{enu:sig11}
\item there is an integer~$n$ such that~$M_2(D_{n}) \subseteq M$. \label{enu:sig12}
\end{asparaenum}
We first show that, in case~\enuref{enu:sig12}, we may assume without loss of generality that~$n$ is unique and that~$n = l$ as in \autoref{rr:signature}. Otherwise we can find another perfect~$J$-conjoining matching with weight at most~$\wf(M)$ that satisfies this property: Since~$M_1^*$ is not conjoining for the join set~$\sigma_1(D_{1}) \cup \sigma_2(D_{1})$, it holds that
\begin{align*}
  \begin{split}
    \sigma_1(D_{1}) = \ldots = \sigma_1(D_{j}) \text{,}
  \end{split}
&\text{and}
&
  \begin{split}
    \sigma_2(D_{1}) = \ldots = \sigma_2(D_{j}) \text{,}
  \end{split}
\end{align*}
because all signatures of the components in~$S$ are equal by prerequisite of \autoref{rr:signature}. If~$n$ is not unique, there are~$n, m$ such that~$M_2(D_{{n}}), M_2(D_{{m}}) \subseteq M$. However, by definition~$\wf(M_1(A)) \leq \wf(M_2(A))$ and if we substitute~$M_1(D_{m})$ for~$M_2(D_{{m}})$ in~$M$, the resulting matching has at most the same weight and is still~$J$-conjoining because~$\sigma_2(D_{{n}}) = \sigma_2(D_{{m}})$. Hence we can assume that~$n$ is unique. We can also assume that~$n = l$ because by definition of~$l$~\[\wf(M_2(D_{l}))-\wf(M_1(D_{l})) \leq \wf(M_2(D_{n}))-\wf(M_1(D_{n}))\] and thus we can substitute~$M_1(D_{n})$ for~$M_2(D_{n})$ and~$M_2(D_{l})$ for~$M_1(D_{l})$ in the matching~$M$ to obtain a perfect~$J$-conjoining matching of at most the same weight. Consider the matching~$M_1^\sim= \bigcup_{1 \leq k \leq j, k \neq l}M_1(D_{k})$. Both in case~\enuref{enu:sig11} and in case~\enuref{enu:sig12}, when assuming that~$n =l$ is unique,~$M \setminus M_1^\sim$ is a perfect~$J'$-conjoining matching for~$G'$ of weight~$\wf(M) - \wf(M_1^\sim) \leq \wf_{max}'$.

  We now have that if~$I$ is a yes instance then~$I'$ is a yes instance. For the other way round, assume that~$M'$ is a perfect~$J'$-conjoining matching for~$G'$ of weight~$\wf(M') \leq \wf_{max}'$. Assume that each component in~$S$ of~$G$ has been removed in~$G'$ by \autoref{rr:signature}. Then the matching~$M' \cup M_1^*$ for~$G$ is perfect,~$J$-conjoining and of weight~$\wf(M) + \wf(M_1^*) \leq \wf_{max}$. Now assume only the component~$D_{l}$ of the components in~$S$ is still present in~$G'$. Then, the matching~$M \cup M_1^\sim$ is a perfect~$J$-conjoining matching for~$G$ of weight~$\wf(M) + \wf(M_1^\sim) \leq \wf_{max}$.
\end{proof}
\begin{lemma}\label{lem:signatureruntime}
  \autoref{rr:signature} can be applied exhaustively in~$\bigO(n^3)$~time.
\end{lemma}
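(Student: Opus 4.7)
The plan is to enumerate cycles, compute signatures as bit vectors, and then apply the rule by a sort-and-group procedure amortized across iterations.

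After exhaustive application of Reduction Rule~\ref{rr:deg1}, every vertex of $G$ has degree exactly two, so $G$ decomposes into a disjoint union of even cycles. I would enumerate these cycles via depth-first search in $O(n+m)=O(n)$ time, and for each cycle $D$ compute, in a single traversal, the weights of the two alternating edge sets, which identifies $M_1(D)$, $M_2(D)$, and the difference $\wf(M_2(D))-\wf(M_1(D))$ used by the rule; this costs $O(|D|)$ per cycle and $O(n)$ in total.

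Next, for each cycle $D$ I would compute the signature $(\sigma_1(D),\sigma_2(D))$, representing each $\sigma_i(D)$ as a bit vector of length $|J|\le c$. Because an edge of $G$ satisfies at most one join (the unique join, if any, whose two cells contain its endpoints), filling in $\sigma_i(D)$ amounts to walking the $|D|/2$ edges of $M_i(D)$ and setting the appropriate bit after a constant-time cell lookup prepared in $O(n)$ time. Allowing $O(c)$ to initialize each bit vector, all signatures are obtained in $O(nc)$ time overall.

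To apply Reduction Rule~\ref{rr:signature} exhaustively, I would canonicalize each signature pair by ordering $(\sigma_1(D),\sigma_2(D))$ lexicographically and then radix-sort the cycles by canonical signature in $O(nc)$ time, bringing cycles with equivalent signatures into contiguous groups. The rule then fires on every maximal group of size at least two: checking whether $M_1^*$ is conjoining for $\sigma_1(D_1)\cup\sigma_2(D_1)$ reduces to an $O(c)$ bit-vector test, and locating the cycle $D_l$ that minimizes $\wf(M_2(D_l))-\wf(M_1(D_l))$ uses the precomputed differences. Each application removes at least one cycle and possibly some joins from $J$, so there are at most $O(n)$ applications in total; maintaining an auxiliary mapping from each join to the signatures containing it, together with the fact that joins are only deleted and never inserted, lets the updates between successive sort-and-group passes be amortized against the removed cycles and joins, keeping the total update work within $O(nc)$.

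The main obstacle is precisely this amortization, since a naive re-sort after every single application would cost $\Omega(n\cdot nc)$. Once the updates are charged to the deleted joins and cycles, the overall running time is $O(nc)$, and substituting the trivial bound $c\le\binom{n}{2}=O(n^2)$ (at most $n$ cells in the partition, hence at most $\binom{n}{2}$ joins) yields the $O(n^3)$ bound claimed in the lemma.
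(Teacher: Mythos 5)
Your setup (cycles, the two alternating matchings, signatures keyed to the current join set) is fine, and you correctly identify the crux: after the rule fires, joins are deleted from~$J$, the signatures of the \emph{remaining} components shrink, and previously distinct signatures may become equivalent, so the grouping must be redone. The problem is that you then assert this away. Charging the bit-clearing work to the at most~$nc$ deleted (component, join) pairs only pays for updating the signature vectors; it does not pay for \emph{re-detecting} which components now have equivalent signatures. If you re-sort between passes, each pass costs~$\bigO(nc)$ and the number of passes is only bounded by the number of applications, i.e.~$\bigO(n)$, giving~$\bigO(n^2c)=\bigO(n^4)$ once you plug in~$c=\bigO(n^2)$. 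If instead you re-insert each changed component into a map keyed by its signature, every re-insertion costs~$\bigO(c)$ for the comparison/hash of a length-$c$ bit vector, and there can be~$\bigO(nc)$ re-insertions, giving~$\bigO(nc^2)=\bigO(n^5)$. Neither route yields the claimed~$\bigO(nc)$ total, so the final substitution~$c\le\binom{n}{2}$ does not rescue the bound. (A secondary issue: firing on every group found in a single sort violates the rule's maximality requirement, since deleting joins while processing one group can merge the remaining groups.)

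The paper sidesteps all of this with a cruder but sound accounting: a \emph{single} application is implemented from scratch in~$\bigO(n^2)$ time, and since each application deletes at least one component (hence at least four vertices), there are at most~$n$ applications, giving~$\bigO(n^3)$. The reason one application fits into~$\bigO(n^2)$ without your bit vectors is the observation you yourself make but do not exploit: each edge satisfies at most one join, so a component~$D$'s signature has only~$\bigO(|D|)$ entries and the total signature data over all components is~$\bigO(n)$; grouping equal signatures then takes~$\bigO(n\log n)$, and the dominant cost is the~$\bigO(m^2)=\bigO(n^2)$ edge annotation. If you want to keep your incremental scheme, you would need to make the re-grouping genuinely~$\bigO(1)$ amortized per signature change (e.g.\ incrementally maintained hashes with care about collisions), which is substantially more machinery than the lemma requires.
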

\begin{proof}
  To apply \autoref{rr:signature} once, we can first search for a set of components~$S$ as defined there by first finding all connected components in linear time. Then we find out the signature of each connected component. For this, we first compute a minimum-weight perfect matching for every connected component in overall~$\bigO(m)$~time by simply iterating over the edges in each component, alternatingly summing up the edge weights and choosing the lower one of the two values. We annotate every edge with whether it is contained in the minimum-weight matching or not and which join it satisfies, if any, in~$\bigO(m^2)$~time. We then iterate over every edge and add the information saved in the annotation to the signature of the connected component it is contained in. 

Having computed the signatures, we create a map in~$\bigO(n\log(n))$~time that maps every signature present to the list of connected components that have this signature. We then simply iterate over every list present in the map to obtain a maximal list of components that have the same signature or decide that there is no such list with at least two elements. This is possible in~$\bigO(n)$~time. 

The removal of the connected components and joins, the update of~$\wf_{max}$ and the partition~$P$ is then possible in linear time, because the matchings for each component have already been computed and thus the overall running time is~$\bigO(m^2 + n \log n)$. Observe that in graphs with exactly degree two~$m \in \bigO(n)$ and thus we can derive a running time bound in~$\bigO(n^2)$.
  
In any application either no set~$S$ is found and thus the procedure terminates, or at least~4 vertices are deleted---this is the minimum size of a connected component. Hence the procedure can be applied at most~$n$ times and exhaustively applying \autoref{rr:signature} takes~$\bigO(n^3)$~time.
\end{proof}
\begin{observation}
  When \autoref{rr:signature} cannot be applied anymore, the input graph contains at most~$2^{c + 1}$ components, where~$c$ is the size of the join set.
\end{observation}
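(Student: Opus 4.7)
The plan is to leverage the defining property of \autoref{rr:signature}: after exhaustive application, no two distinct components of~$G$ can share an equivalence class of signatures under~$\equiv$. Otherwise, the two offending components would (together with any others sharing their signature) form a maximal set~$S$ with~$j \geq 2$ on which the rule still fires, contradicting exhaustion. Consequently, the number of connected components is at most the number of equivalence classes of signatures under~$\equiv$.

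It then remains to bound the number of such equivalence classes. A signature~$\sigma(D) = (\sigma_1(D), \sigma_2(D))$ is an ordered pair of subsets of the join set~$J$. With~$|J| = c$, there are~$2^c$ choices for each of~$\sigma_1, \sigma_2$, giving at most~$2^c \cdot 2^c = 4^c$ ordered signatures; since~$\equiv$ identifies~$(\sigma_1, \sigma_2)$ with~$(\sigma_2, \sigma_1)$, the number of equivalence classes is at most~$(4^c + 2^c)/2$. Equivalently, one can describe an equivalence class by assigning each join~$j \in J$ one of three labels — \emph{in both matchings}, \emph{in exactly one matching}, or \emph{in neither} — together with an unordered partition of the ``exactly one'' joins into the two matching-sides.

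The main obstacle is sharpening this crude count to the claimed bound~$2^{c+1}$, since for~$c \geq 2$ the naive count~$(4^c+2^c)/2$ already exceeds~$2^{c+1}$. Closing this gap requires exploiting that every remaining component is an even cycle whose two perfect matchings alternate along the cycle, so~$\sigma_1(D)$ and~$\sigma_2(D)$ are far from arbitrary — they are tightly coupled by the order in which the cycle visits the cells of the partition~$P$. The hard part of the proof is turning this coupling into a succinct encoding of equivalence classes by at most~$c+1$ binary bits, for instance by showing that~$\sigma_1(D) \cup \sigma_2(D)$ is determined up to one additional bit encoding the parity/orientation of the cycle relative to the cells. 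Once such an encoding is established, the bound~$2 \cdot 2^c = 2^{c+1}$ follows immediately from Step~1.
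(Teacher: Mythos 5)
Your first step is exactly the paper's: after exhaustive application of \autoref{rr:signature}, every $\equiv$-class of signatures contains at most one connected component, so the number of components is bounded by the number of realizable signature classes. The problem is the second step, and you name it yourself without closing it: you only establish the naive bound of~$(4^c+2^c)/2$ classes (each of~$\sigma_1(D)$ and~$\sigma_2(D)$ is a subset of~$J$, modulo swapping the two), and the promised encoding of a class by~$c+1$ bits---via some coupling of~$\sigma_1(D)$ and~$\sigma_2(D)$ forced by the alternation of the two perfect matchings along the cycle---is never constructed. A proof that ends with ``once such an encoding is established, the bound follows'' has not proved the bound; as written you have shown that at most~$\bigO(4^c)$ components remain, not~$2^{c+1}$.

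To be fair, the paper's own proof is no stronger on this point: it simply asserts that with~$c$ joins ``there are at most~$2^{c+1}$ signatures,'' with no justification, and your count shows that this does not follow from the definition of a signature as an unordered pair of subsets of~$J$ (already for~$c=2$ there are a priori ten $\equiv$-classes versus the claimed eight). So your scepticism is well placed, but neither you nor the paper supplies the structural argument that would rule out the extra classes. Note also that the weaker bound you do prove would still suffice for the downstream search-tree algorithm of \autoref{lem:cbmmaxdeg2tractable}---the number of remaining components only enters the branching factor, so one would get a running time of the form~$2^{\bigO(c^2)}n + n^3$ either way. The honest repair is therefore either to actually exhibit the~$(c+1)$-bit encoding or to restate the observation with the~$\bigO(4^c)$ bound that the counting argument genuinely yields.
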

\begin{proof}
  When there are~$c$ joins in a join set, then there are at most~$2^{c + 1}$ signatures. For each signature, there is at most one connected component when \autoref{rr:signature} is not applicable.
\end{proof}
\begin{lemma}\label{lem:cbmmaxdeg2tractable}
  \pCBMs{} is solvable in~$\bigO(2^{c(c+1)}n + n^3)$~time on graphs with maximum degree two, where~$c$ is the size of the join set.
\end{lemma}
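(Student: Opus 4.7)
The plan is to reduce the instance using the three reduction rules developed above and then solve the reduced instance by a bounded search tree over the remaining joins. First I would apply \autoref{rr:deg1}, \autoref{rr:compincell}, and \autoref{rr:signature} exhaustively; by their respective running-time statements this costs $O(n^3)$ in total when the input graph has maximum degree two (so that $m \in O(n)$). After the preprocessing, \autoref{rr:deg1} guarantees that no vertex of degree one survives; hence every vertex has degree exactly two and the graph is a disjoint union of even cycles. By the preceding observation, at most $2^{c+1}$ such cycles remain.

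Each remaining cycle $D$ admits exactly two perfect matchings, namely $M_1(D)$ of minimum weight and $M_2(D)$. Since only the at most $c$ surviving joins still need to be satisfied, the guiding intuition is that in an optimum solution it suffices to \emph{commit} at most $c$ cycles to a specific matching---the cycles that actually witness the joins---and to use the cheaper $M_1(D)$ on every other cycle without affecting optimality.

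The algorithm is a search tree of depth at most $c$. At each internal node, pick any still-unsatisfied join $j$ and branch over all pairs $(\sigma(D), M_i(D))$ with $j \in \sigma_i(D)$; by the observation on the number of signatures there are at most $2^{c+1}$ such pairs. Every branch commits the corresponding cycle $D$ to the matching $M_i(D)$, while branches contradicting an earlier commitment are pruned. At a leaf, where all joins are satisfied, every uncommitted cycle is matched by its $M_1$, giving a candidate perfect, conjoining matching; the algorithm returns the minimum-weight such leaf.

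The tree has at most $(2^{c+1})^{c} = 2^{c(c+1)}$ leaves and needs only $O(n)$ bookkeeping per node---signatures and $M_1(D), M_2(D)$ have been computed once during preprocessing---yielding $O(2^{c(c+1)} n)$ for the search phase and $O(2^{c(c+1)} n + n^3)$ overall. The main obstacle will be the correctness argument for completing uncommitted cycles with $M_1$: one has to show that if a solution were to flip an uncommitted cycle to $M_2$ in order to satisfy some join, then the same join can already be satisfied in a sibling branch of the search tree where that cycle is committed, so that restricting uncommitted cycles to $M_1$ never misses an optimum.
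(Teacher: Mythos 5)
Your proposal follows essentially the same route as the paper's proof: exhaustive application of \autoref{rr:deg1}, \autoref{rr:compincell} and \autoref{rr:signature} in $\bigO(n^3)$ time, followed by a depth-$c$ search tree that, for each unsatisfied join, branches over the at most~$2^{c+1}$ remaining cycles (and the matching that witnesses the join) and completes all uncommitted cycles with their minimum-weight matching. The correctness concern you flag about completing with~$M_1(D)$ is resolved exactly as you sketch---every (component, matching) witness for each join appears in some branch, so the cheap completion never loses an optimum---and the paper's own proof leaves this point equally implicit.
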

\begin{proof}
  This follows from exhaustively applying \autoref{rr:deg1}, \autoref{rr:compincell}, and \autoref{rr:signature} and then invoking a search tree algorithm. The algorithm chooses one join, branches into choosing any component that contains an edge that satisfies the join, matches the component accordingly and then recurses until every join is satisfied. Since there are at most~$2^{c+1}$ components in the preprocessed graph, every branching-step invokes at most~$2^{c+1}$ recursive calls. The recursion depth is obviously at most~$c$. In every call at most~$\bigO(n)$~time is spent finding components satisfying the chosen join and thus we can derive a running time bound of~$\bigO(2^{c+1})^cn) = \bigO(2^{c(c+1)}n)$ for the search tree algorithm. The preprocessing rules take~$\bigO(n^3)$~time by \autoref{obs:deg1runtime}, \autoref{obs:compincellruntime}, \autoref{lem:signatureruntime}, and by the fact that~$m \in \bigO(n)$ in graphs with degree at most two. Thus the overall running time bound is~$\bigO(2^{c(c+1)}n + n^3)$.
\end{proof}
\paragraph{Perfect Matchings in Graphs with Maximum Degree Two.} Now let~$G = (V_1 \uplus V_2, E)$ be a bipartite graph where each vertex in~$V_1$ has maximum degree two. We show that if~$G$ has a perfect matching, it will be preprocessed by \autoref{rr:deg1} such that each vertex has degree exactly two.
\begin{lemma}\label{lem:perfmatchatmostonecycle}
  If~$G$ has a perfect matching, every connected component of~$G$ contains at most one cycle as subgraph.
\end{lemma}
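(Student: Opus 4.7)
The plan is to bound the number of edges in each connected component by its number of vertices; then a standard fact about cyclomatic complexity gives the conclusion.

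First, I would use the perfect matching to balance the two sides of the bipartition on every component. Let $M$ be a perfect matching of $G$ and let $C$ be any connected component. Since edges of $M$ connect vertices in the same connected component, the restriction $M_C := M \cap E(C)$ is a perfect matching of $G[C]$. In particular, $|V_1 \cap V(C)| = |V_2 \cap V(C)|$, so $|V(C)| = 2\,|V_1 \cap V(C)|$.

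Next, I would estimate $|E(C)|$ by double counting along the $V_1$-side. Every edge of $G$ is incident to exactly one vertex of $V_1$ because $G$ is bipartite, so by the degree bound on $V_1$,
\[
|E(C)| \;=\; \sum_{v \in V_1 \cap V(C)} \deg(v) \;\leq\; 2\,|V_1 \cap V(C)| \;=\; |V(C)|.
\]

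Finally, I would invoke the elementary fact that a connected graph on $k$ vertices has at least $k-1$ edges, and that the cyclomatic number equals $|E| - |V| + 1$. From $|E(C)| \leq |V(C)|$ we obtain cyclomatic number at most $1$, which means $C$ contains at most one cycle as a subgraph. The statement follows.

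There is no real obstacle here; the only subtlety is remembering that restricting a perfect matching to a connected component yields a perfect matching of that component, which is exactly what forces $|V_1 \cap V(C)| = |V_2 \cap V(C)|$ and thus the tight edge bound. Without the perfect matching hypothesis the component could be, for example, a $V_2$-vertex with many $V_1$-leaves attached, giving up to $|V(C)| - 1$ edges for a tree but also allowing higher cyclomatic complexity in denser examples; the matching assumption is what rules these out.
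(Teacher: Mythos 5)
Your proof is correct, but it takes a genuinely different route from the one in the paper. You argue by counting: the perfect matching restricted to a component $C$ forces $|V_1 \cap V(C)| = |V_2 \cap V(C)|$, so $|V(C)| = 2|V_1 \cap V(C)|$; summing degrees over the $V_1$-side (each at most two, and each edge counted exactly once there) gives $|E(C)| \leq |V(C)|$, hence cyclomatic number at most one and at most one cycle. The paper instead argues the contrapositive via Hall's condition (\autoref{the:hallscondition}): assuming two cycles in one component, it connects them by a path, observes the junction vertices must lie in $V_2$ because they have degree three, and exhibits an explicit set $V_1^{cp}$ whose neighborhood has one vertex too few -- and it must then treat the case of non-disjoint cycles separately (``analogously''). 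Your argument buys uniformity and brevity: it needs no case distinction between vertex-disjoint and overlapping cycles, and it uses the perfect-matching hypothesis only through the equality of the two sides of each component. The paper's argument buys an explicit Hall violator, which localizes exactly where a perfect matching fails and ties directly into the matching machinery set up in the preliminaries. Both correctly rely on the standing assumption that vertices in $V_1$ have maximum degree two; your proof would be a fine substitute.
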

\begin{proof}
  We show that if~$G$ contains a connected component that contains two cycles~$c_1, c_2$ as subgraphs, then~$G$ does not have a perfect matching. First assume that~$c_1, c_2$ are vertex-disjoint. Then, there is a path~$p$ from a vertex~$v \in V(c_1)$ to a vertex~$w \in V(c_2)$ such that~$p$ is vertex-disjoint from~$c_1$ and~$c_2$ except for~$v, w$. It is clear that both~$v, w \in V_2$ because they have degree three. Consider the vertices~$V_1^{cp}:=(V(c_1) \cup V(p) \cup V(c_2)) \cap V_1$ and the set~$V_2^{cp}:=(V(c_1) \cup V(p) \cup V(c_2)) \cap V_2$. The set~$V_2^{cp}$ is the set of neighbors of vertices in~$V_1^{cp}$, because they have degree two and thus have neighbors only within~$p$,~$c_1$, and~$c_2$. It is~$|V_1^{cp}| = (|E(c_1)| + |E(p)| + |E(c_2)|) / 2$ since neither of these paths and cycles overlap in a vertex in~$V_1$. However, it is~$|V_2^{cp}| = |V_1^{cp}| - 1$ because~$c_1$ and~$p$ overlap in~$v$ and~$c_2$ and~$p$ overlap in~$w$. This is a violation of Hall's condition---recall the definition of Hall's condition in \autoref{the:hallscondition}---and thus~$G$ does not have a perfect matching.

  The case where~$c_1$ and~$c_2$ share vertices can be proven analogously. (Observe that then there is a subpath of~$c_2$ that is vertex-disjoint from~$c_1$ and contains an even number of edges.)
\end{proof}

\begin{proof}[Proof of \autoref{the:cbmmaxdeg2tractable}]
Consider applying \autoref{rr:deg1} to a graph~$G = (V_1 \uplus V_2, E)$ such that each vertex in~$V_1$ has maximum degree two and such that~$G$ has a perfect matching. This has to yield a graph that is a collection of vertex-disjoint cycles because in every connected component there is at most one cycle as subgraph (\autoref{lem:perfmatchatmostonecycle}). Hence, every component consists of a cycle with a collection of pairwise vertex-disjoint paths incident to it. These paths are completely reduced by \autoref{rr:deg1} and all that remains is either the cycle or nothing. Thus, in order to cope with graphs~$G$ as above, we can modify the algorithm from \autoref{lem:cbmmaxdeg2tractable}: If the application of \autoref{rr:deg1} does not yield a graph that is a collection of vertex-disjoint cycles, we can abort the procedure because it cannot yield a perfect matching. This can be checked in linear time and thus, \autoref{the:cbmmaxdeg2tractable} now directly follows. (Notice that the running time bound of \autoref{lem:cbmmaxdeg2tractable} does not increase, since in graphs~$G$ as above that have a perfect matching it also holds that~$m \in \bigO(n)$.)
\end{proof}

% \begin{enumerate}
% \item is fpt on vertex-disjoint cycles
%   \begin{enumerate}
%   \item any cycle completely contained in one cell can be removed (matched with min weight)
%   \item if there is a cycle that goes over the same join twice with a different segment mod 2, remove the join (any matching of the cycle satisfies join)
%   \item shortcut cycleparts completely contained within cells
%   \item shortcut cycleparts that go over one join multiple times
%   \item follows: every cycle has length at most some function of no of joins
%   \item if there are more than two cycles that go over the same set of joins in the same segment mod 2, keep only the pair which has min weigth when matched so that all the joins are satisfied
%   \item follows: something like for every tuple of joins, there are at most two cycles of length at most some function in no of joins
%   \item follows: exponential kernel 
%   \end{enumerate}
% \item is fpt on bipartite graphs where one partition has max degree 2
%   \begin{enumerate}
%   \item this is because any perfect matching in one component is uniquely defined by matching one single cycle in the component. 
%   \item follows: reduction to disjoint cycles
%   \end{enumerate}
% \item is fpt on graphs consisting of components with constant-size ear-decomposition
%   \begin{enumerate}
%   \item for the last ear in each decomposition, there are exactly two possibilities to be matched
%   \item branch into both, reduce graph, recur -> constant depth search tree
%   \end{enumerate}
% \end{enumerate}

\subsection{The Relationship between Eulerian Extension and Matching} \label{sec:releecbm}

In this section we show that \pCBMs{} parameterized by the size of the join set and \pWMEEs{} parameterized by the number of connected components in the input graph are parameterized equivalent. To this end, we first give a reduction from \pWMEECAs{} to \pCBMs{}. This reduction also yields an efficient algorithm for a restricted variant of \pWMEEs{}. Second, we give a reduction from \pCBMs{} to \pWMEEAs{}. The equivalence of \pWMEEs{} and \pCBMs{} then follows from the reductions given in \autoref{lem:redwmeetowmeea} and \autoref{the:redwmeeatowmee} in \autoref{sec:wmeeslasha}.

%Since by \autoref{lem:redwmeetowmeea}, \autoref{the:redwmeeatowmee}, and \autoref{obs:redwmeeatocla} \pWMEEs{}, \pWMEEAs{}, and \pWMEECLAs{} are parameterized equivalent when parameterized by the number of connected components in the input graph, it suffices to show that \pWMEECLAs{} and \pCBMs{} are equivalent.

%We show that we can solve \WMEEA{} (\WMEEAs) with \CBM{} which is introduced in \autoref{sec:cbm}. For this, we consider three types of hints in the set~$P$: Hints to \ecycle s, hints to \pmpath s of length at least two and hints to \pmpath s of length one. We show that we can find a cycle in a minimum-weight \EE{} of $G$ that corresponds to a \ecycle-hint in~$P$ in polynomial time. We model \pmpath-hints of length one such that they will be realized as matching-edges and use a simple gadget to model \pmpath-hints of length at least two.
\subsubsection{Reducing \pWMEECAs{} to \pCBMs{}}
We first reduce \pWMEECAs{} to \pCBMs{}. In order to simplify our reduction, we reduce from \pWMEECCLAs{} instead (see page~\pageref{def:pWMEECA} in \autoref{sec:wmeeslasha}). We know that \pWMEECCLAs{} and \pWMEECAs{} are equivalent from \autoref{obs:redwmeecatoccla}.

As we have observed in \autoref{obs:walkdichotomy} we have to draw paths between unbalanced vertices in order to make them balanced and to ultimately make the input graph Eulerian. These paths also have to connect all components of the input graph. The basic structure of these paths is made explicit by the advice in \pWMEECCLAs{} and thus we do not have to concern ourselves with finding a suitable order of components for these paths. We simply have to realize every hint to connect the graph and then balance all remaining vertices.

\paragraph{Reduction Outline.}
The basic ideas of our reduction are to use vertices of positive balance and negative balance in an instance of \pWMEECCLAs{} as the two cells of the graph bipartition in a designated instance of \pCBMs{}. Edges between vertices in the new instances represent shortest paths between them that consist of allowed extension arcs in the original instance. Every connected component in the original instance is represented by a cell in the vertex partition in the matching instance and hints are basically modeled by joins.

We proceed with an intuitive description of the reduction and then go into the details in \autoref{con:redwmeeatocbm}. The construction is then followed by a correctness proof. %TODO bsp
For the descriptions, we first need the following definition.
\begin{definition}
  Let~$G$ be a directed multigraph with the connected components~$V_1, \ldots, V_c$ and let~$H$ be a cycle-free advice for~$G$. For every hint~$h \in H$ we define~$\conn(h)= \{i, j\}$, where~$C_i, C_j$ are the components corresponding to the initial and terminal vertices of~$h$.
\end{definition}
\paragraph{Intuitive Description.}
First, consider an instance~$I_\pWMEECCLAs{}$ of \pWMEECCLAs{} that consists of the graph~$G$, the weight function~$\wf: V \times V \rightarrow [1, \wf_{max}]\cup \{\infty\}$ and a cycle-free minimal connecting advice~$H$ that contains only hints of length one. We will deal with longer hints later. We create an instance~$I_\pCBMs{}$ of \pCBMs{} by first defining~$B_0 = (I_G^+ \uplus I_G^-, E_0)$ as a bipartite graph. Here, the set~$E_0$ consists of all edges~$\{u, v\}$ such that~$u \in I_G^+$,~$v \in I_G^-$, and~$\wf(u, v) < \infty$. This serves the purpose of modeling the structure of allowed arcs in the matching instance---we come back to this in \autoref{sec:tractislands}. Second, we derive a vertex partition~$\{V_1', \ldots, V_c'\}$ of~$B_0$ by intersecting the connected components of~$G$ with~$(I_G^+ \uplus I_G^-)$. The vertex-partition obviously models the connected components in the input graph, and the need of connecting them according to the advice~$H$ is modeled by an appropriate join-set~$J_0$, defined as~$\{\conn(h): h \in H\}$. Finally, we make sure that matchings also correspond to \EE s weight-wise, by defining the weight function~$\wf'(\{u, v\})$ for every~$u \in I_G^+, v \in I_G^-$ as~$\wf(u, v)$ with~$\wf_{max}' = \wf_{max}$.

By \autoref{obs:shorttrails} we may assume that every hint in~$H$ of length one is realized by a single arc. Since the advice connects all connected components, by the same observation, we may assume that all other trails in a valid \EE{} have length one (\autoref{obs:shorttrails} also holds for the connected graph obtained by adding the realizations of all hints to the input graph). Finally, by \autoref{lem:smallbalancepp}, we may assume that every vertex has at most one incident incoming or outgoing arc in the extension and, hence, we get an intuitive correspondence between the matchings and \EE s. %If there is a perfect conjoining matching~$M$ of weight at most~$\wf_{max}'$ for~$I_\pWMEECLAs{}$, we can derive an \EE{}~$E$ for~$G$ that heeds the advice~$P$ and is of weight at most~$\wf_{max}$, by simply setting~$E := $
%To get an intuition of what \autoref{con:redwmeeatocbm} is doing, we first look at instances of \WMEECLAs{} whose advice~$P$ only contains hints with paths of length one. In such instances \autoref{con:redwmeeatocbm} terminates with~$j=0$. Because of \autoref{obs:shorttrails}, \autoref{sec:trails} we may assume that hints with paths of length one are realized by a single arc in~$G$. Any vertex in~$I^+_G$ needs at least one additional outgoing arc and analog for vertices in~$I^-_G$. Because of \autoref{lem:smallbalancepp} we may assume that one arc also suffices. This gives the notion of a matching between the vertices in~$I^+_G$ and~$I^-_G$ in the complete graph~$B_0$. However, we also must guarantee that the matching connects every component of~$G$ in order for an Eulerian trail to exist. This is accomplished by turning over to \CBM{} the sets of vertices~$V_1, ..., V_c$ in components of~$G$ as vertex-partition whose cells have to be connected by the matching according to the advice (enforced by~$C_0$).

\begin{figure}
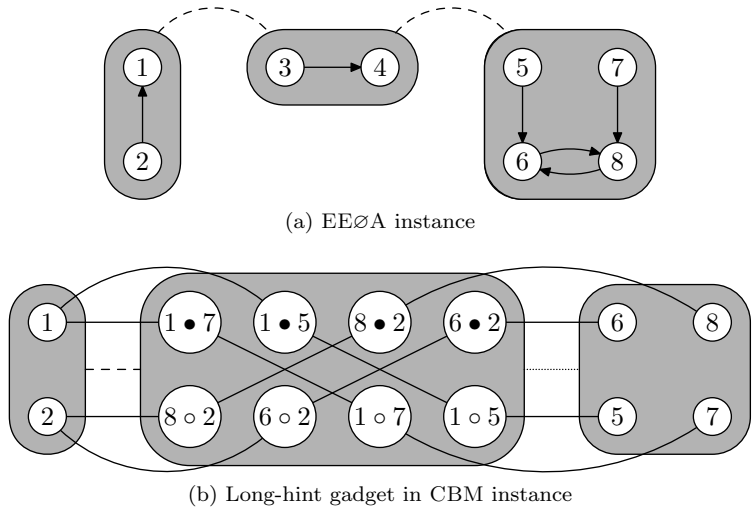

  \begin{center}
    \subfloat[\pWMEECLAs{} instance]{
      \includegraphics{wmeeslasha.1}
      \label{fig:redwmeeatocbma}
    } \\
    \subfloat[Long-hint gadget in \pCBMs{} instance]{
      \includegraphics{wmeecbm.1}
      \label{fig:redwmeeatocbmb}}
    \caption{Example for the long-hint gadget used in \autoref{con:redwmeeatocbm}, explained in the corresponding intuitive description.}
    \label{fig:redwmeeatocbm}
  \end{center}
\end{figure}
To model hints of length at least two, we utilize gadgets similar to the one shown in \autoref{fig:redwmeeatocbm}. On the top, an instance~$I_\pWMEECLAs{}$ is shown, consisting of a graph with three connected components and an advice that contains a single hint~$h$ (dashed lines). Below in \autoref{fig:redwmeeatocbmb} a part of an instance of \pCBMs{} is shown, which comprises the cells that correspond to the initial and terminal vertices of~$h$ and a gadget to model~$h$. The gadget consists of some new vertices which are put into a new cell which is connected by two joins (dashed and dotted lines) to the cells corresponding to the initial and terminal vertices of~$h$.

The gadget comprises two vertices ($u \circ v$ and $ u \bullet v$) for every pair~$(u, v)$ of vertices with one vertex in the component the hint starts and one in the component the hint ends. The vertices~$u \circ v$ and $u \bullet v$ are adjacent and each of these two vertices is connected with one vertex of the pair it represents. The edge~$\{u \bullet v, u\}$ is weighted with the cost it takes to connect~$u, v$ with a path~$p$ such that~$\meta{G}{p}=h$ that is, a path that realizes~$h$. The other edges have weight~$0$. Intuitively these three edges in the gadget represent one concrete realization of~$h$. If~$u \circ v$ and~$u \bullet v$ are matched, this means that this specific path does not occur in a designated \EE{}. However, by adding the vertices of the gadget as cell to the vertex partition and by extending the join set to the gadget, we enforce that there is at least one outgoing edge that is matched. If~$v \bullet u$ is matched with~$v$, then~$v \circ u$ must be matched with~$u$ and vice versa, otherwise the matching could not be perfect. This introduces an edge to the matching that has weight corresponding to a path that realizes~$h$.

\begin{construction}
  \label{con:redwmeeatocbm}
  Let the directed multigraph~$G=(V,A)$, the weight function~$\wf: V \times V \rightarrow [1, \wf_{max}]\cup \{\infty\}$ and the advice~$H$ constitute an instance of \pWMEECCLAs . Let~$V_1, \ldots, V_c$ be the connected components of~$G$.
  
  Let $H^{=1}$ be the set of hints of length one in~$H$ and let~$H^{\geq 2}$ be the set of hints in~$H$ that have length at least two. Define~$J_0$ by the set~$\{\conn(h): h \in H^{=1}\}$. Let~$W^1_0 := I_G^+$,~$W^2_0 := I_G^-$, and let~$B_0 =(W^1_0 \uplus W^2_0, E_0)$ be a bipartite graph where~\[E_0 := \{\{u, v\} : u \in I_G^+ \wedge v \in I_G^- \wedge \wf(u, v) < \infty\} \text{.}\] Define~$V'_i := V_i \cap (I^+_G \cup I^-_G)$,~$1 \leq i \leq c$, and~$\wf'_0(\{u, v\}):=\wf(u, v)$ where~$\{u,v\} \in E, u \in I_G^+$.

  Let $h^{\geq 2}_1, \ldots, h^{\geq 2}_j$ be the hints in~$H^{\geq 2}$. Inductively define~$B_k$,~$V'_{c+k}$,~$\wf'_k$ and~$J_k$,~$1 \leq k \leq j$, as follows: Let~$\conn(h^{\geq 2}_k)=\{o, p\}$. Introduce the vertex sets 
\begin{align*}
  U_1 & := \{v \circ u : v \in I^+_G \cap V_o \wedge u \in I^-_G \cap V_p \wedge \wf(\minpath(G, \wf, h^{\geq 2}_k, v, u)) < \infty\} \cup {} \\
      & \phantom{{}:={}} \{v \circ u : v \in I^-_G \cap V_o \wedge u \in I^+_G \cap V_p \wedge \wf(\minpath(G, \wf, h^{\geq 2}_k, u, v)) < \infty\} \text{,}
\end{align*}
and~$U_2 := \{v \bullet u : v \circ u \in U_1\}$. Introduce the edge sets
\begin{align*}
  E_k^1 &:= \{\{v \circ u, v\} : v \in I^-_G \wedge v \circ u \in U_1\} \text{,} \\
  E_k^2 &:= \{\{v \bullet u, v\} : v \in I^+_G \wedge v \bullet u \in U_2\} \text{, and} \\
  E_k^3 &:= \{\{v \circ u, v \bullet u\}: v \circ u \in U_1 \wedge v \bullet u \in U_2 \} \text{.}
\end{align*}
Set~$E_k := E_k^1 \cup E_k^2 \cup E_k^3$, and set the graph
\begin{align*}
  B_k & := ((W^{1}_{k - 1} \cup U_1) \uplus (W^2_{k - 1} \cup U_2), E_{k-1} \cup E_k) \text{,}
\end{align*}
set~$V'_{c+k} := U_1 \cup U_2$, set~$J_k := J_{k-1} \cup \{\{o, c+k\}, \{p, c+k\}\}$ and the weight-function as follows:  
\[\wf'_k(\{u, v\}) := \begin{cases}
  \wf'_{k-1}(\{u, v\}), & \{u, v\} \in E_{k-1} \\
  0, & \{u, v\} \in E_k^1 \cup E_k^3 \\
  \wf(\minpath(G, \wf, h^{\geq 2}_k, v, w)), & \{u, v\} = \{v \bullet w, v\} \in E_k^2
\end{cases}
\]
Then the graph~$B_j$, the weight function~$\wf'_j$, the vertex partition~$P := \{V_1, ..., V_{c+j}\}$ and the join set~$C_j$ constitute an instance of \pCBMs .% $$\matchinst(G,\wf, \wf_{max}, P) := (B_j, \wf'_j, \wf_{max}, V_1,...,V_{c+j},C_j) $$
\end{construction}

For the remainder of this section, let the directed multigraph~$G=(V,A)$, the weight function~$\wf: V \times V \rightarrow [1, \wf_{max}]\cup \{\infty\}$ and the cycle-free minimal connecting advice~$H$ constitute an instance of \pWMEECCLAs{} and let the bipartite graph~$B := B_j$, the weight function~$\wf' := \wf'_j$ with the maximum weight~$\wf_{max}$, the vertex partition~$P$ and the join set~$J := J_j$ as in \autoref{con:redwmeeatocbm} constitute an instance of \pCBMs{}.

\begin{lemma}
  \label{lem:redwmeeatocbmtransfer}
  Let~$E$ be an \EE{} for~$G$ that heeds the advice~$H$. Then there is a perfect conjoining matching~$M$ for~$B$ with $\wf'(M) \leq \wf(E)$.
\end{lemma}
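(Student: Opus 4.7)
The plan is to construct $M$ by first simplifying $E$ into a canonical decomposition and then reading off matching edges directly from this decomposition.

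\textbf{Simplifying the extension.} Chaining \autoref{obs:simpletrails}, \autoref{obs:shorttrails}, and \autoref{obs:minpathinee}, I replace $E$ by an extension $E^*$ for $G$ of weight at most $\wf(E)$ that still heeds $H$ and has the following structure: $E^*$ decomposes into edge-disjoint paths and cycles; each length-one hint in $H^{=1}$ is realized by a single arc of $E^*$; and each hint $h^{\geq 2}_k$ in $H^{\geq 2}$ is realized by a path $\minpath(G, \wf, h^{\geq 2}_k, x, y)$ for suitable vertices $x \in I_G^+$, $y \in I_G^-$ lying in the endpoint components of the hint. Because \autoref{trans:sb} has been applied, every vertex has balance at most one in absolute value, so after discarding any superfluous cycles each vertex of $I_G^+$ is the initial vertex of exactly one path in the decomposition, and analogously for $I_G^-$.

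\textbf{Reading off the matching.} For every length-one path $(x, y)$ in $E^*$, include the edge $\{x, y\} \in E_0$ in $M$; here the endpoints either both lie in one connected component (if the arc realizes no hint) or in the two components of a length-one hint in $H^{=1}$. For every length-$\geq 2$ realization $\minpath(G, \wf, h^{\geq 2}_k, x, y)$, include in $M$ the two gadget edges connecting the pair labels $x \bullet y \in U_2$ and $x \circ y \in U_1$ to $x$ and $y$ respectively; by the weight function, these two edges carry a combined weight of $\wf(\minpath(G, \wf, h^{\geq 2}_k, x, y))$. Finally, for every pair label $w \circ z \in U_1$ in a gadget not used above, include the intra-gadget edge $\{w \circ z, w \bullet z\} \in E_k^3$ of weight zero.

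\textbf{Verifying the properties.} That $M$ is a matching is immediate from the unique-path property at unbalanced vertices and from the disjointness of used versus unused gadget pairs. Perfection holds because every vertex of $I_G^+ \cup I_G^-$ is covered by its unique incident path-endpoint edge and every gadget vertex is covered either through its pair endpoint or through its intra-gadget partner. For the conjoining property, each length-one hint's join in $J_0$ is satisfied by the single arc realizing that hint across two components; for each hint $h^{\geq 2}_k$ with $\conn(h^{\geq 2}_k) = \{o, q\}$, the two chosen gadget edges together connect $V'_{c+k}$ to both $V'_o$ and $V'_q$, satisfying the joins $\{o, c+k\}, \{q, c+k\} \in J_k$. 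Summing weights then gives $\wf'(M) = \wf(E^*) \leq \wf(E)$, since all intra-gadget $E_k^3$-edges contribute zero and the remaining matching edges carry exactly the costs of the corresponding arcs and $\minpath$ realizations in $E^*$.

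\textbf{Main obstacle.} The delicate point is the simplification stage: producing a single $E^*$ that enjoys all four structural properties simultaneously requires careful chaining of the shortcutting observations, and one must check that each modification preserves both the heeding of $H$ and the balance conditions so that the path-to-endpoint assignment driving the matching construction remains unambiguous.
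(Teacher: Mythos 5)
Your construction is essentially the one the paper uses: gadget edges carrying the $\minpath$ weight for each long hint, direct edges for length-one hints, zero-weight intra-gadget edges for the unused pairs, and direct edges between the endpoints of the remaining paths of the extension; the paper merely works with $E$ itself and the inequality $\wf(s) \geq \wf(\minpath(G,\wf,h,u,v))$ from \autoref{obs:minpathinee} instead of first canonicalizing to $E^*$. The one place where your write-up is less robust is the treatment of paths in the extension that realize no hint: you only read off matching edges from \emph{length-one} paths, so a non-hint path of length two or more---which \autoref{obs:shorttrails} by itself does not exclude---would leave its two unbalanced endpoints unmatched and break perfection. The paper sidesteps this by matching the initial and terminal vertices of \emph{every} remaining path (these are unbalanced by \autoref{obs:walkdichotomy}) via the direct edge of $E_0$, whose weight is at most the path's weight thanks to shortest-path preprocessing (\autoref{trans:spp}); alternatively, your simplification stage can be strengthened to shortcut all non-hint paths down to single arcs, which is safe precisely because the realizations of the connecting advice already keep the augmented graph connected. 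With either patch your argument coincides with the paper's proof.
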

\begin{proof}
  We construct the matching successively by first looking at every long-path gadget in~$B$ and then matching the remaining vertices.

  Consider the cell~$V'_{c + k} \in P$ for $k>0$. There are two joins~$\{c+k, o\}$ and~$\{c+k, p\}$ in~$J$. Thus, there is a path hint~$h$ from~$V_o$ to~$V_p$ in~$H$. This means that, there is a path~$s$ in~$E$ that starts in a vertex~$v \in V$ in the component~$V_o$ and ends in a vertex~$u \in V$ in~$V_p$. The weight~$\wf(s)$ is at least~$\wf(\minpath(G,\wf,h,u,v))$ (\autoref{obs:minpathinee}). Thus we may match~$u \bullet v$ with~$v$, $u \circ v$ with~$u$ (this costs weight~$\wf(\minpath(G,\wf,h,u,v))$) and every other pair~$w \bullet x$ and~$w \circ x$ in~$V'_{c + k}$ with each other (this costs weight~$0$). Matching like this, we obtain a matching for the long-hint gadget of~$h$ that fulfills its two joins and is perfect. The weight of the matching is at most the realization of~$h$ in~$E$. 

Because of \SPP{} (\autoref{trans:spp}) and \autoref{obs:metaedgedisjoint} we may assume that there is a set of paths in~$E$ that is edge-disjoint and realizes all hints in~$H$ (otherwise we may obtain an \EE{} of at most the same weight that has this property). Because of this, we may find a matching~$M^{\geq 2}$ for~$B$ that satisfies the joins of every long-hint gadget and is perfect with respect to the vertex set of each long-hint gadget---as in the previous paragraph, iterated for every gadget. Furthermore,~$\wf'(M^{\geq 2})$ is lower than the weight of all paths in~$E$ that realize hints of length at least two in~$H$. %We remove the realizations of the hints from~$E$ and proceed.

Now it is easy to extend~$M^{\geq 2}$ to a conjoining matching~$M^{\geq 1}$ for~$B$ and~$J$ just by adding matching edges between vertices that realize hints of length one in~$E$. We may assume by \autoref{obs:shorttrails} that each hint of length one is realized by a single arc in~$E$. The weight of matching edges is exactly the cost of the direct arc between the corresponding vertices. Because of this, we maintain that $\wf'(M^{\geq 1})$~is at most the weight of all paths in~$E$ that realize hints. %Again we remove the realizations of the hints from~$E$ and proceed.

Finally, we have to extend~$M^{\geq 1}$ to a perfect matching~$M$ by matching the remaining non-gadget vertices. We can do this by looking at paths in~$E$ that start and end in the vertices in~$G$, corresponding to still unmatched vertices in~$B$. A set of such paths must exist, because each such vertex has at least one incident arc in~$E$ and because, by \autoref{obs:walkdichotomy}, maximal-length open trails in \EE s start and end in unbalanced vertices. The edges between initial and terminal vertices of those paths in~$B$ have at most the weight of such a path (because of \SPP{} and because they have weight corresponding to the direct arc). Thus, we can add those edges to~$M^{\geq 1}$, obtaining an edge set~$M$. This set is a matching for~$B$ that is perfect, conjoining and~$\wf'(M) \leq \wf(E)$.
\end{proof}
%\begin{observation}
%  \label{obs:matchingjoinsetsize}
%  It is~$|V(B)|, |E(B)| \in \bigO(|H||V|^2)$ and~$|C| \in \bigO(|H|)$.
%\end{observation}
\begin{lemma}
  \label{lem:redwmeeatocbmbacktransfer}
  Let $M$ be a perfect conjoining matching for~$B$. We can construct %in time~$\bigO(|H||V|^2(|V|^2 + |H|))$ 
an \EE{}~$E$ for~$G$ that heeds the advice~$H$ such that~$\wf(E) = \wf'(M)$.
\end{lemma}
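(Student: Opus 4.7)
The plan is to invert the construction used in \autoref{lem:redwmeeatocbmtransfer}: translate each matching edge back into arcs of an \EE{}, handling the long-hint gadgets carefully. I would partition $M$ into $M_0 := M \cap E_0$ (matching edges between original vertices) and, for each long-hint gadget $V'_{c+k}$ with $1 \leq k \leq j$, the set $M_k$ of matching edges incident to $V'_{c+k}$.

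For each edge $\{u, v\} \in M_0$ with $u \in I_G^+$ and $v \in I_G^-$, I would add the arc $(u, v)$ to $E$; by the definition of $\wf'_0$ this preserves weight edge-for-edge. For each long-hint gadget $V'_{c+k}$, the fact that $M$ satisfies the joins $\{o, c+k\}$ and $\{p, c+k\}$ means $M_k$ contains at least one edge to $V'_o$ and at least one to $V'_p$. The gadget is constructed so that each pair $v \circ u, v \bullet u$ is either matched internally via the zero-weight edge $\{v \circ u, v \bullet u\} \in E_k^3$, or matched fully externally, in which case the two gadget vertices are each matched to their unique external neighbors and together encode the realization of $h^{\geq 2}_k$ by $\minpath(G, \wf, h^{\geq 2}_k, \cdot, \cdot)$. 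For every externally matched pair, I would add the arcs of the corresponding $\minpath$ to $E$.

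Next I would verify three properties of $E$. First, $E$ balances every vertex in $G$: since $M$ is perfect, every vertex in $I_G^+ \cup I_G^-$ is matched exactly once and therefore receives exactly one incident arc of the appropriate orientation in $E$ (either directly from $M_0$ or as an endpoint of an extracted $\minpath$); combined with the preprocessing by \autoref{trans:sb}, all vertices in $G + E$ are balanced. Second, $E$ heeds $H$: each short hint in $H^{=1}$ corresponds to a join in $J_0$, which is satisfied in $M$ by a direct edge $\{u, v\}$ whose translated arc realizes the hint by \autoref{obs:shorttrails}, and each long hint is explicitly realized by (at least one) extracted $\minpath$. Since $H$ is a minimal connecting advice, these realizations together connect every pair of components of $G$, so $G + E$ is Eulerian by \autoref{the:eulerian}. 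Third, $\wf(E) = \wf'(M)$: edges in $M_0$ have weight equal to their translated arcs, each externally matched gadget pair contributes $\wf(\minpath)$ on both sides (the external edge in $E_k^2$ carries this weight while the other external edge in $E_k^1$ and all internal edges in $E_k^3$ have weight zero), and all internally matched gadget pairs contribute zero on both sides.

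The main obstacle is the gadget analysis: one must verify that, restricted to a single long-hint gadget, every perfect matching matches each pair $(v \circ u, v \bullet u)$ either entirely internally or entirely externally. This follows from the observation that each of $v \circ u, v \bullet u$ has exactly two incident edges in $B$---its internal edge and at most one external edge---so matching one of them externally forces the other to be matched externally as well (its only remaining neighbor). Together with the satisfied joins forcing at least one external pair in each $M_k$, this cleanly partitions $M_k$ into internal and external pairs, and the remaining bookkeeping mirrors the forward direction of \autoref{lem:redwmeeatocbmtransfer}.
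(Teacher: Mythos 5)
Your proposal is correct and follows essentially the same route as the paper's proof: translate non-gadget matching edges into direct arcs, translate externally matched gadget pairs into the corresponding $\minpath$ realizations, and then verify weight equality, balance via the perfectness of~$M$ and \autoref{trans:sb}, and connectivity via the heeded connecting advice. Your explicit degree-two forcing argument inside the gadgets (each pair is matched entirely internally or entirely externally) is a detail the paper leaves implicit in its intuitive description, but it does not change the approach.
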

\begin{proof}
  We simply look at every matching edge that has non-zero weight and add a corresponding path to a designated \EE{}~$E$ of~$G$: For non-gadget matching edges (edges that match vertices in~$V'_1, \ldots, V'_c$) the corresponding path is the direct arc between the two vertices in~$G$. For edges that match a vertex~$v$ in a cell~$V'_o$,~$1 \leq o \leq c$ and a vertex~$u \bullet v \in V'_{c + k}$,~$1 \leq k \leq j$, where~$u \in V'_p, 1 \leq p \leq c$, the corresponding path is~$\minpath(G, \wf, h_k, u, v)$. Here, $h_k$~is the path in~$H$ that lead to the introduction of~$V'_{c+k}$ in \autoref{con:redwmeeatocbm}.

%  For every matching-edge we have to check its weight ($\bigO(1)$), check if its vertices are in $G$ (at most $\bigO(|V|)$) and if not, find a path~$p_k$ in~$H$ ($\bigO(|H|)$, every path is uniquely identified by its initial and terminal vertex) and compute an instance of $\minpath$ ($\bigO(|V|^2)$). Thus we get $\bigO(|G|(|V|^2 + |H|))$. Since~$|G| \in \bigO(V(B))$ and because of \autoref{obs:matchingjoinsetsize} we can derive the bound~$\bigO(|H||V|^2(|V|^2 + |H|))$. 

  We immediately see that~$\wf(E) = \wf'(M)$. Also, it is clear that every hint of length one in~$H$ is realized in~$E$ because every hint~$h^1$ of length one leads to the pair~$\conn(p^1)$ in~$J$. Hints~$p^{\geq2}$ of length two are also realized, because every such path leads to a cell~$V'_{c + k}$,~$1 \leq k \leq j$ and also leads to the corresponding joins~$\{o, c+k\}$ and~$\{p, c+k\}$ in~$J$, where~$\{o, p\} = \conn(h^{\geq 2})$. Thus,~$E$ heeds the advice~$H$. Since~$M$ is a perfect matching, every unbalanced vertex in~$G$ is the initial or terminal vertex of exactly one path added to~$E$ in the above paragraph. %In every cell~$V'_{c + k}$,~$1 \leq k \leq j$, if there is a vertex~$u \bullet v$ that is matched with the vertex~$u \notin V'_{c+k}$, then~$u \circ v$ is matched with the vertex~$v$---otherwise~$M$ could not be perfect. This means that every vertex in~$I^+_G \cup I^-_G$ also has at most one arc incident in~$F$. Since~$M$ is perfect, every vertex in~$I^+_G$ also has at least one outgoing arc in~$E$ and analogous for vertices in~$I^-_G$. 
By \autoref{lem:smallbalancepp} we may assume that this suffices to make every vertex in~$G + E$ balanced. Also, $G + E$~is connected, because~$E$ heeds the advice~$H$.
\end{proof}
\begin{lemma}
  \label{con:redwmeeatocbmruntime}
  \autoref{con:redwmeeatocbm} is computable in~$\bigO(|H|n^4 + m)$~time.
\end{lemma}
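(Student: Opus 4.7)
The plan is to bound the running time of \autoref{con:redwmeeatocbm} by analyzing it in three phases: (i) setting up the initial bipartite graph~$B_0$, vertex partition~$\{V'_1,\dots,V'_c\}$, weight function~$\wf'_0$, and join set~$J_0$ coming from the length-one hints; (ii) iterating through the long hints~$h^{\geq 2}_1,\dots,h^{\geq 2}_j$ and building, for each one, the corresponding long-hint gadget; and (iii) combining everything into the final output. The main work lies in phase~(ii) and the overall bound will be driven by the total number of $\minpath$~invocations.

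For phase~(i), I would first compute the connected components~$V_1,\dots,V_c$ of~$G$ in~$\bigO(n+m)$~time by a standard search, then identify the sets~$I^+_G$ and~$I^-_G$ and the intersections~$V'_i = V_i \cap (I^+_G \cup I^-_G)$ in~$\bigO(n+m)$~time. Building~$B_0$ amounts to iterating over the~$\bigO(n^2)$~pairs in~$I^+_G \times I^-_G$, checking~$\wf(u,v) < \infty$ and setting~$\wf'_0(\{u,v\}) := \wf(u,v)$, for a cost of~$\bigO(n^2)$. The set~$J_0$ is assembled by scanning~$H^{=1}$ in~$\bigO(|H|)$~time.

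The key step is phase~(ii). For each hint~$h^{\geq 2}_k$, with~$\conn(h^{\geq 2}_k) = \{o, p\}$, the gadget uses vertex pairs indexed by~$(v,u)$ with~$v \in V_o$ and~$u \in V_p$ (or vice versa) restricted to~$I^\pm_G$. There are~$\bigO(n^2)$~such pairs, and for each one we need one call of~$\minpath(G,\wf,h^{\geq 2}_k,\cdot,\cdot)$, both to decide membership in~$U_1$ and to set the weight on the corresponding edge of~$E_k^2$. By \autoref{lem:minpath} each call runs in~$\bigO(n^2)$~time, yielding~$\bigO(n^4)$~time per hint. Creation of the vertex sets~$U_1$,~$U_2$, edge sets~$E_k^1, E_k^2, E_k^3$, the new cell~$V'_{c+k}$, and the two additional joins added to~$J_k$ is all linear in the size of these sets, hence~$\bigO(n^2)$, and is subsumed by the~$\minpath$~cost. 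Since~$|H^{\geq 2}| \leq |H|$, phase~(ii) takes~$\bigO(|H|n^4)$~time in total.

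Phase~(iii) is merely assembling pointers to the already constructed objects, which is linear in their size and thus dominated by the previous bounds. Summing the three phases yields
\[\bigO(n+m) + \bigO(n^2) + \bigO(|H|) + \bigO(|H|n^4) \;\subseteq\; \bigO(|H|n^4 + m)\text{,}\]
as claimed. The only subtle point I expect is the per-hint cost analysis: one must observe that both the definition of~$U_1$ and the weights on~$E_k^2$ share the same~$\bigO(n^2)$~many~$\minpath$~calls, so the~$\bigO(n^2)$ per-call cost of \autoref{lem:minpath} is paid only once per pair and does not multiply across the three edge sets.
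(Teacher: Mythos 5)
Your proposal is correct and follows essentially the same approach as the paper's proof: it accounts for the initial instance in $\bigO(n^2+m)$ time, charges $\bigO(n^4)$ per long hint for the $\bigO(n^2)$ many $\minpath$ calls of cost $\bigO(n^2)$ each (\autoref{lem:minpath}), and observes that the computed $\minpath$ values are reused for the weight function so the per-pair cost is paid only once. No gaps.
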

\begin{proof}
  Computing~$B_0$ takes~$\bigO(n^2)$~time. To compute~$J_0$ one needs~$\bigO(|H|)$~time by iterating over every path in~$H$. Computing the initial partition~$\{V'_1, \ldots,V'_c\}$ takes~$\bigO(n + m)$~time and the initial weight function~$\wf'_0$ can also be computed within this time. Hence, creating the initial instance is possible in~$\bigO(n^2 + m)$~time.
 
  Regarding adding the gadget for one path in~$H$, to compute the sets~$U_1$ and~$U_2$,~$\bigO(n^4)$~time is suffices, because~$n^2$ instances of~$\minpath$ have to be computed, each taking~$\bigO(n^2)$~time (\autoref{lem:minpath}). There are only three edges in the gadget for every vertex~$v \in U_1$, thus computing the edge sets does not increase the running time bound. For the weight function we can reuse the values of~$\minpath$ computed for every pair of vertices~$v \in I_G^+, u \in I_G^-$ and thus we can conclude an overall running time bound of~$\bigO(|H|n^4 + m)$.
\end{proof}
Now the following theorem follows:
\begin{theorem}\label{the:redwmeecatocbm}
  \pWMEECA{} is polynomial-time many-one reducible to \pCBM . The corresponding reduction function is a parameterized reduction with respect to the parameters number of components in the graph of \pWMEECAs{} and join set size in \pCBMs{}.
\end{theorem}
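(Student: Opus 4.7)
The plan is to combine the three lemmas already established in this section with the equivalence of \pWMEECAs{} and \pWMEECCLAs{} to derive the theorem. First I would invoke \autoref{obs:redwmeecatoccla} to reduce the given instance of \pWMEECAs{} in polynomial time to an equivalent instance of \pWMEECCLAs{} without increasing the number of connected components or the sum of positive balances. Then I would apply \autoref{con:redwmeeatocbm} to obtain an instance of \pCBMs{}. Correctness then follows by chaining \autoref{lem:redwmeeatocbmtransfer} (showing that an \EE{} heeding the advice yields a perfect conjoining matching of at most the same weight) and \autoref{lem:redwmeeatocbmbacktransfer} (showing the converse). Polynomial-time computability of both the pre-processing step and \autoref{con:redwmeeatocbm} is ensured by \autoref{obs:redwmeecatoccla} and \autoref{con:redwmeeatocbmruntime}, respectively.

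The remaining task is to argue the parameter bound. Let~$c$ be the number of connected components in the input multigraph~$G$. By \autoref{obs:hintsinadvice}, any minimal connecting advice~$H$ on~$G$ contains at most~$c$ hints, and this upper bound is preserved by the reduction of \autoref{obs:redwmeecatoccla}. Inspecting \autoref{con:redwmeeatocbm}, the join set~$J_0$ contains one join per length-one hint, and each iteration~$k \geq 1$ adds exactly two joins~$\{o,c+k\}$ and~$\{p,c+k\}$ for the~$k$-th long hint. Since~$|H^{=1}| + |H^{\geq 2}| = |H| \leq c$, we get~$|J| \leq |H^{=1}| + 2|H^{\geq 2}| \leq 2c$, which is polynomial in~$c$. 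Hence the reduction maps instances parameterized by the number of components in \pWMEECAs{} to instances parameterized by the join set size in \pCBMs{} while bounding the output parameter by a polynomial in the input parameter, as required for a parameterized polynomial-parameter many-one reduction.

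I do not anticipate a real obstacle here, because all the heavy lifting has been done in the preceding lemmas and in \autoref{con:redwmeeatocbm}. The only care needed is to explicitly track how the parameter transforms through the two-step reduction (first to \pWMEECCLAs{}, then to \pCBMs{}) and to verify that the join-set bound of~$2c$ indeed falls out of the construction, rather than growing with the number of vertices. The weight-preserving correspondence established in the transfer lemmas together with the runtime bound from \autoref{con:redwmeeatocbmruntime} then directly yields the theorem statement.
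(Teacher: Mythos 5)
Your proposal is correct and follows essentially the same route as the paper: reduce \pWMEECAs{} to \pWMEECCLAs{} via \autoref{obs:redwmeecatoccla}, apply \autoref{con:redwmeeatocbm}, invoke \autoref{lem:redwmeeatocbmtransfer} and \autoref{lem:redwmeeatocbmbacktransfer} for correctness, \autoref{con:redwmeeatocbmruntime} for the running time, and \autoref{obs:hintsinadvice} together with the at-most-two-joins-per-hint observation for the parameter bound. Your explicit accounting $|J| \leq |H^{=1}| + 2|H^{\geq 2}| \leq 2c$ is just a slightly more detailed version of the paper's argument.
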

\begin{proof}
  By \autoref{obs:redwmeecatoccla}, there is a polynomial-time many-one reduction from \pWMEECAs{} to \pWMEECCLAs{}. This reduction at most decreases the number of components in the input graph. By \autoref{lem:redwmeeatocbmtransfer} and \autoref{lem:redwmeeatocbmbacktransfer} there is a many-one reduction from \pWMEECCLAs{} to \CBMs{} . Since the construction is polynomial-time computable (\autoref{con:redwmeeatocbmruntime}), since for every hint in the advice there are at most two joins, and since the number of hints is bounded by the number of components in the input graph to \pWMEECAs{} (\autoref{obs:hintsinadvice}) it follows that \autoref{con:redwmeeatocbm} is a parameterized polynomial-time many-one reduction.
\end{proof}
\begin{corollary}
  \pWMEE{} is parameterized Turing reducible to \pCBM{} with respect to the parameters number of components in input graph and join set size.
\end{corollary}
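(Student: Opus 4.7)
The plan is to obtain the corollary by composing the two parameterized reductions already established in this section. Specifically, Lemma~\ref{lem:redwmeetowmeea} gives a parameterized Turing reduction from \pWMEE{} to \pWMEECA{} with respect to the parameter~$c$ (number of components in the input graph), and Theorem~\ref{the:redwmeecatocbm} gives a polynomial-time parameterized many-one reduction from \pWMEECA{} to \pCBM{} that maps the parameter~$c$ to the join set size. Composing these two should immediately give the claim.

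Concretely, I would proceed as follows. Given an instance~$I$ of \pWMEE{} with parameter~$c$, I run the Turing-reduction algorithm of Lemma~\ref{lem:redwmeetowmeea}. That algorithm normally queries an oracle for \pWMEECA{} on instances~$I'$ whose number of components is bounded by a function of~$c$ (by inspection of the reduction, the component count is preserved). I replace each such oracle call to \pWMEECA{} by the following: apply \autoref{con:redwmeeatocbm} (together with the preliminary step of \autoref{obs:redwmeecatoccla}) to~$I'$ to produce in polynomial time a \pCBM{} instance~$I''$, and then invoke the \pCBM{} oracle on~$I''$. By Theorem~\ref{the:redwmeecatocbm} the answer returned by the \pCBM{} oracle equals the answer that the \pWMEECA{} oracle would have given, so correctness of the simulation follows.

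It remains to verify the parameter bound for the composite Turing reduction. By Theorem~\ref{the:redwmeecatocbm}, the join set size of~$I''$ is bounded by a polynomial in the number of components of~$I'$, and by Lemma~\ref{lem:redwmeetowmeea} the latter is bounded by a function of the parameter~$c$ of the original instance~$I$. Hence every oracle query made to \pCBM{} has join set size at most~$g(c)$ for a computable function~$g$, satisfying the parameter-bound requirement in the definition of a parameterized Turing reduction. Running time of the overall procedure is dominated by the Turing reduction of Lemma~\ref{lem:redwmeetowmeea} together with a polynomial-time translation step per oracle call, so the whole procedure runs in~$f(c)\cdot p(\length(I))$~time for a computable~$f$ and a polynomial~$p$.

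The only subtlety, and the one point worth checking carefully, is that composing a many-one reduction ``inside'' a Turing reduction really preserves the Turing-reducibility conditions: correctness is immediate from the correctness of both constituent reductions, and the parameter bound transfers because both reductions are parameterized. There is no obstacle arising from non-polynomial-parameter blow-up, because Theorem~\ref{the:redwmeecatocbm} is in fact a polynomial-parameter reduction; so the composition is in particular a parameterized Turing reduction in the required sense.
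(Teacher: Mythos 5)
Your proposal is correct and matches the paper's argument: the corollary is obtained by composing the parameterized Turing reduction of \autoref{lem:redwmeetowmeea} with the parameterized many-one reduction of \autoref{the:redwmeecatocbm}, exactly as the paper's one-line proof indicates. Your additional verification of the parameter bound and of the oracle-simulation step is a faithful elaboration of the same route, not a different one.
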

\begin{proof}
  The statement follows from \autoref{lem:redwmeetowmeea} and \autoref{the:redwmeecatocbm}.
\end{proof}

\subsubsection{Islands of Tractability for  \pWMEEs}\label{sec:tractislands}
Using the reduction given in \autoref{con:redwmeeatocbm}, we can gather the fruit of our work in \autoref{sec:cbm} where we showed restricted fixed-parameter tractability of \pCBMs{} with respect to the join set size.
\begin{corollary}\label{cor:wmeetractableforests}
  Let the graph~$G$ and the weight function~$\wf$ constitute an instance~$I_\pWMEEs{}$ of \pWMEEs{}. Let~$c$ be the number of connected components in~$G$. Furthermore, 
\begin{lemenum}
\item let the set~$A_A$ of allowed arcs with respect to~$\wf$ not contain a path or cycle of length at least two,
\item let the underlying graph of the directed graph~$(V, V \times V)\langle A_A \rangle$ be a forest, and
\item let~$G$ contain only vertices with balance between~$-1$ and~$1$.
\end{lemenum}
Then, it is decidable in~$\bigO(16^{c\log(c)}(cn^4 + m))$~time whether~$I_\pWMEEs{}$ is a yes-instance.
\end{corollary}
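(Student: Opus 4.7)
The plan is to compose three prior results: the Turing reduction from \pWMEE{} to \pWMEECA{} (\autoref{lem:redwmeetowmeea}), the many-one reduction from \pWMEECA{} to \pCBM{} (\autoref{the:redwmeecatocbm}, based on \autoref{con:redwmeeatocbm}), and the linear-time algorithm for \pCBM{} on forests (\autoref{cor:cbmlintimeonforests}). First, I would invoke the Turing reduction to enumerate $\bigO(16^{c\log(c)})$ minimal connecting advice instances of \pWMEECA{}, each with the same graph~$G$ and weight function~$\wf$ (so conditions (i)--(iii) are preserved). Each instance is then converted to \pCBM{} via \autoref{con:redwmeeatocbm}.

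The key structural observation is that under hypothesis~(i), every hint of length at least two is vacuous. A realization of such a hint is a path of length~$\geq 2$ in~$(V, V \times V)$ using only allowed arcs, i.e., a path of length~$\geq 2$ in~$A_A$; by~(i) no such path exists, so $\minpath(G, \wf, h, \cdot, \cdot)$ has weight~$\infty$ for every hint~$h$ with $|h| \geq 2$. Consequently, any advice containing a long hint can be discarded by the reduction, and it suffices to consider advices consisting solely of hints of length one. For these, \autoref{con:redwmeeatocbm} produces no long-hint gadgets, so the output \pCBM{} instance is just the base bipartite graph~$B_0 = (I_G^+ \uplus I_G^-, E_0)$ together with a join set of size at most~$c$.

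Now I would argue that~$B_0$ is a forest. Its edges~$E_0 = \{\{u,v\} : u \in I_G^+, v \in I_G^-, \wf(u,v) < \infty\}$ correspond, via the bipartition, to a subset of the allowed arcs~$A_A$; viewing~$B_0$ as an undirected graph, it is therefore a subgraph of the underlying graph of~$(V, V \times V)\langle A_A\rangle$, which by hypothesis~(ii) is a forest. Hence \autoref{cor:cbmlintimeonforests} applies and solves each \pCBM{} instance in~$\bigO(n + m)$~time.

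For the running-time bookkeeping: the Turing reduction itself runs in~$\bigO(16^{c\log(c)}(c + n + m))$~time and makes $\bigO(16^{c\log(c)})$~queries; each query-instance is translated to \pCBM{} in~$\bigO(|H|n^4 + m) \subseteq \bigO(cn^4 + m)$~time by \autoref{con:redwmeeatocbmruntime} together with \autoref{obs:hintsinadvice}; each resulting forest-\pCBM{} is decided in linear time. Summing gives the claimed bound~$\bigO(16^{c\log(c)}(cn^4 + m))$. The only subtle step is the one in the second paragraph---confirming that~(i) truly kills all long hints, so that no gadget machinery of \autoref{con:redwmeeatocbm} is ever triggered and $B_0$ alone inherits the forest structure from~$A_A$; once that is in place, the rest is assembly of existing pieces.
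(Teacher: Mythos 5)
Your proposal is correct and follows essentially the same route as the paper: Turing-reduce to \pWMEECAs{} via \autoref{lem:redwmeetowmeea}, observe that hypothesis~(i) rules out realizable hints of length at least two so that \autoref{con:redwmeeatocbm} outputs only the base bipartite graph, which inherits the forest structure from~(ii), and finish with \autoref{cor:cbmlintimeonforests}. The only detail the paper makes explicit that you gloss over is that such instances are invariant under \autoref{trans:sb} and \autoref{trans:spp} (guaranteed by hypotheses~(iii) and~(i), respectively), which is needed because the cited reductions assume preprocessed input.
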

\begin{proof}
  Observe that such instances are invariant under \autoref{trans:sb} and \autoref{trans:spp}. Thus, we may directly apply the reduction from \pWMEEs{} to \pWMEECAs{} given in \autoref{lem:redwmeetowmeea} that runs in time~$\bigO(16^{c\log(c)}(c + n + m))$. Also, there is no valid advice that contains hints of length two for such graphs. Thus, we can apply \autoref{con:redwmeeatocbm}---running in~$\bigO(cn^4 + m)$~time by \autoref{con:redwmeeatocbmruntime}---to map the instances of \pWMEECAs{} to instances of \pCBMs{} that comprise bipartite graphs that are forests. By \autoref{cor:cbmlintimeonforests}, these instances are solvable in linear time.
\end{proof}
\begin{corollary}\label{cor:wmeetractablecycles}

  Let the graph~$G$ and the weight function~$\wf$ constitute an instance~$I_\pWMEEs{}$ of \pWMEEs{}. Let~$c$ be the number of connected components in~$G$. Furthermore, 
\begin{lemenum}
\item let the set of allowed arcs with respect to~$\wf$ not contain a path or cycle of length at least two,
\item let~$G$ contain only vertices with balance between~$-1$ and~$1$,
\item let every vertex in~$I_G^+$~(every vertex in~$I_G^-$) have only outgoing allowed arcs (incoming allowed arcs),
\item for every connected component~$C$ of~$G$, let either all vertices in~$I_G^+ \cap C$ have at most two incident allowed arcs or let all vertices in~$I_G^- \cap C$ have at most two incident allowed arcs.
\end{lemenum}
Then, it is decidable in~$\bigO(2^{c(c + \log(2c^4))}(n^4 + m))$~time whether~$I_\pWMEEs{}$ is a yes-instance.
\end{corollary}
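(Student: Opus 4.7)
The plan is to follow the three-step recipe of \autoref{cor:wmeetractableforests}: first reduce to \pWMEECAs{} via \autoref{lem:redwmeetowmeea}, then convert to \pCBMs{} via \autoref{con:redwmeeatocbm}, and finally invoke \autoref{the:cbmmaxdeg2tractable}. The hypotheses are preserved by \autoref{trans:sb} (vacuously, since all balances already lie in $\{-1,0,1\}$) and by \autoref{trans:spp} (which does not alter which arcs are allowed), so we may implicitly apply both. \autoref{lem:redwmeetowmeea} then produces at most $16^{c\log c}$ candidate \pWMEECAs{} instances in $\bigO(16^{c\log c}(c+n+m))$ total time. Because no path of length at least two consists of allowed arcs, every hint in a realizable advice must have length exactly one, so any candidate containing a longer hint can immediately be discarded.

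Second, I would apply \autoref{con:redwmeeatocbm} to each remaining candidate. Since all hints have length one, the long-hint gadget $U_1\cup U_2$ is never constructed, and the resulting \pCBMs{} instance has bipartite graph $B_0=(I_G^+\uplus I_G^-,E_0)$, vertex partition consisting of the cells $V_i'=V_i\cap(I_G^+\cup I_G^-)$, and join set of size at most $c$ by \autoref{obs:hintsinadvice}. This step runs in $\bigO(cn^4+m)$ time per call by \autoref{con:redwmeeatocbmruntime}. The third and fourth hypotheses of the corollary transfer directly: in $B_0$, every cell of the vertex partition has at least one of its two bipartition sides of maximum degree two.

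The hard part will be the final step. \autoref{the:cbmmaxdeg2tractable} assumes a single globally fixed bipartition side of maximum degree two, whereas here the \emph{low-degree side} may vary from cell to cell. I expect to resolve this by revisiting the proof of \autoref{the:cbmmaxdeg2tractable}: its key ingredients, Reduction Rules~\ref{rr:deg1}, \ref{rr:compincell}, \ref{rr:signature}, together with \autoref{lem:perfmatchatmostonecycle}, act locally on connected components of $B_0$, so it suffices to fix a low-degree side per $B_0$-connected component (which the per-cell condition always allows), and the ensuing search tree still yields an $\bigO(2^{c(c+1)}n+n^3)$ bound with $j\le c$. Combining the overheads yields
\[
\bigO\!\left(16^{c\log c}\bigl(cn^4+m+2^{c(c+1)}n+n^3\bigr)\right)\subseteq \bigO\!\left(2^{c(c+\log(2c^4))}(n^4+m)\right),
\]
using $4c\log c+c+1=c+\log(2c^4)$ in the exponent.
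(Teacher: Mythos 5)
Your pipeline---\autoref{trans:sb} and \autoref{trans:spp} made implicit, then the Turing reduction of \autoref{lem:redwmeetowmeea}, then \autoref{con:redwmeeatocbm} applied to length-one hints, then the degree-two matching algorithm---is exactly the paper's proof, which literally reads ``analogous to \autoref{cor:wmeetractableforests}, substituting \autoref{the:cbmmaxdeg2tractable} for \autoref{cor:cbmlintimeonforests}'' and arrives at the same chain $\bigO(16^{c\log(c)}(cn^4+m+2^{c(c+1)}n)) \subseteq \bigO(2^{c(c+\log(2c^4))}(n^4+m))$. Your running-time arithmetic matches the paper's (modulo the obvious typo $4c\log c$ for $4\log c$ in your exponent identity).

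The one place you go beyond the paper is in flagging that hypothesis~(iv) fixes the low-degree side only per connected component of~$G$ (i.e., per cell of the \pCBMs{} partition), whereas \autoref{the:cbmmaxdeg2tractable} assumes a single global side; the paper's one-line proof is silent on this. Your repair, however, does not close that gap as stated: a connected component of~$B_0$ can contain an edge $\{u,v\}$ with $u\in I_G^+\cap C_1$ and $v\in I_G^-\cap C_2$, where $C_1$ declares $I_G^-$ the low-degree side and $C_2$ declares $I_G^+$ the low-degree side, so both endpoints may have degree three inside one $B_0$-component. Then \autoref{lem:perfmatchatmostonecycle} no longer applies to that component---its Hall-condition counting argument needs both branch vertices of the two cycles to lie on the \emph{same} (high-degree) side, and indeed two cycles joined by an edge whose degree-three endpoints lie on opposite sides can admit a perfect matching. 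So your parenthetical ``which the per-cell condition always allows'' is unjustified; either the hypothesis must be read as a single global choice of side (which is what the paper tacitly assumes), or an additional argument is needed that components of~$B_0$ mixing the two orientations cannot occur or can be handled separately.
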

\begin{proof}
  The proof is analogous to \autoref{cor:wmeetractableforests} by substituting the algorithm we gave in \autoref{the:cbmmaxdeg2tractable} for \autoref{cor:cbmlintimeonforests}. This leads to a running time bound of~$\bigO(16^{c\log(c)}(cn^4+m + 2^{c(c+1)}n)) \subseteq \bigO(2^{c(c + \log(2c^4))}(n^4 + m))$.
\end{proof}

\subsubsection{Reducing \pCBMs{} to \pWMEEAs{}}
To reduce \pCBMs{} to \pWMEEAs{} we first observe that for every instance of \pCBMs{} there is an equivalent instance such that every cell in the input vertex-partition contains equal numbers of vertices from both cells of the graph bipartition. This observation enables us to model cells as connected components and vertices in the bipartite graph as unbalanced vertices in the designated instance of \pWMEEAs{}.

We first need the following auxiliary observations: 
\begin{observation}\label{obs:surplussum}
  Let~$G = (V_1 \uplus V_2, E)$ be a bipartite graph such that~$|V_1| = |V_2|$ and let the set~$P = \{C_1, \ldots, C_k\}$ be a partition of the vertices in~$G$. It holds that \[{\sum_{i: |C_i \cap V_1| > |C_i \cap V_2|}} |C_i \cap V_1| - |C_i \cap V_2| = {\sum_{i: |C_i \cap V_1| < |C_i \cap V_2|}}|C_i \cap V_2| - |C_i \cap V_1| \text{.} \]
\end{observation}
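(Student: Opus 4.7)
The plan is to exploit the fact that since $P$ is a partition of $V_1 \uplus V_2$, summing $|C_i \cap V_j|$ over all cells yields $|V_j|$ for $j \in \{1, 2\}$. Because $|V_1| = |V_2|$ by assumption, the total surplus of $V_1$-vertices across all cells must equal the total surplus of $V_2$-vertices.

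First, I would observe that
\[
\sum_{i=1}^{k} \bigl(|C_i \cap V_1| - |C_i \cap V_2|\bigr) = |V_1| - |V_2| = 0 \text{.}
\]
Next I would split the index set $\{1,\ldots,k\}$ according to the sign of $|C_i \cap V_1| - |C_i \cap V_2|$ into three classes: indices where the difference is positive, zero, or negative. The zero-difference cells contribute nothing, so
\[
\sum_{i: |C_i \cap V_1| > |C_i \cap V_2|} \bigl(|C_i \cap V_1| - |C_i \cap V_2|\bigr) + \sum_{i: |C_i \cap V_1| < |C_i \cap V_2|} \bigl(|C_i \cap V_1| - |C_i \cap V_2|\bigr) = 0 \text{.}
\]
Rearranging and negating the second sum then yields the claimed equality.

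There is no real obstacle here; the observation is essentially a rephrasing of the fact that positive and negative deviations from $|V_1| = |V_2|$ across the cells of a partition must cancel. The only thing to be careful about is that the bipartite edge set $E$ plays no role in the statement, so the proof can ignore it entirely and treat the claim as a purely counting argument about the partition $P$ of the vertex set.
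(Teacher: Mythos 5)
Your proof is correct and follows essentially the same route as the paper: both arguments reduce the claim to the fact that summing $|C_i \cap V_1|$ and $|C_i \cap V_2|$ over all cells of the partition yields $|V_1|$ and $|V_2|$ respectively, which are equal by assumption, with cells of zero difference contributing nothing. The paper phrases this as transposing the claimed equation so that each side becomes $|V_1|$ resp.\ $|V_2|$, while you phrase it as the total signed difference vanishing; these are the same calculation.
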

\begin{proof}
  Observe that the equation holds if and only if~$|V_1| = |V_2|$: Without loss of generality we may assume that there are no cells~$C_i$ with~$|C_i \cap V_1| = |C_i \cap V_2|$ because these do contribute summands to the equation. Then we can transpose the equation such that the left-hand side reads as follows \[{\sum_{i: |C_i \cap V_1| > |C_i \cap V_2|}} |C_i \cap V_1| + {\sum_{i: |C_i \cap V_1| < |C_i \cap V_2|}}|C_i \cap V_1|\text{.}\]
This is equal to~$|V_1|$. Analogously, the left-hand side in the transposed formula is equal to~$|V_2|$.
\end{proof}

%In the following, let the bipartite graph~$G = (V_1 \uplus V_2, E)$, the weight function~$\wf:E \rightarrow [0, \wf_{max}]\cup \{\infty\}$, the vertex partition~$P = \{C_1, \ldots, C_k\}$ and the join set~$J$ constitute an instance~$I_\pCBMs{}$ of \pCBMs{}.
\begin{lemma}\label{lem:cbmlegalized}
  For every instance of \pCBMs{} there is an equivalent instance comprising the bipartite graph~$G = (V_1 \uplus V_2, E)$, the vertex partition~$P = \{C_1, \ldots, C_{k+1}\}$ and the join set~$J$, such that
\begin{lemenum}
\item for every~$1 \leq i \leq k+ 1$ it holds that~$|V_1 \cap C_i| = |V_2 \cap C_i|$, and\label{enu:cl11}
\item the graph~$(P, \{\{C_i, C_j\}: \{i, j\} \in J\})$ is connected.\label{enu:cl12}
\end{lemenum}
Furthermore, this equivalent instance contains at most one cell more than the original instance.
\end{lemma}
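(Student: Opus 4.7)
The plan is to treat the degenerate case $|V_1|\neq|V_2|$ separately—the original instance is trivially a no-instance, so it can be replaced by any fixed no-instance satisfying \enuref{enu:cl11} and \enuref{enu:cl12}—and then, in the balanced case $|V_1|=|V_2|$, to construct the new instance by adding a single auxiliary cell~$C_{k+1}$ together with dummy vertices and zero-weight dummy edges that absorb the cell-wise imbalances. Writing $s_i^+ := \max(|V_1\cap C_i| - |V_2\cap C_i|, 0)$ and $s_i^- := \max(|V_2\cap C_i| - |V_1\cap C_i|, 0)$, \autoref{obs:surplussum} gives $\sum_i s_i^+ = \sum_i s_i^-$.

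First I would, for every original cell~$C_i$, add one vertex $u_i \in V_1\cap C_i$ together with a partner $u_i'\in V_2\cap C_{k+1}$ and one vertex $v_i\in V_2\cap C_i$ together with a partner $v_i'\in V_1\cap C_{k+1}$, joined in pairs by zero-weight edges. Then, for each $C_i$ with $s_i^+>0$, I add $s_i^+$ further vertices to $V_2\cap C_i$, each paired with a unique new vertex in $V_1\cap C_{k+1}$ via a zero-weight edge; symmetrically for $s_i^->0$. Extending the join set by $\{\{i,k+1\}:1\leq i\leq k\}$ and leaving $\wf_{max}$ unchanged yields the new instance, which has exactly one extra cell.

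Verifying \enuref{enu:cl11} and \enuref{enu:cl12} reduces to counting: each $C_i$ gains exactly the amount needed on the deficient side to cancel its imbalance, while $C_{k+1}$ collects $k + \sum_i s_i^+$ side-$1$ dummies and $k+\sum_i s_i^-$ side-$2$ dummies, which are equal by \autoref{obs:surplussum}; the added joins form a star centered at~$C_{k+1}$ and hence connect the entire join graph.

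The main step, and where I expect the only subtlety to lie, is the equivalence argument. My plan is to exploit that every dummy has exactly one incident edge, so in any perfect matching of the new graph each dummy is forced to be matched with its unique partner. Restricting such a matching to the original vertices gives a perfect conjoining matching of the original instance of the same weight, and conversely any perfect conjoining matching of the original extends canonically by the forced dummy edges to a perfect matching of the new graph, which satisfies every new join $\{i, k+1\}$ at no extra weight. The delicate part is to check that the construction is neither too weak to balance each cell (handled by the surplus dummies) nor strong enough to disturb the original matching (handled by making every dummy pendant).
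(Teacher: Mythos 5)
Your proposal is correct, and for statement~\enuref{enu:cl11} it essentially coincides with the paper's construction: both absorb the per-cell surpluses into a single new cell~$C_{k+1}$ via pendant zero-weight dummies, so that every dummy is forced in any perfect matching and the equivalence is immediate. Where you genuinely diverge is statement~\enuref{enu:cl12}. The paper achieves connectivity of the cell-join graph by a separate greedy step that \emph{merges} cells lying in different connected components of $(P, \{\{C_i, C_j\}: \{i,j\} \in J\})$ into a single cell and relabels the joins accordingly; this may decrease the number of cells and leaves the join set essentially unchanged. You instead keep all original cells and add a star of joins $\{i,k+1\}$ centered at the new cell, each of which is automatically satisfied by a forced pendant dummy edge between $C_i$ and $C_{k+1}$, so the added joins do not constrain the solution set. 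Your variant has the advantage of handling both conditions in one construction and of preserving the original cell structure (the instance has exactly one more cell, and the correspondence between old and new solutions is the identity on original vertices); the paper's merging variant has the advantage of not inflating the join set, which matters slightly downstream where joins become hints in \autoref{cons:redcbmtoeea}, though your larger join set is still polynomial in the original parameters and so does not affect the parameterized reduction. Your balance accounting (each cell gains one dummy on each side plus $s_i^{\pm}$ surplus dummies, and $C_{k+1}$ collects $k+\sum_i s_i^+ = k+\sum_i s_i^-$ on the two sides by \autoref{obs:surplussum}) checks out.
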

\begin{proof}
  We first prove that there is an equivalent instance corresponding to statement~\enuref{enu:cl11} and then turn to statement~\enuref{enu:cl12}.
  Let the bipartite graph~$G = (V_1 \uplus V_2, E)$, the weight function~$\wf:E \rightarrow [0, \wf_{max}]\cup \{\infty\}$, the vertex partition~$P = \{C_1, \ldots, C_k\}$ and the join set~$J$ constitute an instance~$I_\pCBMs{}$ of \pCBMs{}. First observe that if~$I_\pCBMs{}$ is a yes-instance then~$|V_1| = |V_2|$, otherwise there could not be a perfect matching. Thus, if~$|V_1| \neq |V_2|$ we may simply output a trivial no-instance for which the statement of the lemma holds. Otherwise, by \autoref{obs:surplussum}, the following procedure can be carried out: Add a new cell~$C_{k+1}$ to~$P$ with~\[{\sum_{i: |C_i \cap V_1| > |C_i \cap V_2|}} |C_i \cap V_1| - |C_i \cap V_2|\] vertices in~$V_1$ and the same number of vertices in~$V_2$, and modify the graph~$G$ and each cell~$C_i \in P$ with~$\alpha := |C_i \cap V_1| - |C_i \cap V_2| > 0$ as follows: Add the new vertices~$v_1, \ldots, v_\alpha$ to~$V_2$ and to the cell~$C_i$, and add an edge from~$v_j$ to a vertex in~$C_{k+1} \cap V_1$ for every~$1 \leq j \leq \alpha$ and such that every vertex in~$C_{k+1}$ gets at most one incident edge. Proceed analogously for cells~$C_i$ with~$\alpha := |C_i \cap V_2| - |C_i \cap V_1| > 0$ by adding vertices to~$V_1$ and adding corresponding edges to~$C_{k+1}$. Finally, expand the weight function~$\wf$ to the new edges by giving each of them weight~0.

This construction is obviously correct, since each new vertex can only be matched to its corresponding vertex in~$C_{k+1}$.

Concerning statement~\enuref{enu:cl12}, assume that the statement does not hold for a instance that contains the vertex partition~$P = \{C_1, \ldots, C_k\}$ and a join set~$J$. We greedily choose two cells~$C_i, C_j$ that are in different connected components in the ``cell-join graph''~$(P, \{\{C_i, C_j\}: \{i, j\} \in J\})$, remove them from~$P$, add the cell~$C_k := C_i \cup C_j$ and update~$J$ accordingly---that is, we replace every join~$\{m, l\} \in J$ where~$m \in \{i, j\}$ by the join~$\{k, l\}$. This is correct because all joins satisfied by any solution~$M$ for the new instance are also satisfied by~$M$ in the original instance and vice versa. Iterating the merging of cells in differ ent connected components makes the cell-join graph connected and the statement follows.
\end{proof}
\paragraph{Description of the Reduction.}
To reduce instances of \pCBMs{} that conform to statement~\enuref{enu:cl11} and~\enuref{enu:cl12} of \autoref{lem:cbmlegalized} to instances of \pWMEEAs{} we use the simple idea of modelling every cell as connected component, vertices in~$V_1$ as vertices with balance~$-1$, vertices in~$V_2$ as vertices with balance~$1$, and joins as hints.
\begin{construction}\label{cons:redcbmtoeea}
  Let the bipartite graph~$B = (V_1 \uplus V_2, E)$, the weight function~$\wf:E \rightarrow [0, \wf_{max}]\cup \{\infty\}$, the vertex partition~$P = \{C_1, \ldots, C_k\}$ and the join set~$J$ constitute an instance~$I_\pCBMs{}$ of \pCBMs{} such that~$I_\pCBMs{}$ corresponds to \autoref{lem:cbmlegalized}\enuref{enu:cl11} and~\enuref{enu:cl12}.

  Let~$v^1_1, v^2_1, \ldots, v^1_{n/2}, v^2_{n/2}$ be a sequence of all vertices chosen alternatingly from~$V_1$ and~$V_2$. Define the graph~$G = (V, A) := (V_1 \cup V_2, A_1 \cup A_2)$ where the arc sets~$A_1$ and~$A_2$ are defined as follows: $A_1 := \{(v_i^1, v_i^2) : 1 \leq i \leq n/2\}$. For every~$1 \leq j \leq k$ let~$C_j = \{v_1, \ldots, v_{j_k}\}$, set~\[A_2^j := \{(v_i, v_{i + 1}): 1 \leq i \leq j_k -1\} \cup \{ (v_{j_k}, v_1\}\] and define~$A_2 := \bigcup_{j = 1}^k A_2^j$. Define a new weight function~$\wf'$ for every pair of vertices~$(u, v) \in V \times V$ by
\[\wf'(u, v) := \begin{cases}
  \wf(\{u, v\}), & u \in V_2, v \in V_1, \{u, v\} \in E \\
  \infty, & \text{otherwise.}
\end{cases}
\]
Finally, derive an advice~$H$ for~$G$ by adding a length-one hint~$h$ to~$H$ for every join~$\{o, p\} \in J$ such that~$h$ consists of the edge that connects vertices in~$\comp{G}$ that correspond to the connected components~$C_o$, and~$C_p$.

The graph~$G$, the weight function~$\wf'$, the maximum weight~$\wf_{max}$ and the advice~$H$ constitute an instance~$I_\pWMEEAs{}$ of \pWMEEAs{}.
\end{construction}
\begin{theorem}\label{the:redcbmtowmeea}
  \pCBM{} is polynomial-parameter polynomial-time many-one reducible to \pWMEEA{} with respect to the parameters join set size and connected components in the input graph.
\end{theorem}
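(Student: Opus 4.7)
The plan is to combine Lemma 2.3.11 with Construction 2.3.2 and verify the three ingredients of a polynomial-parameter polynomial-time many-one reduction: polynomial running time, polynomial parameter blow-up, and logical equivalence between yes-instances. Given an instance of \pCBMs{}, I would first preprocess it using Lemma 2.3.11 so that each cell contains equally many vertices from each side of the bipartition and the ``cell-join graph'' is connected; this costs at most one additional cell and does not increase~$|J|$. Then I apply \autoref{cons:redcbmtoeea}. Polynomial running time is immediate: each cell produces a directed cycle of its own size (contributing $A_2$), the matching arcs in $A_1$ are added in linear time, the weight function is defined directly on $V \times V$, and each of the~$|J|$ joins yields a single length-one hint, giving overall~$\bigO(n^2 + |J|)$~time.

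For the parameter bound, observe that the connected components of~$G$ are exactly the cells of~$P$: the arcs in $A_2^j$ make each $C_j$ connected and internally form a cycle, while no arcs of $G$ cross cells. Hence the number of components equals the number of cells, which after Lemma 2.3.11 is at most~$k+1$. Since the cell-join graph with $k+1$ vertices is connected by Lemma 2.3.11\enuref{enu:cl12}, it has at least~$k$ edges, so~$|J| \geq k$ and the number of components in~$G$ is at most~$|J|+2$, i.e.\ linear in the join set size. Also, the preprocessing implicit in \pWMEEAs{} (Transformations~\ref{trans:sb} and~\ref{trans:spp}) is essentially a no-op on the constructed instance: every vertex has balance in~$\{-1,+1\}$ by construction of $A_1$ and $A_2$, and the only allowed arcs go from~$V_2$ to~$V_1$, so no shorter detours through the graph exist.

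For correctness, in the forward direction, given a perfect conjoining matching~$M$ of weight at most~$\wf_{max}$, I set~$E := \{(u,v) : \{u,v\} \in M,\ u \in V_2,\ v \in V_1\}$. Then~$\wf'(E) = \wf(M)$ by definition of~$\wf'$; every $v \in V_1$ receives exactly one additional incoming arc and every $u \in V_2$ exactly one additional outgoing arc, balancing all vertices; and for every join~$\{o,p\} \in J$ the satisfying matching edge yields an arc between~$C_o$ and~$C_p$, realizing the corresponding length-one hint in~$H$. Hence~$E$ is an \EE{} for~$G$ that heeds~$H$. Conversely, any \EE{}~$E$ of weight at most~$\wf_{max}$ heeding~$H$ must consist only of allowed arcs, which by the weight function are precisely the arcs~$V_2 \to V_1$ corresponding to edges of~$B$; since each $V_1$-vertex has balance~$-1$ and each $V_2$-vertex has balance~$+1$ in~$G$ and~$E$ cannot modify outdegrees on $V_1$ nor indegrees on $V_2$, every such vertex receives exactly one incident arc in~$E$. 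The undirected edge set~$M := \{\{u,v\} : (u,v) \in E\}$ is therefore a perfect matching of the same weight, and the fact that~$E$ heeds each hint~$h$ (of length one) forces a matching edge between the two cells involved, satisfying the corresponding join.

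The main obstacle I anticipate is keeping the various ``balance bookkeeping'' transparent in the reverse direction, in particular the argument that an \EE{} cannot slip in arcs other than those inducing a perfect matching. This is what pins down both weight equality and the bijective correspondence, and it hinges on the interplay between the construction of~$A_1$ and the fact that~$\wf'$ assigns finite weight exclusively to arcs directed from~$V_2$ to~$V_1$; everything else in the proof is bookkeeping. Once this is in place, Lemma 2.3.11 handles the possible irregularity of cell sizes and the disconnectedness of the cell-join graph, and the parameter bound follows from the counting argument above.
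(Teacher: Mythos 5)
Your proposal is correct and follows essentially the same route as the paper: preprocess via the cell-balancing/connectivity lemma (\autoref{lem:cbmlegalized}), apply \autoref{cons:redcbmtoeea}, bound the number of components by the join set size plus a constant, and establish the bijective correspondence between perfect conjoining matchings and advice-heeding \EE s using the fact that all finite-weight arcs run from~$V_2$ to~$V_1$ and every vertex has balance~$\pm 1$. Your additional remarks (that Transformations~\ref{trans:sb} and~\ref{trans:spp} act trivially on the constructed instance, and the explicit counting via the connected cell-join graph) are sound refinements of details the paper leaves implicit, not a different argument.
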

\begin{proof}
  We show that the application of \autoref{lem:cbmlegalized} and \autoref{cons:redcbmtoeea} is such a reduction. It can easily be checked that it can be carried out in polynomial time. Also, by \autoref{lem:cbmlegalized} and the definition of~$A_2$ it follows that the instances of \pWMEEAs{} generated in this way have a number of connected components that is at most the size of the join set plus one.% in the original instance of \pCBMs{}.

  Assume that there is a perfect conjoining matching~$M$ with weight at most~$\wf_{max}$ for the instance~$I_\pCBMs{}$ as in \autoref{cons:redcbmtoeea}. Then, we derive an \EE{}~$E$ for~$G$ that heeds the advice with the same weight by simply choosing~$E := \{(u, v) : u \in I^-_G \wedge \{u, v\} \in M\}$. By the definition of~$\wf'$,~$\wf'(E) = \wf(M)$. Every hint is realized by~$E$ because for every join there is an edge in~$M$ that satisfies it. Most importantly,~$E$ is an \EE{} for~$G$: Since~$M$ is perfect, every vertex in~$G$ has exactly one arc incident in~$E$. Since every vertex in~$G$ has balance~$-1$ or~$1$ (due to the definition of~$A_1$), this suffices to make all vertices balanced. By \autoref{lem:cbmlegalized}\enuref{enu:cl12}, the advice~$H$ is a connecting advice and thus~$G + E$ is connected.

  Now assume that there is an \EE{}~$E$ for~$G$ that heeds the advice~$H$ and has weight at most~$\wf_{max}$. Choosing~$M := \{\{u, v\} : (u, v) \in E\}$ yields a perfect conjoining matching of the same weight: It holds the~$\wf'(E) = \wf(M)$, because all extension arcs that do not correspond to an edge in~$B$ have weight~$\infty$. The matching~$M$ is perfect, because every vertex in~$I_G^-$ (in~$I_G^+$) has balance~$-1$ (balance~$1$), has only incoming (outgoing) allowed arcs and thus has exactly one arc incident in~$E$. The matching~$M$ is conjoining, because~$E$ heeds the advice~$H$.
\end{proof}
The reduction given above gives rise to the following parameterized equivalence.
\begin{theorem}\label{the:cbmwmeeequiv}
  \pCBM{} and \pWMEE{} are parameterized equivalent with respect to the parameters join set size and connected components in the input graph. 
\end{theorem}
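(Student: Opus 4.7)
The plan is to obtain this equivalence as a direct corollary by chaining the reductions already established in this section and in \autoref{sec:wmeeslasha}. For one direction, the reduction from \pCBMs{} (parameterized by join set size) to \pWMEEs{} (parameterized by number of connected components) is obtained by composing \autoref{the:redcbmtowmeea} (a polynomial-parameter polynomial-time many-one reduction from \pCBMs{} to \pWMEEAs{}) with \autoref{the:redwmeeatowmee} (a polynomial-parameter polynomial-time many-one reduction from \pWMEEAs{} to \pWMEEs{}). Since polynomial-parameter polynomial-time many-one reductions compose to again yield such a reduction, this direction gives even the stronger many-one flavor.

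For the other direction, the reduction from \pWMEEs{} to \pCBMs{} is obtained by composing \autoref{lem:redwmeetowmeea} (a parameterized Turing reduction from \pWMEEs{} to \pWMEECAs{}) with \autoref{the:redwmeecatocbm} (a polynomial-parameter polynomial-time many-one reduction from \pWMEECAs{} to \pCBMs{}). The composition turns each oracle query to \pWMEECAs{} into a \pCBMs{} instance whose join set size is polynomial in the number of connected components of the \pWMEECAs{} instance, which in turn is polynomial in the parameter of the original \pWMEEs{} instance. This yields a parameterized Turing reduction from \pWMEEs{} to \pCBMs{}.

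The only thing to check is that the polynomial parameter bounds compose correctly: if the intermediate parameter is bounded by a polynomial in the starting parameter, and the final parameter is bounded by a polynomial in the intermediate parameter, then the final parameter is bounded by a polynomial in the starting parameter. Both compositions satisfy this requirement by the specifications given in the constituent reductions. Consequently, nothing in the argument is technically difficult—the whole proof is essentially a bookkeeping exercise. I should also remark that, as discussed at the beginning of \autoref{sec:wmeeslasha}, we cannot hope to upgrade the Turing reduction of the second direction to a polynomial-parameter many-one reduction: doing so, combined with the polynomial-size problem kernel for \pWMEECAs{} that we will obtain in \autoref{sec:releecbm}, would yield a polynomial-size problem kernel for \pWMEEs{}, contradicting the upcoming kernel lower bound in \autoref{sec:incompress} unless~$\clacoNP \subseteq \claNP \slashpoly$.
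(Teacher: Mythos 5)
Your proposal matches the paper's proof exactly: both directions are obtained by chaining the same reductions (\autoref{lem:redwmeetowmeea} followed by \autoref{the:redwmeecatocbm} for one direction, and \autoref{the:redcbmtowmeea} followed by \autoref{the:redwmeeatowmee} for the other). The additional remarks on parameter-bound composition and on why the Turing reduction cannot be upgraded to a many-one reduction are correct and consistent with the paper's own discussion in \autoref{sec:wmeeslasha}.
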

\begin{proof}
  By \autoref{lem:redwmeetowmeea} there is a parameterized reduction from \pWMEEs{} to \pWMEECAs{} with respect to the parameter number of connected components. By \autoref{the:redwmeecatocbm} there is a parameterized reduction from \pWMEECAs{} to \pCBMs{} with respect to the parameters connected components and join set size.

The other direction follows from the reduction from \pCBMs{} to \pWMEEAs{} given above in \autoref{the:redcbmtowmeea} with respect to the parameters join set size and connected components and the reduction from \pWMEEAs{} to \pWMEEs{} given in \autoref{the:redwmeeatowmee}.
\end{proof}
We also can finally prove \NPhs{} for \pWMEECAs{} which we have deferred up to now.
\begin{corollary}\label{cor:eecanph}
  \pWMEECAs{} is \NPh .
\end{corollary}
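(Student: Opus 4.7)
The plan is to establish \NPhs{} by composing the \NPhs{} reduction for \pCBMs{} given in \autoref{cons:red3sattocbm} with the reduction from \pCBMs{} to \pWMEEAs{} given in \autoref{cons:redcbmtoeea}, and then observing that on the specific \pCBMs{} instances produced by \autoref{cons:red3sattocbm} the output of this composition is in fact a \pWMEECAs{} instance, not merely a \pWMEEAs{} instance.

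First, I would inspect the output of the 3SAT-to-\pCBMs{} reduction. The produced instance has cells $C_0, C_1, \ldots, C_m$ (one per clause plus the designated cell $C_0$ collecting the leftover cycle vertices) and join set $J_m = \{\{0, j\} : 1 \leq j \leq m\}$, so its cell-join graph is a star centred at $C_0$. Before applying \autoref{cons:redcbmtoeea}, I would check its two prerequisites: condition~\enuref{enu:cl11} of \autoref{lem:cbmlegalized} holds because, as noted in the proof of \autoref{the:cbmnph}, every cell is a disjoint union of edges and hence intersects both sides of the bipartition equally; condition~\enuref{enu:cl12} holds because a star is connected. Thus no legalization step is required and \autoref{cons:redcbmtoeea} applies to this instance directly.

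The key observation is that \autoref{cons:redcbmtoeea} installs exactly one length-one hint per join. In our situation the resulting hint-edges in $\comp{G}$ trace out exactly a star with $m$ leaves, which is a tree. Since removing any edge of a tree disconnects it, the advice is minimal connecting, so the output of the composed construction lies not merely in \pWMEEAs{} but in the stricter class \pWMEECAs{}.

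Composing the two polynomial-time many-one reductions therefore yields a polynomial-time many-one reduction from \ptSATs{} to \pWMEECAs{}, and since \ptSATs{} is \NPh{}, the corollary follows. The only step that really deserves attention is the minimality verification for the produced advice; the rest is bookkeeping on top of already-established correctness, and I do not anticipate a serious obstacle beyond confirming that the star structure of $J_m$ survives the \pCBMs{}-to-\pWMEEAs{} translation intact.
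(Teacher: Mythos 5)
Your proposal is correct and follows essentially the same route as the paper: compose \autoref{cons:red3sattocbm} with \autoref{cons:redcbmtoeea} and observe that the resulting advice is minimal connecting, so the reduction in fact targets \pWMEECAs{}. The extra detail you supply (checking the prerequisites of \autoref{lem:cbmlegalized} and verifying minimality via the star structure of the join set) is exactly the verification the paper leaves implicit.
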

\begin{proof}
  We have proven in \autoref{the:cbmnph} that \pCBMs{} is \NPh{} via a reduction from \ptSATs{}. Observe that reducing the instances produced by the corresponding \autoref{cons:red3sattocbm} to instances of \pWMEEAs{} by \autoref{cons:redcbmtoeea} yields instances with minimal connecting advice. Thus there is a reduction from \ptSATs{} to \pWMEECAs{}.
\end{proof}
It turns out that reducing from \pWMEECAs{} to \pCBMs{} and back from \pCBMs{} to \pWMEEAs{} can be interpreted as preprocessing procedure for \pWMEECAs{}:
\begin{observation}\label{obs:redbackpreprocess}
  Successively applying \autoref{con:redwmeeatocbm} and \autoref{cons:redcbmtoeea} to an instance of \pWMEECAs{} yields an equivalent instance of \pWMEECAs{}.
\end{observation}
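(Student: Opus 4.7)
The plan is to first establish equivalence via the correctness of the component reductions, and then verify that the output also satisfies the extra structural requirement of \pWMEECAs{}, namely, that the advice is minimal connecting.

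First, I would combine \autoref{the:redwmeecatocbm} and \autoref{the:redcbmtowmeea}: the former says that \autoref{con:redwmeeatocbm} is a correctness-preserving reduction from \pWMEECAs{} to \pCBMs{}, and the latter that \autoref{cons:redcbmtoeea} is a correctness-preserving reduction from \pCBMs{} to \pWMEEAs{}. Composing them, the input and output instances have matching yes/no status, so the produced \pWMEEAs{} instance is equivalent to the input.

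Second, I need to verify that the output actually lies in \pWMEECAs{}, that is, that the advice is minimal connecting. Inspecting \autoref{cons:redcbmtoeea}, the produced advice consists of exactly one length-one hint per join. So it suffices to show that, after the preprocessing via \autoref{lem:cbmlegalized}, the joins in the intermediate \pCBMs{} instance form a spanning tree of the cells. A counting argument handles this: letting $H = H^{=1} \cup H^{\geq 2}$ denote the original minimal connecting advice, we have $|H| = c - 1$. From \autoref{con:redwmeeatocbm}, the intermediate \pCBMs{} instance has $c + |H^{\geq 2}|$ cells and $|H^{=1}| + 2|H^{\geq 2}| = (c - 1) + |H^{\geq 2}|$ joins, so the number of joins equals the number of cells minus one. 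Connectedness of the corresponding cell-join graph transfers directly from the connectedness of the original advice through \autoref{con:redwmeeatocbm}, so the count forces the cell-join graph to be a spanning tree.

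The main obstacle is the possible addition of a balance cell by \autoref{lem:cbmlegalized}. I would argue that such a cell, when introduced, is initially isolated in the cell-join graph (it has no incident joins yet), so the single greedy merging step in the proof of \autoref{lem:cbmlegalized} absorbs the isolated vertex into some existing cell without creating or deleting any join. Hence the spanning-tree property of the cell-join graph is preserved after \autoref{lem:cbmlegalized}, and under the one-to-one correspondence between joins and length-one hints given by \autoref{cons:redcbmtoeea}, the final advice is minimal connecting. Combining this with the equivalence established in the first step yields the claim.
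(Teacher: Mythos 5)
Your first step (equivalence via composing \autoref{the:redwmeecatocbm} and \autoref{the:redcbmtowmeea}) is fine and matches what the paper implicitly relies on. The gap is in your counting argument for minimality. You assert that a minimal connecting advice satisfies $|H| = c-1$, but this is false in general: a hint is a path or cycle of length at least one in~$\comp{G}$, so a single long hint can connect many components at once. For example, one path hint traversing all~$c$ components is already a minimal connecting advice with $|H|=1$; \autoref{con:redwmeeatocbm} then produces $c+1$ cells but only $2$ joins, so the identity ``number of joins equals number of cells minus one'' fails badly. Relatedly, your claim that connectedness of the cell-join graph ``transfers directly'' through \autoref{con:redwmeeatocbm} is also incorrect: a hint of length at least two contributes joins only between its \emph{endpoint} cells and the new gadget cell, so the cells corresponding to its inner components receive no incident joins at all and are isolated in the cell-join graph. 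Consequently the merging in \autoref{lem:cbmlegalized} is not confined to absorbing one balance cell; it may have to merge many such isolated cells, which further invalidates your cell count. Since both the edge count and the connectedness premise of your spanning-tree argument can fail, the argument as written does not establish that the resulting advice is minimal.

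The paper avoids counting altogether: it argues structurally that in the \pCBMs{} instance produced by \autoref{con:redwmeeatocbm} no join can be removed without disconnecting some cell (each length-one hint yields exactly one join between its endpoint cells, and each long hint yields a fresh gadget cell reachable only via its two joins), and that \autoref{cons:redcbmtoeea} maps cells back to components and joins back to hints one-to-one, so the resulting advice is minimal connecting. If you want to repair your approach, you would need to replace the global count by such a local ``every join is a bridge'' argument, and account explicitly for the inner-component cells that carry no joins.
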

\begin{proof}
  Recall that in \autoref{con:redwmeeatocbm} connected components are directly modeled by cells in the vertex partition, hints of length one are directly modeled by joins and hints of length at least two by a gadget comprising of a new cell and two joins, both involving the new cell and one of the endpoints of the hint. Thus, in the corresponding instance of \pCBMs{} no join can be removed without ``disconnecting'' one of the cells from the others. Since in \autoref{cons:redcbmtoeea} cells are directly modeled by connected components and joins are directly modeled by hints, it follows that the resulting instance has minimal connecting advice.
\end{proof}
This yields the following two results.
\begin{corollary}\label{cor:eecaprobkernel}
  \pWMEECAs{} has a problem kernel with~$\bigO(b^2c)$~vertices, where~$b$ is the sum of all positive balances and~$c$ is the number of connected components.
\end{corollary}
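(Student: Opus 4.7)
The plan is to invoke \autoref{obs:redbackpreprocess} as a preprocessing chain and then carefully count the vertices in the intermediate \pCBMs{} instance, which directly bounds the size of the final \pWMEECAs{} instance since \autoref{cons:redcbmtoeea} creates one graph vertex per bipartite-graph vertex.

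First, I would preprocess the input using \autoref{trans:sb} so that every vertex has $|\balance(v)| \leq 1$; by \autoref{lem:smallbalancepp} this is correct and polynomial-time, and after this step $|I^+_G| = |I^-_G| = b$. I would also apply \autoref{trans:spp} (polynomial-time by \autoref{lem:spp}); by \autoref{obs:sbsppinvariants} both $b$ and $c$ are unchanged. Next, I apply \autoref{con:redwmeeatocbm} to obtain an equivalent instance $I_{\pCBMs{}}$; by \autoref{con:redwmeeatocbmruntime} this step is polynomial-time.

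The main estimate is on the vertex count of $I_{\pCBMs{}}$. The ``base'' vertex set is $I^+_G \uplus I^-_G$, of cardinality $2b$. For each hint $h^{\geq 2}_k \in H^{\geq 2}$, \autoref{con:redwmeeatocbm} introduces a new cell $V'_{c+k} = U_1 \cup U_2$ whose vertices are pairs $v \circ u$ and $v \bullet u$ with $v,u \in I^+_G \cup I^-_G$ lying in the two components joined by $h^{\geq 2}_k$; hence $|V'_{c+k}| \leq 2 b^2$. Because the given advice is minimal and connecting, \autoref{obs:hintsinadvice} bounds $|H| \leq c$, so the total number of vertices introduced by long-hint gadgets is at most $2 c b^2$. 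Thus $I_{\pCBMs{}}$ has $\bigO(b^2 c)$ vertices in total.

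Finally, I apply \autoref{lem:cbmlegalized} (adding at most one cell and a polynomial number of vertices needed to balance the bipartition sizes, which remains $\bigO(b^2 c)$) and then \autoref{cons:redcbmtoeea}, yielding an equivalent \pWMEEAs{} instance on the same vertex set. By \autoref{obs:redbackpreprocess} the resulting advice is minimal and connecting, so this is indeed a \pWMEECAs{} instance equivalent to the original. The overall mapping is polynomial-time computable, preserves yes/no-answers by the correctness statements invoked along the way, and never increases the parameters $b$ and $c$, so it is a reduction to a problem kernel with $\bigO(b^2 c)$ vertices. The only subtlety to double-check is that \autoref{lem:cbmlegalized} does not blow up the vertex count beyond $\bigO(b^2 c)$; this follows because the number of added ``balancing'' vertices is bounded by the sum of surpluses $|C_i \cap V_1| - |C_i \cap V_2|$, which is at most the total number of vertices already present.
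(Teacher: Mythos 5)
Your proposal is correct and follows essentially the same route as the paper: chain \autoref{con:redwmeeatocbm} with \autoref{cons:redcbmtoeea} as a preprocessing pair, bound the matching instance by $2b^2$ gadget vertices per long hint times the at most $c$ hints of a minimal connecting advice, and observe that the back-reduction does not increase the vertex count. Your extra check that \autoref{lem:cbmlegalized} adds only $\bigO(b)$ balancing vertices is a detail the paper's proof leaves implicit, but it does not change the argument.
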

\begin{proof}
  This follows by simply using \autoref{con:redwmeeatocbm} and \autoref{cons:redcbmtoeea} as preprocessing routines. Since both of them have been proven to be polynomial-time reductions in \autoref{the:redwmeecatocbm} and \autoref{the:redcbmtowmeea} they are correct. Observe that \autoref{con:redwmeeatocbm} disposes of all balanced vertices; for every hint of length at least two there are~$2b^2$ new vertices, which gives the bound of~$\bigO(b^2c)$~vertices in the matching instance. \autoref{cons:redcbmtoeea} does not increase the number of vertices and the statement follows.
\end{proof}
\begin{corollary}
  For every instance of \pWMEECAs{} there is an equivalent instance in which every hint has length one.
\end{corollary}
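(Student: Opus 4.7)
The plan is to use the composition of the two constructions given earlier in this section as a length-normalising preprocessing step, exactly as \autoref{obs:redbackpreprocess} already suggests. Concretely, starting from an arbitrary \pWMEECAs{} instance~$I$, I would first invoke \autoref{obs:redwmeecatoccla} (so we may assume the advice is cycle-free), then apply \autoref{con:redwmeeatocbm} to obtain an equivalent \pCBMs{} instance~$I'$, and finally apply \autoref{cons:redcbmtoeea} to map~$I'$ back into a \pWMEEAs{} instance~$I''$.

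The key step is then to read off the structure of~$I''$: in \autoref{cons:redcbmtoeea} the advice~$H$ is explicitly built by adding, for each join~$\{o,p\}$ of the \pCBMs{} instance, a single hint~$h$ consisting of the one edge of~$\comp{G}$ between the components corresponding to~$C_o$ and~$C_p$. Thus every hint in~$H$ has length one by construction. Equivalence of~$I$ and~$I''$ as \pWMEEs{} instances is immediate from \autoref{the:redwmeecatocbm} composed with \autoref{the:redcbmtowmeea}.

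The only thing that requires care is that~$I''$ should actually be a \pWMEECAs{} instance, that is, that the resulting advice is minimal and connecting; this is precisely the content of \autoref{obs:redbackpreprocess}, which notes that no join can be dropped from~$I'$ without disconnecting a cell (since each cell in~$I'$ is either the image of a component of~$G$ or a long-hint gadget attached by exactly two joins), and that cells and joins translate back one-to-one into components and hints in~$I''$. Hence~$I''$ has minimal connecting advice and all hints of length one, which is exactly the claim. The main obstacle is really only bookkeeping: checking that~$I''$ inherits minimality of the advice, and this is handled by the cited observation rather than by any new argument.
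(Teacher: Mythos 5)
Your proposal is correct and follows essentially the same route as the paper, which derives the corollary directly from \autoref{obs:redbackpreprocess} (the round trip through \autoref{con:redwmeeatocbm} and \autoref{cons:redcbmtoeea}); you merely spell out the details, including the preliminary cycle-removal step and the observation that \autoref{cons:redcbmtoeea} only ever produces length-one hints. No gap.
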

\begin{proof}
  This trivially follows from \autoref{obs:redbackpreprocess}.
\end{proof}

\section{Discussion}
We now briefly recapitulate the results of this chapter, we note what we could not achieve and we give some directions for further research.

Our considerations in this chapter originally were started with the goal in mind to find out whether \pWMEE{}~(\pWMEEs{}) is fixed-parameter tractable with respect to the parameter number~$c$ of connected components.
Unfortunately, this aim has not been achieved yet. However, we have learned much about the structure of \EE s in \autoref{sec:trails}, and could use this knowledge to derive an efficient algorithm for \pWMEEs{} in \autoref{sec:multivariatealg}. 

In further research, a useful tool for the analysis of \pWMEEs{} with respect to the parameter~$c$ could be the parameterized equivalent matching formulation \pCBM{}~(\pCBMs{}) we derived in this chapter, the final theorems of which are proven in \autoref{sec:releecbm}. We deem that in this formulation the sought solution is more concisely defined. This observation is partly justified by the work laid out in Sections~\ref{sec:trails} through~\ref{sec:wmeeslasha} in order to catch the structure of \EE s, which was necessary to finally derive efficient algorithms and arrive at~\pCBMs{}. Also, only considering the structure of the input graph in \pWMEEs{} may be misleading since, for instance, balanced vertices also take part in the combinatorial explosion of possible paths in \EE s. Balanced vertices, however, do not have equivalents in the corresponding matching instance. The matching formulation makes clear that the structure of allowed extension arcs defined by the weight function is of much greater importance. This is also shown in \autoref{sec:tractislands} where we showed that \pWMEEs{} is actually tractable with respect to the parameter~$c$ for some restricted structure in the allowed arcs. There we used the fact that this structure is precisely captured by the bipartite graph in the matching instance.

Of course we did not stop when we arrived at the matching formulation. We tried multiple approaches for either showing that \pCBMs{} is fixed-parameter tractable or likely intractable with respect to the parameter join set size. However, this has not been crowned with success yet. For instance, we tried to show \claW{1}-hardness via parameterized reductions from \textsc{Multicolored Clique}~\cite{FHRV09} where a graph~$G$, an integer~$k$, and a coloring of the vertices is given and it is asked whether there is a clique~$K$ with at least~$k$ vertices in~$G$ such that each vertex of~$K$ has a distinct color. Here it seemed difficult to copy over the information that one vertex is in the clique from one entity representing this vertex to at least~$k$ of the vertices' neighbors. Reductions from the well-known \textsc{Independent Set} problem suffered from a similar flaw since there it is necessary to copy the information that one vertex is in the independent set over to every neighbor. We also tried reductions from several \claM{1}-complete problems (see, for instance, \citet{FG06}) without much success. 

This led us to the assumption that the bipartite graph in \pCBMs{} and matchings in this graph are too weak to model the relationships of entities in presumably fixed-parameter intractable problems. Thus, we tried to apply some of the well-known techniques to show fixed-parameter tractability. However, we were not able to circumvent running times in the order of~$n^j$ where~$j$ is the size of the join set in these approaches. Subsuming, we are not confident with giving a conjecture on whether or not \pCBMs{} is fixed-parameter tractable. %; it is most likely that we simply did not have the vital idea.

\chapter{Incompressibility}\label{sec:incompress}

In this chapter we introduce the problem \pKML{} (\pKMLs) for which there are parameterized reductions to two \EE{} problems. We show that polynomial-size kernels for the extension problems would imply polynomial-size kernels for \pKMLs . However, we also show that polynomial-size problem kernels for \pKMLs{} do not exist unless~$\clacoNP \subseteq \claNP \slashpoly$. % and consequently the polynomial hierarchy collapses to the third level.

To prove nonexistence of polynomial-size kernels we use the framework introduced by \citet{BDFH09}: An \emph{or-composition algorithm} for a parameterized problem~$(Q, \kappa)$ over the alphabet~$\Sigma$ is an algorithm that
\begin{asparaenum}%
\item receives a number of instances~$I_1,\ldots,I_m \in \Sigma^*$, with~\[\kappa(I_1) = \ldots = \kappa(I_m)=k\text{,}\]
\item runs in time that is polynomial in~$\sum_{i=1}^m|I_i| + k$, and
\item outputs an instance~$I^* \in \Sigma^*$, such that~$\kappa(I^*)$ is bounded by a polynomial in~$k$ and~$I^* \in Q$ if and only if~$I_j \in Q$ for some~$1\leq j \leq m$.
\end{asparaenum}
A parameterized problem is called \emph{or-compositional} if there is an or-composition algorithm for it. Using a result by \citet{FS11}, it can be shown that if an or-compositional parameterized problem admits a polynomial-size problem kernel, then~$\clacoNP \subseteq \claNP \slashpoly$~\cite{BDFH09}.

To prove or-compositionality for \pKMLs{}, we employ a strategy that has been introduced by \citet{DLS09}. The basic idea is as follows: Prove that the problem is fixed-parameter tractable. In the composition algorithm, when there are many input instances, that is, when~$m$ above is at least as large as the fixed-parameter running time, solve all the instances using this algorithm and output a trivial yes or no-instance. Otherwise, if there are less input instances, use this fact to create an identification for every instance. These identifications then can be used to create a composition instance that consists of parts which correspond to the original instance. 
  
\section{\pKML{}}%---A Simple Covering Problem}
First we define \pKML{}~(\pKMLs{}) and show that it is \NPc{} and fixed-parameter tractable. For convenience, we use the following notation.
\begin{definition}
  Let~$C$ be a set of colors. A $C$-\emph{\liste{}} is a multiset with the elements drawn from~$C$. A $C$-\emph{\kiste{}} is a multiset with the elements drawn from all $C$-\listen{}. When the color set is clear from the context, we simply speak of \listen{} and \kisten{}.
\end{definition}
\decprob{\pKML{}}{A set~$C$ of~$c$ colors and~$k$ \kisten{} each containing a number of \listen{}.}{Is it possible to choose exactly one \liste{} in each \kiste{} such that each color in~$C$ is contained in at least one of the chosen \listen{}?}
\begin{example}
  Intuitively one may think of \pKMLs{} as the following problem: Given a number of light bulbs, each with a unique color, and a number of \kisten{}. The \kisten{} can be positioned in exactly one of a number of \listen{} specific to the \kiste{}. In each \liste{}, a \kiste{} lights a defined subset of the light bulbs. The question is, given the light bulbs each \kiste{} lights in each \liste{}, is it possible to choose a \liste{} for each \kiste{} such that all light bulbs are turned on?
\end{example}
Note that defining \kisten{} and \listen{} as multisets instead of plain sets does not add depth to this problem and seems to complicate things at first. However, it simplifies constructions and makes them more convenient to read later on.
%TODO bsp
\begin{lemma}
  \pKML{} is \NPc .
\end{lemma}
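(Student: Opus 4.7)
The plan is to show membership in \claNP{} directly and then prove \NPhs{} by a straightforward polynomial-time many-one reduction from \ptSATs{}, which is known to be \NPc{}~\cite{Kar72}.

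For membership in \claNP{}, I would guess, for every \kiste{}, a \liste{} it shall be placed in (this is a certificate of size linear in the input). Verification then amounts to forming the multiset union of the chosen \listen{} and checking whether every color in~$C$ appears in it, which is clearly polynomial-time computable. Hence \pKMLs{} lies in \claNP{}.

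For \NPhs{}, let~$\phi = c_1 \wedge \ldots \wedge c_m$ be a boolean formula in $3$-conjunctive normal form over variables~$X = \{x_1, \ldots, x_n\}$. I would construct an instance of \pKMLs{} with the color set~$C := \{\gamma_1, \ldots, \gamma_m\}$, where~$\gamma_j$ represents the clause~$c_j$, and with one \kiste{}~$K_i$ for every variable~$x_i$. Each~$K_i$ contains exactly two \listen{}
\begin{align*}
L_i^{+} &:= \{\gamma_j : (x_i, +) \in c_j\}\text{, and}\\
L_i^{-} &:= \{\gamma_j : (x_i, -) \in c_j\}\text{,}
\end{align*}
representing the truth values $\true$ and $\false$ for~$x_i$, respectively. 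The construction is clearly polynomial-time computable.

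For correctness, a choice of one \liste{} per \kiste{} corresponds one-to-one to a truth assignment~$\nu : X \to \{\true, \false\}$ by selecting~$L_i^{+}$ iff~$\nu(x_i) = \true$. A color~$\gamma_j$ is covered by the selection exactly when some literal occurring in~$c_j$ is made true by~$\nu$, i.e.\ when~$c_j$ is satisfied under~$\nu$. Hence every color is covered iff every clause is satisfied iff~$\phi$ is satisfiable, establishing the reduction. I do not foresee a real obstacle here; the only care needed is to phrase things consistently with the multiset definitions of \kisten{} and \listen{} given earlier, which my construction does automatically since every~$L_i^{\pm}$ happens to be an ordinary set.
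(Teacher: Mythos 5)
Your proof is correct, but your hardness argument takes a genuinely different route from the paper's. The paper reduces from \textsc{Set Cover}: it builds a single \kiste{} containing one \liste{} per set of the family (with colors in bijection with the ground set) and then takes~$k$ identical copies of that \kiste{}, so that choosing one \liste{} per copy corresponds exactly to choosing a subfamily of size~$k$. You instead reduce from \ptSATs{}, with one color per clause and one \kiste{} per variable holding exactly two \listen{} (the clauses in which the variable occurs positively, respectively negatively). Both reductions are sound; the membership argument is essentially identical in both. Your construction has the mild advantage of showing hardness already for instances in which every \kiste{} has exactly two \listen{}, whereas the paper's choice makes the kinship of \pKMLs{} with \textsc{Set Cover} explicit, which the paper later exploits rhetorically when discussing (in)approximability. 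One small point worth a sentence in a final write-up: if a variable occurs only positively (or only negatively), one of your two \listen{} is the empty multiset; this is permitted by the definition of a $C$-\liste{}, but it deserves an explicit remark so the reader does not worry about degenerate \listen{}.
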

\begin{proof}
  We first show membership in \claNP: An example of a certificate for a yes-instance are the chosen \liste{} in each of the \kisten{}. This certificate is of polynomial size in the input length and, thus, \pKMLs{} belongs to~\claNP .

  \NPhs{} of \pKMLs{} can be seen via a simple reduction from the \textsc{Set Cover} problem. \textsc{Set Cover} has been proven to be \NPh{} by \citet{Kar72}. In \textsc{Set Cover} a set~$S$, a family~$F$ of subsets of~$S$, and an integer~$k$ is given. It is asked whether there is a subfamily~$F' \subseteq F$ such that $|F'| \leq k$ and the union of all sets in~$F'$ equals $S$. To solve \textsc{Set Cover} with \pKMLs{}, introduce a color set~$C$ such that there is a bijection between~$S$ and~$C$ and introduce a \kiste{}~$K$. For every set~$f \in F$ add a \liste{}  to~$K$ that contains the colors corresponding to elements of~$f$. Then $k$~copies of~$K$ form our sought instance of \pKMLs . This instance is polynomial-time constructible because $k \leq |F|$. If there is a solution to the \textsc{Set Cover} instance, we may just choose \listen{} accordingly in the \pKMLs{} instance and vice versa. 
\end{proof}
%Now let~$c$ be the number of colors and~$k$ the number of \kisten{} in a \pKMLs{} instance.
\begin{lemma}\label{lem:kmlfpt}
  \pKML{} can be solved in time~$\bigO^*(2^{ck})$.
\end{lemma}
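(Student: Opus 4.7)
The plan is to give a straightforward brute-force search on the positions, after first collapsing equivalent positions. The central observation is that for the purpose of covering~$C$, two positions in the same \kiste{} are interchangeable whenever they contain exactly the same set of colors (multiplicities and the specific representation as a multiset are irrelevant, since we only ask which colors appear in the union of chosen \listen{}). Hence, as a preprocessing step, I would replace each \liste{} by the underlying \emph{set} of colors it contains, and within each \kiste{} discard duplicate sets. After this step, every \kiste{} contains at most~$2^c$ distinct \listen{}, since there are only~$2^c$ subsets of~$C$.

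Next, I would enumerate all choice functions that assign to each of the~$k$ \kisten{} one of its (at most~$2^c$) remaining \listen{}. The number of such functions is bounded by~$(2^c)^k = 2^{ck}$. For each such choice, I would compute the union of the chosen color sets in~$\bigO(ck)$~time and accept iff this union equals~$C$. If some choice succeeds the instance is a yes-instance; otherwise, by the observation above, no choice of original \listen{} can cover~$C$ either, so the instance is a no-instance.

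The preprocessing step takes time polynomial in the input length (hashing color sets as bitmasks of length~$c$), and the enumeration phase runs in~$2^{ck}\cdot\bigO(ck)$~time. Combining both phases yields an overall running time of~$\bigO^*(2^{ck})$, proving the lemma. There is no real obstacle here: the only care needed is the argument that collapsing identical color sets is correctness-preserving, which is immediate because both the covering condition and the ``one \liste{} per \kiste{}'' condition depend only on the color set carried by each chosen \liste{}.
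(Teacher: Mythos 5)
Your proof is correct and follows exactly the same approach as the paper's: deduplicate \listen{} within each \kiste{} so that at most~$2^c$ remain, then exhaustively try all~$(2^c)^k = 2^{ck}$ combinations. Your write-up just spells out the correctness of the deduplication step a bit more explicitly.
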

\begin{proof}
  An algorithm to solve \pKMLs{} may simply try each combination of \listen{} for all the \kisten{}: We may assume that in every \kiste{} there are at most~$2^c$ \listen{} because \listen{} containing the same colors as other \listen{} may be deleted and multiple copies of one color in one \liste{} may also be deleted. Thus, there are at most~$(2^c)^k$ combinations of \listen{}.
\end{proof}

\section{\pKML{} is Or-Compositional}
We now consider \pKMLs{} parameterized by the number of colors~$c$ and the number of \kisten{}~$k$. In order to prove that \pKMLs{} is or-compositional, we have to give an algorithm as described at the beginning of this chapter. However, if such an algorithm receives~$2^{ck}$~instances as input, it may directly solve all of the instances and return a trivial yes- or no-instance: Let~$m$ be the number of input instances. If~$m \geq 2^{ck}$, solving every instance using the algorithm from \autoref{lem:kmlfpt} takes~$\bigO^*(m2^{ck})$~time. This is polynomial in~$m$. Thus, in the following, we may assume the number~$m$ of instances to be smaller than~$2^{ck}$, implying that $\log(m) \leq ck$. This relation allows us to generate an identification for instances.

\paragraph{Construction Outline.}
The basic idea is to create an instance-chooser by introducing new \kisten{} and colors. Every possible way to choose \listen{} in these new \kisten{} shall correspond to exactly one original instance that then has to be solved. In order to achieve this, the input instances are merged by creating new \kisten{} that contain all the \listen{} of exactly one \kiste{} of every instance. The new colors are then distributed among these merged \kisten{} in order to force every solution for the composite instance to solve the chosen original instance.
% In order to have solutions of the composite instance correspond to solutions in one of the input instances, new colors are introduced, the \listen{} are extended and new \kisten{} with \listen{} containing only new colors are introduced. The new \kisten{} serve as instance-chooser.

\paragraph{Composition Algorithm.}
Let $I_i$,~$0 \leq i \leq m - 1$, be instances of \pKMLs{}, each with~$c$ colors and~$k$ \kisten{}~$K^i_1, \ldots , K^i_k$. For convenience and without loss of generality, we assume that each instance uses the same color-set~$C$. Our composition algorithm for \pKMLs{} works as follows: For each~$1 \leq \alpha \leq k$ and each $1 \leq \beta \leq \log(m)$, introduce two colors~$o^0_{\alpha,\beta}$ and~$o^1_{\alpha,\beta}$. Then, for each~$1 \leq \alpha \leq k$, each~$1 \leq \beta \leq \log(m)$, and for each instance~$I_i$, if the binary encoding of~$i$ has a one at the~$\beta$'th binary place,\footnote{Counting the binary places from the right and starting with~1.} add the color~$o^1_{\alpha,\beta}$ to every \liste{} in the \kiste{}~$K^i_\alpha$, otherwise add the color~$o^0_{\alpha,\beta}$ to every \liste{} in \kiste{}~$K^i_\alpha$. Then, create a new instance $I^*$ by creating \kisten{}~$K^*_\alpha, 1 \leq \alpha \leq k$, where~$K^*_\alpha$ contains each of the modified \listen{} of the \kisten{}~$K^i_\alpha, 0 \leq i \leq m -1$. Finally, introduce \kisten{}~$K_\beta$,~$1 \leq \beta \leq \log(m)$, into the instance~$I^*$, where~$K_\beta$ contains one \liste{} with the colors~$o^0_{1,\beta}, \ldots , o^0_{k,\beta}$ and one \liste{} with the colors~$o^1_{1, \beta}, \ldots,  o^1_{k, \beta}$ and return~$I^*$. See also the pseudocode in \autoref{alg:compkml} and an example of a composite instance in \autoref{fig:kmlexample}.
\begin{algorithm}
  \LinesNumbered
  
  \KwIn{Instances~$I_i$, $0 \leq i \leq m - 1$, of \pKMLs{}, each with~$c$ colors from the set~$C$ and~$k$ \kisten{}~$K^i_1, \ldots , K^i_k$.}
  \KwOut{A composite instance $I^*$.}

  \BlankLine
  \lFor{$1 \leq \alpha \leq k, 1 \leq \beta \leq \log(m)$}{generate two new colors $o^0_{\alpha,\beta}$ and~$o^1_{\alpha,\beta}$\;}

  \For{$0\leq i \leq m-1, 1 \leq \alpha \leq k, 1 \leq \beta \leq \log(m)$}{
    \For{each \liste{}~$L$ in $K^i_\alpha$}{
      \uIf{the binary encoding of~$i$ has a one at place $\beta$}{
        add the color $o^1_{\alpha,\beta}$ to $L$\;}
      \lElse{add the color $o^0_{\alpha,\beta}$ to $L$\;}
    }
  }
  $C' \leftarrow C \uplus \{o^1_{\alpha,\beta},o^0_{\alpha,\beta}:  1 \leq \alpha \leq k, 1 \leq \beta \leq \log(m)\}$\;
  $I^* \leftarrow$ empty \pKMLs{} instance with colors~$C'$\;
  \For{$1 \leq \alpha \leq k$}{
    $K^*_\alpha \leftarrow$ \kiste{} with all \listen{} in the \kisten{}~$K^j_\alpha, 0 \leq j \leq m-1$\;
    Add $K^*_\alpha$ to $I^*$\;
  }
  \For{$1 \leq \beta \leq \log(m)$}{
    $K_\beta \leftarrow$ \kiste{} with the \liste{}~$\{o^0_{1,\beta}, \ldots , o^0_{k,\beta}\}$ and the \liste{}~$\{o^1_{1, \beta}, \ldots,  o^1_{k, \beta}\}$\;
    Add $K_\beta$ to $I^*$\;
  }
  \Return{$I^*$\;}
        
  \SetAlgoRefName{Composite\-SSC}
  \caption{Composition algorithm for \pKMLs{}.}
  \label{alg:compkml}
\end{algorithm}
\begin{figure}
  \begin {center}
    \includegraphics{kmlexample.1}
   \caption{Four instances~$I_0, \ldots, I_3$ of \pKMLs{} and a composite instance~$I^*$ produced by \autoref{alg:compkml}. Each of the instances~$I_0, \ldots, I_3$ contains two \kisten{} each with two \listen{}. In the composite instance~$I^*$ the \kisten{}~$K^0_1, \ldots, K^3_1$ and the \kisten{}~$K^0_2, \ldots, K^3_2$ are merged and their \listen{} extended with new colors (\listen{} shaded according to their original instance). Also, in the composite instance, new \kisten{}~$K_1, K_2$ are introduced that contain only \listen{} with new colors. If there is a solution to either of the input instances, then we can choose the corresponding \listen{} in~$I^*$ and cover the remaining new colors via a \liste{} in~$K_1$ and~$K_2$, respectively. Also, if there is a solution to~$I^*$, it has to choose one \liste{} in~$K_1$ and one in~$K_2$. The remaining new colors have to be covered by the \listen{} in~$K^*_1, K^*_2$. The only way to cover the new colors is to choose \listen{} in~$K^*_1, K^*_2$ that correspond to exactly one of the input instances.}
    \label{fig:kmlexample}  \end{center}
\end{figure}
\begin{lemma}\label{lem:kmlcomp}
  The following statements hold for the new instance~$I^*$:
  \begin{lemenum}
  \item $I^*$ has at most $k + ck$ \kisten{} and at most $k + 2ck^2$ colors. \label{enu:lkmlcomp1}
  \item $I^*$ is computable in time polynomial in the sum of the sizes of the input instances. \label{enu:lkmlcomp2}
  \item $I^*$ is a yes-instance if and only if there is a yes-instance~$I_i, 1 \leq i \leq m$. \label{enu:lkmlcomp3}
  \end{lemenum}
\end{lemma}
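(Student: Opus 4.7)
The plan is to dispatch (i) and (ii) by direct inspection of \autoref{alg:compkml}, and to devote the bulk of the argument to (iii), where the binary-identification trick does the real work.

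For statement~\enuref{enu:lkmlcomp1}, I would count directly: $I^*$ has the $k$ merged \kisten{}~$K^*_1,\ldots,K^*_k$ together with the $\log(m)$ selector \kisten{}~$K_1,\ldots,K_{\log(m)}$. Since we may assume~$m<2^{ck}$ (otherwise the composition algorithm solves all instances via \autoref{lem:kmlfpt} and outputs a trivial instance), we have~$\log(m)\leq ck$, and so the total number of \kisten{} is at most~$k+ck$. The color set is~$C$ plus two new colors~$o^0_{\alpha,\beta},o^1_{\alpha,\beta}$ for each pair~$(\alpha,\beta)$ with~$1\leq\alpha\leq k$ and~$1\leq\beta\leq\log(m)$, hence at most~$c+2k\log(m)\leq c+2ck^2$ colors (which is bounded by the stated~$k+2ck^2$ under the usual assumption $c\leq k$; otherwise the $c+2ck^2$ bound still suffices for polynomial-parameter dependence).

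For statement~\enuref{enu:lkmlcomp2}, every step of \autoref{alg:compkml} is a simple traversal: generating the~$2k\log(m)$ new colors, annotating each \liste{} of each input \kiste{} with at most~$\log(m)$ new colors, merging the \kisten{} by disjoint union of lists, and assembling the selector \kisten{}. The total work is bounded by a polynomial in~$\sum_{i=0}^{m-1}|I_i|+k$.

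The core of the lemma is statement~\enuref{enu:lkmlcomp3}. For the ``if''-direction, assume~$I_i$ is a yes-instance via \listen{}~$L^i_1,\ldots,L^i_k$ chosen in~$K^i_1,\ldots,K^i_k$. Their images (after color-annotation) lie in~$K^*_1,\ldots,K^*_k$ and already cover all of~$C$. Writing~$b_\beta(i)$ for the~$\beta$-th bit of~$i$, each such chosen \liste{} also covers the color~$o^{b_\beta(i)}_{\alpha,\beta}$ for every~$\beta$. The only uncovered colors are then~$o^{1-b_\beta(i)}_{\alpha,\beta}$ for~$1\leq\alpha\leq k$ and~$1\leq\beta\leq\log(m)$; for each~$\beta$ the \kiste{}~$K_\beta$ contains precisely the \liste{}~$\{o^{1-b_\beta(i)}_{1,\beta},\ldots,o^{1-b_\beta(i)}_{k,\beta}\}$, which we choose. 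All colors are thus covered.

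For the converse, suppose a solution for~$I^*$ picks \listen{}~$L^*_1,\ldots,L^*_k$ in~$K^*_1,\ldots,K^*_k$ and one \liste{} in each~$K_\beta$; the latter choice gives a bit~$b_\beta\in\{0,1\}$ indicating whether the~$o^0$- or~$o^1$-\liste{} was picked. The residual new colors still to be covered are exactly~$o^{1-b_\beta}_{\alpha,\beta}$ for all~$\alpha,\beta$, and they can only be covered out of~$K^*_\alpha$. Now~$L^*_\alpha$ is (an annotated copy of) a \liste{} from some input \kiste{}~$K^{j(\alpha)}_\alpha$, and by construction it contains~$o^{1-b_\beta}_{\alpha,\beta}$ if and only if the~$\beta$-th bit of~$j(\alpha)$ equals~$1-b_\beta$. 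Since this must hold for every~$\beta$, the index~$j(\alpha)$ is determined uniquely by the bits~$(1-b_1,\ldots,1-b_{\log(m)})$; in particular~$j(\alpha)$ is the same value~$i^\ast$ for every~$\alpha$. Stripping away the new colors from the chosen \listen{} then yields a solution to~$I_{i^\ast}$.

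The only subtle point is the bijection between the binary string read off from the selector \kisten{} and the unique input instance whose \listen{} can finish covering the residual new colors; everything else is bookkeeping.
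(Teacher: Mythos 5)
Your proof is correct and follows essentially the same route as the paper's: direct counting for~(i), inspection of the algorithm for~(ii), and the same binary-identification argument for~(iii), including the key step that the choices in the selector \kisten{}~$K_\beta$ leave exactly the colors~$o^{1-b_\beta}_{\alpha,\beta}$ uncovered and thereby force every merged \kiste{}~$K^*_\alpha$ to take its \liste{} from one and the same input instance. Your side remark that the color count is really~$c+2ck^2$ rather than the stated~$k+2ck^2$ correctly flags what is evidently a typo carried through the paper (its own proof writes ``$k$ colors from the input instances'' where~$c$ is meant), and does not affect the polynomial bound the lemma is used for.
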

\begin{proof}
  Concerning statement~\enuref{enu:lkmlcomp1}: There are $k$~\kisten{}~$K^*_\alpha$ and~$\log(m)$ \kisten~$K_\beta$ in~$I^*$. As we observed at the beginning of this section, $\log(m) \leq ck$. The color-set of~$I^*$ consists of~$k$ colors from the input instances plus~$2k \log(m) \leq 2ck^2$ newly introduced colors (line 7 and 8 in \autoref{alg:compkml}).

  Statement~\enuref{enu:lkmlcomp2} can easily be checked by looking at \autoref{alg:compkml}.

  For statement~\enuref{enu:lkmlcomp3}, first assume that there is a yes-instance~$I_j$ among the input instances. Then, all~$c$ colors in~$C$ can be covered by choosing \listen{} in \kisten{} of $I_j$. Since each \liste{} of the \kisten{}~$K^j_\alpha$,~$1\leq \alpha \leq k$, is extended (lines 2 to 6) and added to the \kisten{}~$K^*_\alpha$ (lines 9 to 11), we can choose the corresponding modified \listen{} in each~$K^*_\alpha$ to cover the colors in~$C$ and the colors~$\{o^{\binary(j,\beta)}_{\alpha,\beta}: 1 \leq \alpha \leq k, 1\leq \beta \leq \log(m)\}$, where~$\binary(j,\beta)$ denotes the digit of the binary encoding of~$j$ at the position~$\beta$. It remains to cover the colors~$\{o^{1-\binary(j,\beta)}_{\alpha,\beta}: 1 \leq \alpha \leq k, 1\leq \beta \leq \log(m)\}$. This can be done by choosing the \listen{} of the form~$\{o^{1-\binary(j,\beta)}_{1,\beta}, \ldots , o^{1-\binary(j,\beta)}_{k,\beta}\}$ in the \kisten{}~$K_\beta, 1\leq \beta \leq \log(m)$.

  Now, assume that $I^*$ is a yes-instance, that is, assume that it is possible to choose exactly one \liste{} in each of the \kisten{} of~$I^*$ in order to cover all colors of~$I^*$. This implies that there is an integer~$0 \leq j \leq m -1$ such that the \listen{} chosen in the \kisten{}~$K_\beta$,~$1 \leq \beta \leq \log(m)$, are of the form~$\{o^{1-\binary(j,\beta)}_{1,\beta}, \ldots , o^{1-\binary(j,\beta)}_{k,\beta}\}$. None of these \listen{} cover any color of~$C$ or~$\{o^{\binary(j,\beta)}_{\alpha,\beta}: 1 \leq \alpha \leq k, 1\leq \beta \leq \log(m)\}$. By construction the colors~$o^{\binary(j,\beta)}_{\alpha,\beta}$,~$1\leq \beta\leq \log(m)$, for some fixed~$\alpha$ occur only in the \kiste{}~$K^*_\alpha$. Furthermore, these colors occur together (that is in one \liste{}) in this \kiste{} only in the \listen{} that were taken from the instance~$I_j$. In order to cover these colors, one of the modified \listen{} of the instance~$I_j$ has to be chosen in~$K^*_\alpha$. This holds for all \kisten{}~$K^*_\alpha$,~$1 \leq \alpha \leq k$, and since all colors of~$C$ are covered, $I_j$ must be a yes-instance.
\end{proof}
\autoref{lem:kmlcomp} shows that \autoref{alg:compkml} is a composition algorithm for \pKMLs{}. Thus, the following theorem follows:
\begin{theorem}
  \pKML{} is or-compositional.
\end{theorem}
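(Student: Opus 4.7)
The plan is to prove the theorem as a straightforward assembly of the fixed-parameter tractability result in \autoref{lem:kmlfpt} and the properties of \autoref{alg:compkml} established in \autoref{lem:kmlcomp}. Following the strategy outlined at the start of the chapter and the paragraph preceding \autoref{alg:compkml}, the composition algorithm should split into two cases depending on the number~$m$ of input instances.

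First I would handle the case~$m \geq 2^{ck}$. Here, the algorithm invokes the~$\bigO^*(2^{ck})$-time solver from \autoref{lem:kmlfpt} on each of the~$m$ input instances, which in total takes~$\bigO^*(m \cdot 2^{ck}) \subseteq \bigO^*(m^2)$ time---that is, time polynomial in~$\sum_{i=1}^m |I_i| + (c+k)$. Depending on whether at least one of the~$I_i$ is a yes-instance, the algorithm outputs a constant-size trivial yes-instance or trivial no-instance of \pKMLs{}, whose parameter value is bounded by a constant and thus trivially by a polynomial in~$c + k$.

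Second, in the case~$m < 2^{ck}$, the inequality~$\log(m) \leq ck$ holds, which is precisely the prerequisite used when designing \autoref{alg:compkml}. The composition algorithm therefore runs \autoref{alg:compkml} on~$I_1,\ldots,I_m$ and outputs the resulting instance~$I^*$. By \autoref{lem:kmlcomp}\enuref{enu:lkmlcomp2}, this step runs in time polynomial in the sum of the sizes of the inputs; by \autoref{lem:kmlcomp}\enuref{enu:lkmlcomp1}, the parameter value of~$I^*$ (sum of colors and \kisten{}) is bounded by~$k + ck + k + 2ck^2$, which is polynomial in~$c+k$; and by \autoref{lem:kmlcomp}\enuref{enu:lkmlcomp3}, $I^* \in Q$ if and only if~$I_j \in Q$ for some~$1 \leq j \leq m$.

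Assembled, these two cases yield exactly the three conditions required for an or-composition algorithm, so \pKMLs{} is or-compositional. There is no real obstacle here---the nontrivial work has already been carried out in establishing \autoref{lem:kmlcomp}; the present proof only needs to glue that lemma together with the FPT algorithm via the case distinction~$m \gtrless 2^{ck}$.
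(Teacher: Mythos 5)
Your proposal is correct and follows essentially the same route as the paper: dispose of the case~$m \geq 2^{ck}$ by solving all instances with the algorithm from \autoref{lem:kmlfpt}, and otherwise use~$\log(m) \leq ck$ to run \autoref{alg:compkml}, whose three required properties are exactly those established in \autoref{lem:kmlcomp}. Nothing is missing.
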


\section{Lower Bounds for Problem Kernels}
Using the knowledge we have gained about \pKML{}~(\pKMLs{}), we can give lower bounds on kernel sizes for \EE{} problems. We do this by giving a polynomial-parameter polynomial-time reduction from \pKMLs{} (parameterized by the number of colors~$c$ and the number of \kisten{}~$k$) to \pxDEE{2} (parameterized by the maximum number of extension arcs). %TODO ref pxdee2
Then, since both problems are \NPc{}, a problem kernel of polynomial size for \pxDEE{2}~(\pxDEEs{2}) would imply a polynomial problem kernel for \pKMLs{}---we could simply transform an \pKMLs{} instance to a \pxDEEs{2} instance via the parameterized reduction, kernelize it, and then back-transform the underlying non-parameterized \pxDEEs{2} instance to an \pKMLs{} instance with polynomial blow-up since the reduction is polynomial-time computable. Furthermore, because \pxDEEs{2} is a special case of \WMEEs{} (see \autoref{sec:constrainedEE}), we also obtain lower bounds on the kernel sizes for this more general problem.

In this section, we use the symbols~$\prec, \preceq, \succ, \succeq$ for pairs of tuples as ``component-wise $<, \leq, >, \geq$'', respectively. We also frequently use the notion of allowed arcs. For their definition, see \autoref{sec:constrainedEE}.

\paragraph{Reduction Outline.}
The reduction uses the fact that the input graph of an \EE{} problem has to be connected by adding extension arcs. Thus, we model colors of an \pKMLs{} instance as connected components that have to be connected by specific paths consisting of allowed extension arcs. These paths will correspond to the \listen{} in the \pKMLs{} instance. The main tool we use for the construction are ``confined regions'' in which vertices can only be connected to one another via extension arcs inside the region and not to vertices outside of the region.

We continue with an intuitive description of our reduction and give a more detailed one in \autoref{cons:redkmlto2dee}. For the detailed description we need some minor problem restrictions. The descriptions are followed up by an example and after this we give the correctness proof.

\paragraph{Intuitive Description.}The idea behind our construction is as follows: It first creates pairs~$v^1_i, v^2_i$ of unbalanced vertices for every \kiste{}~$K_i$ in the given instance~$I_{\pKMLs{}}$ of \pKMLs{}. These pairs are interconnected via arcs from the set~$A_1$ that form a cycle such that all pairs belong to one single component. Next, for every \liste{}~$L^i_j$ in the \kisten{}, vertices~$w_{i,j,m}$ are introduced that correspond to the colors~$1 \leq m \leq c$ in the \liste{}. The vertices are placed such that any incoming extension arcs can only originate from one of the vertices of the same \liste{} or from the unbalanced vertex corresponding to the \kiste{} the \liste{} is contained in. Analogously, outgoing extension arcs can only target vertices of the same \liste{} or the corresponding unbalanced vertex. Finally, all vertices that correspond to a specific color are interconnected via a directed cycle to create one connected component consisting of balanced vertices for every color. Carrying out these steps, we obtain an instance~$I_{\pxDEEs{2}}$ of \pxDEEs{2}.

In a valid \EE{} for~$I_{\pxDEEs{2}}$ all connected components of the input graph are connected to one another via extension arcs. Observe that in~$\pxDEEs{2}$ there are no cycles in any valid \EE{} because a cycle must include at least one arc that points upwards-right. Thus, the connected components in~$I_{\pxDEEs{2}}$ have to be connected via paths. The placement of the vertices ensures that each such path starts and terminates in the unbalanced vertices corresponding to a single \kiste{} and furthermore any such path traverses only vertices corresponding to a single \liste{}. Also, the placement of the vertices~$v^1_i$ will ensure that there is no allowed incoming extension arc and thus every \EE{} can contain at most one path between~$v^1_i$ and~$v^2_i$. This gives a one-to-one correspondence between \pKMLs{} solutions that cover all colors and \EE s in \pxDEEs{2} that connect all components. 

\paragraph{Problem Restrictions.}
For instances of \pKMLs{} with color sets of cardinality~$c$ we assume that there are exactly~$c$ colors in each \liste{}---this is without loss of generality, because if there are more colors, then we can delete a repeated color; if there are less, then we can repeat an arbitrary color already in the list. We also assume that the number of \listen{} in a \kiste{} is the same for all \kisten{}---we can do this because if there is a \kiste{} with less \listen{} than in another \kiste{}, we can just repeat a \liste{} already present.

\begin{construction}\label{cons:redkmlto2dee}
  Let the color set~$\{o_1, \ldots, o_c\}$ and the \kisten{}~$K_i$,~$1 \leq i \leq k$, each with~$l$ \listen{}~$L^i_j$,~$1 \leq j \leq l$, constitute an instance~$I_{\pKMLs{}}$ of \pKMLs{}. Construct an instance of \pxDEEs{2} as follows: 

  Define the following vertices:
\begin{align*}
  \begin{split}
    v^1_i := (8cil,8c(k-i+1)l)
  \end{split}
  \begin{split}
    v^2_i := v^1_i - (4cl, 4cl)
  \end{split}
\end{align*}
 %(4c(2i - 1)l,4c(2(k-i)+1)l)$. 
Introduce the vertex set~$V := \{v^1_i, v^2_i: 1 \leq i \leq k\} \cup \{v^1_0, v^2_{k+1}\}$. Connect these vertices using the following arc sets:
\begin{align*}
  A_1 &:= \{(v^1_{i-1}, v^1_{i}),(v^2_{i+1},v^2_{i})  : 1 \leq i \leq k\} \cup \{(v^1_k, v^2_{k + 1}), (v^2_1, v^1_0)\}\\
  A_2 &:= \{(v^2_i, v^1_i) : 1 \leq i \leq k\}
\end{align*}
Furthermore, for every $1 \leq i \leq k, 1 \leq j \leq l, 1 \leq m \leq c$, define the following vertices:\[w_{i, j, m} := v^2_i + (0, 4cl) + (2c(2j - 1) - 2(m -1), - 2c(2j - 2) - 2(m - 1))\] %(4c(2i - 1)l + 2c(2j - 1) - 2(m - 1), 4c(2(k-i)+2)l - 2c(2j - 2) - 2(m - 1))$. 
For every \liste{}~$L^i_j$, let $o^{i,j}_1, \ldots, o^{i, j}_c$ be the colors~$L^i_j$ contains and introduce the vertex set~$\{w_{i, j, m} : 1 \leq m \leq c\}$. Let~$W_n := \{w_{i,j,m} : o^{i, j}_m = o_n\}$ and let~$w^1_n, \ldots , w^p_n$ be a total ordering of~$W_n$. For every~$1 \leq n \leq c$ introduce the following arc set: \[B_n := \{(w^i_n, w^{i + 1}_n):1 \leq i \leq p - 1\} \cup \{(w^p_n, w^1_n)\}\]
The graph~$G := (V \cup \bigcup^c_{n = 1}W_n, A_1 \cup A_2 \cup \bigcup^c_{n = 1}B_n)$ and the integer~$(c+1)k$ constitute an instance~$I_{\pxDEEs{2}}$ of \pxDEEs{2}.
\end{construction}%
\begin{figure}
  \begin {center}
    \subfloat[\pKMLs{} instance]{
      \includegraphics{kmlexample.2}
      \label{fig:redkmlto2deea}
    } \\
    \subfloat[\pxDEEs{2} instance]{
      \includegraphics{kmlexample.3}
      \label{fig:redkmlto2deeb}}
    \caption{Example application of \autoref{cons:redkmlto2dee} explained in \autoref{ex:redkmlto2dee}.}\label{fig:redkmlto2dee}
  \end{center}
\end{figure}%
\begin{example}\label{ex:redkmlto2dee}
  Consider \autoref{fig:redkmlto2dee}. In \autoref{fig:redkmlto2deea}, an instance~$I_\pKMLs{}$ of \pKMLs{} is shown. It contains two \kisten{}~$K_1, K_2$ each with two \listen . Below this, you can see an instance~$I_\pxDEEs{2}$ of \pxDEEs{2} produced from~$I_\pKMLs{}$ by \autoref{cons:redkmlto2dee}---that is, a directed graph embedded in two-dimensional space.\footnote{Not to scale, the coordinates used in \autoref{cons:redkmlto2dee} are simplified for readability.} It comprises a number of vertices represented by circles that may be connected by arcs. The big circles represent vertices that correspond to colors. The number of the color is written in the top half and the vertex name in the bottom half of the circle. Additionally, we see rectangles shaded in gray (``\kiste{} regions'') and white rectangles (``\liste{} regions'').

For the \kiste{}~$K^1_1$ the pair of vertices~$v^1_1, v^2_1$ is introduced in~$I_\pxDEEs{2}$ and these vertices are made unbalanced via the arc~$(v^2_1, v^1_1)$. Analogously this is done for the second \kiste . Additionally, two helper-vertices~$v^1_0, v^2_3$ are introduced. They simply ensure that the built graph does not include multiple arcs and remains a simple directed graph. Next, for every \liste , there are vertices corresponding to their colors. For example, the \liste~$L^1_1 = \{1, 2, 2\}$ corresponds to the vertices~$w_{1,1,1}, w_{1,1,2}, w_{1,1,3}$---by \autoref{cons:redkmlto2dee} this is due to an arbitrary ordering of the colors in the \liste , but we stick to the top-to-bottom ordering given by \autoref{fig:redkmlto2deea} here. All vertices that correspond to one single color are connected by a directed cycle. For instance this is the case for the vertices~$w_{1,1,2}, w_{1,1,3}, w_{1,2,3}$ which correspond to the color~$2$. 

Consider the solution to~$I_\pKMLs{}$ that chooses the \listen{}~$L^1_1, L^2_1$. To solve~$I_\pxDEEs{2}$ we can just add the directed paths from~$v^1_1$ to~$v^2_1$ and from~$v^1_2$ to~$v^2_2$ also traversing the vertices corresponding to the \listen~$L^1_1, L^2_1$, respectively. Also, since the only allowed arcs in~$\pxDEEs{2}$ point downwards-left, any solution to~$I_\pxDEEs{2}$ consists of paths from the \kiste~vertices~$v^1_i$ to the vertices~$v^2_i$ that traverse other vertices that correspond to exactly one \liste{}. Also, in every \kiste-region at most one path can be in an \EE{}, because the vertices~$v^1_1, v^1_2$ have to be balanced and there are no allowed incoming extension arcs. Since the paths connect all connected components of~$I_\pxDEEs{2}$, the corresponding \listen{} contain all colors of~$I_\pKMLs{}$. Hence, these \listen{} form a solution to~$I_\pxDEEs{2}$.
\end{example}%
\paragraph{Correctness.}In order to prove the soundness of the construction, we first make some observations about the placement of the vertices. Then, a useful implication of this placement is observed. We then proceed to show that the above-mentioned paths between the unbalanced vertices~$v^1_i$ and~$v^2_i$ are the only allowed arcs in valid \EE s and, using this, we derive the soundness. The following two observations formulate the notion of regions in the instance.
\begin{observation}\label{obs:lowbound_big_rectangle}
  The vertices~$w_{i,j,m}$ are contained in the rectangle spanned by all points~$p \in \mathbb{Q}^2$ with~$v^2_i \preceq p \preceq v^1_i$.
\end{observation}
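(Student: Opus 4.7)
The plan is to verify the two component-wise inequalities $v^2_i \preceq w_{i,j,m}$ and $w_{i,j,m} \preceq v^1_i$ by direct substitution into the coordinate definitions. Since $v^1_i - v^2_i = (4cl, 4cl)$, the observation reduces to showing that the displacement vector
\[
  w_{i,j,m} - v^2_i \;=\; \bigl(2c(2j-1) - 2(m-1),\; 4cl - 2c(2j-2) - 2(m-1)\bigr)
\]
lies in the square $[0, 4cl] \times [0, 4cl]$ for all admissible indices $1 \leq j \leq l$ and $1 \leq m \leq c$.

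For the first coordinate, I would observe that $2c(2j-1) - 2(m-1)$ is monotonically increasing in $j$ and decreasing in $m$, so it is minimized at $(j,m) = (1,1)$, giving the value $2c > 0$, and maximized at $(j,m) = (l, 1)$, giving $2c(2l-1) = 4cl - 2c < 4cl$. For the second coordinate, $4cl - 2c(2j-2) - 2(m-1)$ is decreasing in both $j$ and $m$; it is minimized at $(j,m) = (l,c)$ with value $4cl - 2c(2l-2) - 2(c-1) = 2c + 2 > 0$, and maximized at $(j,m) = (1,1)$ with value $4cl$. Both coordinates therefore lie in $[0, 4cl]$, yielding $v^2_i \preceq w_{i,j,m} \preceq v^1_i$.

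There is no real obstacle here: the statement is a routine bounding exercise, and the only care needed is in checking that the extremal values of $j$ and $m$ really give the claimed extrema of the two coordinate expressions, which follows from their monotonicity in $j$ and $m$.
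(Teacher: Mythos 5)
Your proposal is correct and takes essentially the same route as the paper: both arguments are a direct coordinate computation showing that the displacement of $w_{i,j,m}$ relative to $v^2_i$ and $v^1_i$ has nonnegative entries (the paper writes out $v^1_i - w_{i,j,m}$ and $w_{i,j,m} - v^2_i$ separately, which is equivalent to your $[0,4cl]^2$ formulation). One small slip: since the first coordinate $2c(2j-1)-2(m-1)$ is, as you correctly note, decreasing in $m$, its minimum is attained at $(j,m)=(1,c)$ with value $2$, not at $(1,1)$ with value $2c$ --- but the true minimum is still positive, so the conclusion is unaffected.
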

\begin{proof}
 Consider the following difference:
\begin{align*}
  {v^1_i} - {w_{i,j,m}} % & =
%   \begin{pmatrix}
%     8cil \\
%     8c(k-i+1)l)
%   \end{pmatrix}
%   - 
%   \begin{pmatrix}
%     4c(2i - 1)l + 2c(2j - 1) - 2(m - 1)\\
%     8c(k-i+1)l - 4c(j - 1) - 2(m - 1)
%   \end{pmatrix}
%   \\
  & = 
  \begin{pmatrix}
    4cl - 2c(2j - 1) + 2(m - 1)\\
    4c(j - 1) + 2(m - 1)
  \end{pmatrix}
  ^\top %TODO def transponiert?
\end{align*}
Both coordinates are positive because~$1 \leq j \leq l$. Analogously for~$v^2_i$ and~$w_{i,j,m}$:
\begin{align*}
  w_{i,j,m} - v^2_i % & =
%   \begin{pmatrix}
%     4c(2i - 1)l + 2c(2j - 1) - 2(m - 1)\\
%     8c(k-i+1)l - 4c(j - 1) - 2(m - 1)
%   \end{pmatrix}
%   -
%   \begin{pmatrix}
%     4c(2i - 1)l \\
%     4c(2(k-i)+1)l
%   \end{pmatrix}
%   \\
  & = 
  \begin{pmatrix}
    2c(2j - 1) - 2(m - 1)\\
    4cl - 4c(j - 1) - 2(m - 1)
  \end{pmatrix}
  ^\top
\end{align*}
The first coordinate is positive because~$1 \leq j$ and~$1 \leq m \leq c$. The second coordinate is positive because~$j \leq l$ and~$m \leq c$. 
\end{proof}
\begin{observation}\label{obs:lowbound_small_rectangle}
  The vertices~$w_{i,j,m}$ are contained in the rectangle spanned by all points~$p \in \mathbb{Q}^2$ with~$w_{i, j, c}  \preceq p \preceq w_{i, j, 1}$. Moreover,~$w_{i, j, m} \succeq w_{i, j, m'}$ for~$m' > m$.
\end{observation}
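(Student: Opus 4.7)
The plan is to prove the second (monotonicity) claim first by direct computation, and then derive the containment statement as an immediate corollary.

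First I would expand the definition of $w_{i,j,m}$ given just above the observation and compute the difference $w_{i,j,m} - w_{i,j,m'}$. Since $i$, $j$, and the constant offset $v^2_i + (0, 4cl) + (2c(2j-1), -2c(2j-2))$ do not depend on the last index, these terms cancel and only the $(-2(m-1), -2(m-1))$ contributions remain. A one-line calculation gives
\[
w_{i,j,m} - w_{i,j,m'} = \bigl(2(m' - m),\; 2(m' - m)\bigr).
\]
If $m' > m$, both coordinates are strictly positive, so $w_{i,j,m} \succ w_{i,j,m'}$; in particular $w_{i,j,m} \succeq w_{i,j,m'}$, which is the second claim of the observation.

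For the first claim, instantiate the monotonicity twice: with $m' := m$ and the ``$m$'' of the statement taken to be $1$, we get $w_{i,j,1} \succeq w_{i,j,m}$ whenever $m \geq 1$; symmetrically, taking $m' := c$, we get $w_{i,j,m} \succeq w_{i,j,c}$ whenever $c \geq m$. Since $1 \leq m \leq c$ by hypothesis, both inequalities hold simultaneously, placing $w_{i,j,m}$ in the axis-aligned rectangle spanned by $w_{i,j,c}$ (lower-left corner) and $w_{i,j,1}$ (upper-right corner).

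There is no real obstacle here; the observation is a purely arithmetic consequence of the definition. The only thing to be careful about is keeping track of signs in the second coordinate of $w_{i,j,m}$, which is built from $-2c(2j-2) - 2(m-1)$ (note the leading minus sign), so that the contribution of $m$ to the second coordinate is $-2(m-1)$ rather than $+2(m-1)$. Once this sign bookkeeping is done, both parts of the statement follow from the single displayed identity above.
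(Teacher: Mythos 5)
Your proof is correct and follows essentially the same route as the paper: the paper's own argument also just notes that $w_{i,j,m} - w_{i,j,m'}$ has both coordinates positive for $m' > m$ because $m$ enters both coordinates with a negative sign, and then concludes the containment from $1 \leq m \leq c$. Your version merely spells out the cancellation and the two instantiations explicitly.
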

\begin{proof}
  It is clear that~$w_{i, j, m} - w_{i, j, m'}$, where~$m' > m$ is positive, because~$m$ and~$m'$, respectively, have a negative sign in the definition of these vertices. The statement follows because~$1 \leq m \leq c$.
 \end{proof}
We call the corresponding rectangle spanned by~$v^1_i, v^2_i$ the \emph{region of \kiste{}~$K_i$} and the rectangle spanned by~$w_{i, j, 1}, w_{i, j, c}$ the \emph{region of \liste{}~$L^i_j$}.
The intent of this placement is to exploit the following observation.
\begin{observation}\label{obs:lowbound_exclusive_rectangle}
  Let $x,b,d \in \mathbb{Q}^2$ with~$b \preceq d$ and neither~$x \preceq d$ nor~$b \preceq x$. Then, for every~$b \preceq c \preceq d$ it holds that neither~$x \preceq c$ nor~$c \preceq x$.
\end{observation}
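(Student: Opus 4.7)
The plan is to unpack the two incomparability hypotheses coordinate by coordinate, observe that they must involve different coordinates, and then transport the resulting strict inequalities to any intermediate point $c$ via the sandwich $b \preceq c \preceq d$.

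First I would rewrite $x \not\preceq d$ as ``there exists an index $i \in \{1,2\}$ with $x_i > d_i$'', and $b \not\preceq x$ as ``there exists $j \in \{1,2\}$ with $x_j < b_j$''. The key sub-claim is that $i \neq j$: if $i = j$, combining $x_i > d_i$ with $x_i < b_i$ and the hypothesis $b_i \leq d_i$ would give $d_i < x_i < b_i \leq d_i$, a contradiction. Thus, up to relabeling axes, we may assume $x_1 > d_1$ and $x_2 < b_2$.

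Now for any $c$ with $b \preceq c \preceq d$, the componentwise bounds yield $c_1 \leq d_1 < x_1$ and $c_2 \geq b_2 > x_2$. The first inequality rules out $x \preceq c$ (since the first coordinate of $x$ exceeds that of $c$), and the second rules out $c \preceq x$ (since the second coordinate of $c$ exceeds that of $x$). Hence $c$ and $x$ are incomparable, which is exactly the desired conclusion.

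I expect no real obstacle: the whole argument is a one-line case analysis on the two coordinates plus the transitivity of $\leq$ on $\mathbb{Q}$. The only thing worth stating carefully is the ``different coordinate'' step, because it is the place where the hypothesis $b \preceq d$ is actually used; without it the statement is false (e.g.\ take $b = (1,0)$, $d = (0,1)$, $x = (0,0)$, $c = b$).
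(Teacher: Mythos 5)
Your proof is correct. The paper itself offers no written argument for this observation---it simply states that it ``directly follows from looking at'' Figure~\ref{fig:region}---so your coordinate-wise case analysis is exactly the formalization of that picture: the two incomparability hypotheses must be witnessed in \emph{different} coordinates (this is where $b \preceq d$ enters), giving $x_1 > d_1$ and $x_2 < b_2$ up to relabeling, and both strict inequalities transfer to any $c$ in the box via $c_1 \leq d_1$ and $c_2 \geq b_2$. This is the argument the paper leaves to the reader, carried out in full.

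One small quibble with your closing remark: the statement does not become \emph{false} when the hypothesis $b \preceq d$ is dropped. Any $c$ with $b \preceq c \preceq d$ already forces $b \preceq d$ by transitivity, so without that hypothesis the statement is merely vacuously true whenever $b \not\preceq d$. In your proposed counterexample with $b = (1,0)$, $d = (0,1)$, the point $c = b$ fails $c \preceq d$, so it does not lie in the sandwich at all (and $x = (0,0)$ also violates $x \not\preceq d$). The hypothesis $b \preceq d$ is what makes your ``different coordinate'' step go through directly, but it is not logically independent of the quantifier's range being nonempty.
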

\autoref{obs:lowbound_exclusive_rectangle} directly follows from looking at \autoref{fig:region}.
\begin{figure}
  \begin{center}
    \includegraphics{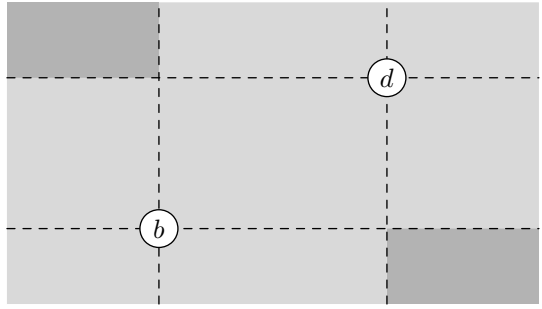}
    \caption{The possible placements for tuples~$y \in \mathbb{Q}^2, y \preceq c \vee c \preceq y$ for any~$c$ such that~$b \preceq c \preceq d$ are colored in light grey. This region does not intersect with the region of possible locations for~$x$ colored in dark grey (not including the dashed lines).}
    \label{fig:region}
  \end{center}
\end{figure}
\begin{lemma}\label{lem:lowbound_allowed_arcs}
  Any allowed extension arc in the instance~$I_{\pKMLs}$ starts in~$v^1_i$ or~$w_{i,j,m}$ for some~$i, j, m$ and ends in either~$v^2_i$ or~$w_{i,j,m'}$ with~$1 \leq m < m' \leq c$.
\end{lemma}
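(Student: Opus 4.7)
The plan is to classify every candidate pair of vertices $(u,v)$ in $G$ according to which \kiste{} region contains each endpoint, and to show that only the pairs listed in the statement satisfy $u \succeq v$. Throughout, I will lean on \autoref{obs:lowbound_big_rectangle}, \autoref{obs:lowbound_small_rectangle}, and \autoref{obs:lowbound_exclusive_rectangle}: the first two locate $w_{i,j,m}$ inside the region of \kiste{}~$K_i$ and inside the sub-rectangle spanned by $w_{i,j,c}\preceq w_{i,j,1}$, while the third reduces incomparability of two axis-aligned rectangles to incomparability of their four corner pairs.

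First I would rule out cross-\kiste{} arcs. A direct calculation from $v^1_i=(8cil,8c(k-i+1)l)$ and $v^2_i=v^1_i-(4cl,4cl)$ shows that for $i<i'$ each of the pairs $\{v^1_i,v^1_{i'}\}$, $\{v^1_i,v^2_{i'}\}$, $\{v^2_i,v^1_{i'}\}$, $\{v^2_i,v^2_{i'}\}$ is $\preceq$-incomparable, because the $x$-coordinate grows by at least $4cl$ while the $y$-coordinate drops by at least $4cl$ when moving from region $i$ to region $i'$. Applying \autoref{obs:lowbound_exclusive_rectangle} twice---once with the rectangle of $K_{i'}$ and then with the rectangle of $K_i$---together with \autoref{obs:lowbound_big_rectangle}, excludes every allowed arc between a vertex in the region of $K_i$ and one in the region of $K_{i'}$.

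Next I would analyse arcs inside a single \kiste{} region. For $j\neq j'$ a short computation yields $w_{i,j',1}-w_{i,j,1}=(4c(j'-j),-4c(j'-j))$, and the three analogous differences $w_{i,j',c}-w_{i,j,1}$, $w_{i,j',1}-w_{i,j,c}$, $w_{i,j',c}-w_{i,j,c}$ all retain this sign pattern since the $4c|j'-j|$ gap between consecutive \liste{} regions swamps the additive $\pm 2(c-1)$ corrections; hence the corners of the \liste{} regions of $L^i_j$ and $L^i_{j'}$ are pairwise incomparable. Two applications of \autoref{obs:lowbound_exclusive_rectangle} then force every $w_{i,j,m}$ and $w_{i,j',m'}$ with $j\neq j'$ to be incomparable. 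Within one \liste{}, \autoref{obs:lowbound_small_rectangle} already yields $w_{i,j,m}\succeq w_{i,j,m'}$ precisely when $m<m'$, and \autoref{obs:lowbound_big_rectangle} gives $v^1_i\succeq w_{i,j,m}\succeq v^2_i$ for every $j,m$. The remaining possibilities---arcs starting in $v^2_i$ or ending in $v^1_i$---are ruled out because $v^2_i$ (respectively $v^1_i$) is the unique minimum (respectively maximum) of its region and self-loops are excluded from the graph model.

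The main obstacle I anticipate is purely bookkeeping: carefully checking that the leftover additive terms of the form $2(m-1)$ and $2(c-1)$ cannot bridge the $4c$-sized spacing between consecutive \liste{} regions or the $4cl$-sized spacing between consecutive \kiste{} regions. Once these numerical inequalities are verified, the lemma drops out mechanically by a repeated use of \autoref{obs:lowbound_exclusive_rectangle} on the nested rectangles introduced by \autoref{obs:lowbound_big_rectangle} and \autoref{obs:lowbound_small_rectangle}.
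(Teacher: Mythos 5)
Your proposal is correct and follows essentially the same route as the paper's proof: verify the listed arcs are allowed via \autoref{obs:lowbound_big_rectangle} and \autoref{obs:lowbound_small_rectangle}, then exclude all cross-\kiste{} and cross-\liste{} arcs by computing the coordinate differences of the region corners (one positive, one negative) and invoking \autoref{obs:lowbound_exclusive_rectangle}. Your extra remark ruling out arcs leaving $v^2_i$ or entering $v^1_i$ is a small completeness point the paper leaves implicit, but it does not change the argument.
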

\begin{proof}
  In \pxDEEs{2} arcs~$(u, v)$ are allowed extension arcs if~$v \preceq u$. Because of \autoref{obs:lowbound_big_rectangle} and \autoref{obs:lowbound_small_rectangle} we know that the above mentioned arcs are allowed arcs. It remains to show that they are the only allowed arcs. We first look at arcs between vertices that belong to regions of two different \kisten{}~$K_i, K_{i'}$. For them---by \autoref{obs:lowbound_exclusive_rectangle}---it suffices to show that for~$x \in \{v^1_i, v^2_i\}, y \in \{v^1_{i'}, v^2_{i'}\}$, neither~$x \preceq y$ nor $y \preceq x$. This is the case if in~$x - y$ one coordinate is positive and one negative:
\begin{align*}
  {v^1_i} - {v^1_{i'}} = v^2_i - v^2_{i'} % & =
%   \begin{pmatrix}
%     8cil \\
%     8c(k-i+1)l)
%   \end{pmatrix}
%   - 
%   \begin{pmatrix}
%     8ci'l \\
%     8c(k-i'+1)l)
%   \end{pmatrix}
%   \\
  &=
  \begin{pmatrix}
    8c(i - i')l \\
    -8c(i - i')l
  \end{pmatrix}
  ^\top \\
  {v^1_i} - {v^2_{i'}} % & =
%   \begin{pmatrix}
%     8cil \\
%     8c(k-i+1)l)
%   \end{pmatrix}
%   - 
%   \begin{pmatrix}
%     8c(i' - 1/2)l \\
%     8c(k-i+1/2)l
%   \end{pmatrix}
%   \\
  &=
  \begin{pmatrix}
    8c(i - i' + \nicefrac{1}{2})l \\
    -8c(i - i' - \nicefrac{1}{2})l
  \end{pmatrix}
  ^\top 
\end{align*}%
Concerning arcs between vertices belonging to regions of two \listen~$L^i_j, L^i_{j'}$: By \autoref{obs:lowbound_exclusive_rectangle} it suffices to show that for~$x \in \{w_{i, j, 1},w_{i, j, c}\}, y \in \{w_{i, j', 1}, w_{i, j', c}\}$ neither~$x \preceq y$ nor $y \preceq x$. Again, we look at the differences and observe that one coordinate is positive and one negative:
\begin{align*}
  w_{i,j, 1} - w_{i,j', 1} = w_{i, j, c} - w_{i, j', c} % & =
%   \begin{pmatrix}
%     4c(2i - 1)l + 2c(2j - 1) \\
%     8c(k-i+1)l - 4c(j - 1)
%   \end{pmatrix}
%   - 
%   \begin{pmatrix}
%     4c(2i - 1)l + 2c(2j' - 1)\\
%     8c(k-i+1)l - 4c(j' - 1)
%   \end{pmatrix}
%   \\
   &= 
  \begin{pmatrix}
    4c(j - j') \\
    -4c(j - j')
  \end{pmatrix}
  ^\top \\
  w_{i,j, 1} - w_{i,j', c} % & =
%   \begin{pmatrix}
%     4c(2i - 1)l + 2c(2j - 1) \\
%     8c(k-i+1)l - 4c(j - 1)
%   \end{pmatrix}
%   - 
%   \begin{pmatrix}
%     4c(2i - 1)l + 2c(2j - 1) - 2(c - 1)\\
%     8c(k-i+1)l - 4c(j - 1) - 2(c - 1)
%   \end{pmatrix}
%   \\
  &= 
  \begin{pmatrix}
    4c(j - j') + 2(c - 1) \\
    -4c(j - j') + 2(c - 1)
  \end{pmatrix}
  ^\top \tag*{\qedhere}
\end{align*}
\end{proof}
\begin{lemma}
  \autoref{cons:redkmlto2dee} is a polynomial-parameter polynomial-time many-one reduction. %TODO def poly-par poly-time many-one red
\end{lemma}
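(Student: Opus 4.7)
The plan is to verify the three requirements of a polynomial-parameter polynomial-time many-one reduction: polynomial-time computability, polynomial parameter blow-up, and correctness. The first two are routine: Construction~\ref{cons:redkmlto2dee} creates $2k+2$ vertices of type $v_i^j$ and $ckl$ vertices of type $w_{i,j,m}$, together with $O(k + ckl)$ arcs; all coordinates are integers of polynomial magnitude, and everything can be assembled in polynomial time. The output parameter $(c+1)k$ is linear in the input parameter $c + k$.

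For correctness, the forward direction is constructive. Given a list choice $L^1_{j_1}, \ldots, L^k_{j_k}$ that covers all colors, I would define the extension $E$ as the union over $i$ of the directed paths $P_i := v^1_i \to w_{i,j_i,1} \to w_{i,j_i,2} \to \cdots \to w_{i,j_i,c} \to v^2_i$. By \autoref{lem:lowbound_allowed_arcs} every arc in $P_i$ is allowed, and $|E| = (c+1)k$. A direct check of balances shows that $P_i$ resolves the balance~$+1$ of $v^1_i$ and the balance~$-1$ of $v^2_i$ while leaving each traversed $w$-vertex balanced. Connectivity follows since the color-components $W_n$ are each touched by some $P_i$: for every $n$ there is an $i$ with $o_n \in L^i_{j_i}$, so some $w_{i,j_i,m} \in W_n$ lies on $P_i$.

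The backward direction is the substantive part. Given an \EE{}~$E$ with $|E| \leq (c+1)k$, I would first strengthen \autoref{lem:lowbound_allowed_arcs} by checking that no allowed arc ends at~$v^1_i$ and no allowed arc starts at~$v^2_i$ (both are immediate from the exclusive-rectangle reasoning in \autoref{obs:lowbound_exclusive_rectangle} applied to the kiste region of $K_i$). Combined with the balances $\balance(v^1_i) = +1$ and $\balance(v^2_i) = -1$, this forces $v^1_i$ to have exactly one outgoing arc and $v^2_i$ exactly one incoming arc in~$E$. Following the outgoing arc from~$v^1_i$ yields a maximal trail which, by \autoref{lem:lowbound_allowed_arcs}, may only continue through $w$-vertices $w_{i,j_i, m}$ of a single list $L^i_{j_i}$ (with strictly increasing third index) and must therefore terminate at~$v^2_i$. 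Hence $E$ decomposes into exactly $k$ such trails $Q_1, \ldots, Q_k$, one per kiste. I would then define $j_i$ to be the list index used by~$Q_i$ and argue that $(j_1, \ldots, j_k)$ covers all colors: since $G + E$ must be connected, every component $W_n$ must share a vertex with some trail, meaning some $Q_i$ passes through a vertex $w_{i,j_i,m}$ with $o^{i,j_i}_m = o_n$, so $o_n \in L^i_{j_i}$.

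The main obstacle is the structural analysis in the backward direction, specifically ruling out extraneous arcs. However, once one notes that allowed arcs among $w$-vertices within a single list force strictly increasing $m$-indices, no cycles of $w$-vertices can appear; and since $v^1_i, v^2_i$ have no incoming/outgoing allowed arcs respectively, they lie on no cycle either. Hence all arcs of~$E$ must belong to one of the $k$ trails, and the combinatorics reduce cleanly to the list-choice problem. The remaining details are bookkeeping on balances and the allowed-arc lemma.
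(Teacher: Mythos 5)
Your proof is correct and follows essentially the same route as the paper: the identical forward construction of the $k$ paths $v^1_i \to w_{i,j_i,1} \to \cdots \to w_{i,j_i,c} \to v^2_i$, and the same backward decomposition of any valid \EE{} into $k$ paths via \autoref{lem:lowbound_allowed_arcs} and the balance/connectivity argument. You are in fact slightly more explicit than the paper about ruling out cycles and stray arcs; the only cosmetic omission is the degenerate case where a path goes directly from $v^1_i$ to $v^2_i$, for which one picks an arbitrary \liste{} as the paper does.
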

\begin{proof}
  It is clear that the parameter of the target instance is polynomial in the parameter of the original instance. It can also easily be checked that \autoref{cons:redkmlto2dee} can be carried out in polynomial time.

  Concerning the correctness, first assume that the original instance~$I_{\pKMLs{}}$ is a yes-instance. Thus, there is a sequence of~$k$ \listen{}~$L^{1}_{j_1}, \ldots , L^k_{j_k}$ that cover every color in~$C$. For every~$L^\alpha_{j_\alpha}$ take the arcs of the path~$v^1_\alpha, w_{\alpha, j_\alpha, 1}, \ldots , w_{\alpha, j_\alpha, c}, v^2_\alpha$ into the arc set~$E$. The arc set~$E$ is a solution to~$I_{\pxDEEs{2}}$ because of the following: First, $E$ contains exactly~$(c + 1)k$ arcs. Second, every arc of the paths is an allowed arc because of \autoref{obs:lowbound_big_rectangle} and \autoref{obs:lowbound_small_rectangle}. Third, since the \listen{} cover all colors, these paths connect all components of the graph~$G$, that is, $V, W_1, \ldots, W_c$. Fourth, since every vertex~$v^1_i$ (every vertex~$v^2_i$) has exactly one arc starting (ending) in it in~$E$, the graph~$G$ has no unbalanced vertices when adding the arcs of~$E$.

  Now assume that~$I_{\pxDEEs{2}}$ is a yes-instance. In any \EE{}~$E$ for $G$, every vertex~$v^1_i$ (every vertex~$v^2_i$) has exactly one incident outgoing (incoming) arc, because it has balance 1 (-1) and there are no allowed incoming (outgoing) arcs (\autoref{lem:lowbound_allowed_arcs}). Because there are no allowed extension arcs between vertices in different \kiste{} regions (\autoref{lem:lowbound_allowed_arcs}) and thus $E$ consists of a series of~$k$~maximal-length paths~$p_1, \ldots, p_k$. Each path~$p_i$ starts in the vertex~$v^1_i$, ends in the vertex~$v^2_i$ and traverses a subset of vertices in the region of exactly one \liste . This is because in every region of a \kiste{}~$K_i$, the only allowed arcs starting in~$v^1_i$ lead to either~$v^2_i$ or a vertex~$w_{i,j,m}$ in the region of the \liste{}~$L^i_j$ and there are no allowed arcs that lead from one \liste{} region to another (\autoref{lem:lowbound_allowed_arcs}). Let~$L^1_{j_1}, \ldots , L^2_{j_k}$ be the \listen{} corresponding to the \liste{} region the paths~$p_1, \ldots, p_k$ traverse vertices in (if $p_i$ traverses no vertices besides~$v^1_i, v^2_i$, let~$L^i_{j_i}$ be an arbitrary \liste{} in the \kiste{}~$K_i$). Choosing these \listen{} covers all colors in~$I_\pKMLs{}$ because the paths connect all connected components of the graph~$G$.
\end{proof}
Using this reduction, the following theorems now arise:
\begin{theorem}
  \pKML{} parameterized by the number of colors and the number of \kisten{} is polynomial-time polynomial-parameter reducible to \pxDEE{2} parameterized by the number of extension arcs.
\end{theorem}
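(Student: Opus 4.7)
The plan is to simply invoke the lemma that was just proven, namely that \autoref{cons:redkmlto2dee} is a polynomial-parameter polynomial-time many-one reduction, and then verify that the parameters line up as claimed in the theorem statement. Since the hard work (correctness of the construction, allowed-arc analysis via \autoref{obs:lowbound_big_rectangle}, \autoref{obs:lowbound_small_rectangle}, \autoref{obs:lowbound_exclusive_rectangle} and \autoref{lem:lowbound_allowed_arcs}) is already in place, the proof is essentially a bookkeeping check.

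Concretely, I would proceed as follows. First, I would note that \autoref{cons:redkmlto2dee} maps an instance of \pKMLs{} with color set of size~$c$ and~$k$ \kisten{} to an instance of \pxDEEs{2} whose parameter---the maximum number of extension arcs---is set to~$(c+1)k$. This quantity is polynomially bounded in the combined parameter~$c+k$ of the source problem, which is exactly what is required for a polynomial-parameter reduction. Second, I would observe that the construction is polynomial-time computable: the number of vertices introduced is~$\bigO(ckl)$ (where~$l$ is the common \liste{} count per \kiste{}, which is at most the input size) and the arc sets~$A_1, A_2, B_1, \dots, B_c$ are each of size polynomial in~$c, k, l$, so writing down the graph and the bound~$(c+1)k$ takes polynomial time in the size of the input \pKMLs{} instance. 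Third, the correctness (``yes-instance iff yes-instance'') has already been established in the preceding lemma.

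Combining these three points, the map satisfies all three clauses of the definition of a polynomial-parameter polynomial-time many-one reduction given in the preliminaries, completing the proof. I do not expect any genuine obstacle here: this final theorem is essentially a restatement of the lemma that immediately precedes it, phrased at the level of parameterized problems rather than at the level of a single construction, so the only thing to verify carefully is that~$(c+1)k$ is polynomial in~$c+k$, which is immediate.
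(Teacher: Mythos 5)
Your proposal matches the paper exactly: the paper states this theorem as an immediate consequence of the preceding lemma (that \autoref{cons:redkmlto2dee} is a polynomial-parameter polynomial-time many-one reduction), with the only remaining check being that the target parameter~$(c+1)k$ is polynomially bounded in the source parameter, which you verify. No gap; this is the intended argument.
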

\begin{theorem}\label{the:nonpoly}
  \pxDEE{2} parameterized by the maximum number of extension edges in a solution does not have a polynomial problem kernel, unless~$\clacoNP \subseteq \claNP \slashpoly $ and thus~\claPH~=~\claS{3}.
\end{theorem}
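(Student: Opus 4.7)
The plan is to combine the or-compositionality of \pKMLs{} (established just above) with the polynomial-parameter polynomial-time many-one reduction from \pKMLs{} to \pxDEE{2} (given by \autoref{cons:redkmlto2dee}) via the standard framework of \citet{BDFH09}. The framework states that an or-compositional \NPh{} problem cannot have a polynomial-size problem kernel unless~$\clacoNP \subseteq \claNP \slashpoly$, and by a result of \citet{FS11}, the latter containment would imply the collapse~$\claPH = \claS{3}$.

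\paragraph{Key steps.}
First, I would observe that we have all ingredients in place: \pKMLs{} is \NPc, \pxDEEs{2} is \NPc{} (as a special case of \pWMEEs{}, see \autoref{sec:constrainedEE}), \pKMLs{} parameterized by~$(c,k)$ is or-compositional by the previous section, and \autoref{cons:redkmlto2dee} is a polynomial-time polynomial-parameter many-one reduction from \pKMLs{} (parameterized by~$c+k$) to \pxDEEs{2} (parameterized by the number of extension arcs). Second, I would argue by contradiction: suppose \pxDEEs{2} admits a problem kernel of size polynomial in its parameter. Then, given any instance of \pKMLs{}, we can (i) apply the reduction from \autoref{cons:redkmlto2dee} to obtain an equivalent instance of \pxDEEs{2} whose parameter is polynomial in~$c+k$, (ii) invoke the assumed polynomial kernelization to produce an equivalent \pxDEEs{2} instance of size polynomial in~$c+k$, and (iii) apply a polynomial-time many-one reduction from \pxDEEs{2} back to \pKMLs{} (which exists since \pKMLs{} is \NPh{} and \pxDEEs{2} is in~\claNP) to obtain an equivalent \pKMLs{} instance whose size is still polynomial in~$c+k$. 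The resulting composite mapping is a polynomial-size problem kernel for \pKMLs{}.

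\paragraph{Concluding step.}
Since \pKMLs{} is or-compositional, the existence of such a polynomial kernel would by \citet{BDFH09} force~$\clacoNP \subseteq \claNP \slashpoly$, and by \citet{FS11} this implies the collapse~$\claPH = \claS{3}$. Contrapositively, under the assumption that the polynomial hierarchy does not collapse to its third level, \pxDEEs{2} parameterized by the number of extension arcs admits no polynomial-size problem kernel, proving the theorem. The step that deserves the most care is verifying the parameter bound preserved through the chain (i)--(iii): one must check that the reduction in \autoref{cons:redkmlto2dee} increases the parameter only polynomially (which is the content of ``polynomial-parameter'') and that the classical back-reduction via \NPcs{} blows up the instance size only polynomially in the kernel size, hence polynomially in~$c+k$. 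These are routine once the framework is invoked, and no further obstacles are expected.
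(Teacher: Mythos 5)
Your proposal is correct and matches the paper's own argument: the paper likewise combines the or-compositionality of \pKMLs{} with the polynomial-parameter polynomial-time reduction of \autoref{cons:redkmlto2dee}, and uses the \NPcs{} of both problems to back-transform a hypothetical polynomial kernel for \pxDEEs{2} into one for \pKMLs{}, contradicting the framework of \citet{BDFH09}. The parameter-preservation check you flag is exactly the content of the preceding lemmas, so there is nothing missing.
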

Using this theorem, we also can easily derive the following corollary.
\begin{corollary}
  \pWMEE{} parameterized by the number of components and/or the sum of all positive balances of vertices does not have a polynomial problem kernel, unless~$\clacoNP \subseteq \claNP \slashpoly $.
\end{corollary}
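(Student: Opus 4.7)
The plan is to derive the corollary from \autoref{the:nonpoly} by exhibiting a polynomial-parameter polynomial-time many-one reduction from \pxDEE{2} (parameterized by the number~$k$ of extension arcs) to \pWMEEs{} (parameterized by~$c$, by~$b$, or by the combined parameter $(b,c)$). If such a reduction exists, a polynomial-size problem kernel for \pWMEEs{} under any of these parameterizations would compose with this reduction to yield a polynomial-size problem kernel for \pxDEE{2} parameterized by~$k$, contradicting \autoref{the:nonpoly} unless $\clacoNP \subseteq \claNP \slashpoly$.

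The reduction itself is essentially the embedding already sketched in \autoref{sec:constrainedEE}: given a \pxDEE{2} instance $(G=(V,A),k)$ with $V\subset\mathbb{Q}^2$, construct the \pWMEEs{} instance $(G,\wf,\wf_{max})$ with $\wf_{max}:=k$ and
\[
\wf(u,v):=\begin{cases} 1, & u\succeq v,\\ \infty, & \text{otherwise,}\end{cases}
\]
so that solutions of weight at most~$k$ are exactly \EE s of~$G$ consisting of at most~$k$ allowed (coordinate-monotone) arcs. This map is clearly polynomial-time computable and preserves yes- and no-instances.

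It remains to bound the target parameters polynomially in~$k$. For the parameter~$c$: any \EE~$E$ must connect all components of~$G$ via extension arcs, so $c \le |E|+1 \le k+1$; if the input graph has more than $k+1$ components we may output a trivial no-instance. For the parameter~$b$: any \EE{} must, for every vertex~$v$ with $\balance(v)>0$, contain at least $\balance(v)$ outgoing extension arcs, so $b = \sum_{v\in I_G^+}\balance(v) \le |E| \le k$; again we may reject outright if this bound is violated. Hence both $c$ and $b$ (and therefore also the combined parameter) are linearly bounded in~$k$, making the reduction polynomial-parameter for each of the stated parameterizations.

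The only genuine step is the observation that a polynomial-parameter polynomial-time many-one reduction from a problem~$(Q,\kappa)$ to a problem~$(Q',\kappa')$ transports polynomial kernelizability backwards: if $(Q',\kappa')$ has a polynomial-size problem kernel then so does $(Q,\kappa)$, since one may first apply the reduction, then the kernelization, and the resulting instance has size polynomial in $\kappa'(r(I))$ which is polynomial in $\kappa(I)$. This standard fact, together with \autoref{the:nonpoly}, immediately yields the corollary; no additional technical obstacle is anticipated.
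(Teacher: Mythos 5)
Your argument is essentially the paper's: the paper likewise observes that any \EE{}~$E$ must satisfy~$c - 1 \leq |E|$ and~$b \leq |E|$, so that~$b$ and~$c$ are linearly bounded by~$k$, and then transfers the lower bound of \autoref{the:nonpoly} from \pxDEEs{2} (which is a special case of \pWMEEs{}) to the smaller parameters. The one imprecision is in your final paragraph: as you justify the ``standard fact,'' applying the reduction~$r$ and then the kernelization for~$(Q',\kappa')$ produces a small instance of~$Q'$, not of~$Q$, so what you have shown directly is only a polynomial \emph{compression} of~$(Q,\kappa)$ into~$Q'$, which is not a problem kernel under the paper's definition (a kernel must map~$\Sigma^*$ back into instances of the same language). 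To close this, one must additionally reduce the kernelized \pWMEEs{} instance back to a \pxDEEs{2} instance of polynomially bounded size, which is possible precisely because \pWMEEs{} is in \claNP{} and \pxDEEs{2} is \NPc{} --- this is exactly the back-transformation step the paper makes explicit at the start of its incompressibility chapter. Since both problems are indeed \NPc{}, the gap is easily repaired and your overall derivation stands.
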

\begin{proof}
  This is because the number~$c$ of components in the input graph and the sum~$b$ of all positive balances is bounded by the maximum number~$k$ of extension edges in a solution. Observe that any \EE{}~$E$ for the input graph has to connect all connected components. Hence,~$c - 1 \leq | E|$ and thus~$c - 1 \leq k$. Also,~$E$ has to balance every vertex and thus has to include~$d$ arcs for every vertex of balance~$d$. Thus,~$b \leq |E|$ and thus~$b \leq k$.

If there were a polynomial problem kernel for either the parameters~$c$, or~$b$ or the combined parameter~$b, c$, this would imply a polynomial problem kernel for the parameter~$k$ and the statement follows from \autoref{the:nonpoly}.
\end{proof}
%TODO bezug zur (offenen fragen der) literatur!
%TODO relevanz 2-dim EE
%TODO folgt was zur approximierbarkeit? (Set cover nicht constant-faktor, wohl auch bei csc, dann auch nicht 2dee?

\chapter{Conclusion}\label{sec:conclusion}
In this thesis, we have gained insight into the structure of \EE s in \autoref{sec:trails}. We benefited from this knowledge in \autoref{sec:multivariatealg} in that we were able to give an efficient parameterized algorithm for the problem \pWMEE{}~(\pWMEEs{}) with~$\bigO(4^{c\log(bc^2)}n^2(b^2 + n\log(n)) + n^2m)$ running time. Here,~$c$~is the number of components in the input graph and~$b$ is the sum of all positive balances of vertices in the input graph. 

We also gave a reformulation of \pWMEEs{} parameterized by~$c$ in terms of the natural matching problem \pCBM{} in \autoref{sec:matching}. This formulation might help to attack the fixed-parameter tractability of \pWMEEs{} with respect to parameter~$c$ from a different angle, and we already gave some partial tractability results in \autoref{sec:tractislands}. 

Finally, we studied polynomial-time preprocessing routines for \pWMEEs{} with respect to either of the parameters~$k$,~$b$, and~$c$, and showed that such routines cannot yield a polynomial-size problem kernel unless~$\clacoNP \subseteq \claNP \slashpoly$. Thus, polynomial-size problem kernels for \pWMEEs{} would imply that the polynomial hierarchy collapses to the third level~\cite{Yap83}. %TODO litref

\paragraph{Outlook.} The most interesting open question is whether \pWMEEs{} is fixed-parameter tractable with respect to the parameter~$c$. By the equivalence of \pWMEEs{} and \pCBM{}~(\pCBMs{}) we gave in \autoref{sec:matching}, this question is equivalent to whether \pCBMs{} is fixed-parameter tractable with respect to the parameter ``join set size''. Intuitively, \pCBMs{} is a very natural problem---modelling for example the job assignment problem, where at least one worker of a particular profession must be assigned to a facility of a specific type. This makes work for \pCBMs{} particularly interesting. A way to attack \pCBMs{} could be by delving into the world of hypergraph transversals~\cite{Mir71}, since \pCBMs{} can be seen as a colored variant of the hypergraph transversal problem. Another way of gaining a deeper understanding of \pCBMs{} could be to search for formulations of this problem that show that it is contained in \claW{t} for some constant~$t$.

We observed that the parameters~$b$ and~$c$ are upper bounded by the parameter~$k$ used by \citet{DMNW10} in their algorithm for \pWMEEs{} with running time~$\bigO(4^kn^4)$. Their algorithm uses a dynamic programming approach and likely uses exponential space. We think that our algorithm for \pWMEEs{} with running time~$\bigO(4^{c\log(bc^2)}n^2(b^2 + n\log(n)) + n^2m)$ can be implemented as to use only polynomial space. In this regard, it would be interesting to see how both algorithms perform on sets of practical instances.

In this thesis, we focussed on directed \EE{} problems. However, the undirected variant of \pWMEEs{} is also \NPh{}---the \NPhs{} proof we gave in \autoref{sec:relhcwmee} canonically transfers over to the undirected variants. It would be interesting to see whether our fixed-parameter tractability results also carry over to the undirected problem---we conjecture that this is the case. We also think that the equivalence to a matching problem can be shown in a similar fashion to our observations in \autoref{sec:matching}. However, we think that the corresponding matching formulation will be a non-bipartite version of \pCBMs{}, intuitively making the undirected variant of \pWMEEs{} a harder problem with respect to parameter~$c$.

We also limited ourselves to adding edges in order to make graphs Eulerian in this thesis. However, there are other natural variants, for instance, deleting edges, editing edges---that is deleting or adding edges---, removing vertices, or any combination of those. Mixed graphs, containing both arcs and edges, are also an option.

We did not consider approximation algorithms in this work. However, given the relationships of \pWMEEs{} to \pHC{} and the \pRP{} problems, both of which admit constant-factor polynomial-time approximation algorithms in some special cases, it would be interesting to analyze \pWMEEs{} in this regard. Then again, the reduction from \pKML{} to \pWMEEs{} we gave in \autoref{sec:incompress} might refute attempts in this direction, because \pKML{} is a variant of \textsc{Set Cover} and this problem is likely not constant-factor approximable~\cite{LY95}.

We merely touched the topic of constrained \EE s in \autoref{sec:constrainedEE}. We would like to remark that this topic might make for an interesting field of research: It is likely well-motivated through practical problems and we expect the stronger structural restrictions to be exploitable for efficient algorithms. Problems there could be augmenting transitive graphs to \Eu transitive graphs or the like.

\bibliographystyle{abbrvnat}
%\bibliography{/home/msorge/proj/bibtex-comprehensive/comprehensive.bib}
\bibliography{steinbruch.bib}

\selectlanguage{ngerman}
\chapter*{Selbstständigkeitserklärung}
 Ich erkläre, dass ich die vorliegende Arbeit selbstständig und nur unter Verwendung der angegebenen Quellen und Hilfsmittel angefertigt habe.

\vspace{2cm}
\noindent Jena, den \today,\hspace{3cm} Manuel Sorge

\end{document}